\numberwithin{equation}{section} 
\newtheorem{theorem}{Theorem}
\newtheorem{lemma}[theorem]{Lemma}
\newtheorem{proposition}[theorem]{Proposition}
\newtheorem{corollary}[theorem]{Corollary}
\newtheorem{remark}[theorem]{Remark}
\newtheorem{question}{Question}
\newcommand{\naw}[1]{\left( {#1} \right)}
\newcommand{\pp}{\mathrm{p}}
\newcommand{\pre}{\mathrm{pre}}
\newcommand{\Dom}{\mathrm{Dom}}
\newcommand{\Ker}{\mathrm{Ker}}
\newcommand{\Ran}{\mathrm{Ran}}
\newcommand{\Num}{\mathrm{Num}}
\newcommand{\sgn}{\mathrm{sgn}}
\newcommand{\supp}{\mathrm{supp}}
\newcommand{\RE}{\mathrm{Re}}
\newcommand{\IM}{\mathrm{Im}}
\newcommand{\R}{\mathbb{R}}
\newcommand{\C}{\mathbb{C}}
\newcommand{\Z}{\mathbb{Z}}
\newcommand{\N}{\mathbb{N}}
\newcommand{\D}{\mathrm{d}}
\renewcommand{\i}{\mathrm{i}}
\newcommand{\e}{\mathrm{e}}
\newcommand{\T}{\mathrm{T}}
\newcommand{\cI}{\mathcal{I}}
\newcommand{\cK}{\mathcal{K}}
\newcommand{\cV}{\mathcal{V}}
\newcommand{\cU}{\mathcal{U}}
\newcommand{\cW}{\mathcal{W}}
\newcommand{\cH}{\mathcal{H}}
\newcommand{\cJ}{\mathcal{J}}
\newcommand{\cM}{\mathcal{M}}
\newcommand{\cE}{\mathcal{E}}
\newcommand{\cZ}{\mathcal{Z}}
\newcommand{\Wr}{\mathcal{W}}
\newcommand{\CP}{\mathbb{CP}}
\newcommand{\ess}{\mathrm{ess}}
\newcommand{\tperp}{\mathrm{perp}}
\newcommand{\loc}{\mathrm{loc}}
\newcommand{\Cl}{\mathrm{Cl}}
\newcommand{\SO}{\mathrm{SO}}
\newcommand{\Spin}{\mathrm{Spin}}
\renewcommand{\S}{\mathbb S}
\newcommand{\vol}{\mathrm{vol}}
\newcommand{\hol}{\mathrm{hol}}
\newcommand{\mix}{\mathrm{mix}}
\def\bbbone{{\mathchoice {\rm 1\mskip-4mu l} {\rm 1\mskip-4mu l}
{\rm 1\mskip-4.5mu l} {\rm 1\mskip-5mu l}}}
\def\one{\bbbone}
\begin{document}

\title{
Holomorphic family of Dirac-Coulomb \\
Hamiltonians  in arbitrary dimension}

\author[1]{Jan Derezi\'{n}ski}
\author[2]{B\l{}a\.{z}ej Ruba}

\affil[1]{Department of Mathematical Methods in Physics, Faculty of Physics, \protect\\
University of Warsaw, Pasteura 5, 02-093 Warszawa, Poland, \protect\\ 
email: jan.derezinski@fuw.edu.pl}
\affil[2]{Institute of Theoretical Physics, Jagiellonian University, \protect\\
prof. Łojasiewicza 11, 30-348 Kraków, Poland, \protect\\
email: blazej.ruba@doctoral.uj.edu.pl}
\date{\today}
\maketitle

\begin{abstract}
We study  massless 1-dimensional Dirac-Coulomb  Hamiltonians, that
    is, operators on the half-line of the form $D_{\omega,\lambda}:=\begin{bmatrix}
-\frac{\lambda+\omega}{x} & - \partial_x \\
\partial_x & -\frac{\lambda-\omega}{x} 
\end{bmatrix}$. 
We describe  their closed realizations 
 in the sense of the Hilbert space $L^2(\R_+,\C^2)$, allowing
for complex values 
of the 
parameters $\lambda,\omega$. 
In physical situations, $\lambda$ is proportional to the electric 
charge and $\omega$  is related to the angular momentum.

We focus on realizations of $D_{\omega,\lambda}$
 homogeneous of degree $-1$. They
can be organized in a single holomorphic family of closed operators
parametrized
by a certain 2-dimensional complex manifold.
We describe the spectrum and the numerical
range of these realizations. We give an explicit formula for the
integral kernel of their
resolvent in terms of Whittaker functions. We also describe their stationary scattering
theory, providing formulas for a~natural pair of
diagonalizing operators and for the scattering
operator. We describe the point spectrum of their nonhomogeneous realizations.

It is well-known that $D_{\omega,\lambda}$
arise after separation of variables 
of the Dirac-Coulomb operator in  dimension 3. We give a simple argument
why this is still true in any dimension. Furthermore, we
  explain the relationship
  of spherically symmetric Dirac operators with the Dirac operator on
  the sphere and its eigenproblem.

Our work is mainly motivated by a large literature devoted to
distinguished self-adjoint realizations of Dirac-Coulomb
Hamiltonians. We show that these realizations arise naturally if the
 holomorphy is taken as the guiding principle.
Furthermore, they are infrared attractive fixed points of the scaling
action. Beside applications in relativistic quantum mechanics,
Dirac-Coulomb Hamiltonians are argued to provide a natural setting for
the study of Whittaker (or, equivalently, confluent hypergeometric) functions.
\end{abstract}

\begin{flushright}
  Dedicated to the memory of Krzysztof Gaw\c{e}dzki
  \end{flushright}

\newpage

\tableofcontents

\section{Introduction}

The main topic of this paper is 
the 1-dimensional massless Dirac Hamiltonian
with a two-parameter perturbation proportional to the Coulomb potential
\begin{equation}\label{oper}
  D_{\omega, \lambda} = 
\begin{bmatrix}
-\frac{\lambda+\omega}{x} & - \partial_x \\
\partial_x & -\frac{\lambda-\omega}{x} 
\end{bmatrix}.
\end{equation}
We allow the parameters $\omega,\lambda$ to be complex. We will
describe  realizations of \eqref{oper} as  closed operators on
$L^2(\R_+,\C^2).$ We will call \eqref{oper} the {\em one-dimensional Dirac-Coulomb
Hamiltonian} or {\em operator} (omitting usually the adjective one-dimensional, or
shortening it to 1d).

The formal operator 
$  D_{\omega, \lambda} $ is homogeneous of degree $-1$. Among its various
 closed realizations we will be especially interested in  homogeneous ones, i.e.\ those whose domain is invariant with respect to scaling transformations.

Our main motivation to study $D_{\omega, \lambda}$ comes from the 3d Dirac-Coulomb
Hamiltonian
\begin{equation}\label{dracoul}
  \sum_{j=1}^3 \alpha_j p_j +\beta m-\frac{\lambda}{r}\end{equation}
acting on four component spinor functions on $\R^3$. Here $m \in \R$ is the mass parameter, $\lambda \in \R$ is related to the charge of nucleus and $p_j:=-\i\partial_{x^j}$. As is
well known, after separation of variables in
\eqref{dracoul} with $m=0$ one obtains \eqref{oper}.
Possible
values of $\omega$ are $\pm 1, \pm 2,\dots$. They are related to the
angular momentum. Similar separation  is possible also in other
dimensions, albeit leading to different values of $\omega$. We remark
that the mass term is  bounded  and hence does
  not change the domain. Therefore, the analysis
of the
$m=0$ case yields the description of  closed
  realizations of the massive Dirac-Coulomb operator.

The second source of interest in $D_{\omega, \lambda}$ is the
expectation that models with scaling symmetry describe the behaviour of much more complicated systems in certain limiting cases.   

There exists another important motivation for the study of  Dirac-Coulomb Hamiltonians.
Objects related to  \eqref{oper}, such as its eigenfuntions and
Green's kernels can be expressed in terms of Whittaker
functions (or, equivalently, confluent functions). Whittaker functions are eigenfunctions of the {\em Whittaker
    operator}
\begin{equation}\label{whit1.}
L_{\beta,\alpha } :=-\partial_x^2+\Big(\alpha -\frac14\Big)\frac{1}{x^2}-\frac{\beta}{x}.
\end{equation}
The Dirac-Coulomb Hamiltonian may be viewed as a good way to organize
our knowledge about Whittaker functions, one of the most important
families of special functions in mathematics. Curiously, it
seem more suitable for this goal than
the Whittaker operator itself. Indeed, the  homogeneity of  the
Dirac-Coulomb operator
leads to
several identities
which have no
counterparts in the case of the Whittaker operator (e.g.  the
scattering theory described in Section \ref{Diagonalization} with
\cite{DeRi18_01} and \cite{DeFaNgRi20_01}).

Let us briefly describe the content of our paper.
The most obvious closed realizations of $  D_{\omega, \lambda} $ are 
the minimal and maximal realizations, denoted  $  D_{\omega, \lambda}^{\min}
$ and  $  D_{\omega, \lambda}^{\max}
$. Both are 
homogeneous of degree $-1$. They depend holomorphically on parameters
$\omega,\lambda$, except for $|\RE\sqrt{\omega^2-\lambda^2}|=\frac12$, where a kind of a ``phase 
 transition'' occurs. 
One of the signs of this phase transition is the following: For
$|\RE\sqrt{\omega^2-\lambda^2}| \geq \frac12$, we have $  D_{\omega, \lambda}^{\min}
=  D_{\omega, \lambda}^{\max}$, so that in this parameter range there is only one
closed realization of $  D_{\omega, \lambda} $.
However, for $|\RE\sqrt{\omega^2-\lambda^2}|<\frac12$,
the domain of $  D_{\omega, \lambda}^{\min} $ has codimension 2 as a 
subspace of the domain of $ D_{\omega, \lambda}^{\max}$. 
This means that for fixed $(\omega, \lambda)$ in this region there exists a one-parameter family of closed realizations of $D_{\omega, \lambda} $ strictly 
between the minimal and maximal realization.

In  operator theory (and other domains of mathematics) it is
useful to organize objects in holomorphic families \cite{Kato,DeWr}. Therefore we ask whether 
$  D_{\omega, \lambda}^{\min}= D_{\omega, \lambda}^{\max}$ can be
analytically continued beyond the region
$|\RE\sqrt{\omega^2-\lambda^2}|>\frac12$.
The answer is positive, but the domain of this continuation is a complex manifold which is not simply an open subset of the ``$(\omega,\lambda)$-plane''~$\C^2$. To define this manifold we start with the following subset of $\C^3$:
\begin{equation}\label{qua}
\Big\{(\omega,\lambda,\mu)\ |\   \mu^2=\omega^2-\lambda^2,\quad\mu>-\frac12\Big\}.
\end{equation}
Then we ``blow up'' the singularity $(\omega,\lambda,\mu)=(0,0,0)$. The resulting complex 2-dimensional manifold is denoted
$\cM_{-\frac12}$. There exists a natural projection $\cM_{- \frac12}
\to \C^2$. The preimage of $(\omega, \lambda) \in \C^2$ has one
element if $|\RE \sqrt{\omega^2 - \lambda^2}| \geq \frac12$, two
elements if $|\RE \sqrt{\omega^2 - \lambda^2}| < \frac12$, and
$(\omega, \lambda) \neq (0,0)$ and infinitely many elements if $\omega
= \lambda =0$. This last preimage, called the {\em zero fiber}, is isomorphic to the Riemann sphere $\CP^1$, for which we use homogeneous coordinates $[a{:}b]$. Away from the zero fiber, points of $\cM_{- \frac{1}{2}}$ may be labeled by triples $(\omega, \lambda, \mu)$.

The main result of our paper is the construction of a holomorphic family of closed
operators
$\cM_{-\frac12}\ni p\mapsto D_p$ consisting of homogeneous
Dirac-Coulomb Hamiltonians. If 
$p\in\cM_{-\frac12}$ 
lies over $(\omega,\lambda)$, then we have inclusions
\begin{equation}
  \Dom(  D_{\omega, \lambda}^{\min})\subset
  \Dom(  D_p)\subset
  \Dom(  D_{\omega, \lambda}^{\max}).\label{piy}
 \end{equation}
If $|\RE \sqrt{\omega^2 - \lambda^2}| \geq\frac12$, both inclusions in \eqref{piy} are equalities. On the other hand, for
$|\RE \sqrt{\omega^2 - \lambda^2}| < \frac12$ both inclusions are
proper and elements of the domain of $  \Dom(D_p)$ are distinguished
by the following behavior near zero:
\begin{equation}
\label{sol11}
\sim \frac{x^{\mu}}{\omega+ \lambda} \begin{bmatrix}-\mu\\\omega+\lambda\end{bmatrix},\qquad \sim \frac{x^{\mu}}{\omega - \lambda}\begin{bmatrix}\omega-\lambda\\-\mu\end{bmatrix}.
\end{equation}
Note that the two functions in \eqref{sol11}, when both well defined, are
proportional to one another.

We describe various properties of $D_p$: we find its point
spectrum, essential spectrum, numerical range, discuss conditions for
(maximal) dissipativity. We construct explicitly the
resolvent. Some spectral properties, including their point spectra, of nonhomogeneous realizations of $D_{\omega,\lambda}$ are also discussed.


Whenever $D_p$ is self-adjoint, its spectrum is absolutely
  continuous, simple and coincides with~$\R$. In non-self-adjoint
  cases, the essential spectrum is still $\R$, but on  certain exceptional subsets of the parameter space there is also point spectrum $\{\IM(k) > 0\}$ or $\{\IM(k) < 0\}$. Away from exceptional sets $D_p$ possesses non-square-integrable eigenfunctions,
  which can be called {\em distorted waves}.
They can be normalized in two ways: as {\em incoming} and {\em
  outgoing} distorted waves.
They  define the integral kernels of a pair  of operators $\cU^\pm$
that, at least formally, diagonalize
 $D_p$. More precisely, on a dense domain  $\cU^\pm$  intertwine
$D_p$ with the operator of the multiplication by the
independent variable $k\in\R$. Up to a trivial factor, $\cU^\pm$ can be interpreted
as  the {\em wave (M{\o}ller) operators}. The operators $\cU^+$ and $\cU^-$
are related to one another by 
the identity $S \cU^-:= \cU^+$, which defines the {\em scattering operator} $S$.
Thus we are able to describe rather
completely the {\em stationary scattering theory} of homogeneous
  Dirac-Coulomb Hamiltonians.

For self-adjoint $D_p$, the operators $\cU^\pm$ are
unitary. If $\lambda$ is real, they are still bounded and
invertible, even if $D_p$ are not self-adjoint.
We show that $\cU^\pm$ can be written (up to a trivial factor) as $\Xi^\pm(\sgn(k),A)$, where
$A$ is the
 dilation generator and $\sgn(k)$ is the sign of the spectral
 parameter.
We express  $\Xi^\pm$ in terms of the hypergeometric function. We prove that they behave as
$s^{|\IM(\lambda)|}$ for $s\to\infty$. In particular, this shows that
$\cU^\pm$ are bounded only for real $\lambda$.

The Coulomb potential is long-range. Therefore we cannot use
the standard formalism of
scattering theory.
In our paper we restrict ourselves
to the stationary formalism, where the long-range character of the
perturbation
is taken into account by using appropriately modified plane waves.

Operators $D_p$ with $p$ in the zero fiber can be
fully analyzed by elementary means. All operators
strictly between $D_{0,0}^{\min}$ and $D_{0,0}^{\max}$ are homogeneous
and are specified by boundary conditions at zero of the form $f(0) \in
\C \begin{bmatrix} a \\ b \end{bmatrix}$ for $[a{:}b]\in
\CP^1$. Operator corresponding to boundary condition $[a{:}b]$ will be
denoted $D_{[a:b]}$. Other cases in which operators $D_p$ are
particularly simple are discussed in  Appendix \ref{sec:dir1d}.

\begin{figure}[ht]
  \centering
  \includegraphics[width=0.7\textwidth]{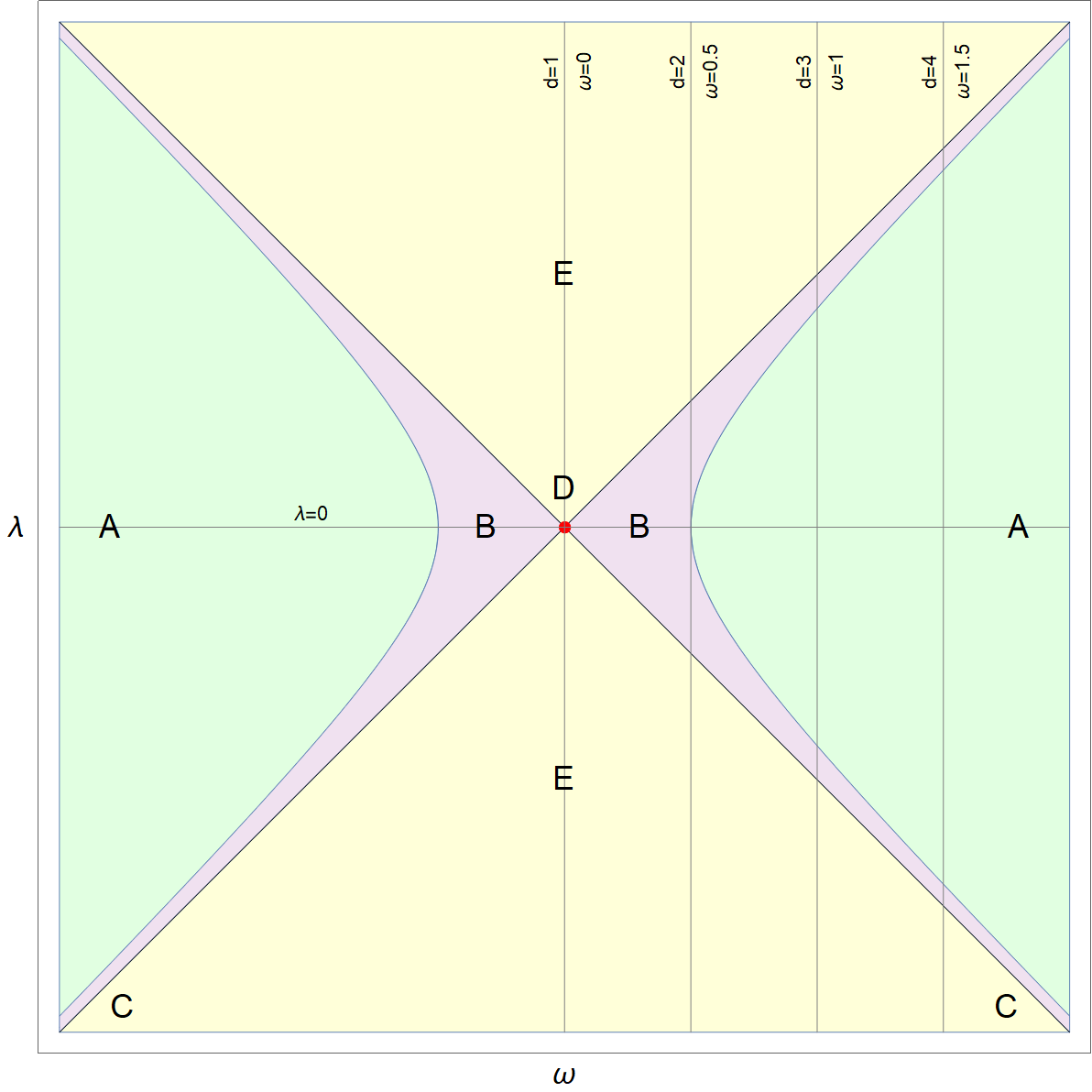}   
  \caption{Phase diagram of operators
      $D_{\omega,
      \lambda}$ for $(\omega,\lambda)\in\R^2$. It
    is partitioned into five subsets corresponding to five possible
    behaviours, see Propositions \ref{self-adj1} and \ref{self-adj2}
    and Figure \ref{fig:phases}. We label regions as follows. Color
    green and letter $A$ refer to $\omega^2 - \lambda^2 \geq \frac 14$
    (we do not give a separate name to the boundary of this region,
    although it is also somewhat special). Color blue and letter $B$
    refer to the subset $0 < \omega^2 - \lambda^2 < \frac14$. Black
    lines and letter $C$ refer to the lines $\omega = \pm \lambda$,
    except for the special point $(\omega, \lambda)=(0,0)$, which is
    marked with a fat red dot and letter $D$. Yellow color and letter
    $E$ are used for the region $\omega^2 - \lambda^2 < 0$. In
    addition we present lines corresponding to the lowest angular momentum values for dimensions $d=0$, $d=1$, $d=2$ and $d=3$. Here we disregard the possible sign of $\omega$, which is irrelevant due to symmetry $\omega \mapsto - \omega$.}
\label{fig:plane}
\end{figure}

\begin{figure}[ht]
  \centering
  \includegraphics[width=0.9\textwidth]{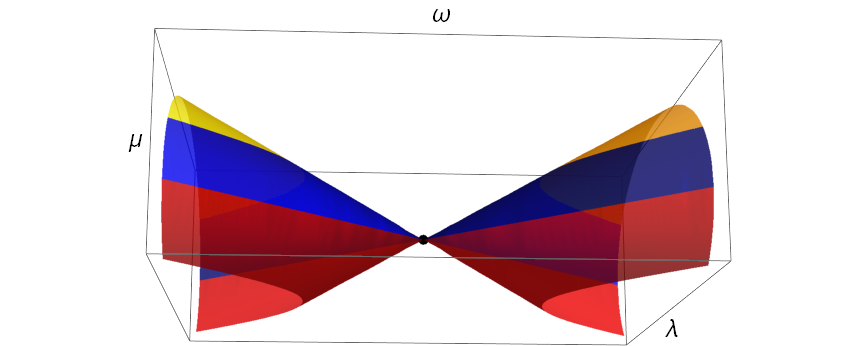}   
  \caption{Parameter space of homogeneous self-adjoint Dirac-Coulomb Hamiltonians projected onto axes $\omega, \lambda, \mu$. Regions colored yellow, blue and red are described by inequalities $\mu > \frac12$, $0 < \mu < \frac12$ and $- \frac12 < \mu < 0$, respectively. The fat dot at the origin represents a circle contained in the zero fiber $\cZ$, so the whole parameter space is topologically a cylinder.}
\label{fig:cone}
\end{figure}


The operator $D_{\omega, \lambda}^{\min}$ is Hermitian (symmetric with respect to
the scalar product $(\cdot|\cdot)$) if and only if $\omega,
\lambda \in \mathbb R$. Below we state our main results about self-adjoint realizations
 of  $  D_{\omega, \lambda} $
 in the form of two propositions.
 They are immediate consequences of the  results of Sections \ref{sec:minmax}, \ref{sec:homog}. We present also the phase diagram of operators $D_{\omega, \lambda}$ on Figure \ref{fig:plane} and the parameter space of homogeneous self-adjoint Dirac-Coulomb Hamiltonians on Figure \ref{fig:cone}. 

Let $H^1(\R_+)$ be the first Sobolev space on $\R_+$ and
$H_0^1(\R_+)$ be the closure of $C_\mathrm{c}^{\infty}$ in $H^1(\R_+)$.

\begin{proposition} \label{self-adj1}
 Let  $\omega, \lambda \in \mathbb R$.
The Hermitian operator $D_{\omega, \lambda}^{\min}$ has the following properties.
\begin{enumerate}
\item If $\frac{1}{4}\leq\omega^2 - \lambda^2$, it is self-adjoint
  and $D_{\omega, \lambda}^{\min}=D_{\omega, \lambda,\sqrt{\omega^2-\lambda^2}}$
    \item If $\omega^2 - \lambda^2<\frac14$, it
has deficiency indices $(1,1)$. Hence there exists a circle of
self-adjoint extensions.
\end{enumerate}
\end{proposition}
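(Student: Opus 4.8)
The plan is to reduce both statements to a classical Weyl limit‑point/limit‑circle count. Since $D_{\omega,\lambda}$ is a first‑order symmetric system with real‑analytic coefficients on $\R_+$, for real $\omega,\lambda$ one has $\bigl(D_{\omega,\lambda}^{\min}\bigr)^*=D_{\omega,\lambda}^{\max}$ (this is established in Section \ref{sec:minmax}), so the deficiency indices are $n_\pm=\dim\Ker(D_{\omega,\lambda}^{\max}\mp\i)$. The space of classical solutions of $(D_{\omega,\lambda}-z)f=0$ is two‑dimensional, so $n_\pm=d_0+d_\infty-2$, where $d_0$ and $d_\infty$ are the numbers of linearly independent solutions of $(D_{\omega,\lambda}-\i)f=0$ that are square‑integrable near $0$ and near $\infty$ respectively, each of $d_0,d_\infty$ lying in $\{1,2\}$. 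Thus the whole proof is the computation of $d_0$ and $d_\infty$.

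For the endpoint $\infty$: the coefficient $1/x$ tends to $0$, so near infinity the system is an asymptotically‑free perturbation of the free massless Dirac equation, whose solutions are $\e^{\pm\i zx}$ times constant spinors. A variation‑of‑parameters estimate (valid despite the long‑range tail, because $1/x\to 0$) shows the full equation likewise has, for $z=\i$, exactly one solution decaying and one growing as $x\to\infty$; hence $d_\infty=1$ for all $\omega,\lambda$, i.e.\ $\infty$ is in the limit point case.

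For the endpoint $0$: solving for $f'$ turns the equation into $xf'=\bigl(M_0+O(x)\bigr)f$ with $M_0=\begin{bmatrix}0&\lambda-\omega\\-(\lambda+\omega)&0\end{bmatrix}$, so $0$ is a Fuchsian singularity with indicial equation $\det(M_0-s)=s^2+\lambda^2-\omega^2=0$, i.e.\ $s=\pm\mu$, $\mu:=\sqrt{\omega^2-\lambda^2}$. A contraction‑mapping (Frobenius) argument, treating the $O(x)$ term as the perturbation, produces a basis of solutions behaving near $0$ as the two functions in \eqref{sol11}, i.e.\ $\sim x^{\pm\mu}$ (and as $1$, $\log x$ when $\mu=0$, with a resonant logarithm in the small solution when $\mu=\tfrac12$). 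Since $x^s\in L^2$ near $0$ iff $\RE s>-\tfrac12$, one reads off: if $\omega^2-\lambda^2\geq\tfrac14$ then $\mu\geq\tfrac12$, only $x^{\mu}$ is square‑integrable near $0$, so $d_0=1$ and $n_\pm=0$; if $\omega^2-\lambda^2<\tfrac14$ then $\RE\mu\in[0,\tfrac12)$ (with $\mu$ real in $[0,\tfrac12)$, or purely imaginary, or $\mu=0$ with a logarithm), so both model solutions lie in $L^2$ near $0$, giving $d_0=2$ and $n_\pm=1$.

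The conclusion is then immediate. In case (1), $n_\pm=0$ gives self‑adjointness, so $D_{\omega,\lambda}^{\min}=D_{\omega,\lambda}^{\max}$; since $|\RE\mu|\geq\tfrac12$ the triple $(\omega,\lambda,\mu)$ with $\mu=\sqrt{\omega^2-\lambda^2}$ labels the unique homogeneous realization, whence $D_{\omega,\lambda}^{\min}=D_{\omega,\lambda,\sqrt{\omega^2-\lambda^2}}$ by the construction in Sections \ref{sec:minmax}, \ref{sec:homog}. In case (2), $n_\pm=1$ and von Neumann's theorem yields a circle ($\mathrm{U}(1)$‑family) of self‑adjoint extensions. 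The one genuinely technical step is the asymptotic analysis at $0$: one must verify that the Frobenius iteration converges for every real pair $(\omega,\lambda)$, handle uniformly the confluent case $\mu\to 0$ and the integer‑spaced case $\mu=\tfrac12$ where a resonant logarithm appears, and confirm that in each subcase the local solution space is exactly the one spanned by \eqref{sol11}; once that is in hand, the $L^2$ bookkeeping and the deficiency‑index formula are routine.
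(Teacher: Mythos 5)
Your argument is correct, but it proceeds along a genuinely different route from the paper. The paper never invokes Weyl limit-point/limit-circle theory or von Neumann deficiency counting directly from ODE asymptotics: it first constructs the homogeneous realizations $D_p$ together with an explicit bounded resolvent built from Whittaker functions (Theorem \ref{thoe}), determines $\Ker(D_{\omega,\lambda}^{\max}-k)$ for nonreal $k$ (Proposition \ref{eig_prop}) and the minimal domain via Hardy's inequality (Proposition \ref{dom_prop}), and deduces Theorem \ref{domain_comparison}; Proposition \ref{self-adj1} then falls out because for real parameters $(D_{\omega,\lambda}^{\min})^*=D_{\omega,\lambda}^{\max}$, so equality of the two domains gives self-adjointness and the codimension-two statement plus simplicity of the nonreal eigenvalues gives indices $(1,1)$. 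Your proof replaces all of this by the classical endpoint classification: Frobenius analysis of the regular singular point at $0$ (indicial roots $\pm\mu$, with the resonant cases $\mu=0,\tfrac12$ handled correctly, since $x^{-1/2}$ fails to be $L^2$ with or without a logarithm) and limit point at $\infty$, combined with the standard decomposition formula $n_\pm=d_0+d_\infty-2$ for $2\times2$ Dirac systems. This buys a shorter, more elementary and self-contained proof of exactly this proposition, at the price of yielding none of the by-products the paper's machinery provides (explicit resolvent, domains, holomorphy), and it still leans on the paper's construction of $D_p$ for the identification $D_{\omega,\lambda}^{\min}=D_{\omega,\lambda,\sqrt{\omega^2-\lambda^2}}$ in case 1 (which is fine, since for $\RE(\mu)\geq\tfrac12$ one has $\Dom(D_p)=\Dom(D_{\omega,\lambda}^{\min})$ by definition). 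The one step you should support more explicitly is the limit-point claim at $\infty$: a naive variation-of-parameters/Levinson argument does not apply verbatim because $1/x$ is not integrable at infinity, so either include the logarithmic (Coulomb) phase $x^{\pm\i\lambda}$ in the comparison solutions — which is exactly what the asymptotics \eqref{eq:xi_large_x} and \eqref{eq:zeta_asymptotic} provide — or invoke the classical theorem that one-dimensional Dirac systems with locally integrable coefficients are always in the limit point case at $\infty$. With that reference supplied, the proof is complete.
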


\begin{proposition} \label{self-adj2}
\begin{enumerate}
    \item[1a.] If $\frac{1}{4}<\omega^2 - \lambda^2$, we have $\Dom(D_{\omega, \lambda}^{\min}) = H_0^1(\R_+,\C^2)$.
    \item[1b.] If  $\frac{1}{4}=\omega^2 - \lambda^2$, we have
$H_0^1(\R_+,\C^2) \subsetneq \Dom(D_{\omega, \lambda}^{\min}) 
      $.
    \item[2a.] If $0<\omega^2 - \lambda^2<\frac14$, exactly two
      self-adjoint extensions
      of  $D_{\omega, \lambda}^{\min}$ are homogeneous,
namely $D_{\omega, \lambda, \sqrt{\omega^2 - \lambda^2}}$ and
 $D_{\omega, \lambda, -\sqrt{\omega^2 - \lambda^2}}$. The former is
 distinguished among all self-adjoint extensions by \begin{equation}
   \int_{0}^{\infty} \left( \frac{|f(x)|^2}{x} + |f(x)|
  |f'(x)| \right) \D x < \infty\text{ for }f \in \Dom(D_{\omega, \lambda,
  \sqrt{\omega^2 - \lambda^2}}),\end{equation}
 i.e.\ elements of its domain have finite expectation values of kinetic and potential energy.
    \item[2b.] If $| \lambda | = |\omega| \neq 0$, exactly one self-adjoint extension
      of  $D_{\omega, \lambda}^{\min}$ is homogeneous,
namely $D_{\omega, \lambda, 0}$. It has the property $H_0^1(\R_+,
\C^2) \subsetneq
\Dom(D_{\omega, \lambda, 0}) \subsetneq H^{1}(\R_+,\C^2)$.
\item[2c.] If $\lambda = \omega =0$, all
self-adjoint extensions
      of  $D_{\omega, \lambda}^{\min}$
are homogeneous. They have the form $D_{[a:b]}$ with $[a:b] \in \mathbb{RP}^1$.
\item[2d.] If $|\lambda| > |\omega|$, none of
self-adjoint extensions
      of  $D_{\omega, \lambda}^{\min}$ is homogeneous.
\end{enumerate}
\end{proposition}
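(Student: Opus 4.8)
\textit{Plan.} Since the Coulomb perturbation is negligible at infinity, where the free half-line Dirac operator is in the limit point case, the whole proposition is governed by the behaviour at $x=0$, and my plan is to extract it from the asymptotic analysis of $D_{\omega,\lambda}f=0$ there. Its indicial equation is $s^2=\omega^2-\lambda^2=\mu^2$, giving a ``dominant/subdominant'' pair of solutions behaving like $x^{\pm\mu}v_\pm$, with $v_\pm$ as in \eqref{sol11} and a logarithmic correction when $\mu=0$. The first step, which is really the substance of Sections \ref{sec:minmax}--\ref{sec:homog} that I would invoke, is to show by variation of parameters and the Cauchy--Schwarz bound $\bigl|\int_0^x g\bigr|\le\sqrt x\,\|g\|_{L^2(0,x)}$ that every $f\in\Dom(D_{\omega,\lambda}^{\max})$ expands near $0$ as $f(x)=a\,x^\mu v_++b\,x^{-\mu}v_-+r(x)$ with $r$ strictly subdominant; that for $|\RE\mu|<\frac12$ the residue map $f\mapsto(a,b)$ identifies $\Dom(D_{\omega,\lambda}^{\max})/\Dom(D_{\omega,\lambda}^{\min})$ with $\C^2$; that the boundary form is $-\lim_{x\to0^+}\bigl(\overline{f_1}g_2-\overline{f_2}g_1\bigr)$, which for real $\omega,\lambda$ a short computation with the explicit $v_\pm$ turns into a nonzero constant times $\overline a b'-\overline b a'$ (the $x^{\pm2\mu}$ cross-terms cancel); and that the dilation $(U_tf)(x)=t^{1/2}f(tx)$ acts on $(a,b)$ by $(t^{1/2+\mu}a,\,t^{1/2-\mu}b)$ when $\mu\neq0$, and affinely (the non-logarithmic coefficient rescaled, the logarithmic solution shifted by a multiple of $\log t$) when $\mu=0$. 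I would then compare domains with $H^1$ and $H^1_0$ using the Hardy inequality $\int_0^\infty|f|^2x^{-2}\,\D x\le4\int_0^\infty|f'|^2\,\D x$.

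\textit{Parts 1a and 1b.} Here $\mu\ge\frac12$, so $x^{-\mu}\notin L^2(0,1)$: the subdominant branch drops out, $D_{\omega,\lambda}^{\min}=D_{\omega,\lambda}^{\max}$ by Proposition \ref{self-adj1}, and every domain element is $\sim a\,x^\mu v_+$. Since $x^\mu$ and its derivative lie in $L^2(0,1)$ exactly when $\mu>\frac12$, and the remainder $r$ is $H^1$ near $0$ by the Cauchy--Schwarz bound, I would conclude $\Dom(D_{\omega,\lambda}^{\max})\subset H^1$ for $\mu>\frac12$, and in fact $\subset H^1_0$ since then $f(0^+)=0$; the reverse inclusion $H^1_0\subset\Dom(D_{\omega,\lambda}^{\max})$ holds in all cases (Hardy gives $f/x\in L^2$, hence $D_{\omega,\lambda}f\in L^2$), which proves 1a. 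For $\mu=\frac12$ the function $x^{1/2}\chi(x)v_+$, with $\chi\equiv1$ near $0$, solves $D_{\omega,\lambda}f=0$ near $0$ and so lies in the domain, yet its derivative $\sim\frac12 x^{-1/2}v_+$ is not in $L^2(0,1)$; thus $H^1_0\subsetneq\Dom(D_{\omega,\lambda}^{\min})$, proving 1b.

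\textit{Parts 2a--2d.} A self-adjoint extension is the restriction of $D_{\omega,\lambda}^{\max}$ to the preimage of a line $L$ on which the boundary form $\overline a b'-\overline b a'$ vanishes, and it is homogeneous precisely when $L$ is dilation-invariant, so the case analysis reduces to reading off the fixed lines of the dilation action. For $0<\mu<\frac12$ the Lagrangian lines are $\{a=cb\}$ with $c\in\mathbb{RP}^1$ (real $c$, plus $b=0$), and dilation sends $c\mapsto t^{2\mu}c$; only $c=\infty$ and $c=0$ are fixed, namely $D_{\omega,\lambda,\mu}$ ($f\sim x^\mu$) and $D_{\omega,\lambda,-\mu}$ ($f\sim x^{-\mu}$). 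For $f$ in the domain of the former, $f\sim x^\mu$ with $\mu>0$ gives $\int_0^1|f|^2x^{-1}\,\D x<\infty$ and then $\int_0^1|f||f'|\,\D x\le\bigl(\int|f|^2x^{-1}\bigr)^{1/2}\bigl(\int x|f'|^2\bigr)^{1/2}<\infty$, while every other extension has a domain element with nonzero $x^{-\mu}$-tail, for which $\int_0^1|f|^2x^{-1}\,\D x=\infty$ since $v_+,v_-$ are independent; this is the announced characterization. For $\mu=0$ with $|\omega|=|\lambda|\neq0$ the affine action fixes only the line of the non-logarithmic solution, so $D_{\omega,\lambda,0}$ is the unique homogeneous self-adjoint extension; on its domain the equation forces the component of $f$ along the logarithmic slot to have $L^2$ derivative and vanishing trace at $0$, whence (Hardy, then the equation again) $f\in H^1$ subject to a genuine constraint at $0$, so $H^1_0\subsetneq\Dom(D_{\omega,\lambda,0})\subsetneq H^1$. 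For $\omega=\lambda=0$ both branches are constant, dilation acts by the scalar $t^{1/2}$, so every line is invariant; the Lagrangian lines are $\C\bigl[\begin{smallmatrix}a\\b\end{smallmatrix}\bigr]$ with $\IM(\overline a b)=0$, i.e.\ $[a{:}b]\in\mathbb{RP}^1$, and these are the operators $D_{[a:b]}$. For $|\lambda|>|\omega|$ we have $\mu=\i\nu$, $\nu>0$; then $|t^{1/2\pm\mu}|=t^{1/2}$ but dilation rotates $a/b$ by $t^{2\i\nu}$, so the Lagrangian lines (those with $|a|^2=|b|^2$) are permuted without fixed points and no self-adjoint extension is homogeneous.

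\textit{Where the difficulty lies.} Granting Sections \ref{sec:minmax}--\ref{sec:homog}, what remains is the elementary case analysis above; the real work is in that framework, and I expect the main obstacle to be establishing the asymptotic expansion with a remainder that is genuinely subdominant in \emph{every} regime — the borderline $\mu=\frac12$ (where the dominant term itself barely fails to be in $H^1$), the resonant $\mu=0$ (logarithm, affine dilation action), and purely imaginary $\mu$ (where the two branches have equal modulus, so the coefficients must be recovered from the oscillation) — together with checking that the residue map is onto $\C^2$ and intertwines dilations as claimed. Everything downstream, including matching the homogeneous realizations to $D_{\omega,\lambda,\mu}$ and $D_{[a:b]}$ and the energy characterization in 2a, is then bookkeeping with the explicit solutions \eqref{sol11}.
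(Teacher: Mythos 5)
Your overall strategy is sound and it is genuinely different from the paper's. The paper never introduces boundary values or Lagrangian lines: it characterizes $\Dom(D_{\omega,\lambda}^{\min})=H_0^1(\R_+,\C^2)$ for $|\RE(\mu)|\neq\frac12$ via the factorization $D_{\omega,\lambda}=(A-M_{\omega,\lambda})\frac{\sigma_2}{x}$ (Proposition \ref{dom_prop}), constructs the homogeneous realizations directly as $D_p$ with domain $\Dom(D_{\omega,\lambda}^{\min})$ plus one explicit cut-off power solution (Theorem \ref{thoe}), obtains self-adjointness from $D_p^*=D_{\overline p}$, and then reads Propositions \ref{self-adj1}--\ref{self-adj2} off Theorem \ref{domain_comparison}. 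You instead classify \emph{all} self-adjoint extensions by the isotropic lines of the boundary form on the two-dimensional quotient $\Dom(D_{\omega,\lambda}^{\max})/\Dom(D_{\omega,\lambda}^{\min})$ and then select the dilation-fixed lines; your computations of the form (proportional to $\overline a b'-\overline b a'$ for real $\mu$, to $\overline a a'-\overline b b'$ for $\mu\in\i\R^\times$), of the scaling action $(a,b)\mapsto(t^{\frac12+\mu}a,t^{\frac12-\mu}b)$ (affine in the resonant case), and the resulting case analysis 2a--2d are correct, and they in fact supply the equivariance argument that the paper leaves implicit when it calls the proposition an ``immediate consequence'' of Sections \ref{sec:minmax}--\ref{sec:homog}. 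What your route buys is the full picture of the scaling flow on the circle of extensions (Figure \ref{fig:phases}); what the paper's route buys is that no asymptotic expansion of maximal-domain elements is ever needed.

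The one step that would fail as written is the justification of 1a: plain Cauchy--Schwarz only gives that the variation-of-parameters remainder is $O(x^{1/2})$, and that does \emph{not} put it in $H^1$ near $0$ --- via the equation, $r\in H^1$ near $0$ is equivalent to $\int_0^1|r(x)|^2x^{-2}\,\D x<\infty$, which an $O(x^{1/2})$ bound just misses. What is actually needed is a weighted $L^2$ estimate for the two homogeneous degree $-1$ kernels $\frac1x\big(\frac yx\big)^{\mu}\one_{\{y<x\}}$ and $\frac1x\big(\frac xy\big)^{\mu}\one_{\{x<y\}}$, whose Schur norms are of order $(\mu+\tfrac12)^{-1}$ and $(\mu-\tfrac12)^{-1}$; this is exactly where the threshold $\mu=\frac12$ enters, and it is precisely what the paper packages into the bounded invertibility of $A-M_{\omega,\lambda}$ in Proposition \ref{dom_prop}. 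Since you grant Sections \ref{sec:minmax}--\ref{sec:homog}, the cleanest repair is to cite Proposition \ref{dom_prop} and Theorem \ref{domain_comparison} for 1a (and for $\Dom(D_{\omega,\lambda}^{\min})=H_0^1$, used throughout 2a--2d) rather than rederive them; the same weighted estimates are also what make your residue map well defined with a genuinely subdominant remainder for $|\RE(\mu)|<\frac12$, which you correctly flagged as the main burden. A minor simplification: in 2b the inclusion $\Dom(D_{\omega,\lambda,0})\subsetneq H^1$ follows at once from $\Dom(D_{\omega,\lambda,0})=H_0^1+\C\chi\eta$ with $\eta$ the constant solution, since (say for $\omega=\lambda$) the function $\chi\begin{bmatrix}1\\0\end{bmatrix}$ lies in $H^1$ but not even in $\Dom(D_{\omega,\lambda}^{\max})$, because $D_{\omega,\lambda}$ applied to it behaves like $-2\lambda x^{-1}$ near zero.
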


Now let $\omega, \lambda$ be real and suppose that $\omega^2 -
\lambda^2 < \frac14$.  Let $\tau\mapsto U_\tau$ denote the scaling
transformation. The parameter space of self-adjoint extensions is a circle. It admits an action of the scaling group given by
\begin{equation}
D \mapsto U_{\tau} D U_{\tau}^{-1}.
\label{eq:scaling_action}
\end{equation}
The fixed points of this action are the homogeneous self-adjoint extensions. Main properties of this action are illustrated by Figure \ref{fig:phases}.

\begin{figure}
  \centering
  \includegraphics[width=0.95\textwidth]{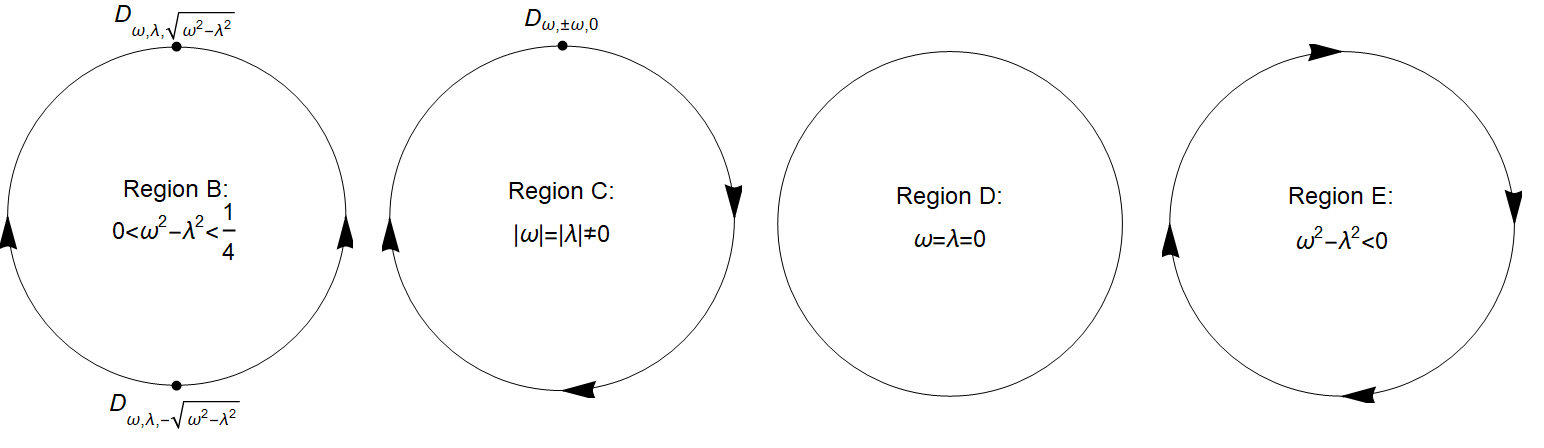}
  \caption{Visualization of the action of the scaling group on
    self-adjoint extensions of $D_{\omega, \lambda}^{\min}$ in four
    regions covering the set of $\omega, \lambda$ satisfying $\omega^2
    - \lambda^2 < \frac14$. Fat dots are the fixed points, while
    arrowheads indicate the direction of the flow as $\tau$ increases,
    see \eqref{eq:scaling_action}. In the first region there are two
    fixed points, attractive and repulsive, corresponding to a positive
    and negative $\mu$, respectively. As $\omega^2 - \lambda^2$
    decreases to zero, the two fixed points merge to one degenerate
    fixed point, except for the point $\omega = \lambda =0$ at which
    the scaling action becomes trivial. As $\omega^2 - \lambda^2$
    decreases below zero, the scaling action becomes periodic with period $\frac{\pi}{\sqrt{\lambda^2 - \omega^2}}$.}
\label{fig:phases}
\end{figure}

As we present in  Appendix
\ref{ddimensions},  $d$-dimensional Dirac-Coulomb Hamiltonians can
be reduced to  the
radial operator \eqref{oper}. Combined with the analysis presented above, one
obtains rather complete information about self-adjointness and
homogeneity properties of these operators. Here we point out only a
few facts concerning these extensions on the lowest angular momentum sector.
\begin{itemize}
\item \textbf{dimension $1$:} There exist no homogeneous self-adjoint realizations for any $\lambda \neq 0$.
\item \textbf{dimension $2$:} The operator defined on smooth spinor-valued
  functions with compact support not containing zero is not
  essentially self-adjoint for any $\lambda \neq 0$. For
  $|\lambda|< 1$ there exist  homogeneous self-adjoint
  extensions of $D_{\omega, \lambda}^{\min}$. These
  homogeneous extensions can be organized into two continuous
  families. The (more physical)  family is defined on
  $[-1,1]$. At the endpoints $\lambda=\pm1$ it meets the other family, which
  is defined on $[-1,0[\cup]0,1]$.
\item \textbf{dimension $d \geq 3$:} The operator defined as above is
  essentially self-adjoint if $\lambda^2 \leq
  \frac{d(d-2)}{4}$. If~$\frac{d(d-2)}{4} < \lambda^2$
  it is not essentially self-adjoint. However, for $\lambda^2\leq
  \frac{(d-1)^2}{4}$
  there exists  homogeneous self-adjoint extensions of $D_{\omega,
    \lambda}^{\min}$. They can be organized into two  families
  depending
  continuously on $\lambda$. The more physical family is defined
  on $[-  \frac{(d-1)^2}{4},   \frac{(d-1)^2}{4}]$. The second family
  meets the first at the endpoints and is defined on
  $[-  \frac{(d-1)^2}{4}, -\frac{d(d-2)}{4}[\,\cup\,
]\frac{d(d-2)}{4},  \frac{(d-1)^2}{4}]. $
\end{itemize}
In all cases in which there exist no homogeneous self-adjoint extensions, the defect indices are nevertheless equal and hence there exist nonhomogeneous self-adjoint extensions.

Analysis of self-adjoint
realizations
  of the 3-dimensional Dirac-Coulomb Hamiltonian has a~long and rich history
in the mathematical literature. There even exists  a recent review
paper devoted to this subject \cite{Gal}. Let us explain the main
points of this history, refering the reader to \cite{Gal} for more details.

A direct application of the Kato-Rellich theorem yields the essential
self-adjointness of the (massive, 3d) Dirac-Coulomb Hamiltonian only for 
$|\lambda|<\frac12$. This proof is due to Kato
\cite{Kato1,Kato}. The essential self-adjointness up to the boundary
of the ``regular region''
$|\lambda|<\frac{\sqrt{3}}{2}$
 was proven independently by Gustaffson-Rejt\"o  \cite{GuRe,Re} and Schmincke 
 \cite{Schm1}. They needed to use slightly more refined
arguments going beyond to the basic Kato-Rellich theorem. The ``distinguished self-adjoint extension'' in the region
$\frac{\sqrt{3}}{2}<|\lambda|<1$ was described in several equivalent ways,
mostly involving the characterization of the domain, by Schmincke,
W\"ust, Klaus, Nenciu and others
\cite{Schm2,Wust1,Wust2,Nen,Gal2, Cassano1,Cassano2}. 
  The characterization of distinguished self-adjoint extensions based
  on holomorphic families of operators was first proposed by Kato in \cite{KatoHol}. Esteban and Loss \cite{EL}
characterized the distinguished self-adjoint realization at the boundary
of the ``transitory region'', that is for $|\lambda|=1$, by using
the so-called Hardy-Dirac inequalities. Self-adjoint realizations in the
``supercritical region'' $|\lambda|>1$ were first studied by Hogreve in
  \cite{Ho}, and then (with some corrections) in \cite{Gal2}.
The authors of \cite{Gal2} analyze also the second distinguished branch
of self-adjoint extensions in the critical region, which they call
``mirror distinguished''. \cite{Cassano1,Cassano2} include in their
analysis a term proportional to $\frac{1}{r^2}\beta\alpha_ix_i$,
which they call ``anomalous magnetic''.

Our treatment of Dirac-Coulomb Hamiltonians is quite different from the
above references. We use  exact solvability to describe rather
completely their resolvent, domain and (stationary)
  scattering theory.
We do not add the mass term,
which helps with exact solvability and makes possible to use the
homogeneity as a good criterion for distinguished realizations. Another concept which we use
  is that of a holomorphic family of operators, which we view as an
important  criterion for distinguishing a realization.
The mass term is bounded, so it does not affect the basic picture of
distinguished realizations. Our analysis includes
realizations which are not necessarily self-adjoint, but
  turn out to be self-transposed with respect to a  natural complex
  bilinear form.
  Our description of various closed realizations of Dirac-Coulomb
  Hamiltonians
  is quite straightforward and  involves only elementary functions.  
 We use neither the von Neumann
  nor the Krein-Vishik theory
  theory of self-adjoint extensions, which lead to a rather
complicated description of the domains of closed
  description
  involving Whittaker functions, see \cite{Gal2,Gal3}.

Our analysis of Dirac-Coulomb Hamiltonians can be viewed as a
continuation of a series of papers  about holomorphic families of certain
1-dimensional Hamiltonians:  Bessel operators
\cite{BuDeGe11_01,DeRi17_01}
and Whittaker
operators \cite{DeRi18_01, DeFaNgRi20_01}.

\newpage

Let us mention some more papers, where Dirac-Coulomb operators play
an important role.

First, there exists a number of papers \cite{DV,TE,GY,Daude} devoted
to 
the time dependent approach to scattering theory for self-adjoint Dirac Hamiltonians on $\R^3$ with long range potentials.

There also  exists a large and 
  interesting literature devoted to eigenvalues inside a spectral gap 
  of a self-adjoint operator, with massive Dirac-Coulomb Hamiltonians as prime
  examples \cite{EL,ELS,SST,Gal3}. Massless Dirac-Coulomb Hamiltonians
  do not have a gap, and eigenvalues are possible only
    in non-self-adjoint nonhomogeneous cases. Nevertheless, we 
  believe that methods of our paper are relevant for the eigenvalue 
  problem in the massive self-adjoint case. 
  
For a study of one-dimensional Dirac operators with locally integrable
complex potentials, see \cite{Malamud}.

Finally, let us mention another interesting related topic, where the question
of distinguished self-adjoint realizations arises:  2-body Dirac-Coulomb 
Hamiltonians. Their mathematical study was undertaken in \cite{DO}. Even
though the
physical significance of these Hamiltonians is not very clear, they are
widely used in quantum chemistry.

Let us briefly describe the organization of our paper.
  Its main part, that is Sections
  \ref{Blown-up quadric}--\ref{mixed-bc}
  describes  realizations of 1d Dirac-Coulomb Hamiltonians on $L^2(\R_+,\C^2)$
  focusing on the homogeneous ones. Besides, our paper contains four
  appendices, which can be read independently.

  Appendix
  \ref{sec:dir1d} first discusses some general concepts related to 1d
  Dirac operators. Then two special classes of 1d Dirac-Coulomb
  Hamiltonians are analyzed in detail.

    Essentially all  papers that we mentioned in our bibliographical
    sketch treat the 3-dimensional case.
    It was pointed out in \cite{XIAOetal}  that a general $d$-dimensional spherically symmetric Dirac 
Hamiltonian can be reduced to 
a 1-dimensional one. We describe this reduction in detail in
Appendix \ref{ddimensions}. We also analyze its various group-theoretical
and differential-geometric aspects, including the relation to Dirac operators on spheres and the famous Lichnerowicz formula. Spectra of the latter are computed in two independent ways and a construction of eigenvectors is presented.

The short Appendix \ref{app:operators_Rplus} is devoted to the Mellin
transformation.

 Finally, in Appendix \ref{whittaker} we collect properties of various
 special functions, mostly, Whittaker functions, which are used in our paper.
 We mostly follow the conventions of
 \cite{DeRi18_01, DeFaNgRi20_01}.

\subsection{Remarks about notation}
\label{Remarks about notation}
Symbol $( \cdot | \cdot )$ is used for standard scalar products on $L^2$ spaces, linear in the second argument, while $\langle \cdot | \cdot \rangle$ is used for the analogous bilinear forms in which complex conjugation is omitted:
\begin{equation}
\langle f | g \rangle = \int f(x)^{\T} g(x) \D x.
\end{equation}
Tranpose (denoted by the superscript $\scriptstyle\T$) of a densely defined operator is defined in terms of $\langle \cdot | \cdot \rangle$ in the same way as the adjoint (denoted by $*$) is defined in terms of the scalar product. We use superscript $\tperp$ for orthogonal complement with respect to $\langle \cdot | \cdot \rangle$ and $\perp$ for orthogonal complement with respect to $( \cdot | \cdot )$. Overline always denotes complex conjugation, for example we have $X^\perp = \overline{X^\tperp}$ for a subspace $X$.

We will write $\R_+ = ]0, \infty [$, $\C_{\pm} = \{ z \in \C \, | \,
\pm \IM(z) >0 \}$, $\N=\{0,1,\dots\}$. Omission of zero will be denoted by $\times$, e.g.\ $\R^\times =\R \backslash \{ 0 \}$. Indicator function of a subset $S \subset \R$ will be denoted by $\one_S$. We label elements of the Riemann sphere $\CP^1$ using homogeneous coordinates, i.e.\ $[a:b] \in \CP^1$ is the complex line spanned by $(a,b) \in \C^2 \backslash \{ 0 \}$.

Operators of multiplication of a~function in $L^2(\R_+, \C^n)$ and
$L^2(\R, \C^n)$ by its argument will be denoted by $X$ and $K$,
respectively. Dilation group action on $L^2(\R_+,\C^n)$ is defined by
$U_{\tau} f (x) = \e^{\frac{\tau}{2}} f(\e^{\tau} x)$. We denote its
self-adjoint generator by $A$, so that $U_{\tau} = \e^{\i \tau
  A}$. Operator $O$ is said to be homogeneous of degree $\nu$ if $U_{\tau} O U_{\tau}^{-1} = \e^{\nu \tau} O$. Inversion operator $J$ is defined by $(Jf)(x) = \frac{1}{x} f \left( \frac{1}{x} \right)$. It is unitary and satisfies $J^2=1$, $JAJ^{-1}=-A$.

Complex power functions $z \mapsto z^a$ are holomorphic and defined on
$\C \backslash ] - \infty, 0 ]$. Domains of holomorphy of special
functions used in the text are specified in Appendix
\ref{whittaker}.

In our paper we will often use the concept of a holomorphic map with
values in closed operators, which we now briefly recall
\cite{Kato,DeWr}. We will give two equivalent definitions of this concept: the first
is ``more elegant'', the second ``more practical''.
To formulate the first definition note that
the Grassmannian (the set of closed subspaces)
$\mathrm{Grass}(X)$ of a Hilbert space $X$
carries the structure of a
complex Banach manifold \cite{Douady}.

Consider Hilbert spaces $X_2, X_3$ be Hilbert spaces and a complex
manifold $\cM$. We say that a function
 $\cM\ni p\mapsto T_p$ of closed operators $X_2 \to X_3$ is
 holomorphic if and only if $p \mapsto \mathrm{graph}(T_p) \in \mathrm{Grass}(X_2 \times X_3)$ is a holomorphic map. 

 Equivalently, $\cM\ni p\mapsto T_p$
 is holomorphic if for every $p_0 \in \cM$ there exists
a~neighborhood $\cM_0$ of $p_0$ in $\cM$, a Hilbert space $X_1$ and a
holomorphic family $\cM_0\ni p\mapsto S_p$ of bounded injective
operators $S_p : X_1 \to X_2$  such that $\Ran(S_p)= \Dom(T_p)$ and $T_p S_p$ form
a~holomorphic family of bounded operators.

\section{Blown-up quadric}\label{Blown-up quadric}

Formal Dirac-Coulomb Hamiltonians depend on parameters
$(\omega,\lambda)\in\C^2$.
In order to specify their realizations as closed homogeneous operators, it is necessary to choose a square root of $\omega^2-\lambda^2$. 
For this reason homogeneous Dirac-Coulomb Hamiltonians are
parametrized by points of a~certain complex manifold.
This section is devoted to its definition and basic properties.

Let us first introduce a certain null quadric in $\C^3$:
\begin{equation}
\cM^\pre:= \left \{ (\omega, \lambda, \mu) \in \C^3 \, | \, \omega^2 = \lambda^2 + \mu^2
  \right \}.
\label{quadric}\end{equation}
By the holomorphic implicit function theorem, $\cM^{\pre}$ is a complex two-dimensional submanifold of $\C^3$ away from the singular point $ (0,0,0)$ (also denoted $0$ for brevity).

We consider also the so-called {\em blowup} of $\cM^\pre$ at the singular
point, defined by
\begin{equation}
  \cM = \left \{ (\omega, \lambda, \mu,[a:b]) \in \C^3 \times \CP^1 \, | \, \begin{bmatrix}  \omega + \lambda &  \mu \\ \mu &   \omega - \lambda \end{bmatrix} \begin{bmatrix} a \\ b \end{bmatrix} = \begin{bmatrix} 0 \\ 0 \end{bmatrix}
  \right \}.
\end{equation}
Fibers of the projection map $\cM \to \CP^1$ are described by triples $(\omega, \lambda, \mu) \in \C^3$ subject to two linearly independent linear equations, whose coefficients are holomorphic functions on local coordinate patches of $\CP^1$. Therefore $\cM$ is a holomorphic line bundle over $\CP^1$, embedded in the trivial bundle $\C^3 \times \CP^1$. In particular it is a two-dimensional complex manifold.

Equation $\begin{bmatrix}  \omega + \lambda &  \mu \\ \mu &   \omega - \lambda \end{bmatrix} \begin{bmatrix} a \\ b \end{bmatrix} = \begin{bmatrix} 0 \\ 0 \end{bmatrix}$ has a solution different than $(a,b) = (0,0)$ if and only if the quadratic equation defining $\cM^{\pre}$ is satisfied. Thus there is a~projection map $\cM \to \cM^{\pre}$. Its restriction to the preimage of $\cM^\pre\backslash\{  0 \}$ is an isomorphism and will be treated as an identification. The preimage of zero, called the zero fiber and denoted $\cZ$, is an isomorphic copy of $\CP^1$.

We will often use the short notation $p=(\omega, \lambda,\mu,[a:b])$
for elements of $\cM $. If $p \notin \cZ$, then $[a:b]$ is uniquely determined by $(\omega, \lambda, \mu)$ and we abbreviate $p = (\omega, \lambda, \mu)$. In turn for $p$ in the zero fiber we write $p = [a:b]$.

We will now describe  useful coordinate systems on $\cM$. The coordinates 
\begin{equation}
    z = \frac{a}{b}, \qquad \omega+ \lambda
    \label{cp1}
\end{equation}
are valid on $\{ b \neq 0 \}$ -- the open subset of $\cM$ which is the complement of
\begin{equation}
\{ b = 0 \} = \{ ( \omega, -\omega, 0,[1:0])   \}_{\omega \in \C}.
\end{equation}
More precisely, the following map is an isomorphism of complex manifolds:
\begin{equation}
\C^2 \ni (\omega + \lambda, z) \mapsto \left( \frac{(\omega+ \lambda) (1+ z^2)}{2}, \frac{(\omega+\lambda) (1- z^2)}{2}, - (\omega+\lambda) z , [z:1] \right) \in \{ b \neq 0 \}.
\end{equation}
We note that
\begin{equation}
  z = - \frac{\mu}{\omega+ \lambda} = - \frac{\omega - \lambda}{\mu}
\end{equation}
whenever the denominators are nonzero. 

Analogously, on $\{ a \neq 0 \}$, the complement of
\begin{equation}
\{ a = 0 \} = \{ ( \omega, \omega, 0,[0:1])   \}_{\omega \in \C},
\end{equation}
we use the coordinates $z^{-1}$ and $\omega - \lambda$.

Sets $\{ a \neq 0 \}$, $\{ b \neq 0 \}$ cover the whole $\cM$. On their intersection we have
\begin{equation}
\omega - \lambda= (\omega+ \lambda) z^2.
\label{eq:transition}
\end{equation}

We note that the locus $\{ \lambda = 0 \}$ is the union of three Riemann surfaces:
\begin{equation}
\{ \lambda = 0 \} = \cZ \cup \{ a = b \} \cup \{ a = - b \}.
\label{eq:lambda_zero_decomp}
\end{equation}
It is singular at the intersection points:
\begin{equation}
\{[1:1] \}=\cZ \cap \{ a = b \},\quad\{[1:-1]\}=\cZ \cap \{ a = - b
  \}.
\end{equation}
On~the other hand, the  level sets $\{ \lambda = \lambda_0 \}$ with $\lambda_0 \neq 0$ are nonsingular. Similarly, we have
\begin{equation}
    \{ \mu = 0 \} = \cZ \cup \{ a = 0 \} \cup \{ b = 0 \}.
\end{equation}

\begin{remark}Consider the {\em tautological line bundle}
    $\mathcal{N}\to\CP^1$, i.e.\ the space of pairs
${\big((a',b'),[a:b]\big) \in \C^2 \times \CP^1}$ such that $(a',b') \in [a:b]
$. Setting $z:=\frac{a'}{b'}$, we obtain two charts $(b',z)$ and
$(a',z^{-1})$, which cover $\mathcal{N}$. The clutching formula for
$\mathcal{N}$ is
$a'=b' z$, which can be compared with the clutching formula \eqref{eq:transition}
for $\mathcal{M}$. Thus we see that as a holomorphic vector bundle $\cM$ is isomorphic to the tensor 
square of $\mathcal{N}$. 
\end{remark}


Later we will encounter the meromorphic functions on $\cM$
\begin{equation}
N_p^\pm = \frac{z \pm \i}{\Gamma(1 + \mu \mp \i \lambda)}.
\end{equation}
We define the {\em exceptional sets} as their zero loci:
\begin{align}
\cE^{\pm} := &\{  N_p^\pm =0  \} =\bigcup_{n=0}^\infty\cE_n^\pm,\\
  \cE_0^\pm:=& \{p\in\cM\,|\, a = \mp \i b \}=\{p\in\cM\,|\,z=\mp\i\},\notag\\
\cE_n^\pm:=&              \{ p \in \cM \, | \, \mu \mp \i \lambda
             = -n \},\quad n=1,2,\dots. \nonumber 
\end{align}
Away from $\cZ$, the condition
$p\in\cE_0^\pm$ is equivalent to $\mu \mp \i
\lambda =0$. Thus for $p \notin \cZ$ we have $p \in \cE^{\pm}$ if and
only if $\mu \mp \i \lambda \in - \mathbb N$.
Moreover,
\begin{equation} \cE^{\pm}\cap\cZ=\cE_0^\pm\cap\cZ=
\{ [\mp \i : 1] \}.\end{equation} In particular $\cZ \cap \cE^+\cap \cE^- = \emptyset$.
Clearly, the sets $\cE_n^\pm$, $n=0,1,2,\dots$, are connected
components of $\cE^\pm$. Each $\cE_n^\pm$ is isomorphic to $\C$. Indeed, $\cE_0^\pm$ is a fiber of $\cM \to \CP^1$ and $\cE_n^\pm$ with $n \geq 1$ is globally parametrized by $\omega$.

\begin{lemma} \label{eplus_eminus_intersection}
$\cE^+ \cap \cE^-$ is a countably infinite discrete subset of $\cM$ on which $2 \mu + 1 \in - \mathbb N$. In~particular $\mu \leq - \frac{1}{2}$.
\end{lemma}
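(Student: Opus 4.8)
The plan is to reduce the statement to the explicit description of $\cE^{\pm}$ off the zero fiber that was just recorded, and then solve the resulting arithmetic constraints.

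First I would use the already-noted fact $\cZ\cap\cE^+\cap\cE^-=\emptyset$: it shows that every $p\in\cE^+\cap\cE^-$ lies in $\cM\setminus\cZ$, where it is labelled by a triple $(\omega,\lambda,\mu)$ with $\mu^2=\omega^2-\lambda^2$, and there $p\in\cE^{\pm}$ if and only if $\mu\mp\i\lambda\in-\N$. Hence $p\in\cE^+\cap\cE^-$ forces $\mu-\i\lambda=-m$ and $\mu+\i\lambda=-n$ for some $m,n\in\N$. Adding and subtracting gives $\mu=-\tfrac{m+n}{2}$ and $\i\lambda=\tfrac{m-n}{2}$, whence $\omega^2=\lambda^2+\mu^2=mn$ and $\omega=\pm\sqrt{mn}$. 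Conversely each $(m,n)\in\N^2$ with $(m,n)\neq(0,0)$ yields one or two genuine points of $\cM\setminus\cZ$ (the triple is nonzero, hence different from $0\in\cM^{\pre}$, and off $\cZ$ the triple determines the point of $\cM$), while $(m,n)=(0,0)$ is excluded because it produces $0\in\cZ$. This gives the explicit description
\[
\cE^+\cap\cE^-=\Big\{\ \big(\pm\sqrt{mn},\ \tfrac{m-n}{2\i},\ -\tfrac{m+n}{2}\big)\ :\ (m,n)\in\N^2,\ (m,n)\neq(0,0)\ \Big\}.
\]

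From this the three assertions follow quickly. On such a point $2\mu+1=1-(m+n)$, and $m+n\geq1$ gives $2\mu+1\in-\N$, in particular $\mu=-\tfrac{m+n}{2}\leq-\tfrac12$. The set is indexed by the countable set $\N^2\setminus\{(0,0)\}$ with at most two points per index, so it is countable, and it is infinite since, e.g., the points with $m=0$ and $n=1,2,\dots$ have pairwise distinct values of $\mu$. For discreteness note that $\mu$ takes each of its values $-\tfrac12,-1,-\tfrac32,\dots$ only finitely often on this set and $\mu\to-\infty$ as $m+n\to\infty$, while $|\mu|\geq\tfrac12$ keeps the set bounded away from the image $0$ of $\cZ$; hence it has no accumulation point in $\cM^{\pre}\setminus\{0\}\cong\cM\setminus\cZ$ and, being disjoint from $\cZ$, none in $\cM$. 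Alternatively, one can argue abstractly: $\cE^{\pm}=\bigcup_n\cE_n^{\pm}$ is a locally finite union of closed one-dimensional submanifolds, each $\cE_m^+\cap\cE_n^-$ is finite by the computation above, and local finiteness then forces the intersection to be discrete and closed.

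The only step requiring a little care — and the one I would flag as the main, if mild, obstacle — is the interaction with the zero fiber. The components $\cE_0^{\pm}$ are cut out by $z=\mp\i$ rather than directly by $\mu\mp\i\lambda\in-\N$, so one must genuinely invoke $\cZ\cap\cE^+\cap\cE^-=\emptyset$ before applying the clean equivalence $p\in\cE^{\pm}\Leftrightarrow\mu\mp\i\lambda\in-\N$; concretely one checks that $z=-\i$ and $z=+\i$ are incompatible, and that $z=\mp\i$ forces $\omega=0$ and $\mu=\pm\i\lambda$, which is consistent with, and already covered by, the description above. Everything else is the elementary linear algebra of the $2\times2$ system defining $\cM$ together with the quadric relation $\mu^2=\omega^2-\lambda^2$.
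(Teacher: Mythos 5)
Your proof is correct and takes essentially the same route as the paper: use $\cZ\cap\cE^+\cap\cE^-=\emptyset$ to reduce to the characterization $\mu\mp\i\lambda\in-\N$ away from the zero fiber, solve for $(\omega,\lambda,\mu)$ in terms of $(m,n)\in\N^2\setminus\{(0,0)\}$, and read off countability, discreteness, infiniteness and $2\mu+1\in-\N$. (Your extra care about discreteness, and writing $\omega=\pm\sqrt{mn}$ rather than the paper's $\pm nm$, only improve on the published argument.)
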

\begin{proof}
Suppose that $p \in \cM$ is such that $\mu + \i \lambda = -n$, $\mu - \i \lambda = -m$ with $n,m \in \mathbb N$. Then
\begin{equation}
(\omega, \lambda, \mu) = \left( \pm nm, \frac{m-n}{2 \i}  , - \frac{m+n}{2} \right), \qquad (n,m) \in \N^2 \backslash \{ (0,0) \},
\label{eq:ep_em_int_par}
\end{equation}
from which the discreteness and countability of $\cE^+ \cap \cE^-$ is clear. If both $n,m$ are zero, then $\mu = \lambda =0$ and hence also $\omega = 0$. In this case we have $p \in \cZ \cap \cE^+ \cap \cE^- = \emptyset$---contradiction. Thus at least one of $n,m$ is nonzero, and we have $2 \mu + 1 = 1 - n - m \in - \mathbb N$. Conversely, if $(n,m) \in \mathbb N^2$ is different than $(0,0)$, then \eqref{eq:ep_em_int_par} defines one or two (if $nm \neq 0$) points of $\cE^+ \cap \cE^-$, so this set is infinite.
\end{proof}


We define the \emph{principal scattering amplitude} as the ratio
\begin{equation}
S_p = \frac{N_p^-}{N_p^+} = \frac{z-\i}{z+\i} \frac{\Gamma(1 + \mu - \i \lambda)}{\Gamma(1 + \mu + \i \lambda)} = \frac{(\omega - \lambda + \i \mu) \Gamma(1 + \mu - \i \lambda)}{(\omega - \lambda - \i \mu) \Gamma(1 + \mu + \i \lambda)}.
\end{equation}
It satisfies $\overline{S_{\overline p}} = S_p^{-1}$, hence it has a unit modulus for $p = \overline p$. Furthermore,
\begin{equation}
\cE^- \backslash \cE^+ = \{ S_p = 0 \}, \qquad \cE^+ \backslash \cE^- = \{ S_p = \infty \}, \qquad \cE^- \cap \cE^+ = \{ S_p \text{ indeterminate} \}.
\end{equation}

We introduce an involution on $\cM$ by
\begin{equation}\label{tau}
\tau (\omega, \lambda, \mu, [a:b]) = (\omega, \lambda, - \mu, [-a:b]).
\end{equation}

\section{Eigenfunctions and Green's kernels} \label{eigen}

\subsection{Zero energy}

The 1d Dirac-Coulomb Hamiltonian with parameters $\omega,\lambda \in
\C$ is given by the expression
\begin{equation}
  D_{\omega, \lambda} = 
\begin{bmatrix}
-\frac{\lambda+\omega}{x} & - \partial_x \\
\partial_x & -\frac{\lambda-\omega}{x} 
\end{bmatrix}.
\label{formal}\end{equation}
When we consider \eqref{formal} as acting on distributions on $\R_+$,
we will call it the {\em formal operator}. In what follows we will define various realizations of this operator, with domain and range contained in $L^2(\R_+,\C^2)$, preferably closed. They will
have additional indices.

First consider its eigenequation for eigenvalue zero
\begin{equation}
D_{\omega,\lambda}\xi=0.\label{eqo}
\end{equation}
The space of distributions on $\R_+$ solving \eqref{eqo} will be denoted $\Ker(D_{\omega,\lambda})$. 
The following lemma shows that  $\Ker(D_{\omega,\lambda})$ consists of smooth solutions.
\begin{lemma} \label{reg_lemma}
Let $f$ be a distributional solution on $\R_+$ of the equation $f'(x)
= M(x)f(x)$ for some $M \in C^{\infty}(\R_+, \mathrm{End}(\C^n))$. Then $f$ is a smooth function.
\end{lemma}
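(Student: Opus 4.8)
The plan is to prove this by a standard bootstrapping (elliptic-style regularity) argument applied to the first-order linear system. The key observation is that if $f$ is a distributional solution of $f' = Mf$ with $M$ smooth, then knowing that $f$ has a certain regularity automatically upgrades the regularity of $f'$, and hence of $f$ itself, by one step.

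First I would recall the mollification argument or, more elementarily, argue locally. Fix a point $x_0 \in \R_+$ and a compact neighborhood $I \Subset \R_+$. Since $f$ is a distribution on $\R_+$, on $I$ it has finite order, so $f \in H^{-N}_{\loc}(\R_+, \C^n)$ for some $N$ (alternatively, one may simply note every distribution is locally a finite sum of derivatives of continuous functions). The equation $f' = Mf$ then reads: the distributional derivative of $f$ equals $M f$, where $M$ is smooth hence a multiplier on every local Sobolev space. Thus if $f \in H^s_{\loc}$, then $Mf \in H^s_{\loc}$, so $f' \in H^s_{\loc}$, which gives $f \in H^{s+1}_{\loc}$. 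Starting from $s = -N$ and iterating, we get $f \in H^s_{\loc}$ for all $s \in \R$, and by the Sobolev embedding $H^s_{\loc} \hookrightarrow C^k_{\loc}$ for $s > k + \tfrac12$, we conclude $f \in C^{\infty}(\R_+, \C^n)$. Since $x_0$ was arbitrary, the claim follows.

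An alternative, even more hands-on approach avoids Sobolev spaces entirely: one uses the fundamental solution (propagator) of the ODE. Let $U(x)$ be the matrix solution of $U' = M U$, $U(x_1) = \one$, which exists and is smooth on $\R_+$ since $M$ is smooth (this is the classical smooth dependence theory for linear ODEs). Then for a genuine function one would have $f(x) = U(x) f(x_1)$. To make this work at the distributional level, consider $g := U^{-1} f$ (well-defined since $U(x)$ is invertible and $U^{-1}$ is smooth, so multiplication is legitimate on distributions). A direct computation of the distributional derivative gives $g' = (U^{-1})' f + U^{-1} f' = -U^{-1} U' U^{-1} f + U^{-1} M f = -U^{-1} M f + U^{-1} M f = 0$. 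Hence $g$ is a distribution on the interval with vanishing derivative, so $g$ is (locally) constant, i.e. $g \equiv c$ for some $c \in \C^n$ on each connected component; therefore $f = U c$, which is manifestly smooth.

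The main obstacle—really the only delicate point—is justifying the Leibniz-type manipulations at the level of distributions: that $U^{-1} f$ makes sense and that $(U^{-1} f)' = (U^{-1})' f + U^{-1} f'$ when $f$ is merely a distribution and $U^{-1}$ is smooth. This is the standard fact that multiplication by a smooth function and distributional differentiation satisfy the Leibniz rule, which is routine but should be invoked explicitly. I would favor the second approach since it is self-contained and elementary, matching the paper's stated preference for elementary arguments; the first approach is a good sanity check. One should also note the hypothesis that the domain is the connected interval $\R_+$ (or work component-by-component) so that "distribution with zero derivative is constant" applies cleanly.
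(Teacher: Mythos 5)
Your proposal is correct, and in fact contains two valid arguments. Your first sketch (localize, note the restriction has finite order so lies in some $H^{-N}$, use that smooth $M$ is a multiplier so $f'=Mf$ has the same local Sobolev regularity as $f$, bootstrap, and finish with Sobolev embedding) is essentially the paper's proof; the paper just carries out the localization more explicitly, with a pair of nested cutoffs $\chi_1,\chi_2$ so that each step of the bootstrap is performed on compactly supported functions, which is exactly the care needed to make ``$f\in H^s_{\loc}$ and $f'\in H^s_{\loc}$ imply $f\in H^{s+1}_{\loc}$'' rigorous on the open half-line. Your preferred second argument is genuinely different and perfectly sound: since $M$ is smooth on $\R_+$ and the system is linear, the fundamental matrix $U$ with $U'=MU$, $U(x_1)=\one$ exists, is smooth and invertible on all of $\R_+$, and the Leibniz rule for the product of a smooth function with a distribution gives $(U^{-1}f)'=0$, whence $U^{-1}f$ is constant on the interval and $f=Uc$ is smooth. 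The two delicate points you flag (the smooth-times-distribution Leibniz rule, and ``zero distributional derivative on an interval implies constant'') are both standard, so the argument closes. What each approach buys: the paper's bootstrap is purely local and generalizes immediately to situations where no global propagator is available (e.g.\ elliptic-type regularity statements), whereas your propagator argument is more elementary, avoids Sobolev spaces altogether, and yields the stronger conclusion that the space of distributional solutions is exactly the $n$-dimensional space of classical solutions --- a fact the paper implicitly uses later when it identifies $\Ker(D_{\omega,\lambda}-k)$ with the two-dimensional space of classical solutions.
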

\begin{proof}
Fix $x_0 \in \R_+$ and $\epsilon \in \left] 0 , \frac{x_0}{2} \right[$. We choose $\chi_2 \in C_c^{\infty}(\R_+)$ equal to $1$ on $[x_0 - 2 \epsilon, x_0 + 2\epsilon]$ and $\chi_1 \in C_c^{\infty}(\R_+)$ supported in $[x_0 - 2 \epsilon, x_0 + 2\epsilon]$ and equal to $1$ on $[x_0 - \epsilon, x_0 + \epsilon]$. Clearly $\chi_2 \chi_1 = \chi_1$ and $\chi_2 \chi_1' = \chi_1'$. Put $f_j = \chi_j f$ for $j=1,2$. Since $f_2$ is compactly supported, it~belongs to $H^s(\R_+,\C^n)$ for some $s \in \R$. We have $f_1 = \chi_1 f_2$, so also $f_1 \in H^s(\R_+,\C^n)$. Now evaluate
\begin{equation}
    f_1 ' = \chi_1' f + \chi_1 f' = \chi_2 \chi_1' f + \chi_2 \chi_1 M f = (\chi_1' + \chi_1 M) f_2.
\end{equation}
Since $\chi_1' +  \chi_1 M \in C_c^{\infty}(\R_+)$, this implies that $f_1 \in H^{s+1}(\R_+,\C^n)$. Next we may repeat this argument with $\frac{\epsilon}{2}$ playing the role of new $\epsilon$, $\chi_1$ as the new $\chi_2$ and arbitrarily chosen new~$\chi_1$. Then the new $f_1$ is in $H^{s+2}(\R_+, \C^n)$. Proceeding like this inductively we conclude that for every $s \in \R$ there exists $\chi \in C_c^{\infty}(\R_+)$ equal to $1$ on a neighbourhood of $x_0$ such that $\chi \, f$ belongs to $H^s(\mathbb R_+,\C^n)$. Taking $s> \frac{1}{2}+k$ we conclude from Sobolev embeddings that $f$ is of class $C^k$ on a~neighbourhood of $x_0$, perhaps after modifying it on a set of measure zero. Since this is true for every $k \in \mathbb N$ and every $x_0 \in \R_+$, $f$ is smooth.
\end{proof}

For $p\in \cM_{} $,
we introduce two types of solutions of \eqref{eqo}:
\begin{subequations}
\begin{align}\label{sol1}
\eta_p^\uparrow(x)&:=   \frac{x^{\mu}}{\omega+ \lambda} \begin{bmatrix}-\mu\\\omega+\lambda\end{bmatrix}=
    x^{\mu}\begin{bmatrix}z \\ 1\end{bmatrix}, \\
\label{sol2}
\eta_p^\downarrow(x)&:=  \frac{x^{\mu}}{\omega - \lambda}\begin{bmatrix}\omega-\lambda\\-\mu\end{bmatrix}=
  x^{\mu}\begin{bmatrix}1 \\ z^{-1} \end{bmatrix}.
\end{align}
\end{subequations}
They are nowhere vanishing meromorphic functions of $p$ for every $x$:
\begin{align*}
\cM \ni p \mapsto \eta_p^\uparrow(x) &\quad\text{has a pole on } \{ b = 0 \}, \\
  {\cM}_{} \ni p\mapsto \eta_p^\downarrow (x)&\quad\text{has a pole on } \{ a = 0 \}.
\end{align*}
On $\{ a \neq 0 \} \cap \{ b \neq 0 \}$ we have $\eta_p^{\downarrow} = z^{-1} \eta_p^{\uparrow}$.

There exist also exceptional solutions, defined only for $\mu=0$:
\begin{subequations}
\begin{align}
\label{sol3}
\vartheta_{\omega}^\uparrow(x)&:= -\ln(x) \begin{bmatrix}0\\ 2\omega\end{bmatrix}
+\begin{bmatrix}1\\0\end{bmatrix},\quad \omega-\lambda=0;\\
\label{sol4}
\vartheta_{ \omega}^\downarrow(x)&:=  -\ln(x) \begin{bmatrix}2\omega\\0\end{bmatrix}
+\begin{bmatrix}0\\1\end{bmatrix},\quad \omega+\lambda=0
  .\end{align}
  \end{subequations}

  The nullspace of $D_{\omega,\lambda}$, that is,
$\Ker(D_{\omega,\lambda})$ has the following bases:
\begin{align*}
\mu\neq0 : & \qquad\big(  \eta_{\omega,\lambda,\mu}^\uparrow,\quad 
              \eta_{\omega,\lambda,-\mu}^\uparrow \big)\quad \text{and}\quad
              \big(  \eta_{\omega,\lambda,\mu}^\downarrow,\quad 
              \eta_{\omega,\lambda,-\mu}^\downarrow \big),\\
\omega=\lambda
\neq0 : & \qquad\big(  \eta_{\omega,\omega,0}^\uparrow,\quad 
                                        \vartheta_{\omega}^\uparrow \big),\\
 \omega=-\lambda\neq0 : & \qquad  \big(  \eta_{\omega,- \omega,0}^\downarrow,\quad 
                                          \vartheta_{\omega}^\downarrow\big),\\
(
  \omega,\lambda)=(0,0): & \qquad \big(  \vartheta_{0}^\uparrow,\quad 
                                          \vartheta_{0}^\downarrow \big)  =  \left(\begin{bmatrix}1\\0\end{bmatrix},\quad \begin{bmatrix}0\\1\end{bmatrix}\right).
\end{align*}

The canonical bisolution of $D_{\omega,\lambda}$ \eqref{canonical} at $k=0$ takes the form
\begin{align}
  G_{\omega,\lambda}^\leftrightarrow(0;x,y)
  &=\frac12\begin{bmatrix}
    \frac{\omega-\lambda}{\mu} \Big(\Big(\frac{x}{y}\Big)^\mu-\Big(\frac{y}{x}\Big)^\mu\Big)&
    \Big(\frac{x}{y}\Big)^\mu+\Big(\frac{y}{x}\Big)^\mu\\[2ex]
    -\Big(\frac{x}{y}\Big)^\mu-\Big(\frac{y}{x}\Big)^\mu
    &-\frac{\omega+\lambda}{\mu} \Big(\Big(\frac{x}{y}\Big)^\mu-\Big(\frac{y}{x}\Big)^\mu\Big)
    \end{bmatrix}.
\end{align}

\subsection{Nonzero energy}

Now consider the eigenequation for the
eigenvalue $k \in \C^\times $:
\begin{equation}
  ( D_{\omega, \lambda} -k)f=0.
\label{equo}  \end{equation}

Acting on \eqref{equo} with $ D_{\omega, - \lambda}+k$ we obtain
\begin{equation}\begin{bmatrix}- \partial_x^2 + \frac{\omega^2 - \lambda^2}{x^2}
- \frac{2 \lambda k}{x} - k^2 &\frac{\omega  - \lambda}{x^2}\\[2ex]
\frac{\omega  + \lambda}{x^2}&- \partial_x^2 + \frac{\omega^2 - \lambda^2}{x^2}
- \frac{2 \lambda k}{x} - k^2
\end{bmatrix}f(x)=0.
\label{eq:Whit_mat}
\end{equation}
At first we focus on the case $\mu^2=\omega^2 - \lambda^2 \neq 0$, in
which 
$\begin{bmatrix}0 &\omega  - \lambda\\
\omega  + \lambda&0
\end{bmatrix}$
 is a diagonalizable matrix. Decomposing $f(x)$ in its eigenbasis
\begin{equation}
f(x) = f_+(x) \begin{bmatrix} \omega - \lambda \\ \mu \end{bmatrix} + f_-(x) \begin{bmatrix} \omega - \lambda \\ -\mu \end{bmatrix}
\label{eq:f_decomp}
\end{equation}
we find that functions $f_{\pm} (x)$ satisfy the Whittaker equations
\begin{equation}
    \left( - \partial_x^2 + \frac{\left( \mu \pm \frac{1}{2} \right)^2 - \frac{1}{4}}{x^2} - \frac{2 \lambda k}{x} - k^2 \right) f_{\pm}(x) =0.\\\
    \label{eq:Whit}
\end{equation}
This second order differential equation is satisfied by the Whittaker
functions \eqref{eq:def_I} and \eqref{eq:def_K}:
\begin{equation}
f_{\pm}(x) = c_{\pm, 1} \cI_{- \i \lambda, \mu \pm \frac{1}{2}}(2 \i kx) + c_{\pm, 2} \cK_{- \i \lambda, \mu \pm \frac{1}{2}}(2 \i kx). 
\label{eq:Whit_sol}
\end{equation}

For generic values of parameters, the four functions appearing in
\eqref{eq:Whit_sol} are linearly independent and thus
\eqref{eq:Whit_sol} is the general  solution of \eqref{eq:Whit_mat}. Inspection of its expansion for $x \to 0$ reveals that (again, for generic parameters) it is annihilated by $ D_{\omega, \lambda} -k$ if and only if
\begin{equation}
    \i \omega \, c_{-,1} = c_{+,1}, \qquad  \omega \, c_{-,2} = ( \lambda + \i \mu ) c_{+,2}.
    \label{eq:c_coeff}
\end{equation}

\begin{remark}
 Equation \eqref{equo} simplifies for $\mu =0$, but instead of treating it separately we will construct solutions valid on the whole $\cM$ by analytic continuation. For similar reasons we disregard non-generic cases mentioned above Equation \eqref{eq:c_coeff}.
\end{remark}

Let us introduce a family of solutions of the eigenequation \eqref{equo} defined for $k \in \C \backslash [0, \i \infty [$:
\begin{subequations}\label{xizeta-}
\begin{gather}
\xi_p^-(k,x) = \frac{\Gamma (1 + \mu + \i \lambda )}{2 \mu (\omega - \lambda + \i \mu)} \left( \i \omega \cI_{- \i \lambda, \mu + \frac12} (2 \i k x) \begin{bmatrix} \omega - \lambda \\ \mu \end{bmatrix} + \cI_{- \i \lambda, \mu - \frac12} (2 \i k x) \begin{bmatrix}
 \omega  - \lambda \\ - \mu
\end{bmatrix} \right), \label{eq:xi_def} \\
\zeta_{p}^-(k,x) = \frac{\omega \, \cK_{- \i \lambda , \mu +
\frac{1}{2}}(2 \i k x)}{\mu (\omega - \lambda - \i \mu)}     \begin{bmatrix}
\omega - \lambda \\ \mu 
\end{bmatrix} + \frac{(\lambda + \i \mu) \cK_{- \i \lambda , \mu -
\frac{1}{2}}(2 \i k x)}{\mu (\omega - \lambda - \i \mu)}  \begin{bmatrix}
\omega - \lambda \\ - \mu \label{eq:zeta_def}
\end{bmatrix}.
\end{gather}
\end{subequations}
As an alternative to the presented derivation, one may check directly that they satisfy \eqref{equo} using recursion relations from Appendix \ref{recur}.

The second family, defined for $k \in \C \backslash [0, - \i \infty[$, is obtained by reflection:
\begin{equation}
\xi_p^+(k,x) = \overline{\xi_{\overline{p}}^-(\overline k , x)}, \qquad \zeta_p^+(k,x) = \overline{\zeta_{\overline{p}}^-(\overline k , x)}.
\end{equation}
Explicit expressions in terms of Whittaker functions take the form
\begin{subequations}\label{xizeta+}
\begin{gather}
\xi_p^+(k,x) = \frac{\Gamma (1 + \mu - \i \lambda )}{2 \mu (\omega - \lambda - \i \mu)} \left( -\i \omega \cI_{ \i \lambda, \mu + \frac12} (-2 \i k x) \begin{bmatrix} \omega - \lambda \\ \mu \end{bmatrix} + \cI_{ \i \lambda, \mu - \frac12} (-2 \i k x) \begin{bmatrix}
 \omega  - \lambda \\ - \mu
\end{bmatrix} \right), \\
\zeta_{p}^+(k,x) = \frac{\omega \, \cK_{ \i \lambda , \mu +
\frac{1}{2}}(-2 \i k x)}{\mu (\omega - \lambda + \i \mu)}     \begin{bmatrix}
\omega - \lambda \\ \mu 
\end{bmatrix} + \frac{(\lambda - \i \mu) \cK_{ \i \lambda , \mu -
\frac{1}{2}}(-2 \i k x)}{\mu (\omega - \lambda + \i \mu)}  \begin{bmatrix}
\omega - \lambda \\ - \mu
\end{bmatrix}.
\end{gather}
\end{subequations}

\begin{lemma} \label{xi_zeta_props}
Let us fix $k,x$. $\xi^{+}_p(k,x)$ and $\xi^{-}_p(k,x)$ are
meromorphic functions of $p \in \cM$, nonsingular away from $\cE^+$
and $\cE^-$, respectively. $\zeta_p^+(k,x)$ and $\zeta_p^-(k,x)$ are
holomorphic functions on the whole $\cM$. Furthermore,
$\zeta_p^\pm(\cdot)$ 
satisfy $\zeta_p^\pm = \zeta_{\tau(p)}^\pm$, where $\tau$ was defined  
in \eqref{tau}, and are nonzero functions for every $p \in
\cM$.
\end{lemma}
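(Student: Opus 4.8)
The plan is to verify every assertion in the two coordinate charts $\{a\neq 0\}$ and $\{b\neq 0\}$ of Section~\ref{Blown-up quadric}, which cover $\cM$. Fix $k$ (so that $\pm 2\i kx$ stays outside $]-\infty,0]$ for all $x>0$, as required for the relevant family) and $x>0$. In either chart $\omega,\lambda,\mu$ are polynomials in the chart coordinates, and by Appendix~\ref{whittaker} the maps $(\alpha,\beta)\mapsto\cI_{\alpha,\beta}(\pm 2\i kx)$ and $(\alpha,\beta)\mapsto\cK_{\alpha,\beta}(\pm 2\i kx)$ are entire. Hence, in each chart, $\xi_p^\pm$ and $\zeta_p^\pm$ are quotients of a holomorphic $\C^2$-valued function by a polynomial, and it remains to locate the genuine singularities. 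For this I will use three facts from Appendix~\ref{whittaker}: the reflection $\cK_{\alpha,\beta}=\cK_{\alpha,-\beta}$, the special value $\cI_{\alpha,-\frac12}(y)=-\alpha\,\cI_{\alpha,\frac12}(y)$ (so $\cI_{0,-\frac12}=0$), and the asymptotics $\cK_{\alpha,\beta}(y)=y^{\alpha}\e^{-y/2}\bigl(1+O(y^{-1})\bigr)$ as $y\to\infty$.

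First I treat $\zeta_p^-$. In the chart $\{b\neq 0\}$, with $s=\omega+\lambda$ and $z=a/b$, one has $\mu=-sz$, $\omega-\lambda=sz^2$, $\lambda+\i\mu=-\frac{s}{2}(z+\i)^2$, $\begin{bmatrix}\omega-\lambda\\\mu\end{bmatrix}=sz\begin{bmatrix}z\\-1\end{bmatrix}$, $\begin{bmatrix}\omega-\lambda\\-\mu\end{bmatrix}=sz\begin{bmatrix}z\\1\end{bmatrix}$ and $\mu(\omega-\lambda-\i\mu)=-s^2z^2(z+\i)$; cancelling the common factors, \eqref{eq:zeta_def} becomes
\[
\zeta_p^-(k,x)=\frac{1}{2z}\Bigl(-(z-\i)\,\cK_{-\i\lambda,\mu+\frac12}(2\i kx)\begin{bmatrix}z\\-1\end{bmatrix}+(z+\i)\,\cK_{-\i\lambda,\mu-\frac12}(2\i kx)\begin{bmatrix}z\\1\end{bmatrix}\Bigr).
\]
The only apparent singularity is at $z=0$, where $\mu=0$; there $\cK_{-\i\lambda,\pm\frac12}$ agree by the reflection identity and the bracket is proportional to $\begin{bmatrix}0\\-1\end{bmatrix}+\begin{bmatrix}0\\1\end{bmatrix}=0$, so (being holomorphic in $(s,z)$ and vanishing on $\{z=0\}$) the bracket is divisible by $z$ and $\zeta_p^-$ is holomorphic on $\{b\neq 0\}$. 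The same computation on $\{a\neq 0\}$ gives holomorphy on all of $\cM$, and then $\zeta_p^+=\overline{\zeta_{\overline p}^-(\overline k,\cdot)}$ is holomorphic too.

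For $\xi_p^-$ the same substitution turns \eqref{eq:xi_def} into
\[
\xi_p^-(k,x)=\frac{\Gamma(1+\mu+\i\lambda)}{-2sz(z-\i)}\Bigl(\tfrac{\i s(1+z^2)}{2}\,\cI_{-\i\lambda,\mu+\frac12}(2\i kx)\begin{bmatrix}z\\-1\end{bmatrix}+\cI_{-\i\lambda,\mu-\frac12}(2\i kx)\begin{bmatrix}z\\1\end{bmatrix}\Bigr).
\]
On $\{s=0\}$ one has $\lambda=\mu=0$, hence $\cI_{-\i\lambda,-\frac12}(2\i kx)=\cI_{0,-\frac12}(2\i kx)=0$, so the numerator vanishes there and is divisible by $s$; evaluating the quotient on $\{z=0\}$ (where $\mu=0$, $\lambda=s/2$) and using $\cI_{\alpha,-\frac12}=-\alpha\,\cI_{\alpha,\frac12}$ once more makes it vanish, so the numerator is in fact divisible by $sz$. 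Thus $\xi_p^-$ equals $\Gamma(1+\mu+\i\lambda)$ times a holomorphic function divided by $z-\i$. Since $\Gamma(1+\mu+\i\lambda)$ is holomorphic except for simple poles where $\mu+\i\lambda\in\{-1,-2,\dots\}$ (that is, on $\bigcup_{n\ge 1}\cE_n^-$) and $\{z-\i=0\}=\cE_0^-\cap\{b\neq 0\}$, we get that $\xi_p^-$ is meromorphic on $\cM$ and holomorphic off $\cE^-=\bigcup_{n\ge 0}\cE_n^-$ — in particular holomorphic across $\cZ$ away from the single point $\cZ\cap\cE_0^-$. The assertions for $\xi_p^+$ and $\zeta_p^+$ follow by applying $p\mapsto\overline p$, $k\mapsto\overline k$, which exchanges $\cE^+$ and $\cE^-$.

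It remains to prove $\zeta_p^\pm=\zeta_{\tau(p)}^\pm$ and the nonvanishing. In the chart $\{b\neq 0\}$ the involution $\tau$ of \eqref{tau} acts by $(s,z)\mapsto(s,-z)$, hence $\mu\mapsto-\mu$ with $\lambda$ fixed; substituting $z\mapsto-z$ in the displayed formula for $\zeta_p^-$ and using $\cK_{-\i\lambda,-\mu\pm\frac12}=\cK_{-\i\lambda,\mu\mp\frac12}$ returns the same expression, so $\zeta_p^-=\zeta_{\tau(p)}^-$; then $\zeta_p^+=\overline{\zeta_{\tau(\overline p)}^-(\overline k,\cdot)}=\zeta_{\tau(p)}^+$ because $\overline{\tau(\overline p)}=\tau(p)$. (Alternatively this identity can be read off directly from \eqref{eq:zeta_def} by inserting $\cK_{\alpha,\beta}=\cK_{\alpha,-\beta}$ and reducing the rational prefactors with $\mu^2=\omega^2-\lambda^2$.) For nonvanishing, $\zeta_p^-(k,\cdot)$ solves the first-order system \eqref{equo} on $\R_+$, so it is identically zero iff it vanishes at one point; but inserting the $y\to\infty$ asymptotics of $\cK$ into the displayed formula and simplifying the vector part gives $\zeta_p^-(k,x)=(2\i kx)^{-\i\lambda}\e^{-\i kx}\bigl(\begin{bmatrix}\i\\1\end{bmatrix}+o(1)\bigr)$ as $x\to+\infty$, with nonzero leading coefficient, so $\zeta_p^-(k,\cdot)\not\equiv 0$; the locus $\{b=0\}$ is handled in the chart $\{a\neq 0\}$ and $\zeta_p^+$ by reflection. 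The part I expect to be most delicate is the bookkeeping in the two middle steps: checking that the apparent divisors $\{\mu=0\}$ and $\{\omega-\lambda\mp\i\mu=0\}$ are entirely cancelled by zeros of the numerators, with the correct multiplicity for holomorphic division, and — for $\xi^\pm$ — that the surviving pole set is precisely $\cE^\pm$ and contains no part of the zero fiber; this is exactly where the three special-function facts above are indispensable.
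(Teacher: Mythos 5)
Your proof is correct and follows essentially the same route as the paper's: both pass to the $(z,\omega+\lambda)$-type coordinates, cancel the apparent singularity at $\mu=0$ using the degenerate identities \eqref{eq:I_sym} and \eqref{eq:K_sym} (the paper packages this as the removable-singularity quotients and the factor $N_p^\pm$ in \eqref{eq:xi_reg}--\eqref{eq:zeta_reg-}, you as divisibility of the numerator by $s$ and $z$ in a chart), obtain the $\tau$-symmetry from $\cK_{\beta,-m}=\cK_{\beta,m}$, and get the nonvanishing from the large-$x$ asymptotics of $\cK$. The one step you should write out is the analogue of the paper's explicit $z=\infty$ check for $\xi_p^{-}$: your displayed argument covers only the chart $\{b\neq0\}$, and meromorphy there does not by itself extend across the divisor $\{b=0\}$, but the identical computation in the coordinates $(\omega-\lambda,\,z^{-1})$ — where the bracket again vanishes on the zero fiber and on $\{b=0\}$ by $\cI_{\beta,-\frac12}=-\beta\,\cI_{\beta,\frac12}$ — closes this immediately.
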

\begin{proof} It is sufficient to prove the claim for the family with
  superscript minus. Meromorphic dependence on $p$ is
  clear. Definitions of $\xi_p^-$ and $\zeta_p^-$ can be manipulated to the form
\begin{subequations}
\begin{gather}
\xi_{p}^-(k,x)   = \i \, \cI_{- \i \lambda, \mu + \frac{1}{2}}(2 \i k x) \frac{1}{ N_p^-} \begin{bmatrix} -1 \\ z \end{bmatrix} + \frac{\cI_{- \i \lambda, \mu - \frac{1}{2}}(2 \i k x) - \i \lambda \, \cI_{- \i \lambda,\mu + \frac{1}{2}}(2 \i k x)}{\mu} \frac{1}{ N_p^-}\begin{bmatrix} z \\ 1 \end{bmatrix}, \label{eq:xi_reg} \\
\zeta_{p}^-(k,x)  = \cK_{- \i \lambda, \mu + \frac{1}{2}} (2 \i k
x) \begin{bmatrix} \i \\ 1 \end{bmatrix} - \frac{(\omega+\lambda) (z +
  \i)}{2} \frac{\cK_{- \i \lambda, \mu - \frac{1}{2}}(2 \i k x) -
  \cK_{- \i \lambda, \mu + \frac{1}{2}}(2 \i k x)
}{\mu} \begin{bmatrix} z \\ 1 \end{bmatrix} \label{eq:zeta_reg}\\
 = \cK_{- \i \lambda, \mu -\frac{1}{2}} (2 \i k
x) \begin{bmatrix} \i \\ 1 \end{bmatrix} - \frac{(\omega+\lambda) (-z +
  \i)\big(\cK_{- \i \lambda, \mu - \frac{1}{2}}(2 \i k x) -
  \cK_{- \i \lambda, \mu + \frac{1}{2}}(2 \i k x)\big)
}{2\mu} \begin{bmatrix} -z \\ 1 \end{bmatrix}. \label{eq:zeta_reg-}
\end{gather}
\end{subequations}
Functions $\mu \mapsto \frac{\cI_{- \i \lambda, \mu - \frac{1}{2}}(2
  \i k x) - \i \lambda \, \cI_{- \i \lambda,\mu + \frac{1}{2}}(2 \i k
  x)}{\mu}$ and $\mu \mapsto \frac{\cK_{- \i \lambda, \mu -
    \frac{1}{2}}(2 \i k x) - \cK_{- \i \lambda, \mu + \frac{1}{2}}(2
  \i k x) }{\mu}$ have removable singularities at $\mu = 0$, as seen
from identities \eqref{eq:I_sym}, \eqref{eq:K_sym}. Therefore $\zeta_p^-(k,x)$ is regular for $z \neq \infty$. If in addition $p \notin \cE^-$, then also $(N_p^-)^{-1}$ and hence $\xi_p^-$ is nonsingular.  

Next write $(\omega+\lambda) (z+ \i) \begin{bmatrix} z \\
  1 \end{bmatrix} = (\omega - \lambda)  (1+ \i z^{-1}) \begin{bmatrix}
  1 \\ z^{-1} \end{bmatrix}$ in \eqref{eq:zeta_reg}. Then it is clear
that $\zeta_p^-(k,x)$ is nonsingular for $z=\infty$.
  
Moreover,
$\frac{z}{ N_p^-} = \Gamma(1 + \mu + \i \lambda)(1- \i z^{-1})^{-1}$ is nonsingular for $z=\infty$ if $p\not\in\cE_0^-$. Hence
$\xi_{p}^-(k,x) $ is nonsingular for $z=\infty$ if $p \notin \cE^-$.

The statement about the symmetry $\mu\to-\mu$ follows from the
comparison of \eqref{eq:zeta_reg} and \eqref{eq:zeta_reg-}.
The last claim follows from \eqref{eq:zeta_asymptotic} below.
\end{proof} 
\begin{remark}
Proof of Lemma \ref{xi_zeta_props} shows that functions
$\frac{\xi_p^{\pm}(k,x)}{\Gamma(1+ \mu \mp \i \lambda)}$ are singular
on subsets of $\cM$ smaller than $\cE^{\pm}$, namely
  on $\cE_0^\pm$.
 Moreover,  $\frac{(\omega - \lambda \mp \i \mu) \xi_p^{\pm}(k,x)}{\Gamma(1+ \mu
    \mp \i \lambda)}$ is holomorphic everywhere on $\cM$, however it vanishes on $\cZ \cup \{ a = 0 \}$.
\end{remark}

Near the origin, $\xi_p^\pm$ has the leading term proportional to $(kx)^\mu$, except for $2 \mu +1 \in - \mathbb N$:
\begin{equation}
\xi_p^\pm(k,x) \sim \frac{1}{N_p^{\pm}} \frac{(\mp 2 \i k x)^{\mu}}{\Gamma(2 \mu +1)}  \begin{bmatrix} z \\ 1 \end{bmatrix}+ O(( k x)^{\mu+1})
\label{eq:xi_asymptotic}
\end{equation}
If $k \in \C_\pm$, it grows exponentially at infinity:
\begin{equation}
\xi_p^\pm(k,x) \sim  \frac{1}{2} \e^{\mp \i k x} (\mp 2 \i k x)^{\mp \i \lambda} \begin{bmatrix} 1 \\ \mp \i \end{bmatrix}+ O(\e^{\mp \i k x}( kx)^{\mp \i \lambda -1}).
\label{eq:xi_large_x}
\end{equation}
Under the same assumption, $\zeta_p^\pm$ is exponentially decaying:
\begin{equation}
\zeta_p^\pm(k,x) \sim \e^{\pm \i k x} (\mp 2 \i k x)^{\pm \i \lambda}  \begin{bmatrix} \mp \i \\ 1 \end{bmatrix}+O(\e^{\pm \i k x} ( kx)^{\pm \i \lambda -1}).
\label{eq:zeta_asymptotic}
\end{equation}
Behaviour of this function near the origin is much more complicated, see \eqref{eq:K_small_arg}. Here we note only that for $\RE(\mu) >0$ one has
\begin{equation}
\zeta_{p}^\pm(k,x) \sim   \frac{N_p^{\pm}}{2} \Gamma(2 \mu + 1) (\mp 2 \i k x )^{- \mu}  \begin{bmatrix} -1 \\ z^{-1} \end{bmatrix}  + o (( kx)^{- \mu}).
\label{eq:zeta_small_x}
\end{equation}

For $k \in \C \backslash \i \R$ both families of solutions are defined. The following lemma provides relations between them. It is convenient to introduce $\varepsilon_k = \sgn(\RE(k))$, which distinguishes connected components of $\C \backslash \i \R$.

\begin{lemma}
For every $k \in \C \backslash \i \R$ we have
\begin{subequations}
\begin{align}
\xi_p^+(k,x)& = \e^{- \i \varepsilon_k \pi \mu} S_p \, \xi_p^- (k,x),
  \label{eq:xipm}\\
  \xi_p^+(k,x)&=\frac{\e^{-\varepsilon_k \pi\lambda}}{2\i}\big(\zeta_p^-(k,x)-\e^{-\i \varepsilon_k \pi\mu}S_p\zeta_p^+(k,x)\big) \label{eq:xipzz} ,\\
  \xi_p^-(k,x)&=\frac{\e^{-\varepsilon_k \pi\lambda}}{2\i}\big(\e^{\i\varepsilon_k \pi\mu}S_p^{-1}\zeta_p^-(k,x)-\zeta_p^+(k,x)\big), \label{eq:ximzz} \\
  \zeta_p^\pm(k,x) &= \mp 2 \i \e^{\varepsilon_k \pi \lambda} \xi_p^\mp(k,x) + \e^{\pm \i \varepsilon_k \pi \mu} (S_p)^{\mp 1} \zeta_p^\mp(k,x). \label{eq:zxiz}
\end{align}
\end{subequations}
\end{lemma}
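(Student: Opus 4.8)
The plan is to reduce everything to two facts about Whittaker functions: the connection formula expressing $\cI$ in terms of the two solutions $\cK_{\pm}$ decaying in opposite Stokes sectors, and the reflection identity relating $\cI_{\theta,\sigma}(u)$ and $\cI_{-\theta,\sigma}(u)$ (equivalently, $\cK_{\theta,\sigma}(u)$ and $\cK_{\theta,\sigma}(-u)$) — both collected in Appendix \ref{whittaker}. First I would establish \eqref{eq:xipm}. Since $\xi_p^+$ and $\xi_p^-$ both solve the first-order equation \eqref{equo} and the space of its solutions that are $O(x^{\mu})$ near $0$ is one-dimensional for generic $p$ (by the analysis around \eqref{eq:c_coeff}), the two are proportional; the proportionality constant is read off from the leading small-$x$ asymptotics \eqref{eq:xi_asymptotic}. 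Writing $\xi_p^\pm \sim (N_p^\pm)^{-1}(\mp 2\i kx)^\mu/\Gamma(2\mu+1)\,[z{:}1]^{\T}$ and using $(-2\i kx)^\mu = \e^{-\i\varepsilon_k\pi\mu}(2\i kx)^\mu$ on $\C\backslash\i\R$ (the branch of $u\mapsto u^\mu$ jumps by $\e^{\mp\i\pi\mu}$ across the imaginary axis, with the sign governed by $\varepsilon_k=\sgn\RE k$), I get $\xi_p^+ = \e^{-\i\varepsilon_k\pi\mu}\,(N_p^-/N_p^+)\,\xi_p^- = \e^{-\i\varepsilon_k\pi\mu}S_p\,\xi_p^-$, which is \eqref{eq:xipm}. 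Once \eqref{equo} has been verified directly for the $\xi^\pm$, $\zeta^\pm$ (as the remark after \eqref{xizeta+} invites, via the recursion relations of Appendix \ref{recur}), the proportionality step is rigorous even at the exceptional parameters by analytic continuation in $p$, since both sides are meromorphic on $\cM$ and agree on a nonempty open set.

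Next I would prove \eqref{eq:zxiz}, which is really the content of the lemma; \eqref{eq:xipzz} and \eqref{eq:ximzz} then follow by solving a $2\times 2$ linear system. The identity \eqref{eq:zxiz} is the pushforward to $\cM$ of the classical connection formula for Whittaker functions: $\cK_{\theta,\sigma}(u)$, which decays as $u\to+\infty$, is a specific linear combination of $\cK_{\theta,\sigma}(-u)$ (decaying as $u\to-\infty$, i.e.\ in the opposite sector) and $\cI_{-\theta,\sigma}(u)$ — the coefficients being elementary functions of $\e^{\i\pi\sigma}$, $\e^{\pi\theta}$ and ratios of Gamma functions. Applying this separately in the two components $\mu\pm\tfrac12$ of the decomposition \eqref{eq:f_decomp}, with $\theta=-\i\lambda$ and $u=2\i kx$, and then repackaging the scalar combinations using the regularized forms \eqref{eq:xi_reg}–\eqref{eq:zeta_reg-} — in particular absorbing the Gamma-function quotients into $N_p^\pm$ and hence into $S_p$ — should collapse the two scalar connection formulas into the single vector identity \eqref{eq:zxiz}. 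The factor $\mp 2\i\,\e^{\varepsilon_k\pi\lambda}$ in front of $\xi_p^\mp$ comes from the $\cI$-coefficient in the connection formula together with the normalization constants in \eqref{eq:xi_def}, \eqref{eq:zeta_def}; the factor $\e^{\pm\i\varepsilon_k\pi\mu}(S_p)^{\mp1}$ in front of $\zeta_p^\mp$ is exactly the $\cK(-u)$-coefficient, after the Gamma quotients recombine into $S_p$. Here too, the cleanest route is to verify the identity for generic $p$ and $k$ with $\varepsilon_k=+1$, then extend to $\varepsilon_k=-1$ by the reflection $\xi_p^\pm(k,x)=\overline{\xi_{\overline p}^\mp(\overline k,x)}$ etc., and finally to all of $\cM$ by meromorphic continuation (using Lemma \ref{xi_zeta_props} to control the singular loci).

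To derive \eqref{eq:xipzz} and \eqref{eq:ximzz}: take the $-$ case and the $+$ case of \eqref{eq:zxiz}, namely $\zeta_p^-=-2\i\e^{\varepsilon_k\pi\lambda}\xi_p^+ + \e^{\i\varepsilon_k\pi\mu}S_p^{-1}\zeta_p^+$ and $\zeta_p^+= 2\i\e^{\varepsilon_k\pi\lambda}\xi_p^- + \e^{-\i\varepsilon_k\pi\mu}S_p\zeta_p^-$, and regard these as two linear relations among the four objects $\xi_p^\pm,\zeta_p^\pm$. Substituting \eqref{eq:xipm} (which expresses $\xi_p^+$ through $\xi_p^-$) and eliminating one of $\zeta_p^\pm$ yields $\xi_p^+ = \tfrac{\e^{-\varepsilon_k\pi\lambda}}{2\i}\big(\zeta_p^- - \e^{-\i\varepsilon_k\pi\mu}S_p\zeta_p^+\big)$ and, symmetrically, the expression for $\xi_p^-$; these are precisely \eqref{eq:xipzz} and \eqref{eq:ximzz}. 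This last part is bookkeeping once \eqref{eq:zxiz} is in hand.

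The main obstacle I anticipate is not conceptual but combinatorial: getting every branch cut and every $\varepsilon_k$-dependent phase right when applying the Whittaker connection formula across the imaginary $k$-axis, and then checking that the Gamma-function factors produced by that formula recombine exactly into $N_p^\pm$ and $S_p$ rather than into something off by a sign or a factor of $2\i$. The symmetry $\zeta_p^\pm=\zeta_{\tau(p)}^\pm$ and the two equivalent forms \eqref{eq:zeta_reg}, \eqref{eq:zeta_reg-} are the right tool to cross-check the $\mu\to-\mu$-invariant pieces, and the reflection $p\mapsto\overline p$, $k\mapsto\overline k$ handles the two signs of $\varepsilon_k$ at once, so the risk is contained; still, this is where the care goes.
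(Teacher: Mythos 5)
Your overall strategy is sound and lands in the same toolbox as the paper's proof (the Whittaker connection/reflection identities of Appendix \ref{whittaker} plus elementary linear algebra), but the route through it is genuinely different. The paper gets \eqref{eq:xipm} in one line from the reflection identity \eqref{eq_miracle} applied to the defining expressions \eqref{xizeta-}, \eqref{xizeta+}; you instead argue that $\xi_p^+$ and $\xi_p^-$ both lie in the (generically one-dimensional) space of solutions recessive at $0$, read the constant off \eqref{eq:xi_asymptotic} via the branch jump $(-2\i k)^\mu/(2\i k)^\mu=\e^{-\i\varepsilon_k\pi\mu}$, and then remove the genericity assumptions by meromorphic continuation in $p$ — correct, and it buys a derivation independent of \eqref{eq_miracle}, at the cost of the uniqueness/continuation scaffolding (note the recessive-solution argument needs $\RE(\mu)>0$, which is exactly where your continuation step earns its keep). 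For the connection identities the paper proves \eqref{eq:xipzz} first, by rewriting $\xi^\pm_p,\zeta^\pm_p$ in trigonometric form \eqref{gather1} and invoking \eqref{eq_missing}, obtains \eqref{eq:ximzz} by reflection or via \eqref{eq:xipm}, and only then gets \eqref{eq:zxiz} by inverting a $2\times2$ system; you reverse the order, deriving \eqref{eq:zxiz} componentwise from the hyperbolic-type connection formula (\eqref{eq:K_cont}, equivalently the displayed expression of $\cI_{\beta,m}$ through $\cK_{\beta,m}(z)$ and $\cK_{-\beta,m}(\e^{\pm\i\pi}z)$) with $\beta=-\i\lambda$, $m=\mu\pm\frac12$, $z=2\i kx$, and then extracting \eqref{eq:xipzz}, \eqref{eq:ximzz} by solving the linear relations. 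Both directions work; your version avoids introducing the $\cJ,\cH^\pm$ functions but puts all the Gamma- and phase-repackaging into one step.

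Two slips to fix, both in the bookkeeping you yourself flag as the risk. First, the reflection $\xi_p^\pm(k,x)=\overline{\xi_{\overline p}^\mp(\overline k,x)}$ does \emph{not} exchange the two values of $\varepsilon_k$: conjugation $k\mapsto\overline k$ preserves $\RE(k)$, hence $\varepsilon_k$, and the two half-planes $\{\pm\RE(k)>0\}$ are separate connected components of $\C\setminus\i\R$, so no continuation in $k$ bridges them either. You must simply run the connection-formula computation in each half-plane, choosing the corresponding sign in \eqref{eq:K_cont} (the choice is dictated by which half-plane $2\i kx$ lies in). Second, the two instances of \eqref{eq:zxiz} you display in the last paragraph have their signs and powers interchanged: the lower-sign case reads $\zeta_p^-=+2\i\e^{\varepsilon_k\pi\lambda}\xi_p^+ + \e^{-\i\varepsilon_k\pi\mu}S_p\,\zeta_p^+$, and the upper-sign case $\zeta_p^+=-2\i\e^{\varepsilon_k\pi\lambda}\xi_p^- + \e^{\i\varepsilon_k\pi\mu}S_p^{-1}\zeta_p^-$. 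With the correct instances the elimination does give exactly \eqref{eq:xipzz} and \eqref{eq:ximzz} (in fact without even using \eqref{eq:xipm}), so the plan survives, but as written the quoted equations would propagate a sign error.
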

\begin{proof}
Equation \eqref{eq:xipm} follows immediately from \eqref{eq_miracle}. To derive \eqref{eq:xipzz}, we express $\xi_p^\pm$ and $\zeta_p^\pm$ in terms of trigonometric Whittaker functions and use the connection formula \eqref{eq_missing}. Then \eqref{eq:ximzz} is obtained by reflection or by combining with \eqref{eq:xipm}. Equation \eqref{eq:zxiz} is obtained from \eqref{eq:xipzz} and \eqref{eq:ximzz} by inverting and multiplying $2 \times 2$ matrices.
\end{proof}


\begin{lemma}
$\xi_p^{\pm}$ and $\xi_{\tau(p)}^\pm$, two
  eigenvectors of the monodromy, can be used to express 
$\zeta_p^{\pm}$:
\begin{equation}
\zeta_p^\pm(k,x) = - \frac{2 \pi \omega}{\Gamma(1 + \mu \mp \i \lambda) \Gamma(1 - \mu \mp \i \lambda)} \frac{\xi_p^\pm(k,x) - \xi_{\tau(p)}^\pm(k,x)}{\sin(2 \pi \mu)}. \label{eq:zxixi}
\end{equation}
The analytic continuation of $\zeta_p^\pm$ along a loop winding around the origin counterclockwise gives
\begin{equation}
\zeta_p^\pm (\e^{2 \pi \i} k , x) = \e^{-2 \pi \i \mu} \zeta_p^\pm(k,x) - \frac{4 \pi \i \omega}{\Gamma(1 + \mu \mp \i \lambda) \Gamma(1 - \mu \mp \i \lambda)} \xi_p^\pm(k,x). \label{eq:zeta_mono}
\end{equation}
\end{lemma}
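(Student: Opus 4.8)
The plan is to exhibit $\zeta_p^\pm$ inside the basis $\{\xi_p^\pm,\xi_{\tau(p)}^\pm\}$ of the solution space of $(D_{\omega,\lambda}-k)f=0$ — these two solutions being the ``eigenvectors of the monodromy'' named in the statement — to pin down the two expansion coefficients, and then to read off \eqref{eq:zeta_mono} by applying the monodromy to \eqref{eq:zxixi}. So first I would record the relevant facts about monodromy. Since $\tau$ preserves $(\omega,\lambda)$, the function $\xi_{\tau(p)}^\pm(k,\cdot)$ solves the \emph{same} equation as $\xi_p^\pm(k,\cdot)$; by \eqref{eq:xi_asymptotic}, applied also at $\tau(p)$ (where $\mu\mapsto-\mu$, $z\mapsto-z$), their leading behaviours at $x\to 0$ are proportional to $(kx)^{\mu}$ and $(kx)^{-\mu}$, so for $2\mu\notin\Z$ they span the two-dimensional solution space. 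Moreover, by \eqref{eq:def_I} each $\cI_{\kappa,\nu}(w)$ is $w^{\nu+\frac12}$ times an entire function, and every index appearing in $\xi_p^\pm$ is $\mu\pm\frac12$ while every index appearing in $\xi_{\tau(p)}^\pm$ is $-\mu\pm\frac12$; hence $\xi_p^\pm(k,x)=(kx)^{\mu}\times(\text{holomorphic in }k)$ and $\xi_{\tau(p)}^\pm(k,x)=(kx)^{-\mu}\times(\text{holomorphic in }k)$, so replacing $k$ by $\e^{2\pi\i}k$ multiplies $\xi_p^\pm$ by $\e^{2\pi\i\mu}$ and $\xi_{\tau(p)}^\pm$ by $\e^{-2\pi\i\mu}$.

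Next I would establish \eqref{eq:zxixi}. It is enough to treat the minus family; the plus family follows from $\xi_p^+(k,x)=\overline{\xi_{\overline p}^-(\overline k,x)}$, $\zeta_p^+(k,x)=\overline{\zeta_{\overline p}^-(\overline k,x)}$ and $\overline{\tau(\overline p)}=\tau(p)$. For $2\mu\notin\Z$ I write $\zeta_p^-=c_1\,\xi_p^-+c_2\,\xi_{\tau(p)}^-$ with $c_1,c_2$ independent of $x$. The symmetry $\zeta_p^-=\zeta_{\tau(p)}^-$ from Lemma \ref{xi_zeta_props} forces $c_2(\omega,\lambda,\mu)=c_1(\omega,\lambda,-\mu)$, so only the single function $\mu\mapsto c_1(\mu)$ is unknown. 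To find it I would match the \emph{dominant} $(kx)^{-\mu}$ term as $x\to0$ under the assumption $\RE(\mu)>0$ (the general case then by continuity): on the left this term is explicit from \eqref{eq:zeta_small_x}, and on the right only $\xi_{\tau(p)}^-$ contributes, with leading term given by \eqref{eq:xi_asymptotic} at $\tau(p)$. Equating this single scalar coefficient, using $\Gamma(2\mu+1)\Gamma(1-2\mu)\sin(2\pi\mu)=2\pi\mu$ (reflection formula for $\Gamma$) and the coordinate identity $z+z^{-1}=-\tfrac{2\omega}{\mu}$ on $\cM$, I expect to obtain $c_1(-\mu)=\tfrac{1}{\sin(2\pi\mu)}\cdot\tfrac{2\pi\omega}{\Gamma(1+\mu+\i\lambda)\Gamma(1-\mu+\i\lambda)}$, hence — the $\Gamma$-product being even in $\mu$ — $c_1(\mu)=-\tfrac{1}{\sin(2\pi\mu)}\cdot\tfrac{2\pi\omega}{\Gamma(1+\mu+\i\lambda)\Gamma(1-\mu+\i\lambda)}$, which is precisely \eqref{eq:zxixi} for the minus family. (As an alternative I could expand the two functions $\cK_{-\i\lambda,\mu\pm\frac12}$ in \eqref{eq:zeta_def} via the Whittaker connection formula of Appendix \ref{whittaker} that writes $\cK_{\kappa,\nu}$ as a combination of $\cI_{\kappa,\nu}$ and $\cI_{\kappa,-\nu}$, and then check that the resulting $\cI$-terms, together with their vector coefficients, reassemble into $\xi_p^-$ and $\xi_{\tau(p)}^-$; this avoids asymptotics at the cost of more bookkeeping.) The loci $2\mu\in\Z$, the fiber $\cZ$ and the poles of the $\Gamma$'s are then handled by meromorphic continuation in $p$: both sides of \eqref{eq:zxixi} are meromorphic on $\cM$, and the apparent singularity where $\sin(2\pi\mu)=0$ is removable since there $\xi_{\tau(p)}^\pm\to\xi_p^\pm$.

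Finally, \eqref{eq:zeta_mono} follows from \eqref{eq:zxixi} by continuing both sides once counterclockwise around $k=0$ and using the monodromy recorded above:
\begin{equation}
\zeta_p^\pm(\e^{2\pi\i}k,x)=-\frac{2\pi\omega}{\Gamma(1+\mu\mp\i\lambda)\Gamma(1-\mu\mp\i\lambda)}\,\frac{\e^{2\pi\i\mu}\xi_p^\pm(k,x)-\e^{-2\pi\i\mu}\xi_{\tau(p)}^\pm(k,x)}{\sin(2\pi\mu)}.
\end{equation}
Writing $\e^{2\pi\i\mu}\xi_p^\pm-\e^{-2\pi\i\mu}\xi_{\tau(p)}^\pm=\e^{-2\pi\i\mu}\bigl(\xi_p^\pm-\xi_{\tau(p)}^\pm\bigr)+2\i\sin(2\pi\mu)\,\xi_p^\pm$ and invoking \eqref{eq:zxixi} once more gives $\zeta_p^\pm(\e^{2\pi\i}k,x)=\e^{-2\pi\i\mu}\zeta_p^\pm(k,x)-\tfrac{4\pi\i\omega}{\Gamma(1+\mu\mp\i\lambda)\Gamma(1-\mu\mp\i\lambda)}\,\xi_p^\pm(k,x)$, as required.

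The step I expect to be the real work is the determination of the coefficient in \eqref{eq:zxixi}: verifying that the trigonometric and $\Gamma$-function data produced either by the asymptotic match or by the Whittaker connection formula collapse \emph{exactly} to $\tfrac{2\pi\omega}{\Gamma(1+\mu\mp\i\lambda)\Gamma(1-\mu\mp\i\lambda)}$. This rests on the reflection formula $\Gamma(s)\Gamma(1-s)=\pi/\sin(\pi s)$ and on the identity $z+z^{-1}=-\tfrac{2\omega}{\mu}$ relating the $\CP^1$-coordinate to $\omega$ and $\mu$, and it needs careful handling of the non-orthogonal decompositions $\begin{bmatrix}\omega-\lambda\\\pm\mu\end{bmatrix}$ used in \eqref{eq:xi_def}--\eqref{eq:zeta_def}; the treatment of the degenerate loci ($2\mu\in\Z$, $\cZ$, zeros/poles of the $\Gamma$'s) by continuation is routine.
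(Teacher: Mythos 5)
Your argument is correct, and for \eqref{eq:zxixi} it takes a different route than the paper. The paper's proof is the one you list only as an alternative: apply the connection formula \eqref{eq:def_K} to the two functions $\cK_{\mp\i\lambda,\mu\pm\frac12}$ appearing in $\zeta_p^\pm$ and regroup the resulting $\cI$-terms into $\xi_p^\pm$ and $\xi_{\tau(p)}^\pm$; this is pure bookkeeping with no asymptotics. Your primary route instead fixes the coefficients in $\zeta_p^-=c_1\xi_p^-+c_2\xi_{\tau(p)}^-$ by matching the $(2\i kx)^{-\mu}$ coefficient from \eqref{eq:zeta_small_x} against \eqref{eq:xi_asymptotic} at $\tau(p)$, and then uses the symmetry $\zeta_p^\pm=\zeta_{\tau(p)}^\pm$ of Lemma \ref{xi_zeta_props} to recover $c_1$ from $c_2$; the constants do close as you predict, since $N_p^-N_{\tau(p)}^-/z=-(z+z^{-1})/\bigl(\Gamma(1+\mu+\i\lambda)\Gamma(1-\mu+\i\lambda)\bigr)$ with $z+z^{-1}=-\tfrac{2\omega}{\mu}$, and $\Gamma(2\mu+1)\Gamma(1-2\mu)=\tfrac{2\pi\mu}{\sin(2\pi\mu)}$, giving exactly the stated coefficient, with the plus family then following by the reflection $\xi_p^+(k,x)=\overline{\xi_{\overline p}^-(\overline k,x)}$. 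What the paper's route buys is that it works in one stroke for all generic $p$ without invoking the small-$x$ expansions or the restriction $\RE(\mu)>0$; what yours buys is that it isolates the single scalar that matters and makes transparent why the $\Gamma$-product and $\sin(2\pi\mu)$ appear. Your monodromy derivation of \eqref{eq:zeta_mono} is exactly the content of the paper's ``follows immediately'': each $\cI_{\beta,m}(2\i kx)$ is $(2\i kx)^{\frac12+m}$ times an entire function of $k$, so $\xi_p^\pm$ and $\xi_{\tau(p)}^\pm$ pick up $\e^{\pm2\pi\i\mu}$, and your trigonometric rearrangement is right. One cosmetic point: at $\sin(2\pi\mu)=0$ the removability of the singularity of the right-hand side is better justified by the holomorphy of $\zeta_p^\pm$ in $p$ (Lemma \ref{xi_zeta_props}) together with the identity holding on a dense set, rather than by the claim ``$\xi_{\tau(p)}^\pm\to\xi_p^\pm$'' there, since at $2\mu\in\Z$ the proportionality constant in \eqref{eq:I_sym} need not equal $1$; this does not affect the proof.
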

\begin{proof}
Relation \eqref{eq:zxixi} may be derived from \eqref{eq:def_K}. Then \eqref{eq:zeta_mono} follows immediately.
\end{proof}

\begin{lemma} \label{det_lemma}
The following relations hold:
\begin{subequations}
\begin{align}
\det \begin{bmatrix}
\xi_{p}^\pm(k,x) & \zeta_{p}^\pm(k,x)
\end{bmatrix} & =1, \label{eq:det1} \\
\det \begin{bmatrix}
\zeta_{p}^+(k,x) & \zeta_{p}^-(k,x)
\end{bmatrix} & = -2\i\e^{\varepsilon_k\pi\lambda}. \label{eq:det2}
\end{align}
\end{subequations}
In particular, $\xi_p^\pm(k, \cdot), \zeta_p^\pm(k, \cdot)$ form a basis of solutions of $\eqref{equo}$ for $p \notin \cE^\pm$ and $k \notin [0, \mp \i \infty [$, while $\zeta_p^+(k, \cdot)$ and $\zeta_p^-(k, \cdot)$ form a basis whenever $k \notin \i \R$.
\end{lemma}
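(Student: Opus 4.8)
The plan is to reduce all three assertions to a single Wronskian computation. First I would record that both $\xi_p^\pm(k,\cdot)$ and $\zeta_p^\pm(k,\cdot)$ solve the first-order system equivalent to $(D_{\omega,\lambda}-k)f=0$, namely $f'=M(x)f$ with
\[
M(x)=\begin{bmatrix} 0 & k+\frac{\lambda-\omega}{x}\\ -k-\frac{\lambda+\omega}{x} & 0 \end{bmatrix},
\]
which is traceless. By the Abel--Liouville formula the Wronskian determinant of any two solutions of such a system is independent of $x$; hence every determinant in the statement is a constant function of $x$ and may be evaluated in whatever limit is convenient.

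For \eqref{eq:det1} I would let $x\to+\infty$. Fix $p\notin\cE^\pm$ and restrict first to $k\in\C_\mp$, so that \eqref{eq:xi_large_x} and \eqref{eq:zeta_asymptotic} apply and give $\xi_p^\pm(k,x)=\tfrac12 P_\pm(x)\Bigl(\begin{bmatrix}1\\\mp\i\end{bmatrix}+O(x^{-1})\Bigr)$ and $\zeta_p^\pm(k,x)=P_\pm(x)^{-1}\Bigl(\begin{bmatrix}\mp\i\\1\end{bmatrix}+O(x^{-1})\Bigr)$, where $P_\pm(x)=\e^{\mp\i kx}(\mp2\i kx)^{\mp\i\lambda}$. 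The scalar prefactors multiply to $P_\pm(x)\,P_\pm(x)^{-1}=1$, and $\det\begin{bmatrix}1 & \mp\i\\ \mp\i & 1\end{bmatrix}=1-(\mp\i)^2=2$, so the leading terms alone yield $\det\begin{bmatrix}\xi_p^\pm & \zeta_p^\pm\end{bmatrix}(k,x)=1+O(x^{-1})$; being $x$-independent it equals $1$. I would then drop the restriction on $k$: for fixed $x$ and $p\notin\cE^\pm$ both sides are holomorphic in $k$ on the connected set $\C\setminus[0,\mp\i\infty[$, so the identity propagates there by analytic continuation. (An alternative route is to send $x\to0$ using \eqref{eq:xi_asymptotic} and \eqref{eq:zeta_small_x}, valid for $\RE\mu>0$, and then continue in $p$.)

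For \eqref{eq:det2} I would avoid a second asymptotic analysis and instead feed \eqref{eq:det1} into the connection formula \eqref{eq:zxiz}. Taking the upper sign there, $\zeta_p^+=-2\i\e^{\varepsilon_k\pi\lambda}\xi_p^-+\e^{\i\varepsilon_k\pi\mu}S_p^{-1}\zeta_p^-$, and the term proportional to $\zeta_p^-$ is annihilated by the second column, so
\[
\det\begin{bmatrix}\zeta_p^+ & \zeta_p^-\end{bmatrix}=-2\i\e^{\varepsilon_k\pi\lambda}\det\begin{bmatrix}\xi_p^- & \zeta_p^-\end{bmatrix}=-2\i\e^{\varepsilon_k\pi\lambda},
\]
first on the dense open set $p\notin\cE^+\cup\cE^-$, $k\notin\i\R$ (where $S_p$ is finite and nonzero and $\xi_p^-$ is regular), and then for all such $k$ and all $p\in\cM$ by holomorphy in $p$, since both $\zeta_p^\pm$ are holomorphic on all of $\cM$ by Lemma \ref{xi_zeta_props}. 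The concluding ``in particular'' clause is then immediate: two solutions of the first-order system above are linearly independent exactly when their Wronskian is nonzero, and $1\neq0$ while $-2\i\e^{\varepsilon_k\pi\lambda}\neq0$; the stated ranges of $p$ and $k$ are precisely those on which the functions involved are defined and nonsingular.

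I expect the only step needing genuine care to be the asymptotic matching in \eqref{eq:det1}: one must verify that the exponential and power prefactors cancel identically so that the leading terms give the exact value $1$, and that the remainders---although each individually unbounded in $x$---contribute only $O(x^{-1})$ to the $2\times2$ determinant because $\lvert P_\pm(x)\rvert\,\lvert P_\pm(x)^{-1}\rvert=1$. Once this is checked, constancy of the Wronskian finishes it, and everything else is bookkeeping about domains of holomorphy in $k$ and in $p$.
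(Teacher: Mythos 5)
Your overall strategy is sound and in fact differs from the paper's in one interesting way. The paper also starts from tracelessness of $M(x)$ (constancy of the Wronskian), but it evaluates \eqref{eq:det1} in the limit $x\to 0$ using \eqref{eq:xi_asymptotic} and \eqref{eq:zeta_small_x}, which forces the restriction $\RE(\mu)>0$, and then removes that restriction by holomorphy in $p$; you instead evaluate at $x\to\infty$ and remove the restriction by analytic continuation in $k$. Both work, and your parenthetical alternative is exactly the paper's route. For \eqref{eq:det2} you plug \eqref{eq:zxiz} into the determinant, whereas the paper combines \eqref{eq:xipzz} with \eqref{eq:det1}; since \eqref{eq:zxiz} is obtained from \eqref{eq:xipzz}--\eqref{eq:ximzz}, this is essentially the same computation, and your explicit density-plus-holomorphy argument to extend \eqref{eq:det2} from $p\notin\cE^+\cup\cE^-$ to all of $\cM$ is a welcome extra precision, since $\zeta_p^\pm$ are entire in $p$ by Lemma \ref{xi_zeta_props}.

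One concrete correction is needed: the large-$x$ asymptotics \eqref{eq:xi_large_x} and \eqref{eq:zeta_asymptotic} for $\xi_p^\pm$ and $\zeta_p^\pm$ hold for $k\in\C_\pm$ (the half-plane matching the superscript), not for $k\in\C_\mp$ as you wrote. Indeed, $\zeta_p^\pm$ decays and $\xi_p^\pm$ grows like a single exponential only when $\pm\IM(k)>0$; for $k$ in the opposite half-plane the functions involve both exponential branches (and $\xi_p^\pm$ is not even defined on all of $\C_\mp$, since $[0,\mp\i\infty[$ is removed from its $k$-domain). With the half-plane corrected to $\C_\pm$, your matching of the prefactors $P_\pm(x)P_\pm(x)^{-1}=1$, the value $\tfrac12\det\begin{bmatrix}1 & \mp\i\\ \mp\i & 1\end{bmatrix}=1$, the $O(x^{-1})$ bookkeeping, and the continuation in $k$ over the connected set $\C\setminus[0,\mp\i\infty[$ all go through, and the concluding linear-independence statement is immediate from the nonvanishing of the two Wronskians.
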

\begin{proof}
Equation \eqref{equo} may be rewritten in the 
form $f'(x) = M(x) f(x)$, where $M(x)$ is a~traceless
matrix. Therefore for any two solutions $f,g$ the determinant $\det \begin{bmatrix} f(x) & g(x) \end{bmatrix}$ is independent of~$x$. To calculate it for $f = \xi_{p}^\pm(k,\cdot)$, $g = \zeta_{p}^\pm(k, \cdot )$, we use their asymptotic forms for $x \to 0$. By holomorphy, it~is sufficient to carry out the computation for $\RE(\mu) > 0$. Then we may use \eqref{eq:xi_asymptotic} and \eqref{eq:zeta_small_x}. To obtain \eqref{eq:det2}, we combine \eqref{eq:xipzz} with \eqref{eq:det1}.
\end{proof}

We remark that restrictions on $k$ in Lemma \ref{det_lemma} may be
omitted if the functions $\xi_p^\pm$ and $\zeta_p^\pm$ are analytically continued in suitable way.


\begin{lemma} \label{exceptional_sol}
The following relation holds for $p \in \cE^{\pm}$:
\begin{equation}
\zeta^{\pm}_p(k,x) = \mp 2 \i \e^{\mp \i \pi (\mu \mp \i \lambda)} (S_p)^{\mp 1} \xi^{\pm}_p(k,x).
\label{eq:zeta_xi_dep}
\end{equation}
In particular $(S_p)^{\mp 1} \xi^{\pm}_p(k,x)$ is nonsingular on $\cE^{\pm}$. 
\end{lemma}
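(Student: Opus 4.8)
The plan is to read off \eqref{eq:zeta_xi_dep} from the two connection formulas \eqref{eq:xipm} and \eqref{eq:zxiz} by restricting to $\cE^{\pm}$, where the exponentials involving the spectral parameter degenerate. The key preliminary observation is that on $\cE^{\pm}$ one has $\mu \mp \i\lambda \in -\N$: away from $\cZ$ this is the defining condition of $\cE^{\pm} = \bigcup_{n\geq 0}\cE_n^{\pm}$, while $\cE^{\pm}\cap\cZ = \{[\mp\i:1]\}$ is the single point with $\omega=\lambda=\mu=0$. Writing $\mu\mp\i\lambda=-n$, $n\in\N$, we get $\e^{\pm\i\varepsilon_k\pi(\mu\mp\i\lambda)} = \e^{\mp\i\varepsilon_k\pi n} = (-1)^n = \e^{\mp\i\pi(\mu\mp\i\lambda)}$, independently of $\varepsilon_k=\sgn(\RE(k))\in\{\pm1\}$; so the right-hand side of \eqref{eq:zeta_xi_dep} is unambiguous (it does not depend on which component of $\C\backslash\i\R$ contains $k$), as it must be.

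Next I would rewrite \eqref{eq:xipm} as $(S_p)^{\mp1}\xi_p^{\pm}(k,x) = \e^{\mp\i\varepsilon_k\pi\mu}\xi_p^{\mp}(k,x)$. By Lemma \ref{xi_zeta_props} the function $\xi_p^{\mp}$ is nonsingular away from $\cE^{\mp}$, and by Lemma \ref{eplus_eminus_intersection} the set $\cE^{\mp}$ meets $\cE^{\pm}$ only along the discrete locus $\cE^+\cap\cE^-$; hence $p\mapsto (S_p)^{\mp1}\xi_p^{\pm}(k,x)$ is regular along $\cE^{\pm}$, which is precisely the last assertion of the lemma. Now substitute $\xi_p^{\mp} = \e^{\pm\i\varepsilon_k\pi\mu}(S_p)^{\mp1}\xi_p^{\pm}$ into \eqref{eq:zxiz}, obtaining
$$\zeta_p^{\pm}(k,x) = \e^{\pm\i\varepsilon_k\pi\mu}(S_p)^{\mp1}\Bigl(\mp 2\i\,\e^{\varepsilon_k\pi\lambda}\,\xi_p^{\pm}(k,x) + \zeta_p^{\mp}(k,x)\Bigr).$$
Restricting to $\cE^{\pm}\backslash(\cE^+\cap\cE^-)$, we have $(S_p)^{\mp1}=0$ there (since $\cE^+\backslash\cE^- = \{S_p=\infty\}$ and $\cE^-\backslash\cE^+ = \{S_p=0\}$), while $\zeta_p^{\mp}$ is holomorphic and $(S_p)^{\mp1}\xi_p^{\pm}$ is finite by the previous step; so the $\zeta_p^{\mp}$-term drops out, and collecting exponentials via $\e^{\pm\i\varepsilon_k\pi\mu}\e^{\varepsilon_k\pi\lambda} = \e^{\pm\i\varepsilon_k\pi(\mu\mp\i\lambda)}$ and the first paragraph yields \eqref{eq:zeta_xi_dep} on $\cE^{\pm}\backslash(\cE^+\cap\cE^-)$.

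It remains to pass to all of $\cE^{\pm}$. Each $\cE_n^{\pm}$ is isomorphic to $\C$, so $\cE^{\pm}\backslash(\cE^+\cap\cE^-)$ is dense in $\cE^{\pm}$; the left-hand side of \eqref{eq:zeta_xi_dep} is holomorphic on $\cM$, and the right-hand side extends holomorphically across $\cE^{\pm}$ by the regularity just established, so the identity, valid on a dense subset, holds on all of $\cE^{\pm}$. A minor point: \eqref{eq:xipm} and \eqref{eq:zxiz} are stated for $k\in\C\backslash\i\R$, but both sides of \eqref{eq:zeta_xi_dep} are analytic in $k$ on $\C\backslash[0,\mp\i\infty[$, so the identity extends there by analytic continuation in $k$. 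The argument is essentially bookkeeping; the one place requiring care is the cancellation of the $\varepsilon_k$-dependence, which works precisely because $\mu\mp\i\lambda$ is a nonpositive integer on $\cE^{\pm}$, and one should keep in mind that at the codimension-two set $\cE^+\cap\cE^-$ the statement holds only in a limiting sense along $\cE^{\pm}$.
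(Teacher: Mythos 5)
Your proof is correct, but it follows a genuinely different route from the paper's. The paper proves \eqref{eq:zeta_xi_dep} by a pointwise computation in $p$: on $\cE^{\pm}$ the relevant Whittaker parameters degenerate, so the Laguerre-type identity \eqref{eq:Laguerre} converts the $\cK$-functions in \eqref{eq:zeta_def} into $\cI$-functions, which are then compared term by term with \eqref{eq:xi_def} (with a separate treatment of the case $z=\mp\i$, where \eqref{eq:Laguerre} applies only to one term but the other term of both $\xi_p^\pm$ and $\zeta_p^\pm$ vanishes). You instead obtain the identity as a degeneration of the already-established connection formulas: from \eqref{eq:xipm} you identify $(S_p)^{\mp1}\xi_p^{\pm}$ with $\e^{\mp\i\varepsilon_k\pi\mu}\xi_p^{\mp}$, which settles the nonsingularity claim away from $\cE^{\mp}$, and then you let $p$ tend to $\cE^{\pm}$ in \eqref{eq:zxiz}, where $(S_p)^{\mp1}\to 0$ kills the $\zeta_p^{\mp}$-term since $\zeta_p^{\mp}$ is holomorphic by Lemma \ref{xi_zeta_props}; the observation that $\mu\mp\i\lambda\in-\N$ on $\cE^{\pm}$ is exactly what removes the $\varepsilon_k$-dependence of the prefactor, and your continuation in $k$ from $\C\setminus\i\R$ to $\C\setminus[0,\mp\i\infty[$ is legitimate since both sides of \eqref{eq:zeta_xi_dep} are analytic there. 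What each approach buys: yours needs no further special-function input beyond the connection formulas and treats the $z=\mp\i$ case on the same footing as the rest of $\cE^{\pm}$, at the price of an extra limiting/continuation argument and of not reaching the points of $\cE^+\cap\cE^-$ (where $S_p$ is indeterminate and which lie outside $\cM_{-\frac12}$, so nothing is lost for the applications); the paper's computation is local in $p$, works directly from the definitions, and yields the regularized value of $(S_p)^{\mp1}\xi_p^{\pm}$ explicitly. Your closing caveat is appropriate, but note that your claim that the right-hand side ``extends holomorphically across $\cE^{\pm}$'' is established only off $\cE^{+}\cap\cE^{-}$, so the density argument should be phrased as you do in the final sentence rather than as an unconditional extension.
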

\begin{proof}
It is sufficient to consider the lower sign. If $p \in \cE^-$, then either $1 + \mu + \i \lambda \in - \mathbb N$ or $z=\i$ (and hence $\mu + \i \lambda =0$). In the former case we use \eqref{eq:Laguerre} for both terms in \eqref{eq:zeta_def}. In~the latter case \eqref{eq:Laguerre} may be used only for the second term, but the first term in both \eqref{eq:xi_def} and \eqref{eq:zeta_def} vanishes. This establishes \eqref{eq:zeta_xi_dep}. 
\end{proof}

The following function will be called the {\em two-sided Green's
  kernel}. It  is defined if $k \in \C_\pm$ and $p \notin \cE^{\pm}$:
\begin{equation}
G_p^{\bowtie}(k;x,y) = 
 -\one_{\R_+}(y -x) \xi_{p}^\pm(k,x) \zeta_{p}^\pm(k,y)^\T - \one_{\R_+}(x-y ) \zeta_{p}^\pm(k,x) \xi_{p}^\pm(k,y)^\T.
 \label{eq:Gdef}
\end{equation}  
It is a holomorphic function of $p \in \cM \backslash \cE^{\pm}$ satisfying
\begin{equation}
    G_{p}^{\bowtie}(k;x,y ) = G_{p}^{\bowtie}(k;y ,x)^\T \qquad \text{and} \qquad \overline{G_{p}^{\bowtie}(k,x,y )} = G_{\overline{\vphantom{k}p}}^{\bowtie}(\overline{k};x,y ).
    \label{eq:GK_sym}
\end{equation}
Later on, with some restrictions on parameters, it will be interpreted
as the resolvent of certain closed realizations of $D_p$.

\section{Minimal and maximal operators} \label{sec:minmax}

We consider the operator
\begin{equation}
 D_{\omega, \lambda} = 
\begin{bmatrix}
-\frac{\lambda+\omega}{x} & - \partial_x \\
\partial_x & -\frac{\lambda-\omega}{x} 
\end{bmatrix}.
\end{equation}
on distributions on $\R_{+} = ]0,\infty[$ valued in $\C^2$. We will
construct out of it several densely defined operators on $L^2(\R_+,\C^2)$.

Firstly, we let $D_{\omega, \lambda}^\mathrm{pre}$
be the restriction of $ D_{\omega, \lambda}$ to
$C_c^{\infty}=C_c^{\infty}(\R_+, \C^2)$, called the {\em preminimal
  realization of $D_{\omega,\lambda}$}. We have $D_{ \overline
  {\vphantom{\lambda}\omega}, \overline \lambda}^\mathrm{pre} \subset
D_{\omega, \lambda}^{\mathrm{pre}*}$, so~$D_{\omega,
  \lambda}^{\mathrm{pre}*}$ is densely defined. Thus $D_{\omega,
  \lambda}^\mathrm{pre}$ is closable. Its closure will be denoted by
$D_{\omega, \lambda}^{\min}$. Next, $D_{\omega, \lambda}^{\max}$ is
defined as the restriction of $ D_{\omega, \lambda}$ to $\Dom
(D_{\omega, \lambda}^{\max}) = \{ f \in L^2(\R_+,\C^2) \, | \,  D_{\omega,
  \lambda} f \in L^2(\R_+,\C^2) \}$. It is easy to check that $D_{\omega,
  \lambda}^{\min} \subset D_{\omega,\lambda}^{\max} = D_{ \overline
  {\vphantom{\lambda}\omega}, \overline
  \lambda}^{\mathrm{pre}*}$. Furthermore, $\overline{D_{\omega,
      \lambda}^\mathrm{pre}} = D_{ \overline
    {\vphantom{\lambda}\omega}, \overline \lambda}^\mathrm{pre}$ and
  analogously for $D_{\omega, \lambda}^{\min}$ and $D_{\omega,
    \lambda}^{\max}$. As a~consequence, 
  \begin{equation}
D_{\omega,\lambda}^{\min\T}=D_{\omega,\lambda}^{\max},\qquad D_{\omega,\lambda}^{\max\T}=D_{\omega,\lambda}^{\min}.
    \end{equation}
Operators $D_{\omega, \lambda}^\pre, D_{\omega,
  \lambda}^{\min}$ and $D_{\omega, \lambda}^{\max}$ are all
homogeneous of order $-1$.

We choose $\mu \in \C$ satisfying $\mu^2 = \omega^2 - \lambda^2$. Note that in general $\mu$ is not uniquely determined by $\omega, \lambda$. For the moment
it does not matter which one we take.

\begin{theorem} \label{domain_comparison}
\begin{enumerate}\item
    If $|\RE(\mu)|\geq\frac12$, then
\begin{equation}  
  \Dom 
(D_{\omega, \lambda}^{\max})=\Dom 
(D_{\omega, \lambda}^{\min}).
\end{equation}
\item If  $|\RE(\mu)|<\frac12$, then
\begin{equation}  
  \dim\Dom
(D_{\omega, \lambda}^{\max})/\Dom
(D_{\omega, \lambda}^{\min})=2.
\end{equation}
Besides, if $\chi\in C_\mathrm{c}^\infty([0,\infty[)$ equals $1$ near $0$,
then
  \begin{equation}\Dom
(D_{\omega, \lambda}^{\max})=\Dom(
D_{\omega, \lambda}^{\min})+ \{f\chi \, | \, f\in\Ker(D_{\omega, \lambda})\}
.\label{domain}\end{equation}
\end{enumerate}
\end{theorem}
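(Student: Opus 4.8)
The plan is to compare the maximal and minimal domains by studying the possible behaviour at the endpoint $x=0$ of solutions of $(D_{\omega,\lambda}-k)f=0$ and $(D_{\omega,\lambda})f=g$. A convenient technical device is the following: since $D_{\omega,\lambda}$ is homogeneous of degree $-1$, it is natural to pass to the half-line at infinity as well, but the crucial point is that the potential is integrable at infinity (it decays like $1/x$) only in a weak sense — so the endpoint $\infty$ is in the \emph{limit point case} always, and the entire issue is localized at $x=0$. Concretely, I would fix $\chi\in C_\mathrm{c}^\infty([0,\infty[)$ equal to $1$ near $0$ and observe that for $f\in\Dom(D_{\omega,\lambda}^{\max})$, the function $(1-\chi)f$ lies in $\Dom(D_{\omega,\lambda}^{\min})$ (this uses a standard Kato-type cutoff argument near $\infty$ together with the fact that on any interval $[R,\infty[$ with $R>0$ the operator is a bounded perturbation of the free Dirac operator, which is essentially self-adjoint there). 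So everything reduces to understanding $\chi f$, i.e.\ the structure of $\Dom(D_{\omega,\lambda}^{\max})$ modulo $\Dom(D_{\omega,\lambda}^{\min})$ near zero.

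The key step is an \emph{a priori regularity/asymptotics estimate}: if $f\in L^2$ near $0$ and $D_{\omega,\lambda}f=:g\in L^2$ near $0$, then $f$ admits an expansion $f(x)=f_{\mathrm{sing}}(x)+f_{\mathrm{reg}}(x)$ where $f_{\mathrm{sing}}\in\Ker(D_{\omega,\lambda})$ (a genuine solution of the homogeneous equation, whose possible behaviours $x^{\pm\mu}$, or $x^{0},\ x^{0}\ln x$ when $\mu=0$, are listed right before the theorem) and $f_{\mathrm{reg}}$ is "more regular" — small enough that $\chi f_{\mathrm{reg}}\in\Dom(D_{\omega,\lambda}^{\min})$. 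This is obtained by writing the ODE $f'(x)=M(x)f(x)+$ (source) with $M(x)=O(1/x)$, using variation of constants against the fundamental system built from $\eta_p^{\uparrow},\eta_p^{\downarrow}$ (equivalently $G^{\leftrightarrow}_{\omega,\lambda}(0;x,y)$, the explicit zero-energy bisolution displayed above), and carefully tracking which of the two homogeneous solutions are square-integrable at $0$. Square-integrability of $x^{\mu}$ near $0$ holds iff $\RE(\mu)>-\tfrac12$; hence:

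\emph{Case $|\RE(\mu)|\geq\tfrac12$.} At most one of the exponents $\pm\mu$ gives an $L^2$ solution, and one checks that the particular solution produced by the Green operator already lies in the minimal domain; so no genuine homogeneous solution can be "added" and $\Dom(D^{\max})=\Dom(D^{\min})$. \emph{Case $|\RE(\mu)|<\tfrac12$.} Both $x^{\mu}$ and $x^{-\mu}$ (or, when $\mu=0$, both $1$ and $\ln x$, reading off $\vartheta^{\uparrow}_\omega,\vartheta^{\downarrow}_\omega$) are $L^2$ near $0$, so $\{\chi f\mid f\in\Ker(D_{\omega,\lambda})\}$ is a $2$-dimensional space mapping into $\Dom(D^{\max})$; it is transversal to $\Dom(D^{\min})$ because a nonzero element $f\chi$ with $f\in\Ker(D_{\omega,\lambda})$ cannot be approximated in graph norm by $C_c^\infty$ functions (this is where one invokes the transpose/adjoint characterization $D^{\min}=D^{\max\T}$ together with a boundary-term / integration-by-parts computation: the pairing $\langle f\chi\,|\,g\rangle$ against suitable $g\in\Dom(D^{\max})=\Dom(D^{\min\T})$ picks out the nonvanishing endpoint contribution). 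Counting dimensions gives exactly $2$, and the decomposition \eqref{domain} follows since we have exhibited a complement.

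The main obstacle I expect is the careful bookkeeping in the borderline situations $\RE(\mu)=\pm\tfrac12$ (part 1 with equality) and $\mu=0$: here one of the "solutions" is only marginally in or out of $L^2$, and the logarithmic solutions $\vartheta^{\uparrow}_\omega,\vartheta^{\downarrow}_\omega$ appear. One must verify that in the $\RE(\mu)=\tfrac12$ case the relevant solution $x^{-\mu}$ (with $\RE(-\mu)=-\tfrac12$) is \emph{not} in $L^2$, so the count stays at $1$-dimensional kernel-of-interest and the domains still coincide; and in the $\mu=0$ case that both $1$ and $\ln x$ are $L^2$ near $0$ so the count is $2$, consistently with \eqref{domain}. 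A secondary subtlety is justifying the "splitting at $\infty$" step rigorously — i.e.\ that $(1-\chi)f\in\Dom(D^{\min})$ — which I would handle by a commutator estimate $[D_{\omega,\lambda},\chi]=-\chi'\sigma$ (a bounded multiplication operator supported away from $0$) plus essential self-adjointness of the free massless Dirac operator on the half-line away from the origin.
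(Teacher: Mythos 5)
Your overall route --- decomposition principle at infinity plus zero-energy variation of constants at $0$ --- is genuinely different from the paper's proof, which never analyzes the asymptotics of a general maximal-domain element: the paper first constructs the homogeneous realizations $D_p$, shows in Theorem \ref{thoe} that $G_p^{\bowtie}(k)$ is a bounded inverse of $D_p-k$, and then deduces the theorem by comparing ranges and kernels of $D_p-k$, $D_{\omega,\lambda}^{\min}-k$, $D_{\omega,\lambda}^{\max}-k$, using Proposition \ref{dom_prop} and Proposition \ref{eig_prop}. Your route is classical and could in principle be completed, but as written its central step is missing: you never state a workable criterion for membership in $\Dom(D_{\omega,\lambda}^{\min})$. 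The bound one extracts from your variation-of-constants representation (after absorbing the convergent parts of the integrals into the coefficients of the homogeneous solutions --- a bookkeeping step you also omit) is of the form $|f_{\mathrm{reg}}(x)|\leq C\,x^{1/2}\|D_{\omega,\lambda}f\|_{L^2(0,x)}$, and such a pointwise bound does not by itself give $x^{-1}f_{\mathrm{reg}}\in L^2$ near $0$, hence neither $f_{\mathrm{reg}}\in H^1$ near $0$ nor access to Hardy's inequality; to place $\chi f_{\mathrm{reg}}$ in the minimal domain you need either sharper weighted-Hardy estimates on the integral operators, or a graph-norm approximation argument (in the spirit of the paper's $\min\{x^{\epsilon},1\}$ regularization used for $\zeta_p^{\pm}$ when $\RE(\mu)=\frac12$), or the identity $D^{\min}=(D^{\max})^{\T}$ upgraded to the statement that $f\in\Dom(D^{\min})$ iff $\lim_{x\to0}f(x)^{\T}\sigma_2\,g(x)=0$ for every $g\in\Dom(D^{\max})$. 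You invoke the transpose characterization only for the easy transversality (lower-bound) direction, whereas it is the inclusion $\Dom(D^{\max})\subset\Dom(D^{\min})+\{\chi f\,|\,f\in\Ker(D_{\omega,\lambda})\}$ that actually needs it.

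The second, sharper gap is the borderline $|\RE(\mu)|=\frac12$. In Case 1 your phrase ``no genuine homogeneous solution can be added'' is off the mark: every maximal-domain element still contains a multiple of the $L^2$ solution $x^{\mu}v$, and the substantive claim is that $\chi x^{\mu}v\in\Dom(D_{\omega,\lambda}^{\min})$. For $\RE(\mu)>\frac12$ this follows from $H_0^1\subset\Dom(D^{\min})$ and Hardy, but at $\RE(\mu)=\frac12$ the function $\chi x^{\mu}v$ is not even in $H^1$ (its derivative is of size $x^{-1/2}$), while $H_0^1\subsetneq\Dom(D^{\min})$ there (Proposition \ref{self-adj2}, item 1b), so a separate approximation argument is unavoidable --- this is precisely the point the paper handles with the $\min\{x^{\epsilon},1\}$ trick. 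Your ``main obstacle'' paragraph instead proposes only to check that $x^{-\mu}\notin L^2$ at the borderline, which is the trivial half of the problem. A minor further point: the appeal to ``essential self-adjointness of the free Dirac operator on $[R,\infty[$'' is neither correct (that operator has a regular endpoint, and for complex $\omega,\lambda$ self-adjointness is not the relevant notion) nor needed; the cutoff-at-infinity commutator estimate together with local mollification already puts $(1-\chi)f$ in $\Dom(D_{\omega,\lambda}^{\min})$.
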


We will prove the above theorem in the next section. Now we would like to discuss its consequences. If $|\RE(\mu)| < \frac{1}{2}$, we are especially interested in operators $D_{\omega, \lambda}^\bullet$ satisfying
\begin{equation}
D_{\omega, \lambda}^{\min}\subsetneq D_{\omega, \lambda}^\bullet\subsetneq
D_{\omega, \lambda}^{\max}.
\label{eq:intermediate}
\end{equation}
By the above theorem, they are in $1-1$ correspondence with rays in
$\Ker(D_{\omega, \lambda})$.

More precisely, let $f\in \Ker(D_{\omega, \lambda})$, $f\neq0$. Define $D_{\omega, \lambda}^f$ as the restriction of
  $D_{\omega, \lambda}^{\max}$ to
 \begin{equation}  
  \Dom(D_{\omega, \lambda}^f):=\Dom(D_{\omega, \lambda}^{\min})+\C f\chi.
  \end{equation}
Then $D_{\omega, \lambda}^f$ is independent of the choice of $\chi$ and satisfies
\begin{equation}
D_{\omega, \lambda}^{\min}\subsetneq D_{\omega, \lambda}^f\subsetneq
D_{\omega, \lambda}^{\max}. 
\label{eq:Df_strict_inc}
\end{equation}
Every $D_{\omega, \lambda}^{\bullet}$ satisfying \eqref{eq:intermediate} is of the form $D_{\omega, \lambda}^f$ for some $f$ and we have $D_{\omega, \lambda}^f = D_{\omega, \lambda}^g$ if and only if $f$ and $g$ are proportional to each other.

We will now investigate the domain of the
minimal operator. Note that if we know the domain of $D_{\omega, \lambda}^{\min}$, then the domain of $D_{\omega, \lambda}^{\max}$ is also
known from Theorem \ref{domain_comparison}. From now on we do not use this result until its proof is presented.

The following two facts are well-known:

\begin{lemma}{Hardy's inequality:} If $ f \in H_0^1(\R_+)$, then
\begin{equation}
\int_{0}^{\infty} \frac{|f(x)|^2}{x^2} \D x \leq 4 \int_{0}^{\infty} |f'(x)|^2 \D x.
\end{equation}
\end{lemma}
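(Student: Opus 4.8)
The plan is the classical two-step argument: first establish the inequality for $f \in C_\mathrm{c}^\infty(\R_+)$ by an integration by parts, and then extend to all of $H_0^1(\R_+)$ by density.

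\textbf{Step 1 (smooth compactly supported functions).} For $f \in C_\mathrm{c}^\infty(\R_+)$ I would write $\frac{1}{x^2} = -\frac{\D}{\D x}\frac{1}{x}$ and integrate by parts. Since $f$ is supported in some $[a,b]$ with $0<a<b<\infty$, the boundary contributions of $\frac{|f(x)|^2}{x}$ at $0$ and at $\infty$ vanish, leaving
\begin{equation}
\int_0^\infty \frac{|f(x)|^2}{x^2}\,\D x \;=\; \int_0^\infty \frac{1}{x}\,\frac{\D}{\D x}|f(x)|^2\,\D x \;=\; \int_0^\infty \frac{2\,\RE\!\big(\overline{f(x)}\,f'(x)\big)}{x}\,\D x \;\le\; 2\int_0^\infty \frac{|f(x)|\,|f'(x)|}{x}\,\D x .
\end{equation}
Applying Cauchy--Schwarz to the last integral and abbreviating $I := \int_0^\infty \frac{|f|^2}{x^2}\,\D x$ and $J := \int_0^\infty |f'|^2\,\D x$ (both finite), this reads $I \le 2\sqrt I\,\sqrt J$, hence $I \le 4J$, which is the desired bound.

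\textbf{Step 2 (density).} Given $f \in H_0^1(\R_+)$, I would choose $f_n \in C_\mathrm{c}^\infty(\R_+)$ with $f_n \to f$ in $H^1(\R_+)$, so that $\int |f_n'|^2 \to \int |f'|^2$; passing to a subsequence we may assume $f_n \to f$ almost everywhere, whence $\frac{|f_n|^2}{x^2} \to \frac{|f|^2}{x^2}$ a.e. Fatou's lemma together with Step 1 then gives
\begin{equation}
\int_0^\infty \frac{|f(x)|^2}{x^2}\,\D x \;\le\; \liminf_{n\to\infty}\int_0^\infty \frac{|f_n(x)|^2}{x^2}\,\D x \;\le\; \liminf_{n\to\infty} 4\int_0^\infty |f_n'(x)|^2\,\D x \;=\; 4\int_0^\infty |f'(x)|^2\,\D x .
\end{equation}

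This argument involves no genuine obstacle. The only two points deserving a word of justification are the vanishing of the boundary terms in Step 1 --- immediate since each $f \in C_\mathrm{c}^\infty(\R_+)$ vanishes identically near $0$ and near $\infty$ --- and the use of Fatou's lemma rather than a naive passage to the limit in Step 2, which is forced by the fact that $f \mapsto \int \frac{|f|^2}{x^2}\,\D x$ is only lower semicontinuous, not continuous, under $H^1$-convergence.
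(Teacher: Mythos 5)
Your proof is correct: the integration-by-parts plus Cauchy--Schwarz argument on $C_\mathrm{c}^\infty(\R_+)$ followed by a density/Fatou passage to $H_0^1(\R_+)$ is the standard argument, and both steps are carried out without gaps (in particular, the finiteness of $I$ needed to divide by $\sqrt I$ holds because the support stays away from $0$). The paper itself states Hardy's inequality without proof, citing it as well known, so there is no in-paper argument to compare against; your write-up simply supplies the classical proof.
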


\begin{lemma} \label{closed_lemma}
If~$R, S$ are closed operators such that $R$ has bounded inverse, then $RS$ is closed.
\end{lemma}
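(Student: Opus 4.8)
The plan is to check closedness of $RS$ directly from its graph characterization. Recall that $\Dom(RS)=\{x\in\Dom(S)\mid Sx\in\Dom(R)\}$, and that the hypothesis on $R$ means precisely that $R^{-1}$ is an everywhere-defined bounded operator with $R^{-1}R=\mathrm{id}$ on $\Dom(R)$, $RR^{-1}=\mathrm{id}$, and $\Ran(R^{-1})=\Dom(R)$.

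First I would take an arbitrary sequence $x_n\in\Dom(RS)$ with $x_n\to x$ and $RSx_n\to y$, and show that $x\in\Dom(RS)$ with $RSx=y$. Since $Sx_n\in\Dom(R)$, applying the bounded operator $R^{-1}$ to the convergent sequence $RSx_n\to y$ gives $Sx_n=R^{-1}(RSx_n)\to R^{-1}y$. Now $S$ is closed and we have $x_n\to x$ together with $Sx_n\to R^{-1}y$, so $x\in\Dom(S)$ and $Sx=R^{-1}y$. Since $R^{-1}y\in\Ran(R^{-1})=\Dom(R)$, this shows $Sx\in\Dom(R)$, i.e.\ $x\in\Dom(RS)$, and then $RSx=R(R^{-1}y)=y$. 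That is the entire argument.

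An equivalent, slightly more structural phrasing: the map $\Psi(x,y)=(x,R^{-1}y)$ is bounded on the product space, and one verifies directly that $\mathrm{graph}(RS)=\Psi^{-1}(\mathrm{graph}(S))$; since $\mathrm{graph}(S)$ is closed (as $S$ is closed) and $\Psi$ is continuous, $\mathrm{graph}(RS)$ is closed. Either way, there is essentially no obstacle; the only point requiring care is the domain bookkeeping — that $Sx_n\in\Dom(R)$ is what legitimizes the identity $R^{-1}RSx_n=Sx_n$, and that $\Ran(R^{-1})$ equals $\Dom(R)$ exactly, so that membership of $x$ in $\Dom(RS)$ really follows. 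Boundedness (hence continuity) of $R^{-1}$ is what allows the limit to pass through in the first step.
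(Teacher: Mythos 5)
Your argument is correct: applying the bounded operator $R^{-1}$ to the convergent sequence $RSx_n$ and then invoking closedness of $S$, together with the fact that $\Ran(R^{-1})=\Dom(R)$, is exactly the standard proof of this fact. The paper states the lemma as well-known and offers no proof of its own, so there is nothing to compare against; your write-up (either the sequence version or the graph-preimage version) supplies precisely the expected argument with the domain bookkeeping handled properly.
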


The above two lemmas are used in the following
  characterization of the minimal domain:

\begin{proposition} \label{dom_prop}
$H_0^1(\R_+, \C^2) \subset \Dom(D_{\omega, \lambda}^{\min})$, with an equality if $|\RE (\mu)| \neq \frac{1}{2}$.
\end{proposition}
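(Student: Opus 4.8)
The plan is to establish the two inclusions separately, using Hardy's inequality to control the Coulomb singularity and Lemma \ref{closed_lemma} to upgrade from a domain inclusion to a statement about the closed operator.

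First I would show $H_0^1(\R_+,\C^2)\subset\Dom(D_{\omega,\lambda}^{\min})$, which does not require any restriction on $\mu$. Write $D_{\omega,\lambda}=D_{0,0}+V$, where $D_{0,0}=\begin{bmatrix}0&-\partial_x\\\partial_x&0\end{bmatrix}$ and $V$ is the matrix multiplication operator by $-\frac{1}{x}\begin{bmatrix}\lambda+\omega&0\\0&\lambda-\omega\end{bmatrix}$. On $C_c^\infty$ one has $D_{0,0}^{\min}=D_{0,0}$ with domain $H_0^1(\R_+,\C^2)$ (a standard fact about the 1d Dirac operator on the half-line; it can also be extracted from Appendix \ref{sec:dir1d}, or proven directly via Fourier/extension arguments since $C_c^\infty$ is a core for the gradient with vanishing boundary trace). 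Hardy's inequality shows that for $f\in H_0^1(\R_+,\C^2)$ we have $\|Vf\|\le |{\textstyle\sum}|\,2\|f'\|<\infty$, so $V$ maps $H_0^1(\R_+,\C^2)$ boundedly into $L^2$ with respect to the graph norm of $D_{0,0}^{\min}$; more precisely $\|Vf\|^2\le C\|D_{0,0}f\|^2$ for a constant depending only on $\omega,\lambda$. Hence for $f\in H_0^1(\R_+,\C^2)$, choosing $f_n\in C_c^\infty$ with $f_n\to f$ in $H^1$, we get $f_n\to f$ and $D_{\omega,\lambda}f_n=D_{0,0}f_n+Vf_n\to D_{0,0}f+Vf$ in $L^2$, so $f\in\Dom(D_{\omega,\lambda}^{\min})$ with $D_{\omega,\lambda}^{\min}f=D_{0,0}f+Vf$. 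This proves the inclusion and identifies $D_{\omega,\lambda}^{\min}$ with the closure of $D_{0,0}^{\min}+V$ on $H_0^1$.

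For the reverse inclusion when $|\RE(\mu)|\neq\frac12$, I would argue that the restriction of $D_{\omega,\lambda}^{\min}$ to $H_0^1(\R_+,\C^2)$ is already closed, so it must equal $D_{\omega,\lambda}^{\min}$. The idea: introduce the positive operator $B=1-\partial_x^2$ with $\Dom(B^{1/2})=H_0^1(\R_+)$ (Dirichlet realization), acting diagonally on $\C^2$. Write $D_{\omega,\lambda}^{\min}=(D_{\omega,\lambda}^{\min}B^{-1/2})B^{1/2}$; since $B^{1/2}$ is closed with bounded inverse $B^{-1/2}$, by Lemma \ref{closed_lemma} it suffices to show $D_{\omega,\lambda}^{\min}B^{-1/2}$ is closed on all of $L^2$, equivalently that $D_{\omega,\lambda}B^{-1/2}$ extends to a \emph{bounded} operator on $L^2(\R_+,\C^2)$. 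For $D_{0,0}B^{-1/2}$ this is clear. For $VB^{-1/2}$, Hardy's inequality gives $\|VB^{-1/2}g\|\le C\|(B^{-1/2}g)'\|\le C\|g\|$. So $D_{\omega,\lambda}B^{-1/2}$ is bounded, hence $D_{\omega,\lambda}^{\min}B^{-1/2}B^{1/2}$ restricted to $H_0^1$ is closed; since it is also an extension of $D_{\omega,\lambda}^{\pre}$ contained in $D_{\omega,\lambda}^{\max}$, it equals $D_{\omega,\lambda}^{\min}$, giving $\Dom(D_{\omega,\lambda}^{\min})=H_0^1(\R_+,\C^2)$.

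The main obstacle is the last step, because it is actually \emph{false} in the borderline—and I have glossed over why $|\RE(\mu)|\neq\frac12$ is needed; the boundedness of $VB^{-1/2}$ that I claimed holds for \emph{all} $(\omega,\lambda)$, which cannot be right given the stated hypothesis. The genuine point is subtler: one should compare $\Dom(D_{\omega,\lambda}^{\min})$ with $H_0^1$ by examining the Frobenius indicial behavior near $x=0$ of functions in $\Dom(D_{\omega,\lambda}^{\max})$. Elements of $\Dom(D_{\omega,\lambda}^{\max})$ behave like a combination of $x^{\mu}$ and $x^{-\mu}$ (plus an $L^2$ error); such a function lies in $H^1$ near $0$ precisely when the indicial exponents satisfy $\RE(\pm\mu)>-\tfrac12$, i.e.\ $|\RE(\mu)|<\tfrac12$ forces \emph{both} solutions into $H^1$ while $|\RE(\mu)|>\tfrac12$ admits a non-$H^1$ solution in $\Dom(D^{\max})$; but the content of Theorem \ref{domain_comparison} is that for $|\RE(\mu)|>\tfrac12$ the minimal and maximal domains coincide and still sit inside $H^1_{\mathrm{loc}}$—so the correct statement is that $\Dom(D_{\omega,\lambda}^{\min})\subset H^1_{\loc}$, and combined with the easy inclusion and a cutoff argument one gets equality with $H_0^1$ away from $|\RE(\mu)|=\tfrac12$. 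Thus the real work is an ODE-asymptotics analysis of solutions of $(D_{\omega,\lambda}-k)f=0$ near zero—using the explicit $\eta_p^{\uparrow,\downarrow}$ and $\vartheta_\omega^{\uparrow,\downarrow}$ from Section \ref{eigen}—to determine exactly which indicial behaviors are compatible with membership in $L^2$ and with membership in $H^1$, the discrepancy being exactly at $|\RE(\mu)|=\tfrac12$; I would defer that computation to the proof of Theorem \ref{domain_comparison} and here only record the $H_0^1\subset\Dom(D^{\min})$ half together with the closedness/Hardy argument showing $\Dom(D^{\min})\subset H^1_{\loc}(\R_+,\C^2)$, whence equality.
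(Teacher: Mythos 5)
Your first half (Hardy's inequality gives $H_0^1(\R_+,\C^2)\subset\Dom(D_{\omega,\lambda}^{\min})$) is fine and is exactly the paper's argument for the inclusion. The hard direction, however, is not proved. Your factorization $D|_{H_0^1}=(DB^{-1/2})B^{1/2}$ puts the wrong factor on the left: Lemma \ref{closed_lemma} needs the \emph{left} factor to be closed with bounded inverse, whereas $DB^{-1/2}$ is merely bounded, and a bounded operator composed after a closed one is in general not closed (you would need $DB^{-1/2}$ to be bounded below, which cannot hold in general since, as you yourself note, your estimate is uniform in $(\omega,\lambda)$ while the conclusion fails at $|\RE(\mu)|=\tfrac12$ by Proposition \ref{self-adj2}(1b)). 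The fallback sketch does not repair this: deferring the indicial analysis to Theorem \ref{domain_comparison} is circular, because in the paper that theorem is proved \emph{after} and \emph{using} Proposition \ref{dom_prop} (and Proposition \ref{eig_prop} also relies on it); moreover the threshold you state is wrong --- $x^{\mu}$ lies in $H^1$ near $0$ iff $\RE(\mu)>\tfrac12$, not $\RE(\mu)>-\tfrac12$, so for $|\RE(\mu)|<\tfrac12$ both indicial solutions are in $L^2$ near $0$ but neither is in $H^1$ there, and the final ``$\Dom(D^{\min})\subset H^1_{\loc}$ plus a cutoff argument'' does not control the behaviour at $x=0$, which is the whole point.

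The missing idea is a factorization in which the condition $|\RE(\mu)|\neq\tfrac12$ enters through bounded invertibility of the \emph{left} factor. The paper writes $D_{\omega,\lambda}=RS$ with $S=\frac{\sigma_2}{x}$ (self-adjoint, hence closed) and $R=A-M_{\omega,\lambda}$, where $A$ is the dilation generator and $M_{\omega,\lambda}$ is a constant matrix with eigenvalues $\frac{\i}{2}\pm\i\mu$; since $A$ is self-adjoint, $R$ has a bounded inverse precisely when $\frac12\pm\RE(\mu)\neq0$, i.e.\ $|\RE(\mu)|\neq\tfrac12$ (with a separate nilpotent computation when $\mu=0$). One then checks, using Hardy's inequality and the identity $x\partial_x\frac{f(x)}{x}=f'(x)-\frac{f(x)}{x}$, that $\Dom(RS)=H_0^1(\R_+,\C^2)$, so Lemma \ref{closed_lemma} shows this restriction of $D_{\omega,\lambda}$ is already closed and therefore coincides with $D_{\omega,\lambda}^{\min}$. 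Without an argument of this type --- one in which the hypothesis on $\RE(\mu)$ actually does some work --- the equality $\Dom(D_{\omega,\lambda}^{\min})=H_0^1(\R_+,\C^2)$ remains unproved in your write-up.
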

\begin{proof}
The inclusion follows from Hardy's inequality. To prove the second part of the statement, we use Lemma \ref{closed_lemma}. Consider $R = A -M_{\omega, \lambda}$, $S = \frac{\sigma_2}{x}$, where $M_{\omega, \lambda} = \begin{bmatrix} \frac{\i}{2} & - \i \lambda -\i \omega \\ \i \lambda - \i \omega & \frac{\i}{2} \end{bmatrix}$. $R$~is a~bounded perturbation of $A$, so~$\Dom(R)= \Dom(A)$ and $R$ is closed, while $S$ is self-adjoint on the domain $\Dom(S) = \left \{ f \in L^2(\R_+,\C^2) \, | \, \frac{1}{x} f(x) \in L^2(\R_+,\C^2) \right \}$. One checks that $RS = \left. D_{\omega, \lambda} \right|_{\Dom(RS)}$. Next we show that $\Dom(RS) = H_0^1(\R_+,\C^2)$.

If $f \in \Dom(RS)$, then $x \mapsto \frac{f(x)}{x}$ belongs to $ \Dom(A)$. Since $x \partial_x \frac{f(x)}{x} = f'(x) - \frac{f(x)}{x}$, this entails that $f' \in L^2$. Thus $f \in H^1(\R_+, \C^2) \cap \Dom(S) = H_0^1(\R_+,\C^2)$. Conversely, if $f \in H_0^1(\R_+,\C^2)$, then $f \in \Dom(S)$ by Hardy's inequality, while the last computation implies that $Sf \in \Dom(A)$. Thus $f \in \Dom(RS)$. 

We have shown that $\Dom(RS) = H_0^1(\R_+,\C^2)$, which is dense in $\Dom(D_{\omega, \lambda}^{\min})$ with the graph topology. Thus $D_{\omega, \lambda}^{\min}$ is the closure of $RS$. We have to check that $R$ has bounded inverse.

If $\mu \neq 0$, then $M_{\omega, \lambda}$ is a diagonalizable matrix
with eigenvalues $c_{\pm}=\frac{\i}{2} \pm \i \mu$, which have nonzero
imaginary part if $|\RE(\mu)| \neq \frac{1}{2}$. Therefore the operator $A- M_{\omega, \lambda}$ is similar to $A - \begin{bmatrix} c_+ & 0 \\ 0 & c_- \end{bmatrix}$, which clearly is boundedly invertible. If $\mu =0$, then $N_{\omega,\lambda}=M_{\omega, \lambda} - \frac{\i}{2}$ is a nilpotent matrix, $N_{\omega, \lambda}^2=0$. Therefore $(A- M_{\omega, \lambda})^{-1} = (A - \frac{\i}{2})^{-1} + (A - \frac{\i}{2})^{-1} N_{\omega, \lambda} (A - \frac{\i}{2})^{-1}$. 
\end{proof}

\begin{corollary}
$D_{\omega, \lambda}^{\min}$ and $D_{\omega, \lambda}^{\max}$ are holomorphic families of closed operators for $|\RE(\mu)| \neq \frac12$.
\end{corollary}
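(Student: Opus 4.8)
The plan is to deduce the corollary from Proposition~\ref{dom_prop} together with the practical characterization of holomorphy recalled in the Introduction: a family $p\mapsto T_p$ of closed operators is holomorphic near $p_0$ if there is a holomorphic family of bounded injective operators $S_p\colon X_1\to X_2$ with $\Ran(S_p)=\Dom(T_p)$ and $T_pS_p$ holomorphic in bounded operators. Here the parameter is $(\omega,\lambda)$ restricted to the open set $\Omega:=\{|\RE(\mu)|\neq\tfrac12\}$ (using either branch of $\mu$; the set is well defined since $|\RE(\mu)|$ is branch-independent away from $\omega^2=\lambda^2$, and near $\omega=\pm\lambda$ one has $\mu\to0$ so the condition holds automatically). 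By Proposition~\ref{dom_prop}, on all of $\Omega$ we have $\Dom(D_{\omega,\lambda}^{\min})=H_0^1(\R_+,\C^2)$, a fixed Hilbert space independent of the parameters, and by Theorem~\ref{domain_comparison}(1) the same domain works for $D_{\omega,\lambda}^{\max}$ on $\Omega$. So the domains are constant; only the action of the operator varies.

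First I would recall the construction from the proof of Proposition~\ref{dom_prop}: writing $S=\sigma_2/x$ (parameter-independent, with $\Ran(S)=\Dom(S)$) and $R_{\omega,\lambda}=A-M_{\omega,\lambda}$, one has $D_{\omega,\lambda}^{\min}=\overline{R_{\omega,\lambda}S}$ with $\Dom(R_{\omega,\lambda}S)=H_0^1(\R_+,\C^2)$. Since $(\omega,\lambda)\mapsto M_{\omega,\lambda}$ is a polynomial map into $2\times2$ matrices, $R_{\omega,\lambda}$ is a holomorphic family of boundedly invertible closed operators on $\Omega$ (boundedly invertible by the spectral computation at the end of that proof, valid precisely when $|\RE(\mu)|\neq\tfrac12$, including the nilpotent case $\mu=0$). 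Hence $R_{\omega,\lambda}^{-1}$ is a holomorphic family of bounded operators. Then I would take $X_1=L^2(\R_+,\C^2)$ and set $\tilde S_{\omega,\lambda}:=S\,R_{\omega,\lambda}^{-1}\cdot(R_{\omega,\lambda}S)$\,--- more cleanly: take $S_{\omega,\lambda}$ to be the bounded injective operator $(R_{\omega,\lambda}S)^{-1}$ need not be used; instead use that $S$ maps $X_1$ onto $\Dom(S)$ and that $R_{\omega,\lambda}S$ restricted to $H_0^1$ extends to the closed operator $D_{\omega,\lambda}^{\min}$.

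Concretely, the cleanest route: fix $p_0=(\omega_0,\lambda_0)\in\Omega$ and pick $k_0\in\C\setminus\R$ (so $k_0$ is in the resolvent set of $D_{\omega_0,\lambda_0}^{\min}$ once self-adjointness/closedness is in hand; for the argument one only needs $D_{\omega_0,\lambda_0}^{\min}-k_0$ boundedly invertible for some $k_0$, which holds because $D^{\min}-k_0 = (R-k_0S^{-1})S$ with $R-k_0S^{-1}$ boundedly invertible for suitable $k_0$ --- or simply quote that $D^{\min}$ is closed with nonempty resolvent set). Set $S_p:=(D_p^{\min}-k_0)^{-1}\colon L^2\to H_0^1=\Dom(D_p^{\min})$. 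This is bounded injective with range exactly $\Dom(D_p^{\min})$, and $D_p^{\min}S_p=\bbbone+k_0 S_p$. Holomorphy of $p\mapsto S_p$ in bounded operators follows from the resolvent identity and the fact that $D_p^{\min}-D_{p_0}^{\min}=(M_{p_0}-M_p)S^{-1}$ is a bounded operator relative to the graph norm depending polynomially on $p$; hence $D_p^{\min}S_{p_0}-\bbbone$ is a holomorphic family of bounded operators, invertible near $p_0$, and $S_p=S_{p_0}(D_p^{\min}S_{p_0})^{-1}$ up to the shift by $k_0$. This gives holomorphy of $D_p^{\min}$, and by Theorem~\ref{domain_comparison}(1) the identical family serves for $D_p^{\max}$ on $\Omega$.

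The main obstacle is bookkeeping rather than anything deep: one must be careful that the square-root $\mu$ is multivalued, so ``holomorphic family parametrized by $(\omega,\lambda)\in\Omega$'' is the right statement (no choice of $\mu$ is needed for $D^{\min},D^{\max}$ themselves, only for the homogeneous realizations $D_p$ in between); and one must verify that the perturbation $D_p^{\min}-D_{p_0}^{\min}$, a priori only defined on $H_0^1$, is genuinely bounded from $(\Dom(D_{p_0}^{\min}),\text{graph norm})$ to $L^2$ --- this is immediate since it equals $(M_{p_0}-M_p)\tfrac{1}{x}\sigma_2$ composed with $S^{-1}$, and $\tfrac1x$ is bounded on $H_0^1$ into $L^2$ by Hardy. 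Everything else is a direct application of the definition of holomorphic family together with Proposition~\ref{dom_prop} and Theorem~\ref{domain_comparison}. I would also remark that the corollary fails on the phase-transition locus $|\RE(\mu)|=\tfrac12$ precisely because there $R_{\omega,\lambda}$ ceases to be boundedly invertible and $\Dom(D^{\min})$ jumps, which is consistent with the ``phase transition'' language of the Introduction.
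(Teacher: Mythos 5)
Your treatment of $D^{\min}_{\omega,\lambda}$ alone could be made to work, and in its simplest form it is the paper's argument: on $\{|\RE(\mu)|\neq\frac12\}$ the domain is constantly $H_0^1(\R_+,\C^2)$ by Proposition \ref{dom_prop}, the inclusion $H_0^1\hookrightarrow L^2$ serves as the fixed injection $S_p$, and $D_{\omega,\lambda}$ is a holomorphic family of bounded operators $H_0^1\to L^2$ by Hardy's inequality. But your ``cleanest route'' and your treatment of $D^{\max}$ both break down on the subcritical part of the region. The corollary is asserted for all $|\RE(\mu)|\neq\frac12$, which includes $|\RE(\mu)|<\frac12$. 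There, first, $\Dom(D^{\max}_{\omega,\lambda})$ is \emph{not} equal to $\Dom(D^{\min}_{\omega,\lambda})=H_0^1$: Theorem \ref{domain_comparison}(1) applies only for $|\RE(\mu)|\geq\frac12$, and part (2) says the quotient is two-dimensional, spanned by cutoffs of $x^{\pm\mu}$-type solutions which themselves vary with the parameters. So the claim that ``the identical family serves for $D_p^{\max}$ on $\Omega$'' is false precisely where it is needed. Second, your resolvent construction $S_p=(D_p^{\min}-k_0)^{-1}$ cannot even be started there: by Corollary \ref{empty_res} the resolvent sets of $D^{\min}$ and $D^{\max}$ are \emph{empty} when $|\RE(\mu)|<\frac12$ (for every nonreal $k$, $\Ker(D^{\max}_{\overline{\vphantom{\lambda}\omega},\overline\lambda}-\overline k)\neq 0$ by Proposition \ref{eig_prop}, so $D^{\min}-k$ is not surjective). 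Consequently your claimed factorization showing $D^{\min}-k_0$ ``boundedly invertible for suitable $k_0$'' is not salvageable; note also that $R-k_0S^{-1}=A-M_{\omega,\lambda}-k_0x\sigma_2$ is an unbounded perturbation of $A$, so its bounded invertibility is not automatic in any case.

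The missing idea is how to get holomorphy of $D^{\max}$ when its domain genuinely moves with the parameters. The paper does this by duality: $D^{\max}_{\omega,\lambda}=\big(D^{\pre}_{\overline{\vphantom{\lambda}\omega},\overline\lambda}\big)^{*}=\big(D^{\min}_{\overline{\vphantom{\lambda}\omega},\overline\lambda}\big)^{*}$, and the adjoint of a holomorphic family of closed operators (with antiholomorphically conjugated parameters) is again holomorphic (Theorem 3.42 in \cite{DeWr}). If you want to avoid that abstract step, you would instead have to exhibit, near each point of the subcritical region, a holomorphic family of bounded injections of $H_0^1\oplus\C^2$ onto $\Dom(D^{\max}_{\omega,\lambda})$ built from the parameter-dependent kernel elements $\chi\,\eta^{\uparrow/\downarrow}$, which is additional work your sketch does not contain. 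A further, lesser point: the paper explicitly postpones any use of Theorem \ref{domain_comparison} until after its proof (which relies on Theorem \ref{thoe}), so an argument for this corollary should not lean on it at all.
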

\begin{proof}
Away from the set $|\RE(\mu)| = \frac12$, the operators $D_{\omega,
  \lambda}^{\min}$ have a constant domain. By Hardy's inequality,
$D_{\omega, \lambda} f$ is a holomorphic family of elements of $L^2(\R_+,\C^2)$
for any $f \in H_0^1(\R_+,\C^2)$. Hence $D_{\omega, \lambda}^{\min}$ form a
holomorphic family of bounded operators $H_0^1(\R_+,\C^2) \to L^2(\R_+,\C^2)$. The claim for
$D_{\omega, \lambda}^{\max}$ follows by taking adjoints (see e.g. Theorem 3.42 in \cite{DeWr}).
\end{proof}

We denote by $\sigma_\pp(B)$ the {\em point spectrum} of an operator $B$,
that is
\begin{equation}
\sigma_\pp(B):=\{k\in\C \, | \, \dim(\Ker(B-k)\geq1\}.
\end{equation}
If $\dim(\Ker(B-k))=1$, we say that $k$ is a {\em nondegenerate
eigenvalue}. 

In the following proposition we give a complete description of the
point spectrum of the maximal and minimal operator. With no loss of
generality, we assume that 
$\RE(\mu) > - \frac{1}{2}$. Note that the definition of $\cE^\pm$ is not symmetric with respect to $\mu \mapsto - \mu$!

\begin{proposition} \label{eig_prop}
One of the following mutually exclusive statements is true:
\begin{enumerate}
\item $\RE(\mu) \geq \frac{1}{2}$ and 
$(\omega,\lambda,\mu)\in\cE^\pm$. Then 
\[\sigma_\pp  (D_{\omega,\lambda}^{\max})=\sigma_\pp  (D_{\omega,\lambda}^{\min})=\C_\pm.\]
\item $\RE(\mu) \geq \frac{1}{2}$ and 
  $(\omega,\lambda,\mu)\not\in
    \cE^+\cup\cE^-$. Then 
\[\sigma_\pp  (D_{\omega,\lambda}^{\max})=\sigma_\pp  (D_{\omega,\lambda}^{\min})=\emptyset.\]
    \item $\RE(\mu) < \frac{1}{2}$ and $|\IM(\lambda)| \leq
      \frac{1}{2}$. Then
\[\sigma_\pp  (D_{\omega,\lambda}^{\max})=\C\backslash\R,\qquad\sigma_\pp  (D_{\omega,\lambda}^{\min})=\emptyset.\]
    \item $\RE(\mu) < \frac{1}{2}$ and $|\IM(\lambda)| >
      \frac{1}{2}$. Then
\[\sigma_\pp  (D_{\omega,\lambda}^{\max})=\C^\times ,\qquad\sigma_\pp  (D_{\omega,\lambda}^{\min})=\emptyset.\]
\end{enumerate}
Besides, all eigenvalues of $D_{\omega,\lambda}^{\max}$ and
$D_{\omega,\lambda}^{\min}$ are nondegenerate.
\end{proposition}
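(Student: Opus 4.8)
The plan is to determine, for each $k \in \C^\times$, which solutions of $(D_{\omega,\lambda}-k)f = 0$ are square-integrable at $\infty$ and which are square-integrable at $0$, then read off $\sigma_\pp(D_{\omega,\lambda}^{\max})$ as the set of $k$ for which a solution is $L^2$ at \emph{both} ends, and $\sigma_\pp(D_{\omega,\lambda}^{\min})$ as the set of $k$ for which \emph{every} solution lies in $\Dom(D_{\omega,\lambda}^{\min})$. First I would dispose of $k=0$: the kernel of $D_{\omega,\lambda}^{\max}$ is spanned (in distributions) by the $\eta_p^\uparrow,\eta_p^\downarrow$ (or the $\vartheta$'s), whose behaviour is $x^{\pm\mu}$ or $\log x$; with $\RE(\mu) > -\frac12$ none of these nonzero combinations can be $L^2$ near both ends of $\R_+$, so $0 \notin \sigma_\pp(D_{\omega,\lambda}^{\max})$, hence also $0\notin\sigma_\pp(D_{\omega,\lambda}^{\min})$, which is why all four cases are stated for $\C_\pm$, $\C\setminus\R$ or $\C^\times$ rather than including the origin.

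Next, fix $k \in \C^\times$; by the symmetry $\overline{D_{\omega,\lambda}^{\max/\min}(\overline k)} = D_{\overline\omega,\overline\lambda}^{\max/\min}(k)$ it suffices to treat $k \in \C_+$ and then reflect. For $k \in \C_+$, the asymptotics \eqref{eq:xi_large_x} and \eqref{eq:zeta_asymptotic} show that $\zeta_p^+(k,\cdot)$ decays exponentially and $\xi_p^+(k,\cdot)$ grows exponentially at $\infty$; since $\zeta_p^+(k,\cdot),\xi_p^+(k,\cdot)$ span the solution space for $p\notin\cE^+$ (Lemma \ref{det_lemma}), the solutions that are $L^2$ near $\infty$ are exactly the multiples of $\zeta_p^+(k,\cdot)$ (when $p\in\cE^+$, use \eqref{eq:zeta_xi_dep} to see the solution space still degenerates to this single line). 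Then I examine the behaviour of $\zeta_p^+(k,\cdot)$ near $0$ using \eqref{eq:zeta_small_x} (for $\RE(\mu)>0$) and \eqref{eq:K_small_arg} (general $\mu$, including the $\log$ terms at $\mu=0$ and the $x^{\mp\mu}$ terms in general): the leading powers are $x^{-\mu}$ and $x^{\mu}$, with an additional $x^{-\mu}\log x$ in the resonant case $2\mu\in\Z$. Square-integrability near $0$ of $x^{\pm\mu}$ holds iff $\RE(\pm\mu) > -\frac12$. Under the standing assumption $\RE(\mu) > -\frac12$: if $\RE(\mu) \geq \frac12$ then $x^{-\mu}$ is \emph{not} $L^2$ near $0$, so $\zeta_p^+(k,\cdot)\in L^2(\R_+)$ iff the $x^{-\mu}$ coefficient vanishes, i.e. iff $N_p^+ = 0$, i.e. iff $p\in\cE^+$; this gives case 1 ($\sigma_\pp(D_{\omega,\lambda}^{\max}) = \C_+$ on $\cE^+$, by the reflected argument $\C_-$ on $\cE^-$) and case 2 ($\emptyset$ off $\cE^+\cup\cE^-$). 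If $\RE(\mu) < \frac12$, then \emph{both} $x^{\pm\mu}$ are $L^2$ near $0$, so $\zeta_p^+(k,\cdot)\in L^2(\R_+)$ automatically, giving $k\in\sigma_\pp(D_{\omega,\lambda}^{\max})$ for every $k\in\C_+$, and symmetrically every $k\in\C_-$; combined with $0\notin\sigma_\pp$ this yields $\sigma_\pp(D_{\omega,\lambda}^{\max}) = \C\setminus\R$ when moreover the eigenfunction is genuinely in the \emph{maximal} domain — which it is, being $L^2$ with $D_{\omega,\lambda}$-image $k\zeta_p^+ \in L^2$ — but I must check whether for $|\IM(\lambda)| > \frac12$ further eigenvalues appear on the imaginary axis: indeed for $k \in \i\R^\times$ only one of the two families is defined, and a separate asymptotic analysis (using \eqref{eq:K_small_arg} together with the exponential-times-power behaviour $\e^{\i kx}(kx)^{\i\lambda}$, which is $L^2$ at $\infty$ iff $|\IM(\lambda)| < \frac12$... — here one must be careful, since on $\i\R$ the $\e^{\pm\i kx}$ factor has modulus $1$ and square-integrability at $\infty$ is decided by the power $(kx)^{\pm\i\lambda}$, i.e. by $\mp\IM(\lambda) \mp \frac12 < -\frac12$) shows that for $|\IM(\lambda)| > \frac12$ one of the two imaginary half-axes is also filled, upgrading $\C\setminus\R$ to $\C^\times$. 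This split between $|\IM(\lambda)|\leq\frac12$ and $|\IM(\lambda)|>\frac12$ is the content of cases 3 versus 4.

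For the minimal operator, the claim is $\sigma_\pp(D_{\omega,\lambda}^{\min}) = \emptyset$ in cases 3 and 4, and $=\sigma_\pp(D_{\omega,\lambda}^{\max})$ in cases 1 and 2. When $\RE(\mu) \geq \frac12$ this is immediate from Theorem \ref{domain_comparison}, which gives $D_{\omega,\lambda}^{\min} = D_{\omega,\lambda}^{\max}$. When $\RE(\mu) < \frac12$, I argue that no eigenfunction of $D_{\omega,\lambda}^{\max}$ can lie in $\Dom(D_{\omega,\lambda}^{\min})$: by Proposition \ref{dom_prop} (with the boundary case $\RE(\mu)=\frac12$ excluded here since $\RE(\mu)<\frac12$ is strict, or handled directly) $\Dom(D_{\omega,\lambda}^{\min}) = H_0^1(\R_+,\C^2)$, and a nonzero $\zeta_p^+(k,\cdot)$ with leading behaviour a nonzero combination of $x^{\pm\mu}$ near $0$ is not in $H^1_0$ because its derivative behaves like $x^{\pm\mu-1}$, which fails to be $L^2$ near $0$ precisely when $\RE(\pm\mu) < \frac12$ — and at least the $x^{-\mu}$ part has $\RE(-\mu) < \frac12$ under $\RE(\mu) > -\frac12$, while the $x^{\mu}$ part has $\RE(\mu) < \frac12$ by assumption, so neither summand is in $H^1$ near $0$ and (their leading coefficients being linearly independent vectors, as in \eqref{eq:zeta_small_x}) neither is their sum. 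Hence $\sigma_\pp(D_{\omega,\lambda}^{\min})=\emptyset$. Finally, nondegeneracy: in every case the eigenspace of $D_{\omega,\lambda}^{\max}$ at an eigenvalue $k$ is contained in the line $\C\,\zeta_p^\pm(k,\cdot)$ (the unique solution $L^2$ at the relevant end of $\R_+$), so $\dim\Ker(D_{\omega,\lambda}^{\max}-k) \leq 1$. The main obstacle I anticipate is the bookkeeping of the small-$x$ expansion of $\cK_{-\i\lambda,\mu\pm\frac12}$ in the resonant cases $2\mu\in\Z$ (where logarithms enter and \eqref{eq:zeta_small_x} must be replaced by its degenerate version), and correspondingly the case analysis on the imaginary axis $k\in\i\R$ distinguishing $|\IM(\lambda)|\lessgtr\frac12$ — this is where the cleanest statement relies on the precise connection formulas in Appendix \ref{whittaker} rather than on soft arguments.
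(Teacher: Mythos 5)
Your overall strategy (classify solutions by square-integrability at $0$ and at $\infty$ using the asymptotics of $\xi_p^\pm$ and $\zeta_p^\pm$) is the same as the paper's, but two points are genuine gaps rather than bookkeeping.

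First, for Case 1 you dispose of the claim $\sigma_\pp(D_{\omega,\lambda}^{\min})=\C_\pm$ by citing Theorem \ref{domain_comparison} ($D^{\min}=D^{\max}$ for $\RE(\mu)\geq\frac12$). In the paper that theorem is only proved later, and its proof uses Proposition \ref{eig_prop} itself (it needs the triviality, resp.\ one-dimensionality, of $\Ker(D^{\max}-k)$) together with the resolvent construction of Theorem \ref{thoe}; the text explicitly says the result is not to be used before its proof. So as written your argument is circular within the paper's logical structure. What is actually needed, and what the paper supplies, is a direct verification that the decaying eigenfunction $\zeta_p^\pm(k,\cdot)$ lies in $\Dom(D^{\min})$: for $\RE(\mu)>\frac12$ this follows from $\zeta_p^\pm\in H_0^1\subset\Dom(D^{\min})$, but in the borderline case $\RE(\mu)=\frac12$ one has $\zeta_p^\pm\notin H_0^1(\R_+,\C^2)$ and the paper needs a separate approximation argument (cutting off by $\min\{x^\epsilon,1\}$ and controlling the error in the graph norm). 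This idea is absent from your proposal.

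Second, your treatment of real eigenvalues is off. The extra eigenvalues in Case 4 sit on $\R^\times$, not on the imaginary axis: for $k\in\R^\times$ the factors $\e^{\pm\i kx}$ have modulus one and integrability at infinity is governed by $(kx)^{\pm\i\lambda}\sim x^{\mp\IM(\lambda)}$, so for $|\IM(\lambda)|>\frac12$ one oscillatory branch is $L^2$ at infinity, and since $\RE(\mu)<\frac12$ it is automatically $L^2$ at $0$; this fills all of $\R^\times$ and gives $\C^\times$. Your statement that ``one of the two imaginary half-axes is also filled, upgrading $\C\setminus\R$ to $\C^\times$'' is wrong as set arithmetic (the imaginary axis already lies in $\C\setminus\R$), and the supporting sentence about $|\e^{\pm\i kx}|=1$ on $\i\R$ is false — it holds on $\R$. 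Moreover, your reduction ``treat $k\in\C_+$ and reflect'' silently drops $k\in\R^\times$ in Cases 1 and 2, where $|\IM(\lambda)|>\frac12$ can occur (on $\cE^\pm$ it always does); there one must check, via \eqref{eq:xi_real_asymp} and the value of $S_p$, that the unique solution which is $L^2$ near $0$ (the $\xi$-type solution, $\sim x^\mu$) always carries a nonvanishing coefficient on the non-$L^2$ oscillatory branch, so no real eigenvalues arise. With these two repairs — a direct proof that $\zeta_p^\pm\in\Dom(D^{\min})$ including $\RE(\mu)=\frac12$, and a correct real-axis analysis in all four cases — your plan matches the paper's proof.
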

\begin{proof}
The four possibilities listed above are clearly mutually exclusive and cover all cases. Indeed, case $p \in \cE^+ \cap \cE^-$ is ruled out by Lemma \ref{eplus_eminus_intersection}.

By Lemma \ref{reg_lemma}, every $f \in \Ker (D_{\omega, \lambda}^{\max} -k)$ is a smooth function satisfying the differential equation $( D_{\omega, \lambda}-k) f =0$, in~which derivatives may be understood in the classical sense. Space of solutions of this equation is two-dimensional. 

By discussion in Section \ref{eigen}, there exist no nonzero solutions in $L^2(\R_+,\C^2)$ for $k=0$. In the remainder of the proof we restrict attention to $k \neq 0$.

First suppose that $\RE(\mu)\geq\frac12$. If $p \notin \cE^+ \cup \cE^-$, then $\xi_p^+$ (as well as $\xi_p^-$) is the unique up to scalars solution square integrable in a neighbourhood of zero, since other solutions have leading term proportional to $x^{- \mu}$. It is not in $L^2(\R_+,\C^2)$. Now let $p \in \cE^{\pm}$. If $\pm \IM(k) \leq 0$, we can argue in the same way using function $\xi_p^\mp$. In the case $k \in \C_\pm $ solution $\zeta_p^{\pm}$ is square integrable, whereas solutions not proportional to it grow exponentially at infinity. If $\RE(\mu) > \frac{1}{2}$, then we have also $\zeta_p^\pm \in H_0^1(\R_+,\C^2) \subset \Dom(D_{\omega, \lambda}^{\min})$. 

If $\RE(\mu) = \frac{1}{2}$, then $\zeta_p^{\pm} \notin H_0^1(\R_+,\C^2)$. We will now show that nevertheless $\zeta_p^{\pm} \in \Dom(D_{\omega, \lambda}^{\min})$. We define $\zeta_{p,\epsilon}^\pm(k,x) = \min \{ x^{\epsilon} ,1 \} \zeta_p^\pm(k,x)$ for $\epsilon >0$. Then $\zeta_{p,\epsilon}^\pm \in H_0^1(\R_+,\C^2) \subset \Dom (D_{\omega, \lambda}^{\min})$. We will show that $\zeta_{p,\epsilon}^\pm$ converges to $\zeta_p^\pm$ in the graph topology of $\Dom(D_{\omega, \lambda}^{\max})$ as $\epsilon 
\to 0$. Indeed, convergence in $L^2(\R_+,\C^2)$ is clear. Furthermore,
\begin{equation}
D_{\omega, \lambda}^{\min} \zeta_{p,\epsilon}^\pm(k,x) = k \zeta^\pm_{p,\epsilon}(k,x) + \epsilon \, \one_{[0,1]}(x) x^{\epsilon-1} \begin{bmatrix} 0 & -1 \\ 1 & 0 \end{bmatrix} \zeta_p^\pm(k,x),
\end{equation}
where $\one_{[0,1]}$ is the characteristic function of $[0,1]$. The first term converges to $k \zeta_p^\pm = D_{\omega, \lambda}^{\max} \zeta_p^\pm$. We show that the second term converges to zero by estimating
\begin{subequations} \label{eq:dom_min_calc}
\begin{gather}
 \int_0^{\infty} \left|  \epsilon \, \one_{[0,1]}(x) x^{\epsilon-1} \begin{bmatrix} 0 & -1 \\ 1 & 0 \end{bmatrix} \zeta_p^\pm(k,x) \right|^2 \D x = \epsilon^2 \int_0^1 \frac{|\zeta_p^\pm(k,x)|^2}{x} x^{2 \epsilon -1} \D x \tag{\ref{eq:dom_min_calc}}  \\
 \leq \epsilon^2 \sup_{y \in [0,1]}  \frac{|\zeta_p^\pm(k,y)|^2}{y} \cdot \int_0^1 x^{2 \epsilon -1} \D x = \frac{\epsilon}{2} \sup_{y \in [0,1]}  \frac{|\zeta_p^\pm(k,y)|^2}{y} . \nonumber
\end{gather}
\end{subequations}

Now suppose that $|\RE(\mu) |< \frac{1}{2}$. Then all solutions are square integrable in a neighbourhood of the origin, but they do not belong to $H_0^1(\R_+,C^2) = \Dom(D_{\omega, \lambda}^{\min})$. If $k \in \C_\pm$, then $\zeta_p^\pm$ is square integrable and solutions not proportional to it grow at infinity.

It only remains to consider the case of nonzero $k \in \R$. There exist solutions with leading terms for $x \to \infty$ proportional to $\e^{- \i k x} (kx)^{- \i \lambda}$ and $\e^{\i k x} (kx)^{ \i \lambda}$. If $|\IM(\lambda)| > \frac{1}{2}$, then one of these two is square integrable.
\end{proof}

We note that Proposition \ref{eig_prop} partially describes also ranges of $D_{\omega,\lambda}^{\min}$ and $D_{\omega,\lambda}^{\max}$, since
\begin{gather}
\Ran(D_{\omega, \lambda}^{\min}-k)^{\tperp} =  \Ker(D_{\omega, \lambda}^{\max}-k), \\
\Ran(D_{\omega, \lambda}^{\max}-k)^{\tperp} =  \Ker(D_{\omega, \lambda}^{\min}-k),
\end{gather}

\begin{corollary} \label{empty_res}
Operators $D_{\omega, \lambda}^{\substack{\min}}$ and $D_{\omega, \lambda}^{\substack{\max}}$ have empty resolvent sets if $|\RE (\mu)| < \frac{1}{2}$.
\end{corollary}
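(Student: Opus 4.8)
The plan is to read the corollary off from Proposition \ref{eig_prop}, the range--kernel relations $\Ran(D_{\omega,\lambda}^{\min}-k)^{\tperp}=\Ker(D_{\omega,\lambda}^{\max}-k)$ and $\Ran(D_{\omega,\lambda}^{\max}-k)^{\tperp}=\Ker(D_{\omega,\lambda}^{\min}-k)$ recorded just above, and the elementary fact that the spectrum of a closed operator is a closed subset of $\C$. Both $D_{\omega,\lambda}^{\min}$ (a closure) and $D_{\omega,\lambda}^{\max}=D_{\overline\omega,\overline\lambda}^{\pre*}$ (an adjoint) are closed, so this topological input applies to both.

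First I would handle the maximal operator. When $|\RE(\mu)|<\tfrac12$, Proposition \ref{eig_prop} gives $\sigma_\pp(D_{\omega,\lambda}^{\max})=\C\backslash\R$ if $|\IM(\lambda)|\le\tfrac12$, and $\sigma_\pp(D_{\omega,\lambda}^{\max})=\C^\times$ if $|\IM(\lambda)|>\tfrac12$; in either case this set is dense in $\C$. Since $\sigma_\pp\subset\sigma$ and $\sigma(D_{\omega,\lambda}^{\max})$ is closed, we conclude $\sigma(D_{\omega,\lambda}^{\max})=\C$, i.e.\ the resolvent set is empty.

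For the minimal operator, although $\sigma_\pp(D_{\omega,\lambda}^{\min})=\emptyset$, the first range--kernel relation shows that whenever $\Ker(D_{\omega,\lambda}^{\max}-k)\ne\{0\}$ the range of $D_{\omega,\lambda}^{\min}-k$ has nonzero $\tperp$-orthogonal complement, hence is not dense, so $k\in\sigma(D_{\omega,\lambda}^{\min})$. Thus $\sigma_\pp(D_{\omega,\lambda}^{\max})\subset\sigma(D_{\omega,\lambda}^{\min})$, and again density of the left-hand side together with closedness of the spectrum yields $\sigma(D_{\omega,\lambda}^{\min})=\C$.

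There is essentially no obstacle here: this is a short consequence of results already in hand. The only point requiring a moment's care is that the orthogonal complements in the range--kernel relations are taken with respect to the bilinear pairing $\langle\cdot|\cdot\rangle$, so that the relevant duality is precisely $D_{\omega,\lambda}^{\min\T}=D_{\omega,\lambda}^{\max}$; after that everything is purely topological.
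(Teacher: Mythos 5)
Your argument is correct and is essentially the paper's own (the corollary is left without an explicit proof there, being an immediate consequence of Proposition \ref{eig_prop} and the displayed range--kernel relations): dense point spectrum of $D_{\omega,\lambda}^{\max}$ plus closedness of spectra handles the maximal operator, and the relation $\Ran(D_{\omega,\lambda}^{\min}-k)^{\tperp}=\Ker(D_{\omega,\lambda}^{\max}-k)$ (with $X^\perp=\overline{X^{\tperp}}$) shows the range of $D_{\omega,\lambda}^{\min}-k$ is not dense for those same $k$, giving the minimal case. Nothing further is needed.
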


\section{Homogeneous realizations and the resolvent} \label{sec:homog}

\subsection{Definition and basic properties}

We consider the following open subset of $\cM$:
\begin{equation}
\cM_{-\frac{1}{2}}:=\left\{ p \in \cM \ | \ \RE(\mu) > - \frac{1}{2} \right\}.
\end{equation}
As before, choose $\chi\in C_\mathrm{c}^\infty(\R_+)$ equal to $1$ near $0$.
If $p = (\omega, \lambda, \mu,[a:b]) \in \cM_{- \frac12}$, we define
$D_p$ to be the restriction of $D_{\omega,\lambda}^{\max}$ to
\begin{equation}
\Dom(D_p):=\Dom(D_{\omega,\lambda}^{\min})+\C x^\mu \begin{bmatrix}
    a \\ b \end{bmatrix} \chi.
    \label{eq:Dp_dom}
    \end{equation}
This definition is correct because $x^{\mu} \begin{bmatrix} a \\ b \end{bmatrix}\chi$ is an element of $\Dom(D_{\omega, \lambda}^{\max})$ for $\RE(\mu) > - \frac{1}{2}$. If~$\RE(\mu) \geq \frac12$, then it belongs to $\Dom(D_{\omega, \lambda}^{\min})$, so we have $D_p = D_{\omega, \lambda}^{\min}$.

  \begin{theorem}\label{thoe}
    Let $p\in\cM_{{}-\frac12}$.  Then the operator $D_p$ does not
    depend on the choice of $\chi$,
    is closed, self-transposed and
\begin{equation}
\sigma(D_p) = \begin{cases}
\R & \text{for } p \notin \cE^+ \cup \cE^-, \\
\C_\pm\cup\R & \text{for } p \in \cE^{\pm},
\end{cases} \qquad \qquad 
\sigma_\pp(D_p) = \begin{cases}
\emptyset & \text{for } p \notin \cE^+ \cup \cE^-, \\
\C_\pm & \text{for } p \in \cE^{\pm}.
\end{cases}
\end{equation}    
    If $\pm \IM(k) >0$ and $p\not\in\cE^{\pm}$,
    then the integral kernel $G_p^{\bowtie}(k;x,y)$ introduced in  \eqref{eq:Gdef} defines
    a~bounded operator $G_p^{\bowtie}(k)$
    and
\begin{equation}
G_p^{\bowtie}(k)=(D_p-k)^{-1}.
\end{equation}
For $k\in\C_\pm$, the map
 $\cM_{{}-\frac12} \backslash\cE^\pm\ni p\mapsto (D_p-k)^{-1}$ is a
 holomorphic family of bounded operators.

 Therefore,
  $\cM_{{}-\frac12} \ni p\mapsto D_p$ is a holomorphic family of closed operators.
\end{theorem}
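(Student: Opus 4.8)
The plan is to verify the asserted properties of $D_p$ one at a time, with the spectral statements relying crucially on the resolvent formula, and the holomorphy of the family being the real payoff at the end. First I would establish independence of $\chi$: if $\chi_1,\chi_2$ are two cutoffs equal to $1$ near $0$, then $(\chi_1-\chi_2)x^\mu\begin{bmatrix}a\\b\end{bmatrix}$ is supported away from $0$ and hence lies in $C_c^\infty(\R_+,\C^2)\subset\Dom(D_{\omega,\lambda}^{\min})$, so the two candidate domains coincide. For closedness, I would note that $\Dom(D_p)$ is a finite-dimensional (codimension-counting) extension of the closed operator $D_{\omega,\lambda}^{\min}$ by the single vector $x^\mu\begin{bmatrix}a\\b\end{bmatrix}\chi\in\Dom(D_{\omega,\lambda}^{\max})$; adding finitely many vectors from the domain of a closed extension ($D_{\omega,\lambda}^{\max}$) to a closed operator yields again a closed operator (the graph is a finite-dimensional extension of a closed subspace inside the closed graph of $D_{\omega,\lambda}^{\max}$). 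The self-transposed property follows from $D_{\omega,\lambda}^{\min\T}=D_{\omega,\lambda}^{\max}$ together with a Green's-formula / boundary-form computation: the transpose of $D_p$ is the restriction of $D_{\omega,\lambda}^{\max}$ to the annihilator (with respect to $\langle\cdot|\cdot\rangle$) of the boundary form evaluated on $\Dom(D_p)$, and one checks using the asymptotics \eqref{sol11} that the vector $x^\mu\begin{bmatrix}a\\b\end{bmatrix}\chi$ is isotropic for the boundary symplectic form, so $D_p^\T=D_p$.

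Next I would turn to the resolvent. For $\pm\IM(k)>0$ and $p\notin\cE^\pm$, Lemma \ref{det_lemma} gives $\det\begin{bmatrix}\xi_p^\pm&\zeta_p^\pm\end{bmatrix}=1$, so the standard Green's-function construction applies: $G_p^\bowtie(k)$ defined by \eqref{eq:Gdef} is, formally, a right inverse of $D_{\omega,\lambda}^{\max}-k$. Boundedness of the integral operator $G_p^\bowtie(k)$ on $L^2(\R_+,\C^2)$ follows from the asymptotics \eqref{eq:xi_asymptotic}, \eqref{eq:zeta_asymptotic}, \eqref{eq:zeta_small_x}: near $0$ the kernel behaves like $(x/y)^{\pm\mu}$ type terms with $\RE(\mu)>-\frac12$ controlled against the $\zeta$-side decay, and at infinity $\zeta_p^\pm$ decays exponentially while $\xi_p^\pm$ grows exponentially but only appears paired with the decaying factor — a Schur-test / Hilbert-Schmidt-on-dyadic-pieces estimate closes this. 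Then I must check that $\Ran(G_p^\bowtie(k))=\Dom(D_p)$ and that $(D_p-k)G_p^\bowtie(k)=\one$, $G_p^\bowtie(k)(D_p-k)=\one$: the first is where \eqref{sol11} enters, since for $g\in L^2$ the function $G_p^\bowtie(k)g$ has the prescribed boundary behavior near $0$ coming from the $\zeta$-term, which matches the distinguished behavior defining $\Dom(D_p)$; near infinity square-integrability forces the absence of the growing $\xi$-component. Since $G_p^\bowtie(k)$ is bounded and is a two-sided inverse of $D_p-k$, the point $k$ lies in the resolvent set, giving $\{\IM(k)\neq 0\}\cap\rho(D_p)\supseteq\{\pm\IM(k)>0\}$ when $p\notin\cE^\pm$; for $p\in\cE^\pm$ Proposition \ref{eig_prop} (or its analogue applied to $D_p$, noting $D_{\omega,\lambda}^{\min}\subset D_p\subset D_{\omega,\lambda}^{\max}$ pins the eigenfunctions) supplies $\C_\pm\subseteq\sigma_\pp(D_p)$ via $\zeta_p^\pm\in L^2$. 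Combined with $\sigma(D_p)\subseteq\overline{\Num(D_p)}$ and the essential spectrum being $\R$ (which I would cite from the resolvent analysis near the real axis, or defer to the later sections as the theorem allows), this yields the stated $\sigma(D_p)$ and $\sigma_\pp(D_p)$.

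Finally, for holomorphy: on $\cM_{-\frac12}\setminus\cE^\pm$ with $k$ fixed in $\C_\pm$, the kernel $G_p^\bowtie(k;x,y)$ is manifestly holomorphic in $p$ (each of $\xi_p^\pm,\zeta_p^\pm$ is holomorphic there by Lemma \ref{xi_zeta_props}, and the uniform bounds above are locally uniform in $p$), so $p\mapsto(D_p-k)^{-1}$ is a holomorphic family of bounded operators; by the resolvent criterion for holomorphic families of closed operators (the "practical" characterization recalled in the introduction, with $S_p=(D_p-k)^{-1}$ playing the role of the invertible bounded family into $\Dom(D_p)$ topologized by the graph norm), $p\mapsto D_p$ is holomorphic on $\cM_{-\frac12}\setminus(\cE^+\cup\cE^-)$. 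To extend across the exceptional sets $\cE^\pm$ I would use Lemma \ref{exceptional_sol}: although $\xi_p^\pm$ and $\zeta_p^\pm$ degenerate there, the resolvent at a point $k$ in the opposite half-plane ($\mp\IM(k)>0$) is still holomorphic across $\cE^\pm$, and $\cE^+\cap\cE^-=\emptyset$ on $\cM_{-\frac12}$ (indeed $\cE^+\cap\cE^-$ has $\mu\leq-\frac12$ by Lemma \ref{eplus_eminus_intersection}), so every point of $\cM_{-\frac12}$ has a neighborhood on which \emph{some} resolvent is a holomorphic bounded family; holomorphy of the operator family is local, so this suffices. The main obstacle I expect is the careful verification that $G_p^\bowtie(k)$ maps $L^2$ \emph{exactly onto} $\Dom(D_p)$ — i.e.\ that the boundary behavior \eqref{sol11} produced by the $\zeta$-term is precisely the one-dimensional extension prescribed in \eqref{eq:Dp_dom}, rather than something larger or smaller — since this is where the specific coefficients $\begin{bmatrix}a\\b\end{bmatrix}$ and the $\mu\leftrightarrow-\mu$ bookkeeping must line up, including the borderline regime $-\frac12<\RE(\mu)\leq\frac12$ where the asymptotic analysis is most delicate.
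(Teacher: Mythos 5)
Your plan follows the paper's architecture quite closely (kernel estimates for $G_p^{\bowtie}(k)$, identification with the resolvent, eigenvalues from $\zeta_p^\pm$ on $\cE^\pm$, holomorphy of $p\mapsto D_p$ via the resolvent and $\cE^+\cap\cE^-\cap\cM_{-\frac12}=\emptyset$), but there is one genuine gap: the lower bound $\R\subseteq\sigma(D_p)$ for $p\notin\cE^+\cup\cE^-$. Your resolvent construction only gives $\sigma(D_p)\subseteq\R$ there, and neither tool you invoke supplies the missing inclusion. The bound $\sigma(D_p)\subseteq\overline{\Num(D_p)}$ is not valid for unbounded operators without also knowing that each connected component of $\C\setminus\overline{\Num(D_p)}$ meets the resolvent set; it is in any case an upper bound, not the lower bound you need; and it is vacuous for many $p$, since $\Num(D_p)=\C$ in large parameter regions (Section \ref{Numerical range and dissipative properties}). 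Citing "the essential spectrum is $\R$" is circular at this point: Corollary \ref{ess_spec} comes after Theorem \ref{thoe} and its proof requires a point $p$ with $\sigma(D_p)=\R$, which is exactly what is at stake. The paper closes this by an explicit quasi-mode argument: for $k\in\R^\times$ the functions $f_\epsilon(x)=\e^{-\epsilon x}\xi_p(k,x)$ belong to $\Dom(D_p)$ and satisfy $\|(D_p-k)f_\epsilon\|/\|f_\epsilon\|=\epsilon$, so $k\in\sigma(D_p)$, and $\R\subseteq\sigma(D_p)$ by closedness of the spectrum. You need this (or an equivalent Weyl sequence); note that for $p\in\cE^\pm$ nothing extra is required, since $\sigma(D_p)\supseteq\C_\pm$ from the eigenvalues already forces $\sigma(D_p)\supseteq\C_\pm\cup\R$.

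Two smaller points. First, where you identify $\Ran(G_p^{\bowtie}(k))$ with $\Dom(D_p)$: the boundary behavior of $G_p^{\bowtie}(k)g$ at $0$ is governed by the factor $\xi_p^\pm(k,x)$ (the kernel for $x<y$ is $-\xi_p^\pm(k,x)\zeta_p^\pm(k,y)^\T$), while $\zeta_p^\pm$ produces the decay at infinity — you have the roles of $\xi$ and $\zeta$ swapped. The paper avoids characterizing the range altogether: it proves $(D_p-k)G_p^{\bowtie}(k)=1$ by approximating $f\in L^2$ with $C_\mathrm{c}^\infty$ functions and using closedness of $D_p$, and then obtains $G_p^{\bowtie}(k)(D_p-k)=1$ from self-transposedness of both $G_p^{\bowtie}(k)$ and $D_p$; this sidesteps the delicate borderline asymptotics you flag. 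If you insist on your route, the left-inverse step also needs care, because $\Ker(D_{\omega,\lambda}^{\max}-k)\neq\{0\}$ when $|\RE(\mu)|<\frac12$, so you must check that $\zeta_p^\pm(k,\cdot)\notin\Dom(D_p)$ precisely when $p\notin\cE^\pm$. Second, your closedness and self-transposedness arguments are fine, but for $D_p^\T\subseteq D_p$ (not just the symmetric inclusion) you should either carry out the codimension count in $\Dom(D_{\omega,\lambda}^{\max})/\Dom(D_{\omega,\lambda}^{\min})$ with a nondegenerate boundary form, or again use the self-transposed bounded resolvent.
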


\begin{proof}
It is sufficient to consider the case $\IM(k)<0$. Let $p \notin
\cE^-$. We prove the boundedness separately for the integral operators with kernels $G^{\bowtie}_{p}(k)$ restricted to four regions forming a~partition of $\R_+ \times \R_+$ (up to an inconsequential overlap on a set of measure zero). Throughout the proof we use notation $x_< = \min \{ x, y \}$, $x_> = \max \{ x, y \}$. Symbols $c_{p}, c_{p}'$ will be used for positive constants which are locally bounded functions of $p$.

First we consider the region $x,y \leq |k|^{-1}$. Inspecting the asymptotics of Whittaker functions for small argument we conclude that $|G^{\bowtie}_{p}(k;x,y)| \leq c_{p} (|k|x_<)^{\RE(\mu)} (|k| x_>)^{-|\RE(\mu)|}$. Using this inequality and elementary integrals we estimate
\begin{equation}
    \int_{\left[ 0, |k|^{-1} \right]^2} |G^{\bowtie}_{p}(k;x,y)|^2 \D x \D y  \leq \frac{c_{p}^2}{|k|^2} \frac{1}{1 + 2 \RE(\mu)} \frac{1}{1+ \RE(\mu) - |\RE(\mu)|}.
\end{equation}
Therefore the Hilbert-Schmidt norm of the corresponding operator is bounded by $\frac{c_p'}{|k|}$.

Next, in the region $y \leq |k|^{-1} \leq x$ we have $|G^{\bowtie}_{p}(k)| \leq c_{p} (|k| y)^{\RE(\mu)} (|k|x)^{\IM(\lambda)} \e^{-|\IM(k)| x} $. Thus
\begin{equation}
 \int \limits_{[|k|^{-1}, \infty[ \times [0,|k|^{-1}]} |G^{\bowtie}_{p}(k;x,y)|^2 \D x \D y  \leq \frac{c_{p}^2}{|k|^2} \int_{[1, \infty[ \times [0,1]} \e^{- 2 \frac{|\IM(k)|}{|k|} t} t^{2 \, \IM(\lambda)} t'^{2 \RE(\mu)}  \D t \D t',
\end{equation}
which is a convergent integral depending continuously on $\lambda, \mu$. Again, the corresponding operator is Hilbert-Schmidt with locally bounded norm. By the symmetry property \eqref{eq:GK_sym} the same is true for the region $x \leq |k|^{-1} \leq y$.

Finally for $x, y \geq |k|^{-1}$ we have $|G^{\bowtie}_{p}(k;x,y)| \leq c_{p} \e^{-|\IM(k)| (x_> - x_<)}  \frac{x_>^{\IM(\lambda)}}{x_<^{\IM(\lambda)}}$. Hence
\begin{equation}
    \int_{|k|^{-1}}^{\infty} |G^{\bowtie}_{p}  (k;x,y)| \D y   \leq c_{p} \left( \int_{|k|^{-1}}^x \e^{ |\IM(k)| (y-x)} \frac{y^{- \IM(\lambda)}}{x^{- \IM(\lambda)}} \D y  + \int_x^{\infty} \e^{-|\IM(k)| (y-x)} \frac{y^{\IM(\lambda)}}{x^{\IM(\lambda)}}  \D y  \right)   . 
   \label{eq:Schur_int}
\end{equation}
If $\IM(\lambda) \leq 0$, then $ \frac{y^{\mp \IM(\lambda)}}{x^{\mp \IM(\lambda)}}  \leq 1$ under these integrals, so elementary calculation gives
\begin{equation}
    \int_{|k|^{-1}}^{\infty} |G^{\bowtie}_{p}  (k;x,y)| \D y  \leq \frac{2 c_p}{|\IM(k)|}.
\end{equation}
Next we consider the case $\IM(\lambda) >0$. Integration by parts in the first term of \eqref{eq:Schur_int} gives
\begin{equation}
\int_{|k|^{-1}}^x \e^{ |\IM(k)| (y-x)}  \frac{y^{- \IM(\lambda)}}{x^{- \IM(\lambda)}}  \D y   \leq \frac{1}{|\IM(k)|} +  \frac{\IM(\lambda)}{|\IM(k)|} \int_{|k|^{-1}}^x \e^{|\IM(k)|(y-x)} \frac{x^{\IM(\lambda)}}{y^{\IM(\lambda)+1}} \D y  .
\end{equation}
The integrand of this integral is maximized at one of the two endpoints, so
\begin{subequations} \label{eq:random_label}
\begin{align}
     \int_{|k|^{-1}}^x \e^{|\IM(k)|(y-x)} \frac{x^{\IM(\lambda)}}{y^{\IM(\lambda)+1}} \D y  & \leq \max \left \{ \e^{\frac{|\IM(k)|}{|k|}(1-|k|x)   } \frac{(|k|x)^{\IM(\lambda)+1}}{x} , \frac{1}{x} \right \} \int_{|k|^{-1}}^x \D y  \tag{\ref{eq:random_label}} \\
    & \leq \max \left \{ \e^{1 - |\IM(k)|x   } (|k|x)^{\IM(\lambda)+1}  , 1 \right \}. \nonumber
\end{align}
\end{subequations}
Optimizing with respect to $x$ we conclude that
\begin{equation}
\int_{|k|^{-1}}^x \e^{|\IM(k)|(y-x)} \frac{x^{\IM(\lambda)}}{y^{\IM(\lambda)+1}} \D y     \leq \max \left \{ \e \left( \frac{\IM(\lambda)+1}{|\IM(k)|} \right)^{\IM(\lambda)+1}  , 1 \right \} .
\end{equation}
In the second integral in \eqref{eq:Schur_int}, we integrate by parts $n \geq \IM(\lambda)$ times:
\begin{equation}
\int_x^{\infty} \e^{-|\IM(k)| (y-x)} \frac{y^{\IM(\lambda)}}{x^{\IM(\lambda)}}  \D y  =\sum_{j=0}^{n-1} \frac{c_j x^{-j}}{| \IM(k)|^{j+1}} + \frac{c_n}{|\IM(k)|^{n}} \int_{y}^{\infty} \e^{-|\IM(k)| (y-x)} \frac{y^{\IM(\lambda)-n}}{x^{\IM(\lambda)}} \D y ,
\end{equation}
where $c_j:=\IM(\lambda)
(\IM(\lambda)-1)\cdots (\IM(\lambda)-j+1)$.
Next we estimate $y^{\IM(\lambda)-n} \leq x^{\IM(\lambda)-n}$ and $x^{-j} \leq |k|^j$ under the remaining integral. Then simple calculation gives
\begin{equation}
\int_x^{\infty} \e^{-|\IM(k)| (y-x)} \frac{y^{\IM(\lambda)}}{x^{\IM(\lambda)}}  \D y  \leq \sum_{j=0}^{n} \frac{\IM(\lambda)_j |k|^j}{| \IM(k)|^{j+1}}.
\end{equation}
The same estimates are true for $\int_{|k|^{-1}}^{\infty} |G^{\bowtie}_{p}(k;x,y)|\D x$. The claim follows by Schur's criterion. 
This proves the boundedness of $G^{\bowtie}_p(k)$.

Equation \eqref{eq:GK_sym} implies that (whenever $G^{\bowtie}_{p}(k)$ is
defined) we have $\langle f|G^{\bowtie}_{p}(k)g\rangle = \langle G^{\bowtie}_{p}(k) f|g\rangle$ for $f,g \in C_c^{\infty}(\R_+,\C^2)$. By continuity, the same is true for all $f,g \in L^2(\R_+,\C^2)$. Thus $G^{\bowtie}_{p}(k)$ is self-transposed.

Next we check that $\langle f|D_p g\rangle = \langle D_p f|g\rangle$ for $f,g \in \Dom(D_p)$. To this end, we evaluate
\begin{equation}
\langle f|D_p g\rangle - \langle D_p f |g\rangle = \i \int_0^{\infty} \frac{\D}{\D x} \left( f(x)^\T \sigma_2 g(x) \right) \D x.
\end{equation}
If either $f$ or $g$ is in $C_c^{\infty}(\R_+,\C^2)$, the right hand side is zero. By continuity with respect to the graph norm, the
same is true for all $f,g \in \Dom(D_{\omega, \lambda}^{\min})$. Since
$\sigma_2$ is a skew-symmetric matrix, the right hand side vanishes
also for $f,g$ proportional to $\chi x^{\mu} \begin{bmatrix} a \\
  b \end{bmatrix}$. Thus $D_p$ is self-transposed.

Let $f \in L^2(\R_+,\C^2)$. We pick a sequence $f_i \in C_c^{\infty}(\R_+,\C^2)$ such that $f_i \to f$. Then $G^{\bowtie}_{p}(k) f_i \to G^{\bowtie}_{p}(k) f$ and 
\begin{equation}
D_{p}G^{\bowtie}_{p}(k) f_i =f_i + k G^{\bowtie}_{p}(k) f_i \to f + k G^{\bowtie}_{p}(k) f.
\end{equation}
Since $D_{p}$ is closed, this implies that $f \in \Dom(D_{p})$ and $D_{p}G^{\bowtie}_{p}(k) f = f + k G^{\bowtie}_{p}(k) f$. Therefore we have $\Ran(G^{\bowtie}_{p}(k)) \subset \Dom(D_p)$ and $(D_{p}-k)G^{\bowtie}_{p}(k)=1$.

For any $f \in L^2(\R_+,\C^2)$ and $g \in \Dom(D_{p})$ we have
\begin{equation}
\langle f| G^{\bowtie}_{p}(k) (D_p -k)g\rangle = \langle
G^{\bowtie}_{p}(k)
f|(D_p-k)g\rangle=\langle(D_p-k)G^{\bowtie}_{p}(k)f|g\rangle=\langle f|g\rangle.
\end{equation}
Since $f$ was arbitrary, $G^{\bowtie}_{p}(k) (D_{p}-k) g = g$. Thus $k \notin \sigma(D_p)$ and $G^{\bowtie}_{p}(k)=(D_{p}-k)^{-1}$. 

To show that $(D_p-k)^{-1}$ is unbounded for $k \in \mathbb R^\times $, we fix $\epsilon>0$ and consider the function \[ f_{\epsilon}(x) = \e^{- \epsilon x} \xi_{p}( k , x). \] Then $f_{\epsilon} \in \Dom(D_{p})$ and $|(D_{p}-k) f_{\epsilon} (x)|= \epsilon |f_{\epsilon} (x)|$, so $\frac{\| (D_{p}-k) f_{\epsilon} \|}{\| f_{\epsilon} \|} = \epsilon$. Hence $k \in \sigma(D_p)$. Since $\sigma(D_p)$ is closed, $\R \subset \sigma(D_p)$.

Finally, let $p \in \cE^\pm$, $k \in \C_\pm$. Then $\zeta_p^\pm(k, \cdot )$ belongs to $\Ker(D_p - k)$.
\end{proof}

\begin{corollary}
We have $D_p^* = D_{\overline p}$. In particular $D_p$ is self-adjoint if $p = \overline p$.
\end{corollary}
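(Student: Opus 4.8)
The plan is to deduce $D_p^*=D_{\overline p}$ from the fact, already contained in Theorem \ref{thoe}, that $D_p$ is self-transposed, together with the elementary relation between the transpose, the adjoint and complex conjugation. First I would record this relation. Writing $C$ for the antiunitary complex conjugation on $L^2(\R_+,\C^2)$ and $\overline T:=CTC$ for the conjugate of an operator $T$, the identity $(f|g)=\langle\overline f|g\rangle$ gives, directly from the defining relations of $*$ and $\T$, that $\Dom(T^*)=C(\Dom(T^{\T}))$ and $T^*=\overline{T^{\T}}$ for every densely defined $T$. Applying this to $T=D_p$ and using $D_p^{\T}=D_p$ reduces the claim to the identification $\overline{D_p}=D_{\overline p}$.

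To establish this, I would unwind the definition \eqref{eq:Dp_dom}. Since $\partial_x$ is real while the potential entries are conjugated, $C$ intertwines the formal operators $D_{\omega,\lambda}$ and $D_{\overline\omega,\overline\lambda}$, hence $\overline{D_{\omega,\lambda}^{\min}}=D_{\overline\omega,\overline\lambda}^{\min}$ and $\overline{D_{\omega,\lambda}^{\max}}=D_{\overline\omega,\overline\lambda}^{\max}$, as already remarked in Section \ref{sec:minmax}. For the one–dimensional piece of the domain, $\chi$ is real and $\overline{x^\mu}=x^{\overline\mu}$ for $x>0$, so $C$ carries $\C x^\mu\begin{bmatrix}a\\b\end{bmatrix}\chi$ onto $\C x^{\overline\mu}\begin{bmatrix}\overline a\\\overline b\end{bmatrix}\chi$. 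Therefore $\overline{D_p}$ is the restriction of $D_{\overline\omega,\overline\lambda}^{\max}$ to $\Dom(D_{\overline\omega,\overline\lambda}^{\min})+\C x^{\overline\mu}\begin{bmatrix}\overline a\\\overline b\end{bmatrix}\chi$.

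It remains to note that this is precisely $\Dom(D_{\overline p})$, which follows once we check that $\overline p:=(\overline\omega,\overline\lambda,\overline\mu,[\overline a:\overline b])$ again lies in $\cM_{-\frac12}$: the quadric relation $\overline\omega^2=\overline\lambda^2+\overline\mu^2$ and the blow-up relation $\begin{bmatrix}\overline\omega+\overline\lambda&\overline\mu\\\overline\mu&\overline\omega-\overline\lambda\end{bmatrix}\begin{bmatrix}\overline a\\\overline b\end{bmatrix}=0$ are the conjugates of the corresponding relations for $p$, and $\RE(\overline\mu)=\RE(\mu)>-\frac12$. Hence $\overline{D_p}=D_{\overline p}$ and so $D_p^*=D_{\overline p}$; taking $p=\overline p$ gives self-adjointness. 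I do not expect any real difficulty here: the only thing needing care is the bookkeeping that conjugation respects both summands of $\Dom(D_p)$ and that the conjugated parameters stay inside the manifold $\cM_{-\frac12}$.
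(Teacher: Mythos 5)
Your proposal is correct and follows exactly the route the paper intends: the corollary is an immediate consequence of the self-transposedness of $D_p$ established in Theorem \ref{thoe}, combined with the identity $T^*=\overline{T^{\T}}$ and the conjugation symmetry $\overline{D_p}=D_{\overline p}$ (which rests on $\overline{D_{\omega,\lambda}^{\min}}=D_{\overline{\vphantom{\lambda}\omega},\overline\lambda}^{\min}$ noted in Section \ref{sec:minmax} and the conjugation of the boundary term in \eqref{eq:Dp_dom}). Your extra check that $\overline p\in\cM_{-\frac12}$ is the right bookkeeping and nothing is missing.
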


We are now ready to prove Theorem \ref{domain_comparison}.

\begin{proof}[Proof of Theorem \ref{domain_comparison}]
We choose some $k$ in the resolvent set of $D_p$.

If $|\RE(\mu)| \geq \frac{1}{2}$, then $D_{\omega, \lambda}^{\min}
=D_p$, so it suffices to show that $D_p = D_{\omega,
  \lambda}^{\max}$. Indeed, $D_p - k$ is surjective, so the ranges of $D_p-k$ and $D_{\omega,\lambda}^{\max}-k$ coincide. By Proposition \ref{eig_prop} also kernels are equal.

Next we consider the case  $|\RE(\mu)| < \frac{1}{2}$.

We easily check that $\chi x^{\mu} \begin{bmatrix} a \\
  b \end{bmatrix} \notin H_0^1(\R_+,\C^2)$, which by Proposition \ref{dom_prop} for  $|\RE(\mu)| < \frac{1}{2}$
coincides with
  $\Dom(D_{\omega,
    \lambda}^{\min})$. 
Hence,  $\Dom(D_{\omega, 
  \lambda}^{\min})$
  is a codimension one subspace of $\Dom(D_p)$.

Next, $D_{p}-k$ and $D_{\omega, \lambda}^{\max}-k$ have the same
 range--the whole Hilbert space. Besides, $\dim \Ker(D_{\omega, \lambda}^{\max}-k)=1$ by
 Proposition \ref{eig_prop}. Hence
 $\Dom(D_p)$ is a codimension one subspace of $\Dom(D_{\omega,
   \lambda}^{\max})$. 
\end{proof}

\begin{proposition}
Family $D_p$ has the following symmetries
\begin{subequations}
\begin{align}
\sigma_1 D_{\omega, \lambda, \mu, [a:b]} \sigma_1 &= - D_{\omega, - \lambda, \mu ,[b:a]}, \\
\sigma_2 D_{\omega, \lambda, \mu, [a:b]} \sigma_2 & = D_{- \omega, \lambda, \mu, [-b:a]}, \\
\sigma_3 D_{\omega, \lambda, \mu , [a:b]} \sigma_3 &= -D_{-\omega, - \lambda, [-a:b]},
\end{align}
\label{eq:D_sym}
\end{subequations}
where $\sigma_j$ are the Pauli matrices.
\end{proposition}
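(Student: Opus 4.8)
The plan is to reduce the statement to a $2\times2$ matrix computation combined with the observation that each Pauli matrix $\sigma_j$ acts on $L^2(\R_+,\C^2)$ as a bounded, boundedly invertible multiplication operator (it is unitary and $\sigma_j^2=\one$), which in particular preserves $C_c^\infty(\R_+,\C^2)$ and $L^2(\R_+,\C^2)$.

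First I would check the three identities at the level of formal differential operators on distributions on $\R_+$. Writing $D_{\omega,\lambda}=\frac1x\begin{bmatrix}-(\lambda+\omega)&0\\0&-(\lambda-\omega)\end{bmatrix}-\i\sigma_2\,\partial_x$ and using $\sigma_1\sigma_2\sigma_1=\sigma_3\sigma_2\sigma_3=-\sigma_2$, $\sigma_2^3=\sigma_2$, together with the evident action of $M\mapsto\sigma_jM\sigma_j$ on diagonal matrices, one reads off
\begin{equation*}
\sigma_1 D_{\omega,\lambda}\sigma_1=-D_{\omega,-\lambda},\qquad
\sigma_2 D_{\omega,\lambda}\sigma_2=D_{-\omega,\lambda},\qquad
\sigma_3 D_{\omega,\lambda}\sigma_3=-D_{-\omega,-\lambda}
\end{equation*}
as identities of formal operators; denote the right-hand parameters by $(\omega_j',\lambda_j')$ and the signs by $\varepsilon_j\in\{\pm1\}$. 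Note that $\omega^2-\lambda^2=(\omega_j')^2-(\lambda_j')^2$ in every case, so the admissible value of $\mu$ is unchanged, and each transformation $(\omega,\lambda)\mapsto(\omega_j',\lambda_j')$ is an involution.

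Next I would promote these to the minimal and maximal realizations. Since $\sigma_jC_c^\infty(\R_+,\C^2)=C_c^\infty(\R_+,\C^2)$, the formal identity gives $\sigma_jD_{\omega,\lambda}^{\pre}\sigma_j=\varepsilon_jD^{\pre}_{\omega_j',\lambda_j'}$. Conjugation by a bounded invertible operator commutes with operator closure (it is a homeomorphism of $L^2\oplus L^2$ carrying the graph of one operator to the graph of the other), hence $\sigma_jD_{\omega,\lambda}^{\min}\sigma_j=\varepsilon_jD^{\min}_{\omega_j',\lambda_j'}$. Likewise, since $\sigma_j$ preserves $L^2(\R_+,\C^2)$ and intertwines the two formal operators in the distributional sense, $f\mapsto\sigma_jf$ maps $\Dom(D_{\omega,\lambda}^{\max})$ onto $\Dom(D^{\max}_{\omega_j',\lambda_j'})$ and $\sigma_jD_{\omega,\lambda}^{\max}\sigma_j=\varepsilon_jD^{\max}_{\omega_j',\lambda_j'}$.

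Finally I would pass to the homogeneous realizations via the domain formula \eqref{eq:Dp_dom}. For $p=(\omega,\lambda,\mu,[a:b])\in\cM_{-\frac12}$,
\begin{equation*}
\sigma_j\Dom(D_p)=\Dom\bigl(D^{\min}_{\omega_j',\lambda_j'}\bigr)+\C\,x^{\mu}\Bigl(\sigma_j\begin{bmatrix}a\\b\end{bmatrix}\Bigr)\chi .
\end{equation*}
Now $\sigma_1\begin{bmatrix}a\\b\end{bmatrix}$, $\sigma_2\begin{bmatrix}a\\b\end{bmatrix}$, $\sigma_3\begin{bmatrix}a\\b\end{bmatrix}$ span the complex lines $[b:a]$, $[-b:a]$ and $[a:-b]=[-a:b]$, respectively. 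Together with the invariance of $\mu$, a one-line rewriting of the two linear equations defining $\cM$ shows that the triple $(\omega_j',\lambda_j',\mu)$ with the corresponding fibre coordinate is again a point $p_j'\in\cM_{-\frac12}$ --- precisely the one named in the statement --- so the displayed identity becomes $\sigma_j\Dom(D_p)=\Dom(D_{p_j'})$. Hence $\sigma_jD_p\sigma_j$ and $\varepsilon_jD_{p_j'}$ are both the restriction of $\varepsilon_jD^{\max}_{\omega_j',\lambda_j'}$ to this common domain, so they coincide. The same argument covers the zero fibre $\cZ$ (where $\mu=0$, $x^{\mu}\equiv1$) verbatim, and it is uniform in the two regimes $\RE(\mu)\ge\tfrac12$ and $\RE(\mu)<\tfrac12$ since only \eqref{eq:Dp_dom} is used. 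There is no substantial obstacle here; the only care required is the sign bookkeeping in the formal identities and the projective identification $[a:-b]=[-a:b]$ in the $\sigma_3$ case.
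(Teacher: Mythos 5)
Your proposal is correct and follows essentially the same route as the paper: the formal conjugation identities $\sigma_1 D_{\omega,\lambda}\sigma_1=-D_{\omega,-\lambda}$, $\sigma_2 D_{\omega,\lambda}\sigma_2=D_{-\omega,\lambda}$, $\sigma_3 D_{\omega,\lambda}\sigma_3=-D_{-\omega,-\lambda}$ obtained by matrix multiplication, followed by matching the domains via \eqref{eq:Dp_dom}. Your added bookkeeping (closure commuting with conjugation by the unitary involution $\sigma_j$, preservation of the maximal domain, and the verification that $(\omega_j',\lambda_j',\mu,\sigma_j[a{:}b])$ again satisfies the defining equations of $\cM$ with the same $\mu$) just spells out the steps the paper leaves implicit.
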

\begin{proof}
Matrix multiplication gives
\begin{equation}
\sigma_1 D_{\omega, \lambda} \sigma_1 = - D_{\omega, - \lambda}, \qquad \sigma_2 D_{\omega, \lambda} \sigma_2 = D_{- \omega, \lambda}, \qquad \sigma_3 D_{\omega, \lambda} \sigma_3 =-D_{- \omega, - \lambda}. 
\end{equation}
Using \eqref{eq:Dp_dom} one checks that the domains of operators on
the left and right hand side of \eqref{eq:D_sym} agree.
\end{proof}

\subsection{Essential spectrum}

\begin{proposition} \label{resolvents_compact_difference}
Let $p,p' \in \cM_{{}-\frac12}$ and $k\not\in\sigma(D_p)\cup\sigma(D_{p'})$. Then $G^{\bowtie}_{p}(k)-G^{\bowtie}_{p'}(k)$ is a Hilbert-Schmidt operator.
\end{proposition}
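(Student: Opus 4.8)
The plan is to prove the statement by working directly with the explicit integral kernel \eqref{eq:Gdef}: I will show that $\Delta(x,y):=G^{\bowtie}_p(k;x,y)-G^{\bowtie}_{p'}(k;x,y)$ belongs to $L^2(\R_+\times\R_+;\C^{2\times2})$, which is exactly the Hilbert--Schmidt condition. Without loss of generality I take $k\in\C_-$ (the case $k\in\C_+$ follows by the conjugation relation $\overline{G^{\bowtie}_q(k)}=G^{\bowtie}_{\overline q}(\overline k)$ in \eqref{eq:GK_sym}). Since $\sigma(D_q)\supseteq\C_-$ whenever $q\in\cE^-$ by Theorem \ref{thoe}, the hypothesis $k\notin\sigma(D_p)\cup\sigma(D_{p'})$ forces $p,p'\notin\cE^-$, so both kernels are defined and coincide with the resolvents of $D_p$, $D_{p'}$. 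Throughout I use the superscript $-$ and write $\lambda_q$ for the $\lambda$-coordinate of $q$, abbreviating $\lambda:=\lambda_p$, $\lambda':=\lambda_{p'}$.

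First I would partition $\R_+\times\R_+$ into the four regions $R_1=[0,|k|^{-1}]^2$, $R_2=[|k|^{-1},\infty[\times[0,|k|^{-1}]$, $R_3=[0,|k|^{-1}]\times[|k|^{-1},\infty[$ and $R_4=[|k|^{-1},\infty[^2$, as in the proof of Theorem \ref{thoe}. On $R_1$, $R_2$, $R_3$ that proof already shows that each of $G^{\bowtie}_p(k)$ and $G^{\bowtie}_{p'}(k)$, restricted to the region, is Hilbert--Schmidt; hence so is the restriction of $\Delta$. All the work is in $R_4$, where the individual kernels are not square-integrable --- consistently with $\sigma_{\ess}(D_q)=\R$ --- so a genuine cancellation between the two resolvents has to be produced.

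On $R_4$ I would insert the large-argument asymptotics \eqref{eq:xi_large_x}, \eqref{eq:zeta_asymptotic} together with the quantitative Whittaker estimates from Appendix \ref{whittaker}: for $x\geq|k|^{-1}$ one has $\xi^-_q(k,x)=\tfrac12\e^{\i kx}(2\i kx)^{\i\lambda_q}(e_\xi+\rho_q(x))$ and $\zeta^-_q(k,x)=\e^{-\i kx}(2\i kx)^{-\i\lambda_q}(e_\zeta+\sigma_q(x))$, with $e_\xi,e_\zeta\in\C^2$ \emph{independent of} $q$ and $|\rho_q(x)|,|\sigma_q(x)|\leq c_q/x$. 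Because $G^{\bowtie}_q(k;x,y)=-\xi^-_q(k,x)\zeta^-_q(k,y)^{\T}$ for $x\leq y$, this yields on $R_4\cap\{x\leq y\}$
\begin{equation*}
G^{\bowtie}_q(k;x,y)=-\tfrac12\e^{-\i k(y-x)}\bigl(\tfrac{x}{y}\bigr)^{\i\lambda_q}\bigl(e_\xi e_\zeta^{\T}+F_q(x,y)\bigr),\qquad|F_q(x,y)|\leq c'_q\bigl(\tfrac1x+\tfrac1y\bigr),
\end{equation*}
so the whole $q$-dependence of the leading term sits in the scalar $(x/y)^{\i\lambda_q}$. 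Writing $\Delta=\Delta_1+\Delta_2$ accordingly, I would estimate $\Delta_2$ directly: each of its two summands is separately square-integrable on $R_4$, since $\bigl|(x/y)^{\i\lambda}F_q\bigr|\leq\e^{-|\IM k|(y-x)}(y/x)^{\IM\lambda}\cdot O(1/x)$ and the substitution $y=x(1+t)$ reduces the double integral to a convergent one. For $\Delta_1$ the key identity is
\begin{equation*}
\bigl(\tfrac{x}{y}\bigr)^{\i\lambda}-\bigl(\tfrac{x}{y}\bigr)^{\i\lambda'}=\i(\lambda-\lambda')\int_0^1\bigl(\tfrac{x}{y}\bigr)^{\i((1-s)\lambda'+s\lambda)}\ln\tfrac{x}{y}\,\D s,
\end{equation*}
which, using $|(x/y)^{\i\sigma}|=(y/x)^{\IM\sigma}\leq(y/x)^{M}$ with $M:=\max\{\IM\lambda,\IM\lambda',0\}$ (valid because $y/x\geq1$), gives $|\Delta_1(x,y)|\leq C\e^{-|\IM k|(y-x)}(y/x)^{M}\ln(y/x)$; the substitution $y=x(1+t)$ followed by the $x$-integration turns $\iint_{R_4\cap\{x\leq y\}}|\Delta_1|^2$ into a constant times $\int_0^\infty(\ln(1+t))^2(1+t)^{2M}t^{-2}(1+\kappa t)\e^{-\kappa t}\,\D t$ with $\kappa=2|\IM k|/|k|>0$, whose integrand has a finite limit at $t=0$ and decays exponentially, hence converges. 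The region $R_4\cap\{y\leq x\}$ is then covered by the transposition symmetry $G^{\bowtie}_q(k;x,y)=G^{\bowtie}_q(k;y,x)^{\T}$ of \eqref{eq:GK_sym}. Summing the four regions proves $\Delta\in L^2(\R_+^2;\C^{2\times2})$.

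I expect the only real obstacle to be region $R_4$: because $\sigma_{\ess}(D_q)=\R$, the individual resolvent kernels are nowhere near square-integrable there, so one cannot estimate them separately but must extract the near-diagonal cancellation; moreover the long-range (Coulomb-logarithmic) phase makes the kernels carry a factor $(y/x)^{\IM\lambda}$ that grows off the diagonal, and one has to verify that the gain $|\lambda-\lambda'|\ln(y/x)$ together with the exponential $\e^{-|\IM k|(y-x)}$ --- effective precisely because $x,y\geq|k|^{-1}$ keeps us away from the boundary of $R_4$ --- beats that growth. Everything else is bookkeeping with elementary integrals.
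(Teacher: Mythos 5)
Your proof is correct and takes essentially the same route as the paper's: the same four-region split reusing the Hilbert--Schmidt estimates from the proof of Theorem \ref{thoe}, the same large-argument asymptotics \eqref{eq:xi_large_x}, \eqref{eq:zeta_asymptotic} isolating the scalar phase $(x/y)^{\i\lambda}$, and the same ratio-variable substitution showing that the difference of phases cancels the near-diagonal singularity while the exponential factor handles infinity. Your explicit treatment of the $O(1/x)$ remainder terms (your $\Delta_2$) is a welcome sharpening of a step the paper compresses into its displayed pointwise bound, but it does not constitute a different approach.
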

\begin{proof}
The proof of Theorem \ref{thoe} shows that it suffices to show that the integral operator with kernel $G^{\bowtie}_{p}(k) - G^{\bowtie}_{p'}(k)$ restricted to the region $x,y \geq |k|^{-1}$ is Hilbert-Schmidt. Furthermore, we~may assume that $\IM(k)<0$. Using formulas \eqref{eq:xi_large_x} and \eqref{eq:zeta_asymptotic} we obtain the following asymptotic expansion for $x, y  \to \infty$:
\begin{equation}
- \xi_{p}^-(k,x) \otimes \zeta_{p}^-(k,y) \sim \frac{1}{2 \i} \begin{bmatrix} 1 \\ \i  \end{bmatrix} \otimes \begin{bmatrix} 1 \\ - \i \end{bmatrix} \cdot \left( \frac{x}{y } \right)^{ \i \lambda} \e^{\i k(x-y )}. 
\end{equation}
It follows that we have
\begin{equation}
|G^{\bowtie}_{p}(k;x,y ) - G^{\bowtie}_{p'}(k;x,y )| \leq c \, \e^{- |\IM(k)| (x_> - x_<)} \left| \left( \frac{x_<}{x_>} \right)^{ \i \lambda} - \left( \frac{x_<}{x_>} \right)^{ \i \lambda'} \right|,
\end{equation}
with some constant $c$ independent of $x,y $. Therefore
\begin{subequations} \label{eq:I_def}
\begin{align}
I & := \int_{[|k|^{-1}, \infty]^2} |G^{\bowtie}_{p}(k;x,y ) - G^{\bowtie}_{p'}(k;x,y )|^2 \D x \D y  \tag{\ref{eq:I_def}} \\
& \leq 2c^2 \int_{|k|^{-1}}^{\infty}  \int_{y }^{\infty}  \e^{- 2 \, \IM(z) (x-y )} \left| \left( \frac{x}{y } \right)^{ -\i \lambda} - \left( \frac{x}{y } \right)^{- \i \lambda'} \right|^2 \D x \D y . \nonumber
\end{align}
\end{subequations}
Next we change variables to $y , t$ with $x = t y $. This gives
\begin{subequations} \label{eq:I_estimate}
\begin{align}
I & \leq  \int_1^{\infty} \int_{|k|^{-1}}^{\infty} y  \e^{- 2 | \IM(k)| (t-1) y } |t^{-\i \lambda} - t^{-\i \lambda'}|^2 \D y  \D t \tag{\ref{eq:I_estimate}} \\
& = \int_1^{\infty} \frac{|k|+2 |\IM(k)|(t-1)}{4 |\IM(k)|^2 |k| (t-1)^2} \e^{- 2 \frac{|\IM(k)|}{|k|}(t-1)} |t^{-\i \lambda} - t^{-\i \lambda'}|^2 \D t, \nonumber
\end{align}
\end{subequations}
where we have computed an elementary integral over $y $. The remaining integrand is bounded for $t \to 1$ and decays exponentially for $t \to \infty$. Therefore the integral converges.
\end{proof}

Resolvents of operators $D_p$ for distinct
  $p\in\cM_{-\frac12}$ are close to each other in the sense specified
  by Proposition \ref{resolvents_compact_difference}. Therefore, it is
  useful to know that for some $p$
  their integral kernels are particularly simple. These are provided in the Appendix \ref{app:elementary}.

By the essential spectrum (resp. essential spectrum of index zero) of a closed operator $R$ we mean the set $\sigma_{\ess}(R)$ (resp. $\sigma_{\ess,0}(R)$) of all $k \in \mathbb C$ such that $R-k$ is not a Fredholm operator (resp.~Fredholm operator of index zero). Clearly $\sigma_{\ess}(R) \subset \sigma_{\ess,0}(R)$.

\begin{lemma} \label{ess_spec_lemma}
Let $R,S$ be closed operators such that there exists $k_0$ in the intersection of resolvent sets of $R$ and $S$ such that $(R-k_0)^{-1}-(S-k_0)^{-1}$ is a compact operator. Then $\sigma_{\ess}(R) = \sigma_{\ess}(S)$ and $\sigma_{\ess,0}(R) = \sigma_{\ess,0}(S)$.
\end{lemma}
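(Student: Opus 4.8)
The plan is to reduce the statement to the well-known stability of the essential spectrum of \emph{bounded} operators under compact perturbations, using the resolvent at the common regular point $k_0$ as a bridge.

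First I would fix $k_0$ in the intersection of the resolvent sets and observe that, for every $k \neq k_0$, one has the identity of bounded operators on $\cH$
\[
(R-k)(R-k_0)^{-1} = 1 - (k-k_0)(R-k_0)^{-1} = -(k-k_0)\bigl[(R-k_0)^{-1} - (k-k_0)^{-1}\bigr].
\]
Writing $T_R(k) := (R-k_0)^{-1} - (k-k_0)^{-1}$, this says $R-k = -(k-k_0)\,T_R(k)\,(R-k_0)$ as a map $\Dom(R)\to\cH$, where $R-k_0\colon\Dom(R)\to\cH$ is a topological isomorphism for the graph norm (closed graph theorem, since $R$ is closed and $k_0\in\rho(R)$) with bounded inverse. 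Composition on either side with this isomorphism changes neither the dimension of the kernel, nor the closedness of the range, nor the dimension of the cokernel, nor the index; hence $R-k$ is Fredholm (resp.\ Fredholm of index $0$) if and only if the bounded operator $T_R(k)$ is, so that $k\in\sigma_{\ess}(R)$ iff $(k-k_0)^{-1}\in\sigma_{\ess}\bigl((R-k_0)^{-1}\bigr)$, and the same with $\sigma_{\ess,0}$. The identical statements hold for $S$.

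Next, since $(R-k_0)^{-1}-(S-k_0)^{-1}$ is compact, for every $z\in\C$ the bounded operators $(R-k_0)^{-1}-z$ and $(S-k_0)^{-1}-z$ differ by a compact operator; by Atkinson's theorem (Fredholmness $=$ invertibility in the Calkin algebra) they are simultaneously Fredholm, and when Fredholm they share the same index (the index of a bounded operator is unaffected by compact perturbations). Therefore $\sigma_{\ess}\bigl((R-k_0)^{-1}\bigr)=\sigma_{\ess}\bigl((S-k_0)^{-1}\bigr)$ and $\sigma_{\ess,0}\bigl((R-k_0)^{-1}\bigr)=\sigma_{\ess,0}\bigl((S-k_0)^{-1}\bigr)$. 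Finally I would transport this back through the M\"obius map $z\mapsto k_0+z^{-1}$, a homeomorphism of $\C\setminus\{0\}$ onto $\C\setminus\{k_0\}$: combined with the two equivalences of the previous step it gives
\[
\sigma_{\ess}(R) = \bigl\{\,k_0 + z^{-1}\ :\ z\in\sigma_{\ess}\bigl((R-k_0)^{-1}\bigr)\setminus\{0\}\,\bigr\}
\]
and the analogous formula for $S$, and likewise for $\sigma_{\ess,0}$; the point $k_0$ itself lies in neither essential spectrum (as $k_0\in\rho(R)\cap\rho(S)$), and the value $z=0$, if present in the essential spectrum of the resolvents, corresponds to no finite $k$ and is present or absent for $R$ and $S$ simultaneously, so it drops out. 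Comparing the two formulas yields $\sigma_{\ess}(R)=\sigma_{\ess}(S)$ and $\sigma_{\ess,0}(R)=\sigma_{\ess,0}(S)$.

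The only genuinely delicate point is the bookkeeping in the first step: one must be careful that $R-k_0$ is an isomorphism $\Dom(R)\to\cH$ in the graph norm and that pre-/post-composing a bounded Fredholm operator with it changes neither Fredholmness nor index, together with the small but easily overlooked need to handle $k=k_0$ and the value $z=0$ separately. The rest is routine and uses only standard Fredholm theory.
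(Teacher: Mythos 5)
Your proof is correct, and its overall structure coincides with the paper's: pass to the resolvents at $k_0$, use that a compact perturbation of a bounded operator preserves Fredholmness and the index (so the two resolvents have the same $\sigma_{\ess}$ and $\sigma_{\ess,0}$), and then transport the information back through the map $k\mapsto (k-k_0)^{-1}$. The one place where you genuinely diverge is the spectral mapping step: the paper simply invokes the essential-spectrum mapping theorem of Buoni, in the form $\sigma_{\ess}(S)=\{k\in\C \,|\, \exists q\in\sigma_{\ess}((S-k_0)^{-1}),\ (k-k_0)q=1\}$, whereas you prove exactly this statement from scratch via the factorization $R-k=-(k-k_0)\bigl[(R-k_0)^{-1}-(k-k_0)^{-1}\bigr](R-k_0)$ together with the observation that $R-k_0\colon\Dom(R)\to\cH$ is an isomorphism for the graph norm, so that composing with it changes neither kernel dimension, nor closedness of the range, nor cokernel dimension, nor index. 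Your version buys self-containedness and makes the $\sigma_{\ess,0}$ statement transparent, since the index is tracked explicitly through every step (with the cited theorem one still has to check it applies to the index-zero essential spectrum), at the cost of the bookkeeping you flag yourself: the graph-norm identification of the unbounded Fredholm problem with a bounded one, and the separate treatment of $k=k_0$ and of the value $z=0$ in the essential spectrum of the resolvents. The paper's proof is shorter but rests on the external reference; both arguments are sound.
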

\begin{proof}
By assumption, $(S-k_0)^{-1}$ and $(R-k_0)^{-1}$ have the same essential spectra. The spectral mapping theorem proven in \cite{Buoni} gives
\begin{subequations} \label{eq:spectra_equality}
\begin{align}
\sigma_{\mathrm{ess}}(S) & = \{ k \in \mathbb C \, | \, \exists q \in \sigma_{\mathrm{ess}}((S-k_0)^{-1}) \  (k-k_0)q =1  \} \tag{\ref{eq:spectra_equality}} \\
& = \{ k \in \mathbb C \, | \, \exists q \in \sigma_{\mathrm{ess}}((R-k_0)^{-1}) \  (k-k_0)q =1  \} = \sigma_{\mathrm{ess}}(R). \nonumber
\end{align}
\end{subequations}
The same argument works also for $\sigma_{\ess,0}$.
\end{proof}

\begin{corollary} \label{ess_spec}
For any $p\in  \mathcal
  M_{{}-\frac12}$ we have
$\sigma_{\ess}(D_p)=\sigma_{\ess,0}(D_p) = \mathbb R$.
\end{corollary}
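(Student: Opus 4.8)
The plan is to deduce Corollary \ref{ess_spec} by combining three ingredients already established in the excerpt: (i) for a fixed reference point, say $p_0\in\cM_{-\frac12}$ chosen so that $D_{p_0}$ is self-adjoint (e.g.\ take $p_0=\overline{p_0}$, which is possible since $\cM_{-\frac12}$ contains real points), the operator $D_{p_0}$ has spectrum $\R$; (ii) Proposition \ref{resolvents_compact_difference}, which tells us that $G^{\bowtie}_p(k)-G^{\bowtie}_{p_0}(k)$ is Hilbert-Schmidt, hence compact, for any common resolvent value $k$; and (iii) Lemma \ref{ess_spec_lemma}, which transfers equality of both $\sigma_{\ess}$ and $\sigma_{\ess,0}$ along a compact resolvent difference.

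Concretely, first I would fix $p\in\cM_{-\frac12}$ and a reference point $p_0\in\cM_{-\frac12}$ with $p_0=\overline{p_0}$, so that by the corollary after Theorem \ref{thoe}, $D_{p_0}$ is self-adjoint; moreover $p_0\notin\cE^+\cup\cE^-$ since $\cE^+\cap\cE^-$ lies in $\{\mu\le-\tfrac12\}$ by Lemma \ref{eplus_eminus_intersection} while real points with $\mu$ real and $>-\tfrac12$ typically avoid $\cE^\pm$ (and in any case one can choose such a $p_0$ explicitly, e.g.\ $\omega=\lambda=\mu=0$ excluded, but $\omega=1,\lambda=0,\mu=1$ works). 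By Theorem \ref{thoe}, $\sigma(D_{p_0})=\R$, so every $k$ with $\IM(k)\neq 0$ lies in the resolvent set of $D_{p_0}$. If $p\notin\cE^+\cup\cE^-$ as well, then $\sigma(D_p)=\R$ too and any such $k$ is a common resolvent value; if $p\in\cE^\pm$ then $\sigma(D_p)=\C_\pm\cup\R$, so one should pick $k$ in the open half-plane $\C_\mp$, which still lies in both resolvent sets. In either case there exists $k_0$ in the intersection of the resolvent sets of $D_p$ and $D_{p_0}$.

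Then Proposition \ref{resolvents_compact_difference} applies with this $k_0$: $(D_p-k_0)^{-1}-(D_{p_0}-k_0)^{-1}=G^{\bowtie}_p(k_0)-G^{\bowtie}_{p_0}(k_0)$ is Hilbert-Schmidt, in particular compact. Lemma \ref{ess_spec_lemma} with $R=D_p$, $S=D_{p_0}$ then yields $\sigma_{\ess}(D_p)=\sigma_{\ess}(D_{p_0})$ and $\sigma_{\ess,0}(D_p)=\sigma_{\ess,0}(D_{p_0})$. Finally, since $D_{p_0}$ is self-adjoint with $\sigma(D_{p_0})=\R$, it has no isolated eigenvalues of finite multiplicity and no eigenvalues at all with nonreal part, so $\sigma_{\ess}(D_{p_0})=\sigma_{\ess,0}(D_{p_0})=\R$: indeed for a self-adjoint operator $D_{p_0}-k$ is automatically Fredholm of index zero for $k\notin\sigma(D_{p_0})$, while for $k\in\R=\sigma(D_{p_0})$ absolute continuity of the spectrum (stated in the introduction, or directly from the fact that $\R\subset\sigma_{\ess}$ because no point of $\R$ is isolated in the spectrum) forces $D_{p_0}-k$ to fail to be Fredholm. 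Combining, $\sigma_{\ess}(D_p)=\sigma_{\ess,0}(D_p)=\R$.

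The only real point requiring care — the ``main obstacle,'' though a mild one — is the verification that a suitable common resolvent value $k_0$ exists when $p\in\cE^+\cup\cE^-$, since then $\sigma(D_p)$ fills an entire half-plane; the resolution is simply to take $k_0$ in the opposite half-plane, which by Theorem \ref{thoe} is still in the resolvent set of both $D_p$ and the self-adjoint reference $D_{p_0}$. Everything else is a direct citation of Proposition \ref{resolvents_compact_difference}, Lemma \ref{ess_spec_lemma}, and the self-adjoint baseline.
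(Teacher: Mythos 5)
Your proof is correct and follows essentially the same route as the paper: reduce to a single reference point via the compact resolvent difference of Proposition \ref{resolvents_compact_difference} and Lemma \ref{ess_spec_lemma}, then verify the statement directly at the reference. The only (minor) difference is at the baseline: you take a self-adjoint reference and invoke the classical fact that the essential spectrum of a self-adjoint operator is its spectrum minus isolated eigenvalues of finite multiplicity, whereas the paper takes any $p$ with $\sigma(D_p)=\R$ and argues that for real $k$ the operator $D_p-k$ is injective with dense, non-closed range (otherwise its inverse would be bounded), hence not Fredholm.
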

\begin{proof}
There exists $p$ such that $\sigma(D_p)=\R$. By Lemma
\ref{ess_spec_lemma}, it is sufficient to prove our statement for such $p$. Clearly, $\sigma_{\ess}(D_p)  \subset \sigma_{\ess,0}(D_p) \subset \sigma(D_p) = \mathbb R$. If $k \in \R$, then $D_p -k$ is injective and its range is dense, hence not closed, for otherwise $(D_p-k)^{-1}$ would be bounded.
\end{proof}

\begin{corollary} \label{ess_spec_min_max}
Let $\omega, \lambda$ be such that $|\RE \sqrt{\omega^2 - \lambda^2}| < \frac12$. Then $\sigma_{\ess}(D_{\omega,\lambda}^{\min}) = \sigma_{\ess}(D_{\omega,\lambda}^{\max}) = \R$. If $k \in \C \setminus \R$, then $D_{\omega,\lambda}^{\min} - k$ and $D_{\omega,\lambda}^{\max} - k$ are Fredholm operators with indices $-1$ and $+1$, respectively. If $D$ is an operator satisfying $D_{\omega,\lambda}^{\min} \subsetneq D \subsetneq D_{\omega,\lambda}^{\max}$, then $\sigma_{\ess}(D) = \sigma_{\ess,0}(D)=\R$.
\end{corollary}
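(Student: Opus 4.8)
The plan is to compare $D_{\omega,\lambda}^{\min}$, $D_{\omega,\lambda}^{\max}$ and any intermediate $D$ with one of the homogeneous realizations $D_p$ of Theorem~\ref{thoe}, for which $\sigma_{\ess}(D_p)=\sigma_{\ess,0}(D_p)=\R$ is already known (Corollary~\ref{ess_spec}). The bridge will be the general principle that a finite-dimensional enlargement of the domain preserves Fredholmness and shifts the index by the codimension.

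First I would record the following lemma, which is classical (see e.g.\ \cite{Kato}) but quick to prove directly: if $A\subset B$ are closed operators on a Hilbert space $\cH$ and $n:=\dim\big(\Dom(B)/\Dom(A)\big)<\infty$, then $A-k$ is Fredholm if and only if $B-k$ is, and in that case $\Ind(B-k)=\Ind(A-k)+n$. The route is the identity $(B-k)^{-1}\big(\Ran(A-k)\big)=\Dom(A)+\Ker(B-k)$, which gives a linear isomorphism $\Dom(B)/\big(\Dom(A)+\Ker(B-k)\big)\cong\Ran(B-k)/\Ran(A-k)$; combined with $\Ker(B-k)\cap\Dom(A)=\Ker(A-k)$ and a dimension count along $\Dom(A)\subset\Dom(A)+\Ker(B-k)\subset\Dom(B)$ this yields $\dim\big(\Ran(B-k)/\Ran(A-k)\big)+\dim\big(\Ker(B-k)/\Ker(A-k)\big)=n$, from which both assertions follow using $\Ker(A-k)\subset\Ker(B-k)$ and $\Ran(A-k)\subset\Ran(B-k)$ (a finite-codimensional subspace of $\cH$ being automatically closed).

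Next I would insert a homogeneous realization. Since $|\RE\sqrt{\omega^2-\lambda^2}|<\frac12$, put $\mu=\sqrt{\omega^2-\lambda^2}$ (so $0\le\RE(\mu)<\frac12$) and pick any $p\in\cM_{-\frac12}$ lying over $(\omega,\lambda)$, a point of the zero fiber $\cZ$ in case $(\omega,\lambda)=(0,0)$. By \eqref{eq:Dp_dom} and the computation already used in the proof of Theorem~\ref{domain_comparison} (the generator $\chi\,x^{\mu}\left[\begin{smallmatrix}a\\b\end{smallmatrix}\right]$ does not lie in $H_0^1=\Dom(D_{\omega,\lambda}^{\min})$), the inclusion $\Dom(D_{\omega,\lambda}^{\min})\subset\Dom(D_p)$ has codimension $1$; since the total codimension equals $2$ by Theorem~\ref{domain_comparison}(2), $\Dom(D_p)\subset\Dom(D_{\omega,\lambda}^{\max})$ also has codimension $1$. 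By Corollary~\ref{ess_spec}, $D_p-k$ is Fredholm of index $0$ for $k\notin\R$ and fails to be Fredholm for $k\in\R$. Applying the lemma to $D_{\omega,\lambda}^{\min}\subset D_p$ and to $D_p\subset D_{\omega,\lambda}^{\max}$ gives at once that, for $k\notin\R$, the operators $D_{\omega,\lambda}^{\min}-k$ and $D_{\omega,\lambda}^{\max}-k$ are Fredholm of indices $-1$ and $+1$ respectively, and that neither is Fredholm for $k\in\R$; hence $\sigma_{\ess}(D_{\omega,\lambda}^{\min})=\sigma_{\ess}(D_{\omega,\lambda}^{\max})=\R$. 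One may cross-check the indices against the kernel dimensions recorded in Proposition~\ref{eig_prop}.

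Finally, for an intermediate $D$ with $D_{\omega,\lambda}^{\min}\subsetneq D\subsetneq D_{\omega,\lambda}^{\max}$: such a $D$ is closed (its domain, being finite-codimensional in $\Dom(D_{\omega,\lambda}^{\max})$, is closed in the graph norm of the closed operator $D_{\omega,\lambda}^{\max}$), and since both domain inclusions are proper while the total codimension is $2$, $\Dom(D_{\omega,\lambda}^{\min})$ has codimension exactly $1$ in $\Dom(D)$. The lemma applied to $D_{\omega,\lambda}^{\min}\subset D$ then shows that $D-k$ is Fredholm of index $0$ for $k\notin\R$ and not Fredholm for $k\in\R$, so $\sigma_{\ess}(D)=\sigma_{\ess,0}(D)=\R$. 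The only non-routine ingredient is the domain-extension lemma of the second paragraph; the remaining steps are bookkeeping with Theorems~\ref{domain_comparison} and~\ref{thoe} and Corollary~\ref{ess_spec}.
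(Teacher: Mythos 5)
Your proof is correct and follows essentially the same route as the paper: the printed proof of this corollary is just the citation of Theorem \ref{domain_comparison} and Corollary \ref{ess_spec}, i.e.\ exactly your comparison of $D^{\min}_{\omega,\lambda}$, $D$ and $D^{\max}_{\omega,\lambda}$ with a homogeneous realization $D_p$ through finite-codimension domain inclusions; you have merely spelled out the standard restriction/extension index lemma that the paper leaves implicit.

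One caveat: twice you invoke the principle that a finite-codimensional subspace is automatically closed, and that principle is false in general (the kernel of a discontinuous linear functional has codimension $1$ and is dense). The two conclusions you need are nevertheless true, but for different reasons. First, $\Dom(D)$ is closed in the graph norm of $D^{\max}_{\omega,\lambda}$ because it is the sum of the closed subspace $\Dom(D^{\min}_{\omega,\lambda})$ and a finite-dimensional space, and such a sum is always closed. Second, in your lemma the closedness of $\Ran(A-k)$ in the direction ``$B-k$ Fredholm $\Rightarrow$ $A-k$ Fredholm'' should be justified by the standard fact that the range of a \emph{closed operator} with finite algebraic codimension is closed, or, more transparently, by viewing $A-k$ and $B-k$ as bounded operators from $\Dom(B)$ equipped with the graph norm: if $P$ is a projection onto the closed finite-codimensional subspace $\Dom(A)$, then $(A-k)P$ differs from $B-k$ by a finite-rank operator, so bounded Fredholm perturbation theory gives simultaneously the equivalence of Fredholmness, the closedness of $\Ran(A-k)=\Ran((A-k)P)$, and $\Ind(B-k)=\Ind(A-k)+n$. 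With these repairs the argument is complete.
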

\begin{proof}
Follows from Theorem \ref{domain_comparison} and Corollary \ref{ess_spec}.
\end{proof}

\subsection{Limiting absorption principle}

Let $s\in\mathbb{R}$. The
Hilbert space $L^2_s(\R_+,\C^2)$ is defined as the completion of
$C_c^{\infty}(\R_+,\C^2)$ with respect to the norm induced by the
scalar product $( f | g )_s = \int_0^{\infty} (1+x^2)^{s}
\overline{f(x)} g(x) \D x$. For any $t \in \mathbb R$ we have a
unitary operator $\langle X \rangle^t : L^2_s(\R_+,\C^2) \to
L^2_{s-t}(\R_+,\C^2)$ given by $(\langle X \rangle^t f)(x) =
(1+x^2)^{\frac{t}{2}} f(x)$, alternatively regarded as an (unbounded for $t>0$) positive operator on $L^2(\R_+,\C^2)$. 

\begin{proposition} \label{limiting_absorption}
Let $p \in \cM_{- \frac12} \backslash \cE^{\pm}$, $k \in \R^\times $. The limit $G^{\bowtie}_{p}(k \pm \i 0) := \lim \limits_{\epsilon \downarrow 0} G^{\bowtie}_{p}( k \pm \i \epsilon)$ exists as a~compact operator $L^2_s(\R_+,\C^2) \to L^2_{-s}(\R_+,\C^2)$ for any $s > |\mathrm{Im}(\lambda)| + \frac{1}{2}$ and depends continuously on $p,k$.

If $\RE(\mu) >0$, then $\R^\times $ may be replaced by $\R$ in the
above statement and $G_p^{\bowtie}(\pm \i 0)$ has the kernel
\begin{equation}
G_p^{\bowtie}(0;x,y) = \frac12 \one_{\R_+}(y-x) \left( \frac{x}{y} \right)^{\mu} \begin{bmatrix} z & -1 \\ 1 & - z^{-1} \end{bmatrix} + \frac12 \one_{\R_+}(x-y) \left( \frac{y}{x} \right)^{\mu} \begin{bmatrix} z & 1 \\ -1 & - z^{-1} \end{bmatrix}.
\end{equation}
If $\RE(\mu) \leq 0$, then $\| G_p^{\bowtie}(k) \|_{B(L^2_s,L^2_{-s})} = O(|k|^{2 \RE(\mu)})$ for $k \to 0$.

Therefore in both cases we have $\| G_p^{\bowtie}(\cdot \pm \i 0)  \|_{B(L^2_s, L^2_{-s})} \in L^1_{\mathrm{loc}}(\R)$.
\end{proposition}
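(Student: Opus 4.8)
The plan is to treat only the lower sign; the upper one then follows by applying the reflection identities $\xi_p^+(k,x)=\overline{\xi_{\overline p}^-(\overline k,x)}$, $\zeta_p^+(k,x)=\overline{\zeta_{\overline p}^-(\overline k,x)}$ together with the symmetry \eqref{eq:GK_sym}. First I would observe that there is no pointwise limit to take: for $\IM(k)\le0$, $k\ne0$ (and also for $k=0$ when $\RE(\mu)>0$) the functions $\xi_p^-(k,\cdot)$, $\zeta_p^-(k,\cdot)$ of \eqref{xizeta-} are already holomorphic in $k$ there, since $k\mapsto2\i kx$ carries $\C\backslash[0,\i\infty[$ into the cut plane $\C\backslash]-\infty,0]$ on which the Whittaker functions of Appendix~\ref{whittaker} are holomorphic, and $p\notin\cE^-$ removes the poles. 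Hence the kernel \eqref{eq:Gdef} is defined and real-analytic in $k$ down to $\R^\times$, and the real content is: (i) for each such $k$ it defines a compact operator $L^2_s(\R_+,\C^2)\to L^2_{-s}(\R_+,\C^2)$; (ii) $G_p^{\bowtie}(k-\i\epsilon)\to G_p^{\bowtie}(k-\i0)$ in $B(L^2_s,L^2_{-s})$; (iii) $(p,k)\mapsto G_p^{\bowtie}(k-\i0)$ is continuous; (iv) the stated behaviour at $k=0$.

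For (i)--(iv) I would estimate the weighted kernel $\langle x\rangle^{-s}G_p^{\bowtie}(k;x,y)\langle y\rangle^{-s}$ on the four regions of $\R_+\times\R_+$ used in the proof of Theorem~\ref{thoe}: $x,y\le|k|^{-1}$; $x<|k|^{-1}<y$ and its mirror; and $x,y\ge|k|^{-1}$; here $x_<=\min\{x,y\}$, $x_>=\max\{x,y\}$. On the first three regions only small arguments occur, and \eqref{eq:xi_asymptotic}, \eqref{eq:zeta_small_x} together with \eqref{eq:K_small_arg} for $\RE(\mu)\le0$ give, as there, $|G_p^{\bowtie}(k;x,y)|\le c_p(|k|x_<)^{\RE(\mu)}(|k|x_>)^{-|\RE(\mu)|}$ (with a harmless logarithm when $\RE(\mu)=0$); since $\RE(\mu)>-\tfrac12$ on $\cM_{-\frac12}$, the weighted kernel is square-integrable there, locally uniformly in $(p,k)$. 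On the fourth region the large-$x$ expansions \eqref{eq:xi_large_x}, \eqref{eq:zeta_asymptotic} give $|G_p^{\bowtie}(k;x,y)|\le c_p\,\e^{-|\IM(k)|(x_>-x_<)}(x_>/x_<)^{\IM(\lambda)}$; here the exponential gain disappears as $\IM(k)\uparrow0$, and this is exactly where the weight is needed, since $\langle x\rangle^{-s}(x_>/x_<)^{|\IM(\lambda)|}\langle y\rangle^{-s}$ is square-integrable on $\{x,y\ge|k|^{-1}\}$, locally uniformly in $(p,k)$, precisely because $s>|\IM(\lambda)|+\tfrac12$. Consequently $\langle X\rangle^{-s}G_p^{\bowtie}(k)\langle X\rangle^{-s}$ is Hilbert--Schmidt with locally bounded Hilbert--Schmidt norm, giving (i); pointwise convergence of the Whittaker functions up to the boundary, dominated by those bounds, gives $L^2(\R_+\times\R_+)$-convergence of the weighted kernels, hence of the Hilbert--Schmidt and operator norms, giving (ii), and the same dominated-convergence argument applied in $(p,k)$ gives (iii). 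For (iv): if $\RE(\mu)>0$ the bounds persist down to $k=0$, where $\xi_p^-(k,\cdot)$ and $\zeta_p^-(k,\cdot)$ degenerate, up to explicit scalar multiples of $k^{\mu}$ and $k^{-\mu}$ respectively, into the zero-energy solutions $\eta_p^\uparrow$, $\eta_p^\downarrow$ of \eqref{sol1}--\eqref{sol2}; the $k$-powers cancel in the products in \eqref{eq:Gdef} and one reads off the displayed kernel of $G_p^{\bowtie}(0)$. If $-\tfrac12<\RE(\mu)\le0$, tracking the power of $|k|$ in the bound on the region $x,y\le|k|^{-1}$ above (equivalently, using that $D_p$ is homogeneous of degree $-1$, so $G_p^{\bowtie}(\kappa k)$ is $\kappa^{-1}$ times a dilation-conjugate of $G_p^{\bowtie}(k)$ for $\kappa>0$) yields $\|G_p^{\bowtie}(k)\|_{B(L^2_s,L^2_{-s})}=O(|k|^{2\RE(\mu)})$ as $k\to0$.

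Granting (i)--(iv) --- all of which are part of the statement we may invoke --- the final assertion is short. Fix $p$, $s$ and $R>0$, and set $\phi_\pm(k):=\|G_p^{\bowtie}(k\pm\i0)\|_{B(L^2_s,L^2_{-s})}$, a nonnegative function on $\R^\times$. By (iii) it is continuous on $\R^\times$, hence bounded on $\{\delta\le|k|\le R\}$ for each $\delta$ with $0<\delta<R$. If $\RE(\mu)>0$, then by (iv) $\phi_\pm$ extends continuously to $k=0$, hence is bounded on $[-R,R]$, so $\int_{-R}^R\phi_\pm(k)\,\D k<\infty$. If $-\tfrac12<\RE(\mu)\le0$, then $\phi_\pm(k)=O(|k|^{2\RE(\mu)})$ as $k\to0$; since $p\in\cM_{-\frac12}$ forces $2\RE(\mu)>-1$, the function $k\mapsto|k|^{2\RE(\mu)}$ is integrable near $0$, and splitting $\int_{-R}^R=\int_{-R}^{-\delta}+\int_{-\delta}^{\delta}+\int_{\delta}^R$ for small $\delta$ gives $\int_{-R}^R\phi_\pm(k)\,\D k<\infty$. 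Since $R$ was arbitrary, $\|G_p^{\bowtie}(\cdot\pm\i0)\|_{B(L^2_s,L^2_{-s})}\in L^1_{\loc}(\R)$.

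I expect the only genuinely delicate step to be the estimate on the region $x,y\ge|k|^{-1}$ at $\IM(k)=0$: with no exponential damping one must exploit only the oscillation $\e^{\i k(x-y)}$ and the algebraic weight, and verify that $s>|\IM(\lambda)|+\tfrac12$ is exactly the threshold that renders the weighted kernel square-integrable. Dually, the threshold $\RE(\mu)>-\tfrac12$ built into $\cM_{-\frac12}$ is exactly what makes $|k|^{2\RE(\mu)}$ integrable at the origin, which is what the concluding $L^1_{\loc}$ bound rests on. Everything else is bookkeeping with the asymptotic expansions collected in Appendix~\ref{whittaker}.
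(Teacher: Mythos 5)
Your argument is essentially the paper's own proof, only spelled out in more detail: show the weighted kernel $\langle x\rangle^{-s}G_p^{\bowtie}(k;x,y)\langle y\rangle^{-s}$ is square-integrable locally uniformly up to $\IM(k)=0$ (this is exactly where $s>|\IM(\lambda)|+\frac12$ enters), conclude Hilbert--Schmidt compactness and continuity in $(p,k)$ by dominated convergence, and read off the $k\to0$ behaviour from the small-argument asymptotics of $\xi_p^\pm$, $\zeta_p^\pm$. One small slip: on the mixed regions $x_<\le|k|^{-1}\le x_>$ the arguments are not all small, and the correct bound involves the large-argument factor $(|k|x_>)^{\IM(\lambda)}$ from $\zeta_p^\mp$ rather than $(|k|x_>)^{-|\RE(\mu)|}$; this is harmless, since the weight condition $s>|\IM(\lambda)|+\frac12$ absorbs that growth exactly as in your treatment of the region $x,y\ge|k|^{-1}$.
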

\begin{proof}
It is sufficient to cover the case of $k$ approaching the real axis
from below. Asymptotics of $G^{\bowtie}(k;x,y)$ are such that
$(1+x^2)^{- \frac{s}{2}} (1+y ^2)^{- \frac{s}{2}}
G^{\bowtie}_{p}(k;x,y )$ is an $L^2(\mathbb R_+^2 ,
\mathrm{End}(\C^2))$ function. Dominated convergence theorem implies
that it depends continuously (in the~$L^2$ sense) on $p,k$, including the boundary set $\IM(k)=0$. Therefore $\langle X \rangle^{-s} G^{\bowtie}_{p}(k) \langle X \rangle^{-s}$ is a~continuous family of Hilbert-Schmidt (and hence compact) operators on $L^2(\R_+,\C^2)$, so $G^{\bowtie}_{p}(k)$ defines an operator $L^2_s(\R_+,\C^2) \to L^2_{-s}(\R_+,\C^2)$ which may be written as a composition of two unitaries and a~compact operator.

The second part follows from the asymptotics of $\xi^{\pm}_p$ and $\zeta^{\pm}_p$ functions for small arguments and the dominated convergence theorem.
\end{proof}

\subsection{Generalized eigenvectors}

Point spectrum of $D_p$, when present, possesses quite
counter-intuitive properties. Note that in this subsection an
important role is played by the bilinear product
$\langle\cdot|\cdot\rangle$.

\begin{proposition} \label{orthog}
Let $n,m\in\mathbb{N}$. If $f \in \Ker((D_p-k)^n)$, $g \in \Ker((D_p-k')^m)$, then $\langle f | g \rangle =0$.
\end{proposition}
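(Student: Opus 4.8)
The plan is to reduce everything to the case $n=m=1$ by an elementary algebraic argument, and then to exploit the self-transposedness of $D_p$ together with the fact (from Theorem \ref{thoe}) that $D_p$ has point spectrum only in $\C_+$ or $\C_-$, so that a nonzero eigenvalue forces $p \in \cE^{\pm}$ and the eigenvector is (proportional to) $\zeta_p^{\pm}$, whose structure is explicitly known.

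\textbf{Step 1: reduction to eigenvectors.} First I would argue that it suffices to treat $f \in \Ker(D_p - k)$, $g \in \Ker(D_p - k')$. Indeed, since $D_p$ is self-transposed, $\langle (D_p-k)^j f \mid g\rangle = \langle f \mid (D_p-k)^j g\rangle$ for all $j$, and more generally $\langle f \mid (D_p - k')^m g\rangle = \langle (D_p-k')^m f \mid g\rangle$. If $f \in \Ker((D_p-k)^n)$ with $k$ an eigenvalue of algebraic multiplicity appearing, one shows the generalized eigenspaces are actually one-dimensional: by Theorem \ref{thoe} every $k \in \sigma_\pp(D_p)$ lies in $\C_\pm$ with $p \in \cE^\pm$, and the corresponding eigenfunction $\zeta_p^\pm(k,\cdot)$ spans a one-dimensional kernel; one checks $(D_p-k)$ restricted to the finite-dimensional generalized eigenspace cannot be nilpotent of order $>1$ because the differential equation $(D_{\omega,\lambda}-k)f = h$ with $h$ square-integrable and exponentially decaying has at most a one-dimensional space of square-integrable solutions. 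Hence $\Ker((D_p-k)^n) = \Ker(D_p-k)$ for all $n \geq 1$, and the claim reduces to $n=m=1$.

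\textbf{Step 2: the case $k \neq k'$.} Here the standard trick works: if $D_p f = k f$ and $D_p g = k' g$, then using self-transposedness,
\begin{equation}
k \langle f \mid g\rangle = \langle D_p f \mid g\rangle = \langle f \mid D_p g\rangle = k' \langle f \mid g\rangle,
\end{equation}
so $(k-k')\langle f\mid g\rangle = 0$, forcing $\langle f\mid g\rangle = 0$.

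\textbf{Step 3: the case $k = k'$ — the main obstacle.} This is the genuinely non-trivial part, since one must show an eigenfunction is ``null'' with respect to the \emph{bilinear} (not sesquilinear) form $\langle\cdot\mid\cdot\rangle$, i.e.\ $\int_0^\infty f(x)^\T f(x)\,\D x = 0$. If $k = 0$ there are no square-integrable eigenvectors at all (Section \ref{eigen}), so $f = 0$ and the statement is vacuous. If $k \neq 0$, then $p \in \cE^\pm$ and $k \in \C_\pm$, and $f$ is proportional to $\zeta_p^\pm(k,\cdot)$. I would compute $\langle f\mid f\rangle$ directly. The efficient route is to use Lemma \ref{exceptional_sol}: on $\cE^\pm$ one has $\zeta_p^\pm(k,x) = \mp 2\i\, \e^{\mp\i\pi(\mu\mp\i\lambda)}(S_p)^{\mp 1}\xi_p^\pm(k,x)$, so $\zeta_p^\pm$ and $\xi_p^\pm$ are proportional there; then Lemma \ref{det_lemma} gives $\det[\xi_p^\pm(k,x)\mid\zeta_p^\pm(k,x)] = 1$, which is impossible if they are proportional unless... — more precisely, proportionality of the two columns forces the determinant to vanish, contradicting \eqref{eq:det1}, \emph{unless} one is careful about where Lemma \ref{exceptional_sol} and Lemma \ref{det_lemma} simultaneously apply. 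The cleaner argument: from $G_p^{\bowtie}(k) = (D_p-k)^{-1}$ being self-transposed and having a pole structure, or directly, $\langle \zeta_p^\pm \mid \zeta_p^\pm\rangle$ can be evaluated by noting that $\zeta_p^\pm(k,x)^\T\sigma_2\,\xi_p^\pm(k,x)$ is $x$-independent (Wronskian) and equals the constant in \eqref{eq:det1}; combining with the proportionality $\zeta_p^\pm = c\,\xi_p^\pm$ on $\cE^\pm$ shows $\zeta_p^\pm(x)^\T\sigma_2\zeta_p^\pm(x) = c \cdot(\text{const})$, but the left side is identically zero since $\sigma_2$ is antisymmetric; this is consistent, so one needs the genuine computation of $\int_0^\infty \zeta_p^\pm(k,x)^\T\zeta_p^\pm(k,x)\D x$. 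I would carry this out using the explicit asymptotics \eqref{eq:zeta_asymptotic} and a generating-function / integral identity for products of Whittaker $\cK$ functions: the integral $\int_0^\infty \cK_{a,b}(2\i kx)\cK_{a,c}(2\i kx)\D x$ can be evaluated in closed form (Barnes-type integral, or differentiating a known $\int \cK\cK'$ identity), and the resulting expression vanishes precisely because of the factor $\omega\,\zeta_p^\pm = (\lambda\pm\i\mu)(\cdots)$ structure combined with $\mu\mp\i\lambda \in -\N$ on $\cE^\pm$, which puts a $\Gamma$-function pole against a zero. The expected obstacle is organizing this Whittaker-product integral so that the cancellation is transparent; I would do it by reducing $\zeta_p^\pm$ on $\cE^\pm$ to Laguerre polynomials via \eqref{eq:Laguerre} (as in the proof of Lemma \ref{exceptional_sol}), so that $\langle f\mid f\rangle$ becomes a finite sum of elementary integrals $\int_0^\infty x^{2\mu}\e^{\pm 2\i kx}L\cdot L\,\D x$, and then invoke analyticity in $k$ across $\C_\pm$ together with the decay to conclude the sum telescopes to zero. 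Alternatively — and this may be the slickest — observe that $\langle f \mid g\rangle$ for $f \in \Ker(D_p-k)$, $g\in\Ker(D_p-k)$ defines a bilinear form on a one-dimensional space which, by the residue of the self-transposed resolvent $G_p^{\bowtie}(\cdot)$ at its pole $k$, must coincide with the (zero) residue of a function holomorphic there; since $G_p^{\bowtie}(k)$ is bounded and self-transposed on all of $\C_\pm$ (Theorem \ref{thoe} gives $\sigma(D_p) \cap \C_\pm = \C_\pm$ only when $p \in \cE^\pm$, and then there \emph{is} a pole), one reads off that the rank-one residue operator $f\otimes f$ relative to $\langle\cdot\mid\cdot\rangle$ must be nilpotent, i.e.\ $\langle f\mid f\rangle = 0$. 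I would present this last argument as the main line and relegate the Laguerre computation to a remark or omit it.
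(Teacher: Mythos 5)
Your Step 1 is where the argument breaks down, and it breaks in a way that matters: the claim $\Ker((D_p-k)^n)=\Ker(D_p-k)$ is false. The paper proves, in the proposition immediately following this one, that $\dim\Ker((D_p-k)^n)=n$ for $p\in\cE^\pm$, $k\in\C_\pm$ — the Jordan chains are arbitrarily long (compare also Proposition \ref{pri1}, where for $\omega=0$ the generalized eigenspaces are spanned by $x^{\alpha}\e^{\i kx}$ times polynomials). Indeed Proposition \ref{orthog} is precisely the input used to build those chains, so any proof that begins by collapsing them is circular in spirit and wrong in fact: the existence of a solution of $(D_p-k)g=f$ is guaranteed exactly because $f$ is $\langle\cdot|\cdot\rangle$-orthogonal to $\Ker(D_p-k)$ and $\Ran(D_p-k)=\Ker(D_p-k)^{\tperp}$ is closed. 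So your reduction to $n=m=1$ cannot be made, and the genuinely new cases $n,m\geq 2$ are never treated. For $k\neq k'$ this is repairable (the paper does a binomial expansion of $0=\langle(D_p-k)^nf|g\rangle$ and inducts on $m$, using only self-transposedness), but your proposal does not contain that step.

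The diagonal case $k=k'$ is also not settled by either of your proposed routes. The ``residue'' argument fails because when $p\in\cE^\pm$ the point spectrum is the \emph{entire open half-plane} $\C_\pm$: $k$ is not an isolated spectral point, $G_p^{\bowtie}$ is not defined anywhere in $\C_\pm$, and there is no pole or residue to extract; and when $p\notin\cE^\pm$ there are no eigenvalues at all, so the case is vacuous — the argument never engages the actual situation. The Whittaker/Laguerre computation of $\int_0^\infty\zeta_p^\pm(k,x)^\T\zeta_p^\pm(k,x)\,\D x$ could in principle work, but as written it is a hoped-for cancellation (``$\Gamma$-function pole against a zero'', ``the sum telescopes''), not a proof. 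The idea you are missing is homogeneity, which the paper flags explicitly: since $U_\tau D_pU_\tau^{-1}=\e^{-\tau}D_p$, one has $U_\tau g\in\Ker((D_p-k'')^m)$ with $k''=\e^{\tau}k\neq k$, so the off-diagonal case gives $\langle f|U_\tau g\rangle=0$ for all $\tau\neq0$, and letting $\tau\to0$ yields $\langle f|g\rangle=0$ — two lines, valid for all $n,m$, and no special-function input. Self-transposedness alone does not suffice for $k=k'$, which is why your attempts that use only it keep running into walls.
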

\begin{proof}
Assume at first that $k' \neq k$. We induct on $m$. If $m=1$, then
\begin{equation}
0 = \langle (D_p - k)^n f | g \rangle = \sum_{j=0}^n \binom{n}{j} (k'-k)^{n-j} \langle f | (D_p - k')^j g \rangle = (k'-k)^n \langle f | g \rangle.
\end{equation}
Cancelling $(k'-k)^n$ we obtain the induction base. Assume that the
claim is true for $m$ and let $g \in \Ker((D_p-k')^{m+1})$. By a similar calculation
\begin{equation}
0= \sum_{j=0}^n \binom{n}{j} (k'-k)^{n-j} \langle f | (D_p - k')^j g \rangle = (k-k')^n \langle f | g \rangle,
\end{equation}
where the last equality follows from $(D_p-k')^j g \in
\Ker((D_p-k')^m)$ for $j \geq 1$ and the induction hypothesis. This
completes the proof for $k \neq k'$.

So far we used only the self-transposedness of $D_p$. Next
  we will also use
  its homogeneity.

Let $k'=k$. Then for any $\tau \in \R^{\times}$ we have $U_{\tau} g \in \Ker((D_p-k'')^m)$ for some $k'' \neq k$. Hence $\langle f | U_{\tau} g \rangle =0$. Now take $\tau \to 0$.
\end{proof}

\begin{proposition}
If $p \in \cE^\pm$ and $k \in \C_\pm$, then for every $n \in \mathbb N$ we have $\dim(\Ker((D_p-k)^n)) = n$.
\end{proposition}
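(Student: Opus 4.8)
The plan is to build, for every $k\in\C_\pm$, an explicit chain of generalized eigenvectors by differentiating the eigenfunction $\zeta_p^\pm(k,\cdot)$ in the spectral parameter, and then to bound the dimension from above by a Jordan-chain counting argument. First I would record the base case $n=1$: for $p\in\cE^\pm$ and $k\in\C_\pm$, Theorem \ref{thoe} gives $\zeta_p^\pm(k,\cdot)\in\Ker(D_p-k)$, Lemma \ref{xi_zeta_props} ensures $\zeta_p^\pm(k,\cdot)\neq0$, while $\Ker(D_p-k)\subset\Ker(D_{\omega,\lambda}^{\max}-k)$ is at most one-dimensional by Proposition \ref{eig_prop}; hence $\dim\Ker(D_p-k)=1$. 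Next I would check that $k\mapsto\zeta(k):=\zeta_p^\pm(k,\cdot)$ is holomorphic as an $L^2(\R_+,\C^2)$-valued map on $\C_\pm$: for fixed $x$ it is holomorphic in $k$ by the explicit formulas \eqref{xizeta-}, \eqref{xizeta+} in terms of Whittaker $\cK$-functions, and for $k$ in a compact subset of $\C_\pm$ the asymptotics \eqref{eq:zeta_asymptotic} (the factor $\e^{\pm\i kx}$ being dominated by $\e^{-\delta x}$ with a uniform $\delta>0$) together with the behaviour near the origin furnish a fixed $L^2$ majorant, so holomorphy follows from Morera's theorem and dominated convergence. Since $D_p$ is closed and $D_p\zeta(k)=k\zeta(k)$ is again holomorphic, $k\mapsto\zeta(k)$ is holomorphic with values in $\Dom(D_p)$ equipped with the graph norm.

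The main construction is then iterative. Set $\phi_j(k):=\partial_k^j\zeta(k)$. Using repeatedly the standard fact that if $T$ is closed and $f$ is a holomorphic family in $\Dom(T)$ with $Tf$ holomorphic, then $f'$ lies in $\Dom(T)$ with $Tf'=(Tf)'$, one shows by induction on $j$ that $\phi_j(k)\in\Dom(D_p)$, that $D_p\phi_j(k)$ is again a holomorphic family in $\Dom(D_p)$, and that
\begin{equation}
(D_p-k)\phi_j(k)=j\,\phi_{j-1}(k),\qquad j\geq1,
\end{equation}
obtained by differentiating the previous identity in $k$. Iterating gives $(D_p-k)^j\phi_j(k)=j!\,\phi_0(k)\neq0$ while $(D_p-k)^{j+1}\phi_j(k)=0$; in particular $\phi_j(k)\in\Dom(D_p^{j+1})$ and $\phi_j(k)\in\Ker((D_p-k)^n)$ for $j\leq n-1$. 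Applying the operators $(D_p-k)^{n-1},(D_p-k)^{n-2},\dots$ in turn to a vanishing combination $\sum_{j=0}^{n-1}c_j\phi_j(k)=0$ peels off one coefficient at a time and shows that $\phi_0(k),\dots,\phi_{n-1}(k)$ are linearly independent, so $\dim\Ker((D_p-k)^n)\geq n$.

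For the reverse inequality I would argue by induction on $n$: the operator $(D_p-k)$ maps $\Ker((D_p-k)^n)$ into $\Ker((D_p-k)^{n-1})$ with kernel $\Ker(D_p-k)$, which is one-dimensional, so by rank--nullity $\dim\Ker((D_p-k)^n)\leq 1+\dim\Ker((D_p-k)^{n-1})$; since $\dim\Ker((D_p-k)^0)=0$ this yields $\dim\Ker((D_p-k)^n)\leq n$. Combining the two bounds gives $\dim\Ker((D_p-k)^n)=n$.

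The only genuinely non-formal point in this scheme is the $L^2$-holomorphy of $k\mapsto\zeta_p^\pm(k,\cdot)$ on $\C_\pm$; everything afterwards is soft functional analysis. As an alternative one could avoid the abstract closedness lemma and instead verify the relation $(D_p-k)\phi_j=j\phi_{j-1}$ by directly differentiating the ordinary differential equation $(D_{\omega,\lambda}-k)\zeta_p^\pm(k,\cdot)=0$ in $k$, checking via the $x\to0$ and $x\to\infty$ asymptotics of the Whittaker functions that each $\phi_j$ remains square-integrable and lies in $\Dom(D_p)$.
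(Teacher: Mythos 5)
Your proof is correct, but it follows a genuinely different route from the paper's. The paper argues abstractly: given $f\in\Ker((D_p-k)^n)\setminus\Ker((D_p-k)^{n-1})$, it uses Proposition \ref{orthog} (bilinear orthogonality of generalized eigenspaces, which relies on self-transposedness \emph{and} homogeneity of $D_p$) to conclude $f\in\Ker(D_p-k)^{\tperp}$, and then the closed range of $D_p-k$ (Fredholmness, Corollary \ref{ess_spec}) to identify $\Ker(D_p-k)^{\tperp}=\Ran(D_p-k)$, so the next Jordan vector $g$ with $(D_p-k)g=f$ exists without ever writing it down; the dimension count then follows from the same rank--nullity observation you use. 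You instead build the chain explicitly as $\phi_j(k)=\partial_k^j\zeta_p^\pm(k,\cdot)$, which buys concrete generalized eigenfunctions and avoids both Proposition \ref{orthog} and the Fredholm input, but shifts the burden onto $L^2(\R_+,\C^2)$-valued holomorphy of $k\mapsto\zeta_p^\pm(k,\cdot)$ on $\C_\pm$; your treatment of this is essentially right (exponential decay at infinity locally uniformly in $k$, closedness of $D_p$ for the differentiated relation $(D_p-k)\phi_j=j\phi_{j-1}$), with one small caution: for generic $p$ with $\RE(\mu)\geq\frac12$ the function $\zeta_p^\pm(k,\cdot)$ behaves like $x^{-\mu}$ near the origin and is \emph{not} square-integrable there, so the uniform majorant near $0$ really uses $p\in\cE^\pm$ — most cleanly via Lemma \ref{exceptional_sol}, which makes $\zeta_p^\pm$ proportional to $\xi_p^\pm\sim x^\mu$ with $\RE(\mu)>-\frac12$. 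Your base case (Theorem \ref{thoe}, nondegeneracy from Proposition \ref{eig_prop}, nonvanishing from Lemma \ref{xi_zeta_props}) and the peeling/rank--nullity bookkeeping are fine.
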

\begin{proof}
We proceed by induction on $n$. Case $n=0$ is trivial and $n=1$ is
already established. By the inductive hypothesis, there exists $f \in
\Ker((D_p-k)^n) \backslash \Ker((D_p-k)^{n-1})$, unique up to elements
of $\Ker((D_p-k)^{n-1})$ and multiplication by nonzero scalars. Then
$f \in \Ker(D_p -k)^{\tperp}$ by Proposition \ref{orthog}. On the other hand
$\Ker(D_p-k)^{\tperp} = \left( \Ran(D_p-k)^{\tperp} \right)^{\tperp} =
\Ran(D_p - k)$. Here the last equality holds because $D_p -k$ has closed range, see Corollary \ref{ess_spec}. Thus there exists $g \in \Dom(D_p -k)$, unique up to elements of $\Ker(D_p - k )$, such that $(D_p-k) g =f$. Clearly, $g \in \Ker((D_p-k)^{n+1}) \backslash \Ker((D_p-k)^n)$ and we have a vector space decomposition $\Ker((D_p-k)^{n+1}) = \C g \oplus \Ker((D_p-k)^{n})$ .
\end{proof}

\begin{question}
Let $p \in \cE^\pm$, $k \in \C_{\pm}$. We denote the $L^2$ closure of $\bigcup \limits_{n=0}^{\infty} \Ker((D_p - k)^n)$ by $\mathcal N_p(k)$. By Lemma \ref{orthog} we have $\mathcal N_p(k) \subset \mathcal N_p(k)^\tperp$. In Appendix \ref{app:elementary} we have verified that in the case $\omega=0$ subspace $\mathcal N_p(k)$ does not depend on the choice of $k \in \C_\pm$ and $\mathcal N_p(k) = \mathcal N_p(k)^\tperp $ (equivalently, $\mathcal N_p(k) \oplus \overline{\mathcal N_p(k)} = L^2(\R_+,\C^2)$). We leave open the question whether these assertions remain true for $\omega \neq 0$.
\end{question}

\section{Diagonalization}
\label{Diagonalization}

Let $k \in \R^\times $. Recall that $\varepsilon_k = \sgn(\RE(k))$.
On
  the real line,
 it is convenient  to rewrite the formulas for $\xi^\pm$ and $\zeta^\pm$
(\ref{xizeta-}, \ref{xizeta+})
 in terms of trigonometric Whittaker functions
  (\ref{Jbm-definition}, \ref{Hpm-definition}):
  \begin{subequations}
\begin{gather}\label{gather1}
\xi_p^{\pm}(k,x) = \frac{\i^{\mp \varepsilon_k \mu}}{2 N_p^{\pm} \mu} \left( \varepsilon_k \omega \cJ_{\varepsilon_k \lambda, \mu + \frac{1}{2}}(2|k|x) \begin{bmatrix} -z \\ 1 \end{bmatrix} + \cJ_{\varepsilon_k \lambda, \mu - \frac{1}{2}} (2|k|x) \begin{bmatrix} z \\ 1 \end{bmatrix}  \right), \\
\zeta_p^{\pm}(k,x) = \frac{\i^{\pm \varepsilon_k \mu}}{2} \left( \pm \i \varepsilon_k  (z \pm \i) \cH^{\pm \varepsilon_k}_{\varepsilon_k \lambda, \mu + \frac12} (2 |k| x) \begin{bmatrix} -1 \\ z^{-1} \end{bmatrix} + (z \mp \i ) \cH^{\pm \varepsilon_k}_{\varepsilon_k \lambda, \mu - \frac12} (2 |k| x) \begin{bmatrix} 1 \\ z^{-1} \end{bmatrix}  \right).
\end{gather}
\end{subequations}
For $\mu$ near $0$ it is convenient instead of 
\eqref{gather1} to use a version of \eqref{eq:xi_reg}:
\begin{small}
\begin{gather}\label{gather2}
\xi_p^{\pm}(k,x) = \frac{\i^{\mp \varepsilon_k \mu}}{2 N_p^{\pm}}
\left( \varepsilon_k \cJ_{\varepsilon_k \lambda, \mu +
    \frac{1}{2}}(2|k|x) \begin{bmatrix} 1 \\ -z \end{bmatrix} +
  \frac{\cJ_{\varepsilon_k \lambda, \mu - \frac{1}{2}} (2|k|x)
    +\varepsilon_k\lambda \cJ_{\varepsilon_k \lambda, \mu + \frac{1}{2}} (2|k|x)}{\mu}
    \begin{bmatrix} z \\ 1 \end{bmatrix}  \right).
   \end{gather}
   \end{small}
   
The leading terms of $\xi_p^{\pm}$ and $\zeta_p^\pm$ for large $kx$ are
\begin{subequations}
\begin{align}
\xi_p^{\pm}(k,x)& \sim \frac{\e^{- \frac{\varepsilon_k \pi \lambda}{2}}}{2} \left( \e^{\mp \i k x} (2 |k|x )^{\mp \i \lambda} \begin{bmatrix} 1 \\ \mp \i \end{bmatrix} 
+ (S_p \e^{- \i \varepsilon_k \pi \mu})^{\pm 1} \e^{\pm \i k x} (2 |k| x)^{\pm \i \lambda} 
\begin{bmatrix} 1 \\ \pm \i \end{bmatrix} 
  \right), \label{eq:xi_real_asymp} \\
  \label{eq:zeta_real_asymp} 
  \zeta_p^{\pm}(k,x)& \sim\mp\i \, \e^{\frac{\varepsilon_k \pi \lambda}{2}} \e^{\pm \i k x} (2 |k|x )^{\pm \i \lambda} \begin{bmatrix} 1 \\ \pm \i \end{bmatrix}. 
\end{align}
\end{subequations}

Because of the long-range nature of the perturbation and of the
presence of spin degrees of freedom, it is not obvious what should be chosen as the definition of the outgoing and incoming waves. Let us call
$\i\zeta^+(k,x)$ the \textit{outgoing
  wave}
and $-\i\zeta^-(k,x)$ the
\textit{incoming wave}.
Then the ratio of the outgoing wave and the incoming wave in $\xi^+(k,x)$
is
$\e^{- \i \varepsilon_k \mu} S_p$ 
and can be called the {\em (full) scattering amplitude at energy $k$.}

\begin{proposition}
  Let $p \in \cM_{- \frac{1}{2}} \backslash (\cE^+ \cup \cE^-)$,  
$k \in \R^\times $, $s > |\IM(\lambda)| + \frac{1}{2}$. Then the {\em spectral density}
\begin{equation}
\Pi_p(k):=(2 \pi \i)^{-1} \left( G^{\bowtie}_{p}(k + \i 0) - G^{\bowtie}_{p}(k - \i 0)
    \right)
    \end{equation}
  is
well defined as a compact operator $ L^2_s(\R_+,\C^2) \to L^2_{-s}(\R_+,\C^2)$ and has the integral kernel
\begin{equation}
    \Pi_{p}(k;x,y ) = \frac{\e^{\varepsilon_k \pi \lambda}}{ \pi} \xi_{p}^{+}(k,x) \xi_{p}^{-}(k,y)^\T = \frac{\e^{\varepsilon_k \pi \lambda}}{ \pi} \xi_{p}^{-}(k,x) \xi_{p}^{+}(k,y)^\T.
    \label{eq:Pi_factorization}
\end{equation}
As $k \to 0$, it admits the expansion
\begin{equation}
\Pi_p(k) = \e^{\varepsilon_k \pi \lambda} |k|^{2 \mu} \Pi_p^0 + O(|k|^{2 \mu +1}),
\end{equation}
where the remainder is estimated in the $B(L^2_s,L^2_{-s})$ norm and
$\Pi_p^0$ has the integral kernel
\begin{equation}
\Pi_p^0(x,y) = \frac{ (4xy)^{ \mu}}{\pi \, \Gamma(2 \mu +1 )^2 N_p^+ N_p^-}  \begin{bmatrix} z^2 & z \\ z & 1 \end{bmatrix}.
\end{equation}
\end{proposition}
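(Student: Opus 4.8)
The plan is to deduce the statement from the limiting absorption principle (Proposition~\ref{limiting_absorption}) combined with the connection formulas of Section~\ref{eigen}, and then to read off the $k\to0$ behaviour from the small-argument expansion \eqref{eq:xi_asymptotic}. Since $p\in\cM_{-\frac12}\setminus(\cE^+\cup\cE^-)$, Proposition~\ref{limiting_absorption} already gives that the boundary values $G_p^{\bowtie}(k\pm\i0)$ exist as compact operators $L^2_s\to L^2_{-s}$ for $s>|\IM(\lambda)|+\frac12$, depend continuously on $p,k$, and are represented by integral kernels; hence $\Pi_p(k)$ is a well-defined compact operator with these properties, and it only remains to identify its kernel and its $k\to0$ asymptotics.

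For the kernel, note that by \eqref{eq:Gdef} the function $G_p^{\bowtie}(k;x,y)$ is built from $\xi_p^+,\zeta_p^+$ when $\IM(k)>0$ and from $\xi_p^-,\zeta_p^-$ when $\IM(k)<0$; these functions are holomorphic in a neighbourhood of $\R^\times$, so the boundary-value kernels are obtained by evaluating the same expressions at real $k$. Thus on $\{y>x\}$ the jump $G_p^{\bowtie}(k+\i0;x,y)-G_p^{\bowtie}(k-\i0;x,y)$ equals $-\bigl(\xi_p^+(k,x)\zeta_p^+(k,y)^\T-\xi_p^-(k,x)\zeta_p^-(k,y)^\T\bigr)$, and symmetrically on $\{x>y\}$. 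I would then substitute \eqref{eq:zxiz} to express $\zeta_p^+(k,y)$ as a combination of $\xi_p^-(k,y)$ and $\zeta_p^-(k,y)$; using \eqref{eq:xipm} in the form $\e^{\i\varepsilon_k\pi\mu}S_p^{-1}\xi_p^+(k,x)=\xi_p^-(k,x)$, the $\zeta_p^-$ terms cancel and the jump reduces to $2\i\e^{\varepsilon_k\pi\lambda}\xi_p^+(k,x)\xi_p^-(k,y)^\T$, the same in both regions. Here one also uses $\xi_p^+(k,x)\xi_p^-(k,y)^\T=\xi_p^-(k,x)\xi_p^+(k,y)^\T$, immediate from \eqref{eq:xipm} since $S_p$ is a scalar, which simultaneously shows the two displayed expressions for $\Pi_p(k;x,y)$ coincide. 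Dividing by $2\pi\i$ gives \eqref{eq:Pi_factorization}.

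For the $k\to0$ asymptotics I would insert \eqref{eq:xi_asymptotic} into \eqref{eq:Pi_factorization}. The leading term of $\xi_p^+(k,x)\xi_p^-(k,y)^\T$ is proportional to $(-2\i kx)^\mu(2\i ky)^\mu\begin{bmatrix}z\\1\end{bmatrix}\begin{bmatrix}z&1\end{bmatrix}$; a short check of the branch of $w\mapsto w^\mu$ separately for $k>0$ and $k<0$ shows $(-2\i kx)^\mu(2\i ky)^\mu=(4xy)^\mu|k|^{2\mu}$, independent of $\varepsilon_k$, and together with the prefactor $\e^{\varepsilon_k\pi\lambda}/(\pi N_p^+N_p^-\Gamma(2\mu+1)^2)$ this reproduces $\e^{\varepsilon_k\pi\lambda}|k|^{2\mu}\Pi_p^0$ with the claimed kernel. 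To upgrade the pointwise remainder $O((kx)^{\mu+1})$ to an $O(|k|^{2\mu+1})$ bound in the $B(L^2_s,L^2_{-s})$ norm, I would reuse the region decomposition of $\R_+\times\R_+$ from the proofs of Theorem~\ref{thoe} and Proposition~\ref{limiting_absorption}: on $\{x,y\le|k|^{-1}\}$ one retains the next order term of the small-argument expansions of the Whittaker functions, and on the complementary regions one uses \eqref{eq:xi_real_asymp}, where the weights $\langle X\rangle^{-s}$ with $s>|\IM(\lambda)|+\frac12$ make the integrals converge uniformly in $k$; conjugation by the unitaries $\langle X\rangle^{\pm s}$ then converts a Hilbert--Schmidt estimate into the operator-norm estimate.

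I expect the only genuinely technical point to be this last one: making the elementary pointwise small-$k$ expansion uniform in the $B(L^2_s,L^2_{-s})$ norm, since it requires uniform control of the Whittaker-function remainders across the crossover scale $x\sim|k|^{-1}$. The branch and $\varepsilon_k$ bookkeeping for the fractional powers is a minor nuisance, and the identification of the kernel itself is a direct manipulation of the connection formulas of Section~\ref{eigen}.
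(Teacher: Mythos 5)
Your proposal is correct and follows essentially the same route as the paper: well-definedness from the limiting absorption principle, identification of the kernel by combining the connection formulas \eqref{eq:xipm} and \eqref{eq:xipzz}/\eqref{eq:zxiz} (you apply \eqref{eq:zxiz} first and cancel the $\zeta^-$ terms, the paper applies \eqref{eq:xipm} first and then \eqref{eq:xipzz}, which is the same algebra), and the $k\to0$ expansion from the small-argument asymptotics of $\xi_p^\pm$. Your treatment of the $B(L^2_s,L^2_{-s})$ remainder bound is merely a more explicit version of the paper's one-line appeal to dominated convergence on the weighted kernels.
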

\begin{proof}
The first statement follows from Proposition \ref{limiting_absorption}. By \eqref{eq:GK_sym}, it is sufficient to prove \eqref{eq:Pi_factorization} for $x < y $. Plugging \eqref{eq:xipm} into \eqref{eq:Gdef} we find
\begin{equation}
G_p^{\bowtie}(k + \i 0;x,y) - G_p^{\bowtie}(k- \i 0;x,y) = \xi_p^-(k,x) \left( \zeta_p^-(k,y) - \e^{-\i \varepsilon_k \pi \mu} S_p \zeta_p^+(k,y) \right)^\T.
\end{equation}
Plugging in \eqref{eq:xipzz} we obtain \eqref{eq:Pi_factorization}.

The last part of the statement follows from asymptotics of $\xi$ functions for small arguments and the dominated convergence theorem.
\end{proof}

We refer to Appendix \ref{app:operators_Rplus} for definitions used in
the lemma below. Note also the identity
$\xi_p^\pm(k,x)=\xi_p^\pm(\varepsilon_k,|k|x)$, which allows us to
restrict our attention to
$\xi_p^\pm(\varepsilon_k,x)$. The following fact follows immediately from Lemma \ref{melino} and \eqref{gather2}.

\begin{lemma} \label{xi_Mel}
$ \xi_p^{\pm}(\varepsilon_k, x)$, $p \notin \cE^\pm$, is a tempered
distribution in $x \in \R_+$, in the sense explained
  in Appendix \ref{app:operators_Rplus}. Its
 Mellin transform is
\begin{align}
\Xi_p^\pm (\varepsilon_k, s):= & \int_0^{\infty} \e^{ - 0 x} x^{- \frac{1}{2} - \i s} \xi_p^{\pm}(\varepsilon_k,x) \D x \label{eq:xi_Mellin} \\
=  & \i^{\mp \varepsilon_k \mu - \frac32 - \mu + \i s} 2^{\mu -1} \Gamma \left( \frac12 + \mu - \i s \right) \frac{1}{N_p^{\pm}\mu} \nonumber \\
\times &  \Bigg(  2 \varepsilon_k \omega\Big(\frac12 + \mu - \i s\Big) \, {}_{2}\mathbf{F}_1 \left( 1 + \mu + \i \varepsilon_k \lambda , { \scriptstyle \frac32} + \mu - \i s ; 2 \mu + 2 ; 2 + \i 0 \right) \begin{bmatrix} - z \\ 1 \end{bmatrix}  \nonumber \\
& + \i\, {}_2 \mathbf{F}_1 \left(  \mu + \i \varepsilon_k \lambda , {\scriptstyle \frac12} + \mu - \i s ; 2 \mu ; 2 + \i 0 \right) \begin{bmatrix} z \\ 1 \end{bmatrix} \Bigg), \nonumber
\end{align}
 is analytic in $s$ and bounded by $c_p^{\pm} (1+s^2)^{\frac12 |\IM(\lambda)|}$ locally uniformly in $p$.
\end{lemma}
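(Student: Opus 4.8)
The plan is to obtain $\Xi_p^\pm$ directly as the regularized Mellin transform of the explicit Whittaker representation \eqref{gather2}, computed term by term.

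First I would use the identity $\xi_p^\pm(k,x)=\xi_p^\pm(\varepsilon_k,|k|x)$ to reduce to $k=\varepsilon_k=\pm1$, so that every Whittaker function occurring has argument $2x$. For such $k$, formula \eqref{gather2} presents $\xi_p^\pm(\varepsilon_k,\cdot)$ as a linear combination, with $x$-independent matrix coefficients, of $\cJ_{\varepsilon_k\lambda,\mu+\frac12}(2x)$ and $\cJ_{\varepsilon_k\lambda,\mu-\frac12}(2x)$; the $\mu$ in the denominator of \eqref{gather2} is harmless because the combination $\cJ_{\varepsilon_k\lambda,\mu-\frac12}+\varepsilon_k\lambda\,\cJ_{\varepsilon_k\lambda,\mu+\frac12}$ has a removable zero at $\mu=0$, exactly as for the analogous $\cI$-expression in the proof of Lemma \ref{xi_zeta_props}. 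By linearity of the Mellin transform it therefore suffices to know $\int_0^\infty\e^{-0x}x^{-\frac12-\i s}\cJ_{\alpha,\beta}(2x)\,\D x$, and this is precisely what Lemma \ref{melino} provides: it shows that $\e^{-0x}\cJ_{\alpha,\beta}(2x)$ is a tempered distribution on $\R_+$, computes its Mellin transform as a product of Gamma functions with a Gauss hypergeometric function ${}_2\mathbf{F}_1(\cdots;2+\i0)$ (the $+\i0$ prescribing the boundary value across the cut $[1,\infty[$ of ${}_2\mathbf{F}_1$), and supplies the analyticity in $s$ together with a bound of order $(1+s^2)^{\frac12|\IM(\alpha)|}$.

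The second step is algebraic. I would apply Lemma \ref{melino} with $\alpha=\varepsilon_k\lambda$ and $\beta=\mu\pm\frac12$ — so the hypergeometric lower parameters become $2\mu+2$ and $2\mu$, as in the statement — substitute into \eqref{gather2}, and simplify. The overall power $\i^{\mp\varepsilon_k\mu-\frac32-\mu+\i s}$ and the factor $2^{\mu-1}$ collect the $\i^{\mp\varepsilon_k\mu}$ of \eqref{gather2} with the normalizations coming out of Lemma \ref{melino}; the prefactor $\Gamma(\frac12+\mu-\i s)/(N_p^\pm\mu)$ is then immediate. The pair of vectors $(-z,1)^\T$, $(z,1)^\T$ and the factor $\omega$ in front of the first hypergeometric appear once the coefficients in \eqref{gather2} are re-expressed through the coordinate identities $\omega+\lambda=-\mu/z$, $\omega-\lambda=-\mu z$ (valid off the zero fibre $\cZ$, hence on a dense set, and extended by continuity). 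Analyticity in $s$ is then transparent: in a strip about $\R$ the argument $\frac12+\mu-\i s$ of the outer Gamma has real part $\frac12+\RE(\mu)>0$ since $p\in\cM_{-\frac12}$, so no pole occurs, while the two ${}_2\mathbf{F}_1$ factors are entire in their numerator parameters. Finally the bound $c_p^\pm(1+s^2)^{\frac12|\IM(\lambda)|}$, locally uniform in $p$, is inherited from the corresponding bound in Lemma \ref{melino} with $\alpha=\varepsilon_k\lambda$ (note $|\IM(\alpha)|=|\IM(\lambda)|$), using that $z$, $\mu$, $\lambda$, $1/\mu$ (after the cancellation above) and $1/N_p^\pm$ (on $\cM\setminus\cE^\pm$) are locally bounded functions of $p$.

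The genuine analytic content — the Mellin–Barnes/Euler-type integral for the confluent hypergeometric function, the branch prescription, and the sharp polynomial degree $|\IM(\lambda)|$ of the $s$-growth — is packaged in Lemma \ref{melino}. Granting it, the only obstacle in the present proof is bookkeeping: matching the raw termwise output to the compact two-term expression of the statement, which needs a contiguous-relation rearrangement of the two hypergeometrics and careful tracking of the various powers of $\i$ and of $2$.
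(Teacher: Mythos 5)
Your proposal is correct and follows essentially the same route as the paper, which derives the lemma directly from Proposition \ref{melino} applied termwise to the trigonometric-Whittaker representation (\ref{gather1})--\eqref{gather2}, with the $\mu\to0$ removable singularity and the bound $c_p^\pm(1+s^2)^{\frac12|\IM(\lambda)|}$ inherited exactly as you describe. The remaining work in both cases is only the bookkeeping of constants and the coordinate identities relating $\omega\pm\lambda$, $\mu$ and $z$.
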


We define $\cU_p^{\pm, \pre}$, $p \in \cM \backslash \cE^{\pm}$, as
the integral operator $C_\mathrm{c}^{\infty}(\R_+, \C^2) \to C^{\infty}(\R) $
with the  kernel
\begin{equation}
\cU^\pm_p(k,x) = \frac{\e^{\frac12 \varepsilon_k \pi \lambda}}{ \sqrt{\pi}} \xi_p^{\pm}(k,x)^\T.
\end{equation}
By construction, the kernel of the spectral density operator factors as
\begin{equation}
\Pi_p(k ; x , y) = \cU_p^+(k,x)^{\T} \cU_p^-(k,y) = \cU_p^-(k,x)^{\T} \cU_p^+(k,y).
\label{eq:Pi_Factorization_2}
\end{equation}
We note also the relations
\begin{equation}
\cU_p^+(k,x) = \e^{- \i \varepsilon_k \pi \mu } S_{p} \, \cU_p^-(k,x) , \qquad \overline{\cU_{\overline p}^{\pm}(k,x)} = \cU_{p}^{\mp}(k,x) \label{eq:V_clutch}
\end{equation}
 and the intertwining property
\begin{equation}
(\cU_p^{\pm, \pre} D_p f)(k) = k (\cU_p^{\pm, \pre} f)(k), \qquad  f \in C_\mathrm{c}^\infty(\R_+, \C^2).
\end{equation}

Recall from Subsection
  \ref{Remarks about notation} that $J$ is the inversion and $A$ is the generator of 
dilations, and $K$ is the multiplication operator on $L^2(\mathbb{R})$
by the variable $k\in\mathbb{R}$.

Below we will consider level sets $\{ \lambda = \lambda_0 \} \subset
\cM_\frac12$. Recall from the discussion around equation
\eqref{eq:lambda_zero_decomp} that it is a submanifold for
$\lambda_0 \neq 0$, but for $\lambda_0 =0$ it is the union of three submanifolds singular along the intersection. We will say that a function on the locus $\{ \lambda =0 \}$ is holomorphic if its restriction to each of the three components is holomorphic.


\begin{proposition}
$\cU_p^{\pm, \pre}$ are densely defined closable operators $L^2(\R_+,
\C^2) \to L^2(\R)$ with the closures given by
\begin{equation}
\cU_p^{\pm} f(k) = \frac{\e^{\frac12 \varepsilon_k \pi \lambda}}{\sqrt{\pi}}  \Xi_p^{\pm \T}(\varepsilon_k, A) J f (|k|), \qquad k \in \R.
\label{eq:Ucl}
\end{equation}
Hence $\cU^\pm_p (1+A^2)^{- \frac12 |\IM(\lambda)|}$ is bounded. In
particular $\cU^\pm_p$ are bounded if $\lambda \in
\R$. If $\lambda_0 \in \R$, they form a holomorphic
  family of operators on the level set
  $\{ \lambda = \lambda_0 \} \backslash \cE^{\pm}$.
  Furthermore, $\cU_p^{\pm *} = \cU_{\overline p}^{\mp \T}$.
\end{proposition}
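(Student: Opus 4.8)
The plan is to pass to the Mellin representation, in which $A$ is a multiplication operator and the integral operators $\cU_p^{\pm,\pre}$ become, essentially, multiplication by the symbols $\Xi_p^\pm$; all the assertions can then be read off. The one delicate point is the passage to this representation, since the kernels $\xi_p^\pm(\varepsilon_k,\cdot)$ are only tempered distributions, not $L^2$ functions: this is precisely what Lemma \ref{melino} of Appendix \ref{app:operators_Rplus} is designed to handle, and everything after it is functional calculus and bookkeeping.

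First I would establish \eqref{eq:Ucl} on $C_\mathrm{c}^\infty(\R_+,\C^2)$. Using the homogeneity identity $\xi_p^\pm(k,x)=\xi_p^\pm(\varepsilon_k,|k|x)$ noted just before Lemma \ref{xi_Mel}, for $f\in C_\mathrm{c}^\infty(\R_+,\C^2)$ we have
\[
(\cU_p^{\pm,\pre}f)(k)=\frac{\e^{\frac12\varepsilon_k\pi\lambda}}{\sqrt\pi}\int_0^\infty\xi_p^\pm(\varepsilon_k,|k|x)^\T f(x)\,\D x,\qquad k\in\R^\times .
\]
The substitution $x\mapsto 1/x$ rewrites the integral as the value at $|k|$ of the multiplicative convolution of the kernel $\xi_p^\pm(\varepsilon_k,\cdot)^\T$ with $Jf$, and $Jf\in C_\mathrm{c}^\infty(\R_+,\C^2)$ whenever $f$ is. Since, by Appendix \ref{app:operators_Rplus}, the Mellin transform intertwines $A$ with multiplication by the dual variable, and since by Lemma \ref{xi_Mel} the Mellin transform of $\xi_p^\pm(\varepsilon_k,\cdot)$ is the function $\Xi_p^\pm(\varepsilon_k,\cdot)$ of \eqref{eq:xi_Mellin}, Lemma \ref{melino} identifies this convolution with $\bigl(\Xi_p^{\pm\T}(\varepsilon_k,A)Jf\bigr)(|k|)$. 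This gives \eqref{eq:Ucl} on $C_\mathrm{c}^\infty(\R_+,\C^2)$, with both sides in $L^2(\R)$ (the membership near $k=0$ using $\RE(\mu)>-\tfrac12$).

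Next I would let $\cU_p^\pm$ denote the operator given by the right-hand side of \eqref{eq:Ucl} on its maximal domain. Because $\Xi_p^\pm(\varepsilon_k,s)$ is analytic in $s$ and bounded by $c_p^\pm(1+s^2)^{\frac12|\IM(\lambda)|}$ (Lemma \ref{xi_Mel}), and because $J$ commutes with $(1+A^2)^c$ and satisfies $J^2=1$, this domain is $\Dom\bigl((1+A^2)^{\frac12|\IM(\lambda)|}\bigr)$. Conjugating by the Mellin transform, $\cU_p^\pm$ becomes — up to unitaries and the bounded, boundedly invertible factor $\e^{\frac12\varepsilon_k\pi\lambda}$ — multiplication by the matrix-valued, polynomially bounded function $\Xi_p^{\pm\T}(\varepsilon_k,s)$, which is closed on its maximal domain; hence $\cU_p^\pm$ is closed. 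Since $(1+A^2)\ge 1$, the symbol bound gives that $\Xi_p^{\pm\T}(\varepsilon_k,A)(1+A^2)^{-\frac12|\IM(\lambda)|}$ is bounded, and therefore so is $\cU_p^\pm(1+A^2)^{-\frac12|\IM(\lambda)|}$; for $\lambda\in\R$ the exponent vanishes and $\cU_p^\pm$ itself is bounded. Finally, $C_\mathrm{c}^\infty(\R_+,\C^2)$ is a core for $(1+A^2)^{\frac12|\IM(\lambda)|}$ — on the Mellin side this is the density of Paley--Wiener functions in $L^2(\R,(1+s^2)^{|\IM(\lambda)|}\,\D s)$ — hence a core for $\cU_p^\pm$; combined with the previous step, this shows that $\cU_p^{\pm,\pre}$ is closable with closure $\cU_p^\pm$ given by \eqref{eq:Ucl}.

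For the holomorphy claim, fix $\lambda_0\in\R$; then each $\cU_p^\pm$ with $p\in\{\lambda=\lambda_0\}\setminus\cE^\pm$ is bounded, and its dependence on $p$ enters only through the symbol $\Xi_p^\pm(\varepsilon_k,s)$, which by the explicit formula in Lemma \ref{xi_Mel} is holomorphic in $p$ on $\{\lambda=\lambda_0\}\setminus\cE^\pm$ (the apparent singularity at $\mu=0$ being removable, as for $\xi_p^\pm$ in the proof of Lemma \ref{xi_zeta_props}) and bounded by $c_p^\pm$ locally uniformly in $p$. Writing a matrix element $(h\,|\,\cU_p^\pm f)$ via the Mellin transform as an absolutely convergent integral, its integrand is holomorphic in $p$ for almost every value of the integration variable and is dominated, locally uniformly in $p$, by an $L^1$ function; dominated convergence and Morera's theorem then give holomorphy of $p\mapsto(h\,|\,\cU_p^\pm f)$, hence — the family being locally bounded — norm holomorphy. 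When $\lambda_0=0$ the level set is reducible and holomorphy is meant componentwise, as stipulated above; the same argument applies on each of the three components. Finally, for the adjoint, \eqref{eq:V_clutch} says precisely that $C\,\cU_{\overline p}^{\pm,\pre}\,C=\cU_p^{\mp,\pre}$, where $C$ denotes pointwise complex conjugation; as $C$ is anti-unitary it commutes with closure, so $C\,\cU_{\overline p}^\pm\,C=\cU_p^\mp$, and replacing $p$ by $\overline p$, $\overline{\cU_p^\pm}:=C\,\cU_p^\pm\,C=\cU_{\overline p}^\mp$. On the other hand, from $(u\,|\,v)=\langle\overline u\,|\,v\rangle$ one gets the identity $T^*=(\overline T)^\T$ valid for every densely defined $T$; applied with $T=\cU_p^\pm$ it yields $\cU_p^{\pm*}=(\overline{\cU_p^\pm})^\T=(\cU_{\overline p}^\mp)^\T=\cU_{\overline p}^{\mp\T}$.
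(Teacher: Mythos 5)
Your argument is correct and follows essentially the same route as the paper: the closure formula and the boundedness claims come from the Mellin symbol $\Xi_p^\pm$ of Lemma \ref{xi_Mel} together with the machinery of Appendix \ref{app:operators_Rplus} (your convolution step is precisely the content of Lemma \ref{fun_calc}), holomorphy comes from matrix elements on $C_\mathrm{c}^\infty$ plus locally uniform boundedness and density, and the adjoint identity from \eqref{eq:V_clutch}. The only imprecision is your assertion that the maximal domain of the right-hand side of \eqref{eq:Ucl} equals $\Dom\big((1+A^2)^{\frac12|\IM(\lambda)|}\big)$ — the symbol bound only gives one inclusion — but this is harmless, since the core property of $C_\mathrm{c}^\infty(\R_+,\C^2)$ still follows from the same density argument after a spectral truncation in $A$, the symbol being locally bounded.
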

\begin{proof}
The first part follows from Lemma \ref{xi_Mel} and discussion in Appendix \ref{app:operators_Rplus}. Now fix $\lambda_0 \in \R$ and consider $p$ in a component $S$ of the level set $\{ \lambda = \lambda_0 \} \backslash \cE^{\pm}$. If $f \in C_c^{\infty}(\R)$, $g \in C_c^{\infty}(\R_+, \C^2)$, then $( f | \cU_{p}^{\pm} g )$ is a holomorphic function of $p \in S$. Since $C_c^{\infty}$ spaces are dense in $L^2$ and $\cU_p^{\pm}$ are bounded locally uniformly in $p$, $\cU_p^{\pm}$ is a holomorphic operator-valued function. The last claim follows from the formula \eqref{eq:V_clutch}.
\end{proof}

In a sense, operators $\cU^\pm_p$ diagonalize $D_p$ for $p \in \cM_{-
  \frac12} \backslash \cE^\pm$. If $p = \overline p$, then $D_p$ are
self-adjoint and $\cU_p^\pm$ are unitary. If we assume only that
$\lambda$ is real, then  $\cU_p^\pm$ are still bounded with bounded inverses, so they are almost as good as in the self-adjoint case. This will be made precise below.

\begin{proposition} \label{spectral_measure}
If $p = \overline p$, then for any bounded interval $[a,b] \subset \mathbb R$ and $f,g \in C_c^{\infty}(\R_+,\C^2)$ 
\begin{equation}
( g|\one_{[a,b]}(D_p) f ) = \int_a^b \int_{0}^{\infty} \int_{0}^{\infty} g(x)^* \Pi_{p}(k;x,y )  f(y ) \D y  \D x \D k.
\label{eq:spectral_projectors}
\end{equation}
Besides, $\cU_{p}^\pm$ is a unitary operator and
\begin{equation}
D_{p} = \cU_{p}^{\pm *} K \cU_{p}^{\pm}. \label{pasd}
\end{equation}
\end{proposition}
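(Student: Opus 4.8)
The plan is to combine Stone's formula, the limiting absorption principle (Proposition \ref{limiting_absorption}), and the explicit form \eqref{eq:Ucl} of $\cU_p^\pm$. First observe that $p=\overline p$ forces $\omega,\lambda,\mu\in\R$ and $[a:b]=[\overline a:\overline b]$, and that then $p\notin\cE^+\cup\cE^-$: the zero locus of $N_p^\pm$ consists of points with $z=\mp\i$ (not real) and points with $\mu\mp\i\lambda\in-\N$, which for $\mu,\lambda\in\R$ would require $\lambda=0$, $\mu\le-1$, contradicting $\RE(\mu)>-\frac{1}{2}$ on $\cM_{-\frac{1}{2}}$. Hence, by the Corollary after Theorem \ref{thoe}, $D_p^*=D_{\overline p}=D_p$ is self-adjoint, $\sigma(D_p)=\R$, $\sigma_\pp(D_p)=\emptyset$, and $|S_p|=1$. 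Now I would prove \eqref{eq:spectral_projectors} from Stone's formula: for $f,g\in C_\mathrm{c}^\infty(\R_+,\C^2)$, which lies in every $L^2_s(\R_+,\C^2)$,
\begin{equation}
(g|\one_{[a,b]}(D_p)f)=\lim_{\epsilon\downarrow0}\frac{1}{2\pi\i}\int_a^b\Big(g\,\Big|\,\big(G^{\bowtie}_p(k+\i\epsilon)-G^{\bowtie}_p(k-\i\epsilon)\big)f\Big)\,\D k,
\end{equation}
where, since $\sigma_\pp(D_p)=\emptyset$, the Stone average $\frac{1}{2}(\one_{[a,b]}+\one_{]a,b[})(D_p)$ reduces to $\one_{[a,b]}(D_p)$. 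By Proposition \ref{limiting_absorption} the boundary values $G^{\bowtie}_p(k\pm\i0)$ exist in $B(L^2_s,L^2_{-s})$, depend continuously on $k$, and their norm lies in $L^1_{\loc}(\R)$ — including near $k=0$, where for $\RE(\mu)\le0$ one has the integrable bound $O(|k|^{2\RE(\mu)})$ with $2\RE(\mu)>-1$. Dominated convergence then lets me pass the limit inside the integral, giving $(g|\one_{[a,b]}(D_p)f)=\int_a^b(g|\Pi_p(k)f)\,\D k$; inserting the kernel of $\Pi_p(k)$ from \eqref{eq:Pi_factorization} is exactly \eqref{eq:spectral_projectors}.

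Next I would rewrite the integrand through $\cU_p^\pm$. Using the factorisation \eqref{eq:Pi_Factorization_2} of the kernel of $\Pi_p(k)$, the conjugation relation $\overline{\cU_{\overline p}^\pm(k,x)}=\cU_p^\mp(k,x)$ from \eqref{eq:V_clutch}, and $p=\overline p$, a direct rearrangement of the double integral gives, for $f,g\in C_\mathrm{c}^\infty(\R_+,\C^2)$,
\begin{equation}
\int_0^\infty\!\!\int_0^\infty g(x)^*\,\Pi_p(k;x,y)\,f(y)\,\D y\,\D x=\overline{(\cU_p^\pm g)(k)}\,(\cU_p^\pm f)(k),
\end{equation}
hence $(g|\one_{[a,b]}(D_p)f)=(\cU_p^\pm g\,|\,\one_{[a,b]}(K)\,\cU_p^\pm f)$. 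Since $C_\mathrm{c}^\infty$ is dense in $L^2(\R_+,\C^2)$, $\cU_p^\pm$ is bounded (because $\lambda\in\R$, by \eqref{eq:Ucl}), and $\one_{[a,b]}(D_p)$, $\one_{[a,b]}(K)$ are bounded, this extends to all $f,g$, so $\one_{[a,b]}(D_p)=\cU_p^{\pm*}\,\one_{[a,b]}(K)\,\cU_p^\pm$ for every bounded interval. Letting $[a,b]\uparrow\R$ (strong convergence of both sides to $\one$, resp.\ $\cU_p^{\pm*}\cU_p^\pm$) yields $\cU_p^{\pm*}\cU_p^\pm=\one$, i.e.\ $\cU_p^\pm$ is an isometry.

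To upgrade this to unitarity I would invoke the explicit formula \eqref{eq:Ucl}: up to the unitaries $J$, the Mellin transformation $L^2(\R_+,\C^2)\to L^2(\R,\C^2)$ (which turns $A$ into multiplication by $s$), and the obvious unitary identification $L^2(\R)\cong L^2(\R_+)\oplus L^2(\R_+)$ coming from $k\mapsto(|k|,\varepsilon_k)$, the operator $\cU_p^\pm$ becomes multiplication by the $2\times2$ matrix-valued function $V_p^\pm(s)$ whose two rows are $\pi^{-1/2}\e^{\pi\lambda/2}\,\Xi_p^{\pm\T}(+1,s)$ and $\pi^{-1/2}\e^{-\pi\lambda/2}\,\Xi_p^{\pm\T}(-1,s)$, with $\Xi_p^{\pm}(\pm1,s)$ the Mellin transforms of Lemma \ref{xi_Mel}. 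The just-proved identity $\cU_p^{\pm*}\cU_p^\pm=\one$ translates into $V_p^\pm(s)^*V_p^\pm(s)=\one$ for a.e.\ $s\in\R$; but $V_p^\pm(s)$ is a \emph{square} matrix, so a left inverse is already a two-sided inverse and $V_p^\pm(s)$ is unitary a.e., whence $\cU_p^\pm$ is unitary. Finally, with $\cU_p^\pm$ unitary, the relation $\one_{[a,b]}(D_p)=\cU_p^{\pm*}\,\one_{[a,b]}(K)\,\cU_p^\pm$ for all $a<b$ says that $\cU_p^\pm$ intertwines the projection-valued spectral measures of $D_p$ and $K$, and therefore $D_p=\cU_p^{\pm*}K\cU_p^\pm$, which is \eqref{pasd}.

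The only genuinely technical step is the first paragraph: justifying Stone's formula and the interchange of limit and integral, which is carried entirely by the quantitative estimates of Proposition \ref{limiting_absorption}, with the mild extra care that the kernel remains integrable in $B(L^2_s,L^2_{-s})$ norm down to $k=0$. I also note a pitfall the above scheme deliberately avoids: for $|\RE(\mu)|<\frac{1}{2}$ the space $C_\mathrm{c}^\infty(\R_+,\C^2)$ is not a core for $D_p$, so the pointwise intertwining $\cU_p^{\pm,\pre}D_p=K\cU_p^{\pm,\pre}$ cannot be extended by graph-norm density; the formula $D_p=\cU_p^{\pm*}K\cU_p^\pm$ is instead obtained from the spectral-measure identity, where this difficulty does not arise. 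Everything else — the matrix bookkeeping in \eqref{eq:V_clutch}, \eqref{eq:Pi_Factorization_2} and the density arguments — is routine.
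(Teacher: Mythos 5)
Your argument is correct, and its first half (Stone's formula, passage to the limit, the identity $(g|\one_{[a,b]}(D_p)f)=(\cU_p^\pm g|\one_{[a,b]}(K)\cU_p^\pm f)$ and hence isometry of $\cU_p^\pm$) coincides with the paper's proof; the only technical variation there is that you dominate the integrand via the $B(L^2_s,L^2_{-s})$ bounds of Proposition \ref{limiting_absorption}, while the paper uses the direct pointwise kernel bound $|G_p^{\bowtie}(k\pm\i\epsilon;x,y)|\leq c|k|^{\RE(\mu)-|\RE(\mu)|}$ on the compact supports of $f,g$ — both give an $\epsilon$-uniform integrable majorant, provided you note that the small-$k$ estimate of Proposition \ref{limiting_absorption} holds uniformly as complex $k\to0$. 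Where you genuinely depart from the paper is the last step: the paper proves $\cU_p^\pm\cU_p^{\pm*}=1$ by referring to the proof of (3.37) of Theorem 3.16 in \cite{DeRi18_01}, whereas you give a self-contained argument, using \eqref{eq:Ucl} and Lemma \ref{xi_Mel} to realize $\cU_p^\pm$ (after Mellin transformation, the reflection $J$, and the splitting $L^2(\R)\cong L^2(\R_+)\oplus L^2(\R_+)$) as multiplication by a $2\times2$ matrix-valued function $V_p^\pm(s)$, so that the already-established isometry gives $V_p^\pm(s)^*V_p^\pm(s)=1$ a.e., and squareness of the matrix upgrades this to unitarity. This is a nice gain in self-containedness; the only bookkeeping to be careful about is the normalization of the Mellin transform ($\frac{1}{\sqrt{2\pi}}\mathfrak F$ is the unitary one), which only rescales the identity $V^*V=\mathrm{const}\cdot 1$ and does not affect the square-matrix argument. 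Your preliminary observation that $p=\overline p$ forces $p\notin\cE^+\cup\cE^-$, and your closing remark that \eqref{pasd} should be extracted from the intertwining of spectral measures rather than from graph-norm density of $C_\mathrm{c}^\infty$ (which is not a core when $|\RE(\mu)|<\frac12$), are both correct and consistent with the paper's intent.
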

\begin{proof}
Since the point spectrum of $D_{p}$ is trivial for $p=\overline p$, Stone's formula gives
\begin{equation}
( g|\one_{[a,b]}(D_p) f ) = \lim_{\epsilon \downarrow 0} \frac{1}{2 \pi \i} \int_{[a,b] \times \R_+^2} g(x)^* (G_{p}^{\bowtie}(k + \i \epsilon;x,y ) - G_{p}^{\bowtie}(k - \i \epsilon;x,y )) f(y ) \D y  \D x \D k.
\end{equation}
It follows from the asymptotics of functions $\xi_p^{\pm}$ and
$\zeta_p$ that on $[a,b] \times \supp(f) \times \supp(g)$ we have
$|G_{p}^{\bowtie}( k \pm \i \epsilon;x,y )| \leq c |k|^{
  \RE(\mu)-|\RE(\mu)|}$ with $c$ independent of $k$. This function is
integrable, because $ \RE(\mu)-|\RE(\mu)| > -1$. Therefore by the
dominated convergence theorem, the limit $\epsilon \downarrow 0$ may
be taken under the integral.
This proves \eqref{eq:spectral_projectors}.

Let us prove the unitarity of $\cU_p^\pm$.
Let $f \in C_c^{\infty}(\R_+,\C^2)$ and let $[a,b]$ be a bounded interval. Then
\begin{align}
 \int_a^b |\cU_{p}^{\pm} f(k)|^2 \D k &= \int_0^{\infty} \int_a^b \int_0^{\infty} f(x)^*\Pi_{p}(k;x,y ) f(y ) \D y  \D k \D x \label{eq:parseval}  \\
&= \int_0^{\infty} f(x)^* (\one_{[a,b]}(D_p)f)(x) \D x = ( f| \one_{[a,b]}(D_p) f ), \nonumber
\end{align}
where in the first step we used the definition of $\cU_{p}^\pm$,
conjugation formula \eqref{eq:V_clutch} and the
factorization \eqref{eq:Pi_Factorization_2}. The order of integrals is immaterial, because the integrand is compactly supported and its only possible singularity (at $k=0$, if $0 \in [a,b]$) is integrable. In the second step we used Proposition~\ref{spectral_measure}. Taking the limit $b \to \infty$, $a \to - \infty$ we find
\begin{equation}
\int_{- \infty}^{\infty} |\cU_{p}^{\pm} f(k)|^2 \D k = ( f| f ).
\end{equation}
Hence $\cU_{p}^{\pm}$ is an isometry. Equation \eqref{eq:parseval} implies that
\begin{equation}
\one_{[a,b]}(D_p) = \cU_p^{\pm *} \one_{[a,b]}(K) \cU_p^{\pm}.
\end{equation}

It remains to show that $\cU_p^{\pm}\cU_p^{\pm*}=1$. The proof of
this fact follows closely the proof of (3.37) of Theorem 3.16 in
\cite{DeRi18_01}. 
\end{proof}

\begin{proposition}
If $p \in \cM_{- \frac12} \backslash (\cE^+ \cup \cE^-)$ is such that
$\lambda \in \R$, then $(\cU_p^\pm)^{-1}=\cU^{\mp\T}$ and
\begin{subequations}
\begin{gather}
D_p = \cU_p^{\pm -1} K \cU_p^{\pm }. \label{eq:holo_diag}
\end{gather}
\end{subequations}
In particular $D_p$ is similar to a self-adjoint operator.
\end{proposition}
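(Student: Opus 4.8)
The plan is to deduce the statement from the self-adjoint case $p=\overline p$ of Proposition~\ref{spectral_measure} by holomorphic continuation along the level set $\{\lambda=\lambda_0\}$, where $\lambda_0:=\lambda(p)\in\R$. Because $\lambda_0$ is real, $\cU_p^{\pm}$ is bounded and, by the previous proposition, depends holomorphically on $p$ on $\{\lambda=\lambda_0\}\backslash\cE^{\pm}$; the same holds for $\cU_p^{\mp\T}$, whose integral kernel is assembled from $\xi_p^{\mp}$ (holomorphic off $\cE^{\mp}$), or equivalently since $\cU_p^{\mp\T}=\cU_{\overline p}^{\pm*}$. Hence on the open set $\{\lambda=\lambda_0\}\backslash(\cE^+\cup\cE^-)$ the operator-valued functions $p\mapsto\cU_p^{\pm}\cU_p^{\mp\T}-1$ and $p\mapsto\cU_p^{\mp\T}\cU_p^{\pm}-1$ are holomorphic, and for each fixed $k_0\in\C\backslash\R$ so is $p\mapsto\cU_p^{\pm}(D_p-k_0)^{-1}\cU_p^{\mp\T}-(K-k_0)^{-1}$, since $p\mapsto(D_p-k_0)^{-1}=G^{\bowtie}_p(k_0)$ is holomorphic on this set by Theorem~\ref{thoe}.

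Next I would check that the connected component of $\{\lambda=\lambda_0\}\backslash(\cE^+\cup\cE^-)$ containing $p$ meets the self-adjoint locus $\{p=\overline p\}$ in a nonempty real-analytic arc. This uses only the elementary geometry of the level sets: for $\lambda_0\neq0$ the curve $\{\lambda=\lambda_0\}\cap\cM_{-\frac12}$ is a connected rational curve from which only the two points lying over $\omega=0$ are deleted, and the locus $z\in\R$ supplies the arc; for $\lambda_0=0$ the level set decomposes as $\cZ\cup\{a=b\}\cup\{a=-b\}$, with the arc furnished by $\mathbb{RP}^1\subset\cZ$ on the first component and by real $\mu$ on each of the other two. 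Every self-adjoint $p\in\cM_{-\frac12}$ automatically avoids $\cE^+\cup\cE^-$, so Proposition~\ref{spectral_measure} applies along the arc: there $\cU_p^{\pm}$ is unitary, $\cU_p^{\pm*}=\cU_p^{\mp\T}$, and $D_p=\cU_p^{\pm*}K\cU_p^{\pm}$, whence all three holomorphic differences above vanish. Since the arc has accumulation points in the connected one-dimensional component, the identity principle (applied to the scalar functions obtained by pairing the operators with fixed vectors $f,g\in L^2$) forces these differences to vanish identically. This yields $\cU_p^{\pm}\cU_p^{\mp\T}=\cU_p^{\mp\T}\cU_p^{\pm}=1$, i.e.\ the stated formula $(\cU_p^{\pm})^{-1}=\cU_p^{\mp\T}$, together with $\cU_p^{\pm}(D_p-k_0)^{-1}(\cU_p^{\pm})^{-1}=(K-k_0)^{-1}$.

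The diagonalization then follows: the closed operator $(\cU_p^{\pm})^{-1}K\cU_p^{\pm}$ (a conjugate of $K$ by a bounded bijection, hence closed) has resolvent $(\cU_p^{\pm})^{-1}(K-k_0)^{-1}\cU_p^{\pm}=(D_p-k_0)^{-1}$ at $k_0$; two closed operators that share a point of their resolvent sets and the resolvent there must coincide, so $D_p=(\cU_p^{\pm})^{-1}K\cU_p^{\pm}=\cU_p^{\mp\T}K\cU_p^{\pm}$. As $K$ is self-adjoint and $\cU_p^{\pm}$ is a bounded bijection, $D_p$ is similar to a self-adjoint operator, which is the last claim.

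The main obstacle is the topological input of the second paragraph: one must verify that, for every $\lambda_0\in\R$, each connected component of $\{\lambda=\lambda_0\}\backslash(\cE^+\cup\cE^-)$ genuinely contains self-adjoint points accumulating somewhere, which requires the explicit (if elementary) description of these level sets as rational curves --- reducible when $\lambda_0=0$ --- and a check that deleting the finitely many excised points, all of which lie in the region $\RE(\mu)>-\frac12$, does not disconnect them. Once this is granted, everything else is routine holomorphy and the identity theorem. An alternative would be to imitate directly the argument establishing surjectivity of $\cU^{\pm}$ for the Whittaker operator in \cite{DeRi18_01}, but the continuation argument above appears cleanest in the present setting.
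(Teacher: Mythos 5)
Your proposal is correct and is essentially the paper's own argument: fix $\lambda_0\in\R$, observe that $\cU_p^{\pm}\cU_p^{\mp\T}-1$ and the conjugated resolvent difference are holomorphic, bounded-operator-valued families on $\{\lambda=\lambda_0\}\setminus(\cE^+\cup\cE^-)$, note they vanish on the real (self-adjoint) points by Proposition \ref{spectral_measure}, and conclude by the identity theorem, then pass from the resolvent identity to \eqref{eq:holo_diag}. The only difference is one of detail: you spell out the component/accumulation-point geometry of the level sets and the final resolvent-to-operator step, which the paper asserts briefly.
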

\begin{proof}
We fix $\lambda_0 \in \R$. Then $\cU_p^{\pm} \cU_p^{\mp \T}-1$ and $\cU_p^{\pm \T} \cU_p^{\mp}-1$ form
holomorphic families of bounded operators on (one-dimensional) $\{
\lambda = \lambda_0 \} \backslash (\cE^+ \cup \cE^-)$. They vanish on the set of real points, which has an accumulation point in each component of the domain. Thus they vanish everywhere. 

Now take $k \in \C \backslash \R$. Arguing as in the previous paragraph we obtain
\begin{equation}
\cU_p^{\pm -1} (K-k)^{-1} \cU_p^\pm = (D_p - k)^{-1},
\end{equation}
from which \eqref{eq:holo_diag} follows immediately.
\end{proof}

\begin{question} If $\lambda\in\R$, then $D_p$ is similar to a
  self-adjoint operator. Hence it enjoys a very good functional
  calculus--for any bounded Borel function $f$ the operator $f(D_p)$ is well defined
  and bounded.

  If $\lambda \notin \R$ this is probably no longer true, because the
diagonalizing operators $\cU_p^\pm$ are unbounded. However, they are
unbounded  in a controlled manner: they are continuous on the domain
of some power of the dilation operator. One may hope that this is
sufficient to allow for a rich functional calculus for Dirac-Coulomb
Hamiltonians. We pose an open problem: for a~given  $\IM(\lambda)$,
characterize
functions that allow for a functional calculus for $D_p$.
In particular, one could ask  when $\i D_p$ generates a $C^0$ semigroup of bounded operators.
\end{question}

\section{Numerical range and dissipative properties}
\label{Numerical range and dissipative properties}
In this section we give a complete analysis
of the numerical range of various
realizations of 1d Dirac-Coulomb Hamiltonians studied in this paper.

\begin{proposition} \label{num_range_minimal}
One of the following mutually exclusive statements is true:
\begin{enumerate}
    \item $\omega$ and $\lambda$ are real. Then $\Num(D_{\omega, \lambda}^\pre) = \R$.
    \item $|\IM (\omega)| < |\IM(\lambda)|$. Then $\Num(D_{\omega, \lambda}^\pre) = \C_{- \sgn(\IM(\lambda))}$.
    \item $|\IM (\omega)| = |\IM (\lambda)| \neq 0$. Then $\Num(D_{\omega, \lambda}^\pre) = \C_{- \sgn(\IM(\lambda))} \cup \{ 0 \}$.
    \item $|\IM (\omega)| > |\IM (\lambda)|$. Then $\Num(D_{\omega, \lambda}^\pre) = \C$.
\end{enumerate}
The same is true with $D_{\omega, \lambda}^\pre$ replaced by $D_{\omega, \lambda}^{\min}$ throughout.
\end{proposition}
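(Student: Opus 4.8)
The plan is to compute the numerical range of $D_{\omega,\lambda}^{\pre}$ directly from the quadratic form, exploiting homogeneity, and then transfer the result to $D_{\omega,\lambda}^{\min}$ using continuity and the fact that $\Num$ behaves well under operator closure. First I would write, for $f\in C_\mathrm{c}^\infty(\R_+,\C^2)$ with $\|f\|=1$,
\begin{equation}
(f|D_{\omega,\lambda}^{\pre}f)=\int_0^\infty f(x)^*\begin{bmatrix}-\frac{\lambda+\omega}{x} & -\partial_x\\ \partial_x & -\frac{\lambda-\omega}{x}\end{bmatrix}f(x)\,\D x,
\end{equation}
and separate real and imaginary parts. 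The antihermitian part of the matrix symbol is $\begin{bmatrix}-\i\,\IM(\lambda+\omega)/x & -\partial_x\\ \partial_x & -\i\,\IM(\lambda-\omega)/x\end{bmatrix}$ up to the boundary term in $\int \partial_x(\cdots)$, which vanishes on $C_\mathrm{c}^\infty$. So the imaginary part of $(f|D^{\pre}_{\omega,\lambda}f)$ equals $-\int_0^\infty \tfrac{1}{x}\bigl(\IM(\omega)(|f_1|^2-|f_2|^2)+\IM(\lambda)(|f_1|^2+|f_2|^2)\bigr)\D x$, where $f=(f_1,f_2)$. The key point is that the pointwise coefficient of $\tfrac1x$, namely $\IM(\omega)(|f_1|^2-|f_2|^2)+\IM(\lambda)(|f_1|^2+|f_2|^2)$, ranges over exactly the interval determined by comparing $|\IM(\omega)|$ and $|\IM(\lambda)|$: it is always of one sign when $|\IM(\lambda)|>|\IM(\omega)|$, can vanish (but keeps one sign) when equality holds, and takes both signs when $|\IM(\omega)|>|\IM(\lambda)|$; when both imaginary parts vanish it is identically zero, giving $\Num\subset\R$, and one checks $\R$ is attained using, e.g., real approximations of eigenfunctions of $D^{\max}$ at real $k$, or simply that $D^{\pre}_{\omega,\lambda}$ is then Hermitian with essential spectrum $\R$.

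Next I would pin down the real part and the openness/shape. Since $D_{\omega,\lambda}^{\pre}$ is homogeneous of degree $-1$, for $f$ in its domain so is $U_\tau f$, and $(U_\tau f|D^{\pre}U_\tau f)=\e^{-\tau}(f|D^{\pre}f)$ with $\|U_\tau f\|=\|f\|$; thus $\Num(D^{\pre}_{\omega,\lambda})$ is invariant under multiplication by positive reals, i.e.\ it is a union of open rays from the origin (plus possibly $0$). Combined with convexity of the numerical range (Toeplitz--Hausdorff), a scale-invariant convex subset of $\C$ is either $\{0\}$, a ray, an open half-plane (possibly with $0$), an open sector, or all of $\C$ (and similar with boundary pieces). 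To decide which, I would produce explicit test functions: scaling down a fixed bump localizes near $0$ and makes the $\tfrac1x$ term dominate, so that $\IM(f|D^{\pre}f)/\|f\|^2$ can be made as large (in the achievable sign(s)) as we like, while the $-\partial_x$ part contributes a real quantity that can be tuned by adding oscillations $\e^{\i\kappa x}$ to $f$, sweeping $\RE(f|D^{\pre}f)$ over all of $\R$. This shows: in case (1) $\Num=\R$; in case (4), where the imaginary part coefficient takes both signs and can be made arbitrarily large in modulus, $\Num=\C$; and in cases (2),(3) we get the open half-plane $\C_{-\sgn\IM(\lambda)}$, with $0$ included precisely in case (3) because there a suitable $f$ (supported where $|f_1|=|f_2|$ in the right proportion) kills the imaginary part while the real part still spans $\R$, whereas in case (2) the imaginary part is bounded away from $0$ relative to $\|f\|^2$ only after — wait, it is not bounded away, but it is strictly one sign and never zero, and one checks $0\notin\Num$ by a strict inequality $|\IM(f|D^{\pre}f)|>0$ for $f\neq0$.

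Finally, to pass from $D^{\pre}$ to $D^{\min}$: we have $\Num(D^{\pre})\subset\Num(D^{\min})$ trivially, and conversely any $g\in\Dom(D^{\min})$ is a graph-norm limit of $f_i\in C_\mathrm{c}^\infty$, so $(f_i|D^{\pre}f_i)\to(g|D^{\min}g)$ and $\|f_i\|\to\|g\|$; hence $\Num(D^{\min})$ is contained in the closure of $\Num(D^{\pre})$. In cases (1) and (4) this already gives equality. In cases (2) and (3) one must rule out boundary points of the half-plane $\{\IM z=0\}$ being added (beyond the origin in case (3)): for that I would use the explicit strict sign inequality derived above, noting it is stable under graph-norm limits because the imaginary part is controlled by $\int\tfrac1x|f|^2$, which is finite on $\Dom(D^{\min})\supseteq H_0^1$ by Hardy's inequality and depends continuously on $f$ in the graph norm (using that $D^{\min}$ controls $\partial_x f$ and hence, via the off-diagonal terms, $\tfrac1x f$). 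The main obstacle I anticipate is exactly this last point — showing no spurious boundary points appear for $D^{\min}$ in the borderline imaginary-part situations — which requires the Hardy-type control of $\int \tfrac1x |f|^2$ uniformly along graph-norm approximating sequences; everything else is a straightforward matching of test functions to the claimed regions.
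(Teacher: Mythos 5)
Your starting point coincides with the paper's: the integration by parts giving $\IM (f|D_{\omega,\lambda}f)=-\IM(\lambda+\omega)\int_0^\infty\frac{|f_1|^2}{x}\D x-\IM(\lambda-\omega)\int_0^\infty\frac{|f_2|^2}{x}\D x$ and the resulting inclusions are exactly the paper's first step, and your $\e^{\i\kappa x}$-modulation idea (which sweeps $\RE(f|Df)$ over a full horizontal line at fixed $\IM(f|Df)$, provided $\IM\int\overline{f_2}f_1\,\D x\neq0$) is a genuinely nice alternative to the paper's translated bumps plus convexity for saturating the half-plane. But the saturation part of your argument has real gaps. First, ``scaling down a bump makes the $\tfrac1x$ term dominate'' cannot work: $D_{\omega,\lambda}$ is homogeneous of degree $-1$, so dilations rescale the derivative and Coulomb contributions identically and only move the numerical value along its ray; moreover by Hardy's inequality the Coulomb form is bounded by $2\|f\|\,\|f'\|$, so it never dominates in size --- one must instead make the derivative contribution cancel (the paper uses $f_{\pm,t}(x)=(\varphi(x-t),\pm\varphi'(x-t))$ with $t\to\infty$ to kill the Coulomb term and control the real part, which also settles case 1, where your appeal to eigenfunctions of $D^{\max}$ or to the essential spectrum is left unjustified). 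Second, your case 3 is wrong: there exactly one of $\IM(\lambda+\omega),\IM(\lambda-\omega)$ vanishes, so $\IM(f|Df)=0$ forces the component multiplying the nonzero coefficient to vanish identically --- choosing $|f_1|=|f_2|$ does \emph{not} kill the imaginary part --- and once that component is zero the off-diagonal derivative term vanishes too, so the real part is pinned to a multiple of $\omega-\lambda$ (resp.\ $-\omega-\lambda$) and does not ``span $\R$''; as written your sketch would yield $\R\subset\Num$, contradicting the very equality $\Num=\C_{-\sgn\IM(\lambda)}\cup\{0\}$ you are proving. (In case 4, values with both signs of $\IM$ and large modulus do not by themselves give $\Num=\C$; the paper arranges four values whose convex hull contains $0$ in its interior, and your modulation trick could serve here, but you do not carry it out.)

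The passage to $D^{\min}$ is the other genuine gap. You propose to rule out real values by graph-norm continuity of $\int\frac{|f|^2}{x}\D x$, ``using that $D^{\min}$ controls $\partial_xf$ and hence $\tfrac1x f$''. That control fails precisely at $|\RE(\mu)|=\tfrac12$, which does occur inside cases 2 and 3: there $\Dom(D_{\omega,\lambda}^{\min})\supsetneq H_0^1(\R_+,\C^2)$ (cf.\ Proposition \ref{dom_prop}, which requires $|\RE(\mu)|\neq\tfrac12$, and the proof of Proposition \ref{eig_prop}, which exhibits $\zeta_p^\pm\in\Dom(D^{\min}_{\omega,\lambda})\setminus H_0^1$ at $\RE(\mu)=\tfrac12$), so the graph norm does not dominate $\|f'\|+\|f/x\|$ and you cannot even assert finiteness of $\int\frac{|g|^2}{x}\D x$ on all of $\Dom(D^{\min}_{\omega,\lambda})$. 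The paper's argument avoids this entirely: given $g\in\Dom(D^{\min}_{\omega,\lambda})$ with $\IM(g|Dg)=0$, it picks $f\in C_\mathrm{c}^\infty$ with $\|f-g\|_{\Dom(D^{\min}_{\omega,\lambda})}<\epsilon$, bounds $|\IM(f|Df)|\le 2\epsilon\|g\|_{\Dom(D^{\min}_{\omega,\lambda})}+\epsilon^2$ from above and from below by $\frac{|\IM(\lambda\pm\omega)|}{t}\big(\int_0^t|g_j|^2\D x-2\epsilon\|g\|^2\big)$, using only $\tfrac1x\ge\tfrac1t$ on $[0,t]$, and lets $\epsilon\downarrow0$ and $t$ vary to conclude $g=0$. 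You would need to replace your continuity claim by an estimate of this kind for the borderline cases.
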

\begin{proof}
Integrating by parts we find that for $f = \begin{bmatrix} f_1 \\ f_2 \end{bmatrix} \in C_\mathrm{c}^{\infty}(\R_+,\C^2)$ we have
\begin{equation}
\IM ( f| D_{\omega, \lambda} f ) = - \IM(\lambda + \omega) \int_0^{\infty} \frac{|f_1(x)|^2}{x} \D x - \IM (\lambda - \omega) \int_0^{\infty} \frac{|f_2(x)|^2}{x} \D x .
\end{equation}
In the four cases listed in the proposition we have: both terms are zero in Case $1.$, both terms are nonzero (except for $f=0$) and have the same sign as $- \IM(\lambda)$ in Case $2.$, one term is zero and the other has the same sign as $- \IM(\lambda)$ in Case $3.$ and the two terms have opposite signs in the last case. Therefore inclusions of numerical ranges in the specified sets are clear, except for the third case. Then in order for $\IM ( f| D_{\omega, \lambda} f )$ to vanish, one of the two $f_j$ has to be zero. It is easy to check that this implies $( f| D_{\omega, \lambda} f ) =0$ (but not $f =0$).

We have to show that the obtained inclusions are saturated. The
homogeneity of $D_{\omega, \lambda}^\pre$ implies that
$\Num(D_{\omega, \lambda}^\pre)$ is a convex cone. Thus to establish
the result in Case $1.$ it is sufficient to show that both signs of $(
f | D_{\omega, \lambda} f )$ are possible. We choose a nonzero $\varphi \in C_c^{\infty}(\R_+,\C^2)$ with $\| \varphi \|_{H_0^1}=1$ and put $f_{\pm, t}(x) = \begin{bmatrix} \varphi(x-t) \\ \pm \varphi'(x-t) \end{bmatrix}$ for $t \geq 0$. Then $\| f_{\pm, t} \|_{L^2}=1$ and 
\begin{equation}
    ( f_{\pm, t} | D_{\omega,\lambda} f_{\pm, t} ) = \pm 2 \int_0^{\infty} |\varphi'(x)|^2 \D x - \int_0^{\infty} \frac{1}{x+t} \left( (\lambda+\omega) |\varphi(x)|^2 + (\lambda - \omega) |\varphi'(x)|^2 \right)  \D x .
    \label{eq:hform}
\end{equation}
The first term is nonzero, has sign $\pm$ and does not depend on $t$, while the other converges to zero for $t \to \infty$. Therefore $\pm ( f_{\pm, t} |D_{\omega, \lambda} f_{\pm, t} ) \geq c_{\pm} >0$ for large enough $t$.

Next we suppose that $|\IM(\omega)| \leq |\IM (\lambda)| \neq 0$. It
is sufficient to show that $\C_-$ is included in the numerical range
for $\IM (\lambda)<0$. Arguing as below \eqref{eq:hform}, we~deduce
that there exist constants $t_0 > 0$ and $c_{\pm} >0$ such that $\pm
\RE \, ( f_{\pm, t} | D_{\omega,\lambda} f_{\pm, t} ) \geq c_{\pm}$
for $t \geq t_0$. Let $\delta = \IM \, (f_{ \pm, t_0}
|D_{\omega,\lambda} f_{ \pm, t_0} )$. Then $\delta >0 $. The function $t \mapsto \IM \, ( f_{\pm, t} | D_{\omega,\lambda} f_{\pm, t} )$ is continuous and converges to zero for $t \to \infty$, so~for every $\epsilon \in ]0, \delta]$ there exists $t \geq t_0$ such that $\IM \, ( f_{\pm, t} | D_{\omega,\lambda} f_{\pm, t} ) = \epsilon$. By convexity of numerical ranges this implies $[-c_- + \i \epsilon, c_+ + \i \epsilon]  \subset \Num (D_{\omega,\lambda})$. Homogeneity implies that for every $s>0$ we have $\left[ -\frac{c_- s}{\epsilon} + \i s, \frac{c_+ s}{\epsilon} + \i s \right] \subset \Num (D_{\omega, \lambda})$. Every $k$ with $\IM(k)=s$ is in this interval for small enough $\epsilon$. 

Similar argument shows that in Case $4.$ there exist $c_{\pm} >0$ and $\delta>0$ such that for every $\epsilon \in ] 0 , \delta]$ there exist $g_{\pm,\epsilon} \in C_c^{\infty}(\R_+,\C^2)$ with $\| g_{\pm,\epsilon} \|_{L^2}=1$, $\pm \RE ( g_{\pm,\epsilon}|D_{\omega, \lambda} g_{\pm,\epsilon} ) \geq c_{\pm}$ and $\left| \IM ( g_{\pm,\epsilon}| D_{\omega, \lambda} g_{\pm,\epsilon} )\right| \leq \epsilon$. On the other hand for nonzero $f \in C_c^{\infty}(\R_+,\C^2)$ with $f_1=0$ or $f_2=0$ we have that $( f| D_{\omega, \lambda} f )$ is proportional to $\omega-\lambda$ or $- \omega- \lambda$, respectively, with a positive proportionality constant. Using homogeneity we can even construct functions $f$ with the proportionality constant equal to $1$ and $\| f \|=1$. Next we observe that if $\epsilon$ is taken to be sufficiently small, the convex hull of $( g_{+, \epsilon} | D_{\omega, \lambda} g_{+, \epsilon} )$, $( g_{-, \epsilon} | D_{\omega, \lambda} g_{-, \epsilon} )$, $\omega - \lambda$ and $- \omega - \lambda$ contains zero in its interior. Therefore the smallest convex cone containing it is the whole $\C$.

To prove the last statement, first note that $\Num(D_{\omega, \lambda}^{\min})$ is contained in the closure of $\Num(D_{\omega, \lambda}^\pre)$. Therefore in Cases $1.$ and $4.$ there is nothing to prove. We consider Case $2.$ We have to show that if $g \in \Dom(D_{\omega, \lambda}^{\min})$ is such that $\IM ( g| D_{\omega, \lambda} g ) =0$, then $g=0$. We choose $\epsilon>0$ and $f \in C_c^{\infty}(\R_+,\C^2)$ such that $\| f - g \|_{\Dom(D_{\omega, \lambda}^{\min})} < \epsilon$. Then
\begin{equation}
\IM ( f| D_{\omega, \lambda} f ) = \IM \left( ( g | D_{\omega, \lambda}^{\min} (f-g) ) + ( f-g | D_{\omega, \lambda}^{\min} g ) + (f-g | D_{\omega, \lambda}^{\min} (f-g) ) \right),
\end{equation}
so $|\IM ( f| D_{\omega, \lambda} f )| \leq 2 \epsilon \| g \|_{\Dom(D_{\omega, \lambda}^{\min})} + \epsilon^2$. On the other hand for any $t >0$ we have
\begin{subequations} \label{eq:imag_definit}
\begin{align}
& |\IM ( f| D_{\omega, \lambda} f )|  \geq \frac{|\IM(\lambda+\omega)|}{t} \int_0^t |f_1(x)|^2 \D x + \frac{|\IM(\lambda-\omega)|}{t} \int_0^t |f_2(x)|^2 \D x \tag{\ref{eq:imag_definit}} \\
& \geq \frac{|\IM(\lambda+\omega)|}{t} \left( \int_0^t |g_1(x)|^2 \D x - 2 \epsilon \| g \|^2 \right) + \frac{|\IM(\lambda-\omega)|}{t} \left( \int_0^t |g_2(x)|^2 \D x - 2 \epsilon \| g \|^2 \right). \nonumber
\end{align}
\end{subequations}
Comparing the two derived inequalities and taking $\epsilon \to 0$ we find that
\begin{equation}
    \int_0^t |g_1(x)|^2 \D x =     \int_0^t |g_2(x)|^2 \D x = 0. 
\end{equation}
Since $t$ was arbitrary, $g=0$. Case $3.$ may be handled analogously.
\end{proof}

It is convenient to describe the numerical ranges of operators $D_p$
in terms of $[a:b] $. It~can be related to parameters $\omega,
\lambda, \mu$ by recalling that $[a:b] = [- \mu : \omega + \lambda]$
if $\omega +\lambda \neq 0$ and $[a:b] = [\omega - \lambda : - \mu]$
if $\omega - \lambda \neq 0$. No such expression exists on the zero
fiber. We will also choose a representative $(a,b) \in [a:b]$. We note
that the condition $\IM(\overline b a) =0$ is equivalent to the existence of a real representative $(a,b)$, which is also equivalent to the statement that $[a:b]$ belongs to the real projective line $\mathbb{RP}^1$. If $[a:b] \notin \mathbb{RP}^1$, then $\sgn(\IM(\overline b a)) = \sgn \left( \IM \left( \frac{a}{b} \right) \right)$.

\begin{proposition}
The numerical range of $D_p$ may be characterized as follows.
\begin{enumerate}
    \item If $\omega, \lambda \in \R$ and $[a:b] \notin \mathbb{RP}^1$, then $\Num(D_p) = \R \cup \C_{- \IM \left( \overline b a \right)}$.
    \item If $\RE(\mu)=0$ and $\IM(\overline b a) \IM(\lambda)<0$, then $\Num(D_p) =
      \C$.
    \item If $\RE(\mu)<0$ and $[a:b] \notin \mathbb{RP}^1$, then $\Num(D_p) = \C$.
    \item In every other case $\Num(D_p) = \Num(D_{\omega, \lambda}^{\min})$.
\end{enumerate}
\end{proposition}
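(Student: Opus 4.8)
The plan is to reduce everything to the regime $|\RE(\mu)|<\tfrac12$. If $|\RE(\mu)|\ge\tfrac12$ then $D_p=D_{\omega,\lambda}^{\min}$ and statement 4 is trivial; moreover, using $[a:b]=[-\mu:\omega+\lambda]=[\omega-\lambda:-\mu]$ one checks that the hypotheses of cases 1--3 all force $|\RE(\mu)|<\tfrac12$ (if $\omega,\lambda\in\R$ and $|\RE(\mu)|\ge\tfrac12$ then $\mu\in\R$, so $[a:b]\in\mathbb{RP}^1$, ruling out case 1; cases 2 and 3 demand $\RE(\mu)\le0$ outright). So assume $|\RE(\mu)|<\tfrac12$; then $\Dom(D_{\omega,\lambda}^{\min})=H_0^1(\R_+,\C^2)$ by Proposition~\ref{dom_prop}, $\Dom(D_p)=H_0^1(\R_+,\C^2)\oplus\C g_0$ with $g_0:=x^{\mu}\begin{bmatrix}a\\b\end{bmatrix}\chi$, and we have $\Num(D_{\omega,\lambda}^{\min})\subseteq\Num(D_p)$; by homogeneity of $D_p$, $\Num(D_p)$ is a convex cone. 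So the task is purely to decide whether the single extra dimension $\C g_0$ enlarges this cone, and if so by how much.

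The computational input is the quadratic-form identity obtained by integrating by parts on $[\epsilon,\infty[$ and letting $\epsilon\downarrow0$: for $g=\begin{bmatrix}g_1\\g_2\end{bmatrix}\in\Dom(D_p)$,
\[
\IM\,(g|D_{\omega,\lambda}g)=\lim_{\epsilon\downarrow0}\Bigl(-\IM(\lambda+\omega)\!\int_\epsilon^\infty\!\tfrac{|g_1|^2}{x}\D x-\IM(\lambda-\omega)\!\int_\epsilon^\infty\!\tfrac{|g_2|^2}{x}\D x-\IM\bigl(g_1(\epsilon)\overline{g_2(\epsilon)}\bigr)\Bigr),
\]
the contribution at $+\infty$ vanishing because $g\in H^1$ near infinity (as $g,D_{\omega,\lambda}g\in L^2$ force $g'\in L^2$ there). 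Writing $g=f+cg_0$ with $f\in H_0^1$, the pointwise bound $|f_j(\epsilon)|\le\epsilon^{1/2}\|f_j'\|$ annihilates all cross terms and the $f$-boundary term, so the remaining boundary contribution is $-|c|^2\IM(a\overline b)\lim_{\epsilon\downarrow0}\epsilon^{2\RE(\mu)}$. This is $0$ when $\RE(\mu)>0$; it equals $-|c|^2\IM(a\overline b)$ when $\RE(\mu)=0$; and when $\RE(\mu)<0$ it is individually divergent but exactly cancels the divergences of the two Hardy integrals -- a cancellation one verifies using $(\omega+\lambda)a+\mu b=0=\mu a+(\omega-\lambda)b$ and $|x^{\mu}|^2=x^{2\RE(\mu)}$ -- leaving a finite value.

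The case analysis then proceeds by the sign of this boundary contribution. When $0<\RE(\mu)<\tfrac12$ it vanishes, so $\IM\,(g|D_pg)$ has the same manifestly-signed shape as on $\Dom(D_{\omega,\lambda}^{\min})$; here $\omega,\lambda$ cannot both be real (else $\mu\in\R$ and $p=\overline p$, so $D_p$ is self-adjoint with $\Num(D_p)=\R=\Num(D_{\omega,\lambda}^{\min})$), and in the remaining subcases $\IM\,(g|D_pg)=0$ forces $g_1=g_2=0$ exactly as it does for the minimal operator, so $\Num(D_p)=\Num(D_{\omega,\lambda}^{\min})$: case 4. When $\RE(\mu)=0$: if $\omega,\lambda\in\R$ then $\IM(\lambda\pm\omega)=0$ and $\IM\,(g|D_pg)=-|c|^2\IM(a\overline b)$, so the convex cone generated by $\Num(D_{\omega,\lambda}^{\min})=\R$ and the half-line $\{-|c|^2\IM(a\overline b):c\in\C\}$ is $\R\cup\C_{-\IM(\overline b a)}$ if $\IM(\overline b a)\ne0$ (case 1) and $\R$ otherwise (case 4); if $\omega,\lambda$ are not both real, then $\Num(D_{\omega,\lambda}^{\min})$ is already a half-plane or $\C$, and the scalable term $-|c|^2\IM(a\overline b)$ reaches into the half-plane opposite $\Num(D_{\omega,\lambda}^{\min})$ precisely when $\IM(\overline b a)\IM(\lambda)<0$, making the generated cone all of $\C$ (case 2), and is sign-compatible otherwise (case 4). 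Finally, when $\RE(\mu)<0$ and $[a:b]\notin\mathbb{RP}^1$ (so $\IM(a\overline b)\ne0$, and again $\omega,\lambda$ not both real), evaluating $\IM\,(g|D_pg)$ on $g_0$ and on $g_0$ perturbed by suitable $H_0^1$ bumps produces values with imaginary part of both signs, which together with the half-plane $\Num(D_{\omega,\lambda}^{\min})$ forces $\Num(D_p)=\C$ (case 3); if instead $[a:b]\in\mathbb{RP}^1$ the boundary contribution is absent and the argument for $0<\RE(\mu)<\tfrac12$ applies verbatim (case 4). Throughout, one also invokes Proposition~\ref{num_range_minimal} to identify $\Num(D_{\omega,\lambda}^{\min})$ explicitly in each subcase.

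The main obstacle is the regime $\RE(\mu)\le0$, where $g_0/\sqrt x\notin L^2$ and $\IM\,(g|D_pg)$ is a genuine $\infty-\infty$: one must verify the divergences cancel, extract the resulting finite part, and -- this is the delicate bookkeeping -- check that its sign, weighed against the signs of $\IM(\lambda\pm\omega)$ and against the half-plane $\Num(D_{\omega,\lambda}^{\min})$, is exactly what separates the ``jump to $\C$'' in cases 2 and 3 from the ``no enlargement'' in case 4. Translating the coordinate-free conditions $[a:b]\in\mathbb{RP}^1$ and $\IM(\overline b a)\IM(\lambda)<0$ back into the $(\omega,\lambda,\mu)$-picture, and confirming the four alternatives are mutually exclusive and exhaustive, is the other point requiring care.
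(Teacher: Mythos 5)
Your skeleton is the same as the paper's: the integration-by-parts identity with a boundary contribution at $0$, the reduction to $|\RE(\mu)|<\tfrac12$, convexity plus homogeneity of the numerical range, and a case analysis steered by Proposition \ref{num_range_minimal}; your cancellation observation is correct, since the kernel relations give $\IM\big((\lambda+\omega)|a|^2+(\lambda-\omega)|b|^2\big)=2\RE(\mu)\,\IM(\overline b a)$, which kills the $\epsilon^{2\RE(\mu)}$ divergences exactly as you say. The genuine gap is that the decisive attainment step in cases 2 and 3 --- exhibiting numerical-range values strictly on the side opposite $\Num(D_{\omega,\lambda}^{\min})$ --- is only asserted (``the scalable term reaches into the opposite half-plane'', ``suitable $H_0^1$ bumps produce both signs''), and you yourself defer the sign extraction to ``delicate bookkeeping''. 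This is not a routine check. For $\RE(\mu)=0$ with $\IM(\overline b a)\IM(\lambda)<0$, scaling $c$ alone proves nothing, because the opposing Hardy terms also scale like $|c|^2$; what is needed is the observation that for $f=c\chi\eta$ the imaginary part of the Hardy combination vanishes identically (again by the displayed identity, since $\RE(\mu)=0$), so that $\IM\,(f|D_pf)=-|c|^2\IM(\overline b a)$ is unbounded of the ``wrong'' sign --- this is the paper's shrinking-support trial function.

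For $\RE(\mu)<0$ the assertion is even less innocent: for any trial function of the form $\varphi\eta$ with real $\varphi$ one has $(f|D_pf)=2\i\,\IM(\overline b a)\int_0^\infty\varphi(x)\varphi'(x)x^{2\RE(\mu)}\,\D x$, so every nonnegative decreasing profile (in particular $g_0=\chi\eta$ for \emph{any} cutoff $\chi$, and also $\varphi=\e^{-x/2}$) gives one and the same sign; ``both signs'' is therefore not reachable by varying cutoffs or by small perturbations of $g_0$, and for a generic $H_0^1$ bump the Coulomb terms do not cancel, so the finite part is intractable. The paper's missing idea is precisely this class $f=\varphi\eta$, $\varphi$ real (which makes $(f|D_pf)$ purely imaginary), together with the sign flip between $\varphi$ vanishing at $0$, where integration by parts gives $\int\varphi\varphi'x^{2\RE(\mu)}\D x=-\RE(\mu)\int\varphi^2x^{2\RE(\mu)-1}\D x>0$, and $\varphi=\e^{-x/2}$, where the integral equals $-\tfrac12\Gamma(2\RE(\mu)+1)<0$; both signs plus the half-plane $\Num(D^{\min}_{\omega,\lambda})$ then give $\C$ by the cone argument. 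Without a construction of this kind your case 3 (and the ``jump to $\C$'' half of case 2) is unproved. A smaller untreated point: in the ``no enlargement'' subcases of case 4 with $\RE(\mu)\le 0$, the sign bound on $\IM\,(g|D_pg)$ only places $\Num(D_p)$ in the \emph{closed} half-plane, while $\Num(D^{\min}_{\omega,\lambda})$ is open (up to $\{0\}$); one must still exclude nonzero real values, e.g.\ by noting that $\IM\,(g|D_pg)=0$ forces $c=0$ and then running the approximation argument from the last paragraph of the proof of Proposition \ref{num_range_minimal}.
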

\begin{proof}
If $p = \overline p$, then $D_p$ is self-adjoint, so $\Num(D_p) \subset \R = \Num(D_{\omega, \lambda}^{\min}) \subset \Num(D_p)$. If~$|\IM(\omega)| > |\IM(\lambda)|$, then $\C = \Num(D_{\omega, \lambda}^{\min}) \subset \Num(D_{p})$. 

Let $\eta(x) = x^{\mu} \begin{bmatrix} a \\ b \end{bmatrix}$ and consider $f = \begin{bmatrix} f_1 \\ f_2 \end{bmatrix} \in C_c^{\infty}(\R_+,\C^2) + \mathrm{span} \{ \chi \eta \}$. Then
\begin{equation}
\IM ( f | D_{p} f ) = \IM \int_{0}^{\infty} \left[ \frac{\D}{\D x} \left( \overline{f_2(x)} f_1(x)  \right) - \frac{(\lambda + \omega) |f_1(x)|^2 + (\lambda - \omega) |f_2(x)|^2}{x} \right] \D x.
\label{eq:form}
\end{equation}

By construction, there exist $x_0 >0$ and $c \in \C$ such that for $x < x_0$ we have $f(x) = c \, \eta(x)$, and hence $\overline{f_2(x)} f_1(x) = \IM(\overline b a) x^{2 \RE(\mu)}$. If $\RE(\mu)>0$ or $\IM(\overline b a)=0$ (which is equivalent to $[a:b] \in \mathbb{RP}^1 \subset \CP^1$), then $\overline{f_2(x)} f_1(x)$ vanishes for $x$ sufficiently large and for $x \to 0$. Therefore $\int_{0}^{\infty} \frac{\D}{\D x} \left( \overline{f_2(x)} f_1(x) \right) \D x =0$ and the proof goes as for Proposition \ref{num_range_minimal}. 

We consider the case $\RE(\mu) = 0$ and $\IM(\overline b a) \neq 0$. Then
\begin{equation}
\IM ( f | D_p f ) = - |c|^2 \IM(\overline b a) - \IM \int_{0}^{\infty}  \frac{(\lambda + \omega) |f_1(x)|^2 + (\lambda - \omega) |f_2(x)|^2}{x}  \D x.
\label{eq:im_mu_im_fhf}
\end{equation}
If $\omega, \lambda \in \R$, then $\IM ( f | D_p f ) = - |c|^2 \IM(\overline b a)$ and we have $\R = \Num(D_{\omega, \lambda}^{\min}) \subset \Num(D_p)$, so~$\Num(D_p) = \{ k \in \C \, | \, \IM(\overline b a) \IM(k) \leq 0 \}$. In the case $|\IM(\omega)| \leq |\IM(\lambda)| \neq 0$ there are two possibilities. If $\IM(\overline b a) \IM(\lambda)>0$, then all terms in \eqref{eq:im_mu_im_fhf} have the same sign and one has $\Num(D_p) = \Num(D_{\omega, \lambda}^{\min})$. Otherwise $\Num(D_p) = \C$. Indeed, consider $f = \frac{\chi \eta}{\| \chi \eta \|}$ with shrinking support of $\chi \geq 0$. A simple calculation shows that for these functions the integrand in \eqref{eq:im_mu_im_fhf} vanishes, while the first term grows without bound.

Next, we suppose that $\RE(\mu) <0$, $\IM(\overline b a) \neq 0$. Put $f = \varphi \eta$ with $\varphi \in C^{\infty}([0 ,\infty [, \R)$ vanishing exponentially at infinity. Then $f \in \Dom(D_p)$ and $(D_p f)(x) = \varphi'(x) \begin{bmatrix} 0 & -1 \\ 1 & 0 \end{bmatrix} \eta(x)$. Thus
\begin{equation}
( f | D_p f ) = 2 \i \,  \IM(\overline b a) \int_0^{\infty} \varphi(x) \varphi'(x) x^{2 \RE(\mu)} \D x.  
\end{equation}
If $\varphi \neq 0$ vanishes at zero, the integral is positive, as can be seen by integrating by parts:
\begin{equation}
\int_0^{\infty} \varphi(x) \varphi'(x) x^{2 \RE(\mu)} \D x =  \frac{1}{2} \int_0^{\infty} \frac{\D}{\D x} (\varphi(x)^2) x^{2 \RE(\mu)} \D x = - \RE(\mu) \int_0^{\infty} \varphi(x)^2 x^{2 \RE(\mu)-1} \D x > 0.  
\end{equation}
On the other hand, for $\varphi(x) = \e^{- \frac{x}{2}}$ the integral is negative:
\begin{equation}
\int_0^{\infty} \varphi(x) \varphi'(x) x^{2 \RE(\mu)} \D x = - \frac{\Gamma(2 \RE(\mu)+1)}{2} <0.
\end{equation}
By Proposition \ref{num_range_minimal} and the fact that $\Num(D_p)$ is a convex cone, we have $\Num(D_p) = \C$.
\end{proof}

We adopt the convention saying that operators with the numerical range
contained in the closed upper half-plane are called {\em
  dissipative}. Dissipative operators which are not properly contained
in another dissipative operator are said to be {\em maximally
  dissipative}. This condition is equivalent to the inclusion of the spectrum in the closed upper half plane. Maximally dissipative operators may also be characterized as operators $D$ such that $\i D$ is the generator of a semigroup of contractions.

\begin{corollary}
$\pm D_p$ is a dissipative operator if and only if one of the following (mutually exclusive) statements holds:
\begin{itemize}
\item $\omega, \lambda \in \R$ and $\mp\IM(\overline b a) \geq0$.
\item $\pm \IM(\lambda)<0$, $|\IM(\omega)| \leq |\IM(\lambda)|$ and $\RE(\mu) >0$. 
\item $\pm \IM(\lambda)<0$, $|\IM(\omega)| \leq |\IM(\lambda)|$, $\RE(\mu) =0$ and $\pm \IM(\overline b a) \leq 0$.
\item $\pm \IM(\lambda)<0$, $|\IM(\omega)| \leq |\IM(\lambda)|$, $\RE(\mu)<0$ and
  $\IM(\overline b a) =0$.
\end{itemize}
Furthermore, if these conditions are satisfied then $\pm D_p$ is maximally dissipative. 
\end{corollary}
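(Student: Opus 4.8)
The plan is to dispatch the two assertions of the corollary separately: first the exact criterion for dissipativity of $\pm D_p$, then the upgrade to maximal dissipativity. For the criterion, I would use that, with the stated convention, $D_p$ is dissipative iff $\Num(D_p)\subset\overline{\C_+}$ and, since $\Num(-D_p)=-\Num(D_p)$, that $-D_p$ is dissipative iff $\Num(D_p)\subset\overline{\C_-}$. Everything then reduces to reading off from the explicit description of $\Num(D_p)$ in the proposition immediately preceding the corollary (together with the formula for $\Num(D_{\omega,\lambda}^{\min})$ in Proposition \ref{num_range_minimal}) when it is contained in a closed half-plane. Running through the four cases of that proposition: $\Num(D_p)=\C$ lies in neither half-plane; $\Num(D_p)=\R$ lies in both, and this occurs exactly when $\omega,\lambda\in\R$ and $[a:b]\in\mathbb{RP}^1$, which yields the $\IM(\overline b a)=0$ instances of the first bullet; $\Num(D_p)=\R\cup\C_{-\IM(\overline b a)}$ (which forces $\omega,\lambda\in\R$, $[a:b]\notin\mathbb{RP}^1$) lies in $\overline{\C_\pm}$ iff $\mp\IM(\overline b a)>0$, giving the remaining instances of the first bullet; finally $\Num(D_p)=\Num(D_{\omega,\lambda}^{\min})$ equals $\C_{-\sgn(\IM(\lambda))}$ or $\C_{-\sgn(\IM(\lambda))}\cup\{0\}$, hence lies in $\overline{\C_\pm}$ iff $\pm\IM(\lambda)<0$, and this happens precisely when $|\IM(\omega)|\le|\IM(\lambda)|$ and we are in the ``every other case'' clause of that proposition, i.e.\ not in its Cases 1--3. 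Unpacking that negation reproduces bullets two through four: $\RE(\mu)>0$ automatically avoids Cases 2 and 3; for $\RE(\mu)=0$, avoiding Case 2 is exactly $\IM(\overline b a)\IM(\lambda)\ge 0$, i.e.\ $\pm\IM(\overline b a)\le 0$; for $\RE(\mu)<0$, avoiding Case 3 is exactly $[a:b]\in\mathbb{RP}^1$, i.e.\ $\IM(\overline b a)=0$. Mutual exclusivity is then immediate: the first bullet forces $\lambda$ real whereas the others force $\IM(\lambda)<0$, and the last three are separated by the sign of $\RE(\mu)$.

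For maximal dissipativity, the observation I would exploit is that every eigenvalue of $D_p$ lies in $\Num(D_p)$: if $D_pf=kf$ with $f\ne 0$, then $k=(f|D_pf)/(f|f)$. Consequently, once $D_p$ is known to be dissipative we get $\sigma_\pp(D_p)\subset\Num(D_p)\subset\overline{\C_+}$, so $\sigma_\pp(D_p)\cap\C_-=\emptyset$; by the point-spectrum formula of Theorem \ref{thoe} this forces $p\notin\cE^-$, and then the spectrum formula of the same theorem gives $\sigma(D_p)\subset\R\cup\C_+=\overline{\C_+}$. Since a dissipative operator is maximally dissipative exactly when its spectrum lies in the closed upper half-plane (the characterization recalled just before the corollary), $D_p$ is maximally dissipative; the argument for $-D_p$ is the mirror image, interchanging $\C_-\leftrightarrow\C_+$ and $\cE^-\leftrightarrow\cE^+$.

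I expect the only real obstacle to be organizational rather than mathematical: keeping straight which combination of the inequalities on $\RE(\mu)$, on $\IM(\overline b a)$ and on $|\IM(\omega)|$ versus $|\IM(\lambda)|$ places one in the ``every other case'' clause of the preceding proposition, so that the bullets come out verbatim. One should also double-check that the two harmless borderline situations do not break either implication: the case $|\IM(\omega)|=|\IM(\lambda)|\ne 0$ merely adjoins the point $\{0\}$ to $\Num(D_p)$, which still lies in the relevant closed half-plane, and the case $p\in\cE^+$ (which can occur when $\RE(\mu)>0$ and $\IM(\lambda)<0$) merely adjoins $\C_+$ to $\sigma(D_p)$, which still lies in $\overline{\C_+}$.
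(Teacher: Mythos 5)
Your proposal is correct and follows the route the paper intends: the corollary is stated as an immediate consequence of the preceding description of $\Num(D_p)$ (together with Proposition \ref{num_range_minimal}), and your case analysis of when that numerical range fits in a closed half-plane reproduces the four bullets exactly, including the borderline cases. Your completion of the maximal-dissipativity claim --- eigenvalues lie in $\Num(D_p)$, so dissipativity excludes $p\in\cE^{\mp}$, whence Theorem \ref{thoe} places $\sigma(D_p)$ in the closed half-plane and the characterization recalled before the corollary applies --- is precisely the argument the paper leaves implicit.
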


\begin{corollary}
Let $\omega, \lambda$ be such that $\pm D_{\omega, \lambda}^{\min}$ is dissipative, i.e.\ $|\IM(\omega)| \leq |\IM(\lambda)|$, $\pm \IM(\lambda) \leq 0$. There exists $p \in \cM_{- \frac12}$ such that $D_{\omega, \lambda}^{\min} \subset D_p$ and $\pm D_p$ is maximally dissipative. In particular $\pm D_{\omega, \lambda}^{\min}$ admits a maximally dissipative extension which is homogeneous and contained in $\pm D_{\omega, \lambda}^{\max}$.
\end{corollary}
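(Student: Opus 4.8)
The plan is the following. For every $p\in\cM_{-\frac12}$ lying over $(\omega,\lambda)$ the inclusions $D_{\omega,\lambda}^{\min}\subset D_p\subset D_{\omega,\lambda}^{\max}$ hold by the very definition \eqref{eq:Dp_dom} of $\Dom(D_p)$ together with Theorem~\ref{domain_comparison}, and $D_p$ is closed and homogeneous of degree $-1$ by Theorem~\ref{thoe}; so the second assertion of the corollary is an immediate consequence of the first, and it is enough to exhibit one $p\in\cM_{-\frac12}$ over $(\omega,\lambda)$ for which $\pm D_p$ is dissipative, since the ``furthermore'' clause of the preceding corollary then automatically promotes it to maximal dissipativity. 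We treat the upper sign, so $\IM(\lambda)\leq0$ and $|\IM(\omega)|\leq|\IM(\lambda)|$; the lower sign is entirely analogous (or follows from it via the $\sigma_3$-conjugation symmetry in \eqref{eq:D_sym}).

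Suppose first $\IM(\lambda)<0$. The one genuinely nontrivial point is the elementary inequality $\omega^2-\lambda^2\notin(-\infty,0)$ under these hypotheses. To prove it, write $\omega,\lambda$ in terms of their real and imaginary parts: if $\omega^2-\lambda^2\in\R$ then $\RE(\omega)\IM(\omega)=\RE(\lambda)\IM(\lambda)$, and eliminating $\RE(\lambda)$ (legitimate since $\IM(\lambda)\neq0$) one computes $\RE(\omega^2-\lambda^2)=\big(\RE(\omega)^2+\IM(\lambda)^2\big)\big(\IM(\lambda)^2-\IM(\omega)^2\big)/\IM(\lambda)^2\geq0$, because $\IM(\lambda)^2\geq\IM(\omega)^2$. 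Granting this, if $\omega^2-\lambda^2\neq0$ the principal square root $\mu:=\sqrt{\omega^2-\lambda^2}$ is defined and satisfies $\RE(\mu)>0$, so $p:=(\omega,\lambda,\mu)\in\cM_{-\frac12}$ and $D_p$ is dissipative by the second bullet of the preceding corollary. If $\omega^2-\lambda^2=0$, then $\omega=\pm\lambda\neq0$, $\mu=0$, the unique point of $\cM$ over $(\omega,\lambda)$ has $[a:b]=[0:1]$ (if $\omega=\lambda$) or $[1:0]$ (if $\omega=-\lambda$), so $\IM(\overline b a)=0$ and $\RE(\mu)=0>-\frac12$; hence $p\in\cM_{-\frac12}$ and the third bullet yields dissipativity.

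Finally, if $\IM(\lambda)=0$ then $\omega,\lambda\in\R$, and the relevant criterion is the first bullet, which requires $\IM(\overline b a)\leq0$. If $\omega^2-\lambda^2\geq0$ I take $\mu=\sqrt{\omega^2-\lambda^2}\geq0$, for which the associated $[a:b]$ lies in $\mathbb{RP}^1$ (picking $[a:b]=[1:0]\in\cZ$ when $\omega=\lambda=0$), so that $\IM(\overline b a)=0$. If $\omega^2-\lambda^2<0$, then $\omega+\lambda\neq0$ and the two square roots $\mu=\pm\i\sqrt{\lambda^2-\omega^2}$ produce $[a:b]=[\mp\i\sqrt{\lambda^2-\omega^2}:\omega+\lambda]$ with values of $\IM(\overline b a)$ of opposite signs, so at least one is negative. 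In both cases $\RE(\mu)\geq0>-\frac12$, so $p\in\cM_{-\frac12}$ and the first bullet applies. Combining these cases with the reductions of the first paragraph completes the argument. I expect the inequality $\omega^2-\lambda^2\notin(-\infty,0)$ to be the crux; the remaining care goes into the borderline subcases $\omega^2=\lambda^2$ and $\omega,\lambda\in\R$, where $\RE(\mu)$ is not strictly positive and one must instead verify the $[a:b]$-dependent part of the dissipativity criterion.
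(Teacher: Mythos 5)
Your proof is correct, and its skeleton is the same as the paper's: reduce to one sign, then for each $(\omega,\lambda)$ exhibit a point $p\in\cM_{-\frac12}$ over $(\omega,\lambda)$ satisfying one of the bullets of the preceding corollary, whose ``furthermore'' clause upgrades dissipativity to maximal dissipativity, the inclusion $D_{\omega,\lambda}^{\min}\subset D_p\subset D_{\omega,\lambda}^{\max}$ being automatic from \eqref{eq:Dp_dom}. The genuine difference is in the case $\IM(\lambda)<0$: you prove the algebraic fact that $\omega^2-\lambda^2\in\R$ forces $\RE(\omega^2-\lambda^2)=\bigl(\RE(\omega)^2+\IM(\lambda)^2\bigr)\bigl(\IM(\lambda)^2-\IM(\omega)^2\bigr)/\IM(\lambda)^2\geq0$, so that under the dissipativity hypotheses the principal square root with $\RE(\mu)>0$ always exists unless $\omega^2=\lambda^2$. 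The paper neither proves nor needs this inequality: it keeps the subcase $\omega^2-\lambda^2<0$ and disposes of it by choosing the sign of the purely imaginary $\mu$ so that $\IM(\overline b a)\leq0$ and invoking the third bullet --- exactly the sign-choice trick you use only for real $\omega,\lambda$. Your lemma shows that this subcase of the paper's argument is in fact vacuous, a mild sharpening; the paper's route avoids the computation at the cost of carrying an empty case, and both are valid. Two minor further differences: you reduce the lower sign via the $\sigma_3$ symmetry \eqref{eq:D_sym} (or by repeating the argument), whereas the paper uses complex conjugation, and you spell out the real case (real kernel vector, hence $[a:b]\in\mathbb{RP}^1$, when $\omega^2\geq\lambda^2$; sign choice of the imaginary $\mu$ when $\omega^2<\lambda^2$, where $\omega+\lambda\neq0$), which the paper compresses into a one-line assertion.
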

\begin{proof}
We present the proof for the upper sign. The other part of the statement then follows by taking complex conjugates. If $\omega, \lambda \in \R$, it is possible to choose $p$ with $\mp \IM(\overline b a) \geq 0$. Now let $\IM(\lambda) <0$, $|\IM(\omega)| \leq |\IM(\lambda)|$. If $\omega^2 - \lambda^2 \notin ] - \infty, 0 ]$, we can choose $\mu$ with $\RE(\mu) >0$.

Next suppose that $\omega^2 - \lambda^2 \leq 0$. If the inequality is strict, then there exist two possible choices of $\mu$ differing by a sign, so the condition $\IM(\overline b a) \leq 0$ is satisfied for at least one choice. If $\omega^2 - \lambda^2 =0$, then either $\omega + \lambda$ or $\omega - \lambda$ vanishes. We may assume that it is not true that both vanish, because this is covered by the case $\omega, \lambda \in \R$. Then $[a:b]=[0:1]$ or $[a:b]=[1:0]$. 
\end{proof}

\section{Mixed boundary conditions} \label{mixed-bc}

In this section we discuss operators $D_{\omega,\lambda}^f$ introduced around equation \eqref{eq:Df_strict_inc}. Hence $\omega,\lambda$ are restricted to the region $| \RE \sqrt{\omega^2 - \lambda^2}| < \frac12$. 

\begin{proposition}
$D_{\omega,\lambda}^f$ is closed, self-transposed and $\sigma_{\ess}(D_{\omega,\lambda}^f) = \sigma_{\ess,0}(D_{\omega,\lambda}^f) = \R$.
\end{proposition}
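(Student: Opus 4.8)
The plan is to dispatch the three assertions one at a time, each resting on results already established. For \textbf{closedness}: on $\Dom(D_{\omega,\lambda}^{\min})$ the graph norm of $D_{\omega,\lambda}^{\max}$ coincides with the graph norm of $D_{\omega,\lambda}^{\min}$, which is complete; hence $\Dom(D_{\omega,\lambda}^{\min})$ is a closed subspace of $\Dom(D_{\omega,\lambda}^{\max})$ taken with its graph topology. Adding the one-dimensional subspace $\C f\chi$ keeps it closed (the sum of a closed subspace and a finite-dimensional subspace is closed), and the restriction of a closed operator to a subspace closed in the graph norm is again closed; hence $D_{\omega,\lambda}^f$ is closed.

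For \textbf{self-transposedness}, I would reuse the boundary-form identity from the proof of Theorem~\ref{thoe}: for $u,v \in \Dom(D_{\omega,\lambda}^{\max})$,
\begin{equation}
\langle u | D_{\omega,\lambda}^{\max} v\rangle - \langle D_{\omega,\lambda}^{\max} u | v\rangle = \i \int_0^\infty \frac{\D}{\D x}\big( u(x)^\T \sigma_2 v(x) \big)\, \D x .
\end{equation}
This vanishes whenever $u$ or $v$ lies in $\Dom(D_{\omega,\lambda}^{\min})$ (approximate by $C_\mathrm{c}^\infty$ functions and use continuity in the graph norm), and it vanishes for $u=v=f\chi$ because $f\chi$ is compactly supported and the scalar $f(x)^\T \sigma_2 f(x)$ is identically zero, $\sigma_2$ being skew-symmetric. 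By bilinearity this gives $D_{\omega,\lambda}^f \subset D_{\omega,\lambda}^{f\T}$. To promote this to an equality I would invoke Theorem~\ref{domain_comparison}: since $\Dom(D_{\omega,\lambda}^{\min})$ has codimension $2$ in $\Dom(D_{\omega,\lambda}^{\max})$ and $D_{\omega,\lambda}^{\min} \subsetneq D_{\omega,\lambda}^f \subsetneq D_{\omega,\lambda}^{\max}$, the domain of $D_{\omega,\lambda}^f$ has codimension $1$ in $\Dom(D_{\omega,\lambda}^{\max})$, so $D_{\omega,\lambda}^{f\T}$ — which is squeezed between $D_{\omega,\lambda}^f$ and $D_{\omega,\lambda}^{\max}$ — must equal one of these two. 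It cannot equal $D_{\omega,\lambda}^{\max}$, since transposing once more (legitimate because $D_{\omega,\lambda}^f$ is closed and densely defined, whence $D_{\omega,\lambda}^{f\T\T}=D_{\omega,\lambda}^f$) would then yield $D_{\omega,\lambda}^f = D_{\omega,\lambda}^{\max\T} = D_{\omega,\lambda}^{\min}$, contradicting $D_{\omega,\lambda}^{\min} \subsetneq D_{\omega,\lambda}^f$. Hence $D_{\omega,\lambda}^{f\T} = D_{\omega,\lambda}^f$.

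The \textbf{essential spectrum} statement is then immediate: $D_{\omega,\lambda}^f$ is an operator lying strictly between $D_{\omega,\lambda}^{\min}$ and $D_{\omega,\lambda}^{\max}$, so Corollary~\ref{ess_spec_min_max} gives $\sigma_{\ess}(D_{\omega,\lambda}^f) = \sigma_{\ess,0}(D_{\omega,\lambda}^f) = \R$ directly. The step I expect to require the most care is the passage from the inclusion $D_{\omega,\lambda}^f \subset D_{\omega,\lambda}^{f\T}$ to equality — the codimension bookkeeping together with the double transpose must be handled cleanly, and it is precisely here that one uses that $D_{\omega,\lambda}^f$ has already been shown to be closed and densely defined. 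Everything else is a direct appeal to results already in hand.
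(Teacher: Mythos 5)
Your proof is correct, but it is organized differently from the paper's, which is essentially a pair of citations: self-transposedness is quoted from Proposition 3.21 of \cite{DeGe19_01} (with closedness absorbed into that reference), and the essential-spectrum claim is read off from Corollary~\ref{ess_spec_min_max} — the latter exactly as you do. What you do differently is to replace the external citation by a self-contained argument: closedness via the observation that $\Dom(D_{\omega,\lambda}^{\min})$ is graph-norm closed in $\Dom(D_{\omega,\lambda}^{\max})$ and stays closed after adding the line $\C f\chi$, so the restriction of the closed operator $D_{\omega,\lambda}^{\max}$ to this subspace is closed; and self-transposedness by re-running the boundary-form computation from the proof of Theorem~\ref{thoe} (vanishing of the boundary term when one argument is in $\Dom(D_{\omega,\lambda}^{\min})$, by graph-norm approximation, and for $u=v=f\chi$ by skew-symmetry of $\sigma_2$) to obtain $D_{\omega,\lambda}^{f}\subset D_{\omega,\lambda}^{f\T}$, then upgrading to equality by the codimension count of Theorem~\ref{domain_comparison}: since $\Dom(D_{\omega,\lambda}^{f})$ has codimension one in $\Dom(D_{\omega,\lambda}^{\max})$ and $D_{\omega,\lambda}^{f}\subset D_{\omega,\lambda}^{f\T}\subset D_{\omega,\lambda}^{\min\T}=D_{\omega,\lambda}^{\max}$, the transpose is either $D_{\omega,\lambda}^{f}$ or $D_{\omega,\lambda}^{\max}$, and the second option is excluded because $D_{\omega,\lambda}^{f\T\T}=D_{\omega,\lambda}^{f}$ (legitimate, as $D_{\omega,\lambda}^{f}$ is closed and densely defined) would force $D_{\omega,\lambda}^{f}=D_{\omega,\lambda}^{\min}$. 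All of these steps check out. Your route buys independence from \cite{DeGe19_01} at the cost of a page of bookkeeping; the paper's route buys brevity by invoking a general result on one-dimensional operators with boundary conditions of this type.
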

\begin{proof}
The self-transposedness follows from \cite[Proposition
3.21]{DeGe19_01}. The statement about the essential spectrum follows from Corollary \ref{ess_spec_min_max}.
\end{proof}

Operators $D_{\omega,\lambda}^f$ can be organized in a holomorphic family as follows. Let
\begin{equation}
    \cM^{\mix} = \left \{ (\omega,\lambda,[a:b]) \in \C^2 \times \CP^1 \, | \, \RE \sqrt{\omega^2 - \lambda^2}| < \frac12 \right \}.
\end{equation}
We define $D^\mix_{\omega,\lambda,[a:b]}$ to be $D_{\omega,\lambda}^{f_{\omega,\lambda,[a:b]}}$, where $f_{\omega,\lambda,[a:b]}$ is a (unique up to a multiplicative constant) solution of $D_{\omega,\lambda} f_{\omega,\lambda,[a:b]} =0$ whose value at $x=1$ belongs to the ray $[a:b]$ in $\C^2$.

\begin{proposition}
$D_{\omega,\lambda,[a:b]}^\mix$ form a~holomorphic family of operators on $\cM^\mix$. One has $\overline{D_{\omega,\lambda,[a:b]}^\mix} = D_{\overline{\vphantom{\lambda} \omega},\overline \lambda,[\overline{\vphantom{\lambda} a}:\overline{\vphantom{\lambda} b}]}^\mix$, so $D_{\omega,\lambda,[a:b]}^\mix$ is self-adjoint if and only if $\omega,\lambda \in \R$, $[a:b] \in \mathbb{RP}^1$. 
\end{proposition}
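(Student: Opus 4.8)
The plan is to establish holomorphy of the family $\cM^\mix \ni (\omega,\lambda,[a:b]) \mapsto D^\mix_{\omega,\lambda,[a:b]}$ by exhibiting, locally, an explicit holomorphic family of injective ``parametrizing'' operators in the sense of the second (practical) definition of a holomorphic family recalled in the introduction. The key observation is that, by Theorem~\ref{domain_comparison} and the discussion around \eqref{eq:Df_strict_inc}, the domain of $D^\mix_{\omega,\lambda,[a:b]}$ is $\Dom(D_{\omega,\lambda}^{\min}) + \C\, \chi f_{\omega,\lambda,[a:b]}$, where $f_{\omega,\lambda,[a:b]} \in \Ker(D_{\omega,\lambda})$ is normalized by its value at $x=1$ lying on the ray $[a:b]$. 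First I would fix a point $(\omega_0,\lambda_0,[a_0{:}b_0]) \in \cM^\mix$ and observe that $\Dom(D_{\omega,\lambda}^{\min}) = H^1_0(\R_+,\C^2)$ on a neighborhood (by Proposition~\ref{dom_prop}, since $|\RE(\mu)| < \tfrac12$ there), so this part of the domain is constant; moreover $D_{\omega,\lambda}$ restricted to $H^1_0$ is a holomorphic family of bounded operators $H^1_0 \to L^2$ by Hardy's inequality, exactly as in the Corollary following Proposition~\ref{dom_prop}. What remains is to control the one extra dimension spanned by $\chi f_{\omega,\lambda,[a:b]}$.

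The next step is to produce a holomorphic-in-parameters choice of the solution $f_{\omega,\lambda,[a:b]}$. Using the explicit basis of $\Ker(D_{\omega,\lambda})$ from Section~\ref{eigen} — built from $\eta_p^\uparrow = x^\mu\begin{bmatrix} z \\ 1\end{bmatrix}$, $\eta_p^\downarrow = x^\mu\begin{bmatrix} 1 \\ z^{-1}\end{bmatrix}$ and, at $\mu=0$, the logarithmic solutions $\vartheta_\omega^{\uparrow,\downarrow}$ — one can evaluate the fundamental solution matrix of $D_{\omega,\lambda}f = 0$ at $x=1$; call it $\Phi(\omega,\lambda)$, a $2\times2$ matrix depending holomorphically on $(\omega,\lambda)$ and invertible (its columns are linearly independent solutions). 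Then $f_{\omega,\lambda,[a:b]}(x) = \Phi(\omega,\lambda,x)\,\Phi(\omega,\lambda,1)^{-1}\begin{bmatrix} a \\ b\end{bmatrix}$ is, on each local coordinate chart of $\CP^1$ (where we may take $[a{:}b]=[z{:}1]$ or $[1{:}z^{-1}]$ with $z$ a local holomorphic coordinate), a holomorphic function of all parameters valued in $L^2$ after multiplication by $\chi$ — since $\RE(\mu) > -\tfrac12$ guarantees $\chi x^\mu \in L^2$ locally, and the logarithmic/transition behavior is handled by the decomposition \eqref{eq:lambda_zero_decomp} and the stated convention on holomorphy on $\{\lambda=0\}$. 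Define the parametrizing operator $S_{(\omega,\lambda,[a:b])}: H^1_0(\R_+,\C^2)\oplus\C \to L^2(\R_+,\C^2)$ by $S(g,c) = g + c\,\chi f_{\omega,\lambda,[a:b]}$; it is bounded, injective (because $\chi f_{\omega,\lambda,[a:b]} \notin H^1_0$ by Proposition~\ref{dom_prop}, as in the proof of Theorem~\ref{domain_comparison}), has range equal to $\Dom(D^\mix)$, and depends holomorphically on the parameters. Finally $D^\mix \circ S$ sends $(g,c)$ to $D_{\omega,\lambda}g + c\, D_{\omega,\lambda}(\chi f) = D_{\omega,\lambda}g + c(\chi' \text{-term})$, which is manifestly a holomorphic family of bounded operators; this verifies the definition of a holomorphic family.

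For the reality statement, I would compute $\overline{D^\mix_{\omega,\lambda,[a:b]}}$ directly from the domain formula: complex conjugation sends $\Dom(D_{\omega,\lambda}^{\min})$ to $\Dom(D_{\overline\omega,\overline\lambda}^{\min})$ (stated in Section~\ref{sec:minmax}), sends $\chi f_{\omega,\lambda,[a:b]}$ to $\chi \overline{f_{\omega,\lambda,[a:b]}}$, and $\overline{f_{\omega,\lambda,[a:b]}}$ solves $D_{\overline\omega,\overline\lambda}\overline f = 0$ with value at $x=1$ on the ray $[\overline a{:}\overline b]$; hence $\overline{D^\mix_{\omega,\lambda,[a:b]}} = D^\mix_{\overline\omega,\overline\lambda,[\overline a:\overline b]}$. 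Combined with the already-established self-transposedness $D^{\mix\,\T}_{\omega,\lambda,[a:b]} = D^\mix_{\omega,\lambda,[a:b]}$ (previous proposition), and the identity $D^* = \overline{D^\T}$ for densely defined operators, self-adjointness $D^\mix = D^{\mix*}$ is equivalent to $D^\mix_{\omega,\lambda,[a:b]} = D^\mix_{\overline\omega,\overline\lambda,[\overline a:\overline b]}$, i.e. $\omega = \overline\omega$, $\lambda = \overline\lambda$, $[a{:}b] = [\overline a{:}\overline b]$ — the last being exactly $[a{:}b] \in \mathbb{RP}^1$. I expect the main obstacle to be bookkeeping the behavior of $f_{\omega,\lambda,[a:b]}$ across the three components of $\{\lambda = 0\}$ and near $\mu = 0$ where the logarithmic solutions appear: one must check that the local parametrization by $z$ (rather than by $(a,b)$ globally) is compatible with the transition function \eqref{eq:transition} and that the map $(\omega,\lambda,z)\mapsto \chi f$ is genuinely holomorphic into $L^2$ there, using the removable-singularity arguments already employed in the proof of Lemma~\ref{xi_zeta_props}.
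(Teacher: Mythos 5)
Your proposal is correct and follows essentially the same route as the paper: the paper's proof likewise parametrizes $\Dom(D^\mix_{\omega,\lambda,[a:b]})$ by a holomorphic family of bounded injective operators $H_0^1(\R_+,\C^2)\oplus\C\ni(g,t)\mapsto g+t\chi f_{\omega,\lambda,[a:b]}$ and checks that composing with $D^\mix$ gives a holomorphic family of bounded operators. You merely supply details (the ODE-propagator construction of $f_{\omega,\lambda,[a:b]}$ and the conjugation/self-transposedness argument for the self-adjointness criterion) that the paper leaves as ``easy to check'' or states without proof.
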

\begin{proof}
Only the holomorphy of $D_{\omega,\lambda,[a:b]}^\mix$ requires some justification. Define
\begin{equation}
    T_{\omega,\lambda,[a:b]} : H_0^1(\R_+,\C^2) \oplus \C \ni (g,t) \mapsto g + t \chi f_{\omega,\lambda,[a:b]},
\end{equation}
where $\chi \in C_c^\infty([\R_+,\infty[)$ is equal to $1$ near $0$. It is easy to check that $T_{\omega,\lambda,[a:b]}$ form a~holomorphic family of bounded injective operators with $\Ran(T_{\omega,\lambda,[a:b]}) = \Dom(D_{\omega,\lambda,[a:b]}^{\mix})$ such that $D_{\omega,\lambda,[a:b]}^{\mix} T_{\omega,\lambda,[a:b]}$ form a holomorphic family of bounded operators. 
\end{proof}


Next we describe the point spectra of nonhomogeneous operators
$D_{\omega,\lambda}^f$. For this purpose it is not very convenient to
use the parametrization by points of $\cM^\mix$.

Below we treat the logarithm, denoted $\mathrm{Ln}$,  as a set-valued function, more
precisely,
\begin{equation}
  \mathrm{Ln}(z):=\{u\ |\ z=\e^u\}.
  \end{equation}

\begin{proposition}
Consider the point spectrum of $D_{\omega,\lambda}^f$ for various $\omega,\lambda, f$. All eigenvalues are non-degenerate and zero is never an eigenvalue. For $k \neq 0$, we split the discussion into several cases. We say that a pair $(k,\pm)$ is admissible if either $k \in \R^\times$, $|\IM(\lambda)| > \frac12$ and $\pm = \sgn(\IM(\lambda))$ or $k \in \C \setminus \R$ and $\pm = \sgn(\IM(k))$.
\begin{enumerate}
\item Case $\mu \neq 0$. We select
  select a square root $\mu = \sqrt{\omega^2 - \lambda^2}$, or
  equivalently, we fix $p \in \cM_{-\frac12}$ lying over $\omega,\lambda$.  
All nonhomogeneous realizations of $D_{\omega,\lambda}$ correspond 
to
\begin{equation}
  f(x) = \begin{bmatrix} \omega - \lambda \\ - \mu \end{bmatrix}
x^\mu + \kappa \begin{bmatrix} \omega - \lambda \\ \mu \end{bmatrix}
x^{- \mu}\end{equation} with $\kappa\in\mathbb{C}^\times$. Let
     \begin{equation}
         c_{p,\pm} = \frac{\omega}{\lambda \mp \i \mu} \frac{\Gamma(2 \mu +1)}{\Gamma(-2 \mu +1)} \frac{\Gamma(1 - \mu \mp \i \lambda)}{\Gamma(1 + \mu \mp \i \lambda)}
     \end{equation}
     Away from $\mu=0$, $c_{p,\pm}$ is a holomorphic function of $\omega,\lambda, \mu$ valued in $\C \cup \{ \infty \}$. $k$~is an eigenvalue if and only if $\kappa (\mp 2 \i k)^{2 \mu} = c_{p, \pm}$ and $(k, \pm)$ is admissible. $D_{\omega,\lambda}^f$ has no eigenvalues in $\C_\pm$ if $c_{p, \pm} \in \{ 0 ,\infty \}$. Away from these loci, eigenvalues in $\C_\pm$ vary continuously with parameters, possibly (dis)appearing on the real axis. They form a~discrete subset of a half-line if $\mu \in \i \R$, of a circle if $\mu \in \R$ and of a~logarithmic spiral otherwise. If $\mu \not \in \i \mathbb{R}$, the set of eigenvalues is finite. More precisely, it is given by the union of the following two sets: 
\begin{align}
\left\{k=\pm\frac{\i}{2}\e^w\ |\
w\in\frac{1}{2\mu}\mathrm{Ln}(c_{p,\pm}),\quad
-\frac\pi2<\mathrm{Im} (w)<
  \frac\pi2 \right\},&\quad|\IM(\lambda)|\leq\frac12,\\
\left\{k=\pm\frac{\i}{2}\e^w\ |\
w\in\frac{1}{2\mu}\mathrm{Ln}(c_{p,\pm}),\quad
-\frac\pi2\leq\mathrm{Im}(w)\leq
\frac\pi2 \right\},&\quad|\IM(\lambda)|>\frac12.
\end{align}
\item Case $\mu=0$, $(\omega,\lambda)\neq(0,0)$.
  All nonhomogeneous realizations of $D_{\omega,\lambda}$ are
  parametrized
  by $\nu\in\mathbb{C}$ and
  \begin{align}
    f(x) = \begin{bmatrix} 1 \\ 0 \end{bmatrix} -2 \lambda(
    \ln(\e^{2\gamma} x)  + \nu) \begin{bmatrix} 0 \\
      1 \end{bmatrix} &\quad\text{for}\quad
                        \omega=\lambda \neq 0,\\
    f(x) = \begin{bmatrix} 0 \\ 1  \end{bmatrix} + 2 \lambda(
    \ln(\e^{2 \gamma} x) + \nu) \begin{bmatrix} 1 \\
      0 \end{bmatrix}&\quad\text{for}\quad\omega = - \lambda \neq
                       0.\end{align}
                     In both cases $k$ is an eigenvalue if and only if
                     $\ln(\mp 2 \i k) + \psi(1 \mp \i \lambda) \mp
                     \frac{\i}{2 \lambda} = \nu$ and $(k,\pm)$ is
                     admissible.
                     There is at most one eigenvalue in $\C_+$ and at most one eigenvalue in $\C_-$. The eigenvalue in $\C_\pm$ exists if and only if $\pm \i \lambda \not \in \N$ and $\RE \left( \exp \left( \nu - \psi(1 \mp \i \lambda) \mp \frac{\i}{2 \lambda} \right) \right) >0 $.
     \item Case $\omega = \lambda =0$, $f(x) = \begin{bmatrix} 1 \\ \kappa \end{bmatrix}$. $k$ is an eigenvalue if and only if $k \not \in \R$ and $\kappa = \i \, \sgn(\IM(k))$.
\end{enumerate}
\end{proposition}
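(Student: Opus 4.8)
The plan is to compute, in each of the three cases, the leading small-$x$ behavior of the square-integrable eigenfunction of $D_{\omega,\lambda}^{\max}-k$ (there is exactly one such, up to scalars, by Proposition~\ref{eig_prop}) and then check when it is proportional to $f$ near zero, since $k \in \sigma_\pp(D_{\omega,\lambda}^f)$ if and only if the $L^2$-eigenfunction lies in $\Dom(D_{\omega,\lambda}^f) = \Dom(D_{\omega,\lambda}^{\min}) + \C f \chi$, which (because $\Dom(D_{\omega,\lambda}^{\min}) = H_0^1$ kills the singular part) happens precisely when the two leading singular terms match up to a scalar. Non-degeneracy of eigenvalues and the absence of $k=0$ are immediate from Proposition~\ref{eig_prop} and the description of $\Ker(D_{\omega,\lambda})$ in Section~\ref{eigen}. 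Admissibility encodes exactly the condition that the would-be eigenfunction is $L^2$ at infinity: for $k \in \C_\pm$ this is $\zeta_p^\pm$, while for $k \in \R^\times$ one needs the exponent $\e^{\mp \i k x}(kx)^{\mp\i\lambda}$ branch to be decaying, i.e.\ $|\IM(\lambda)| > \tfrac12$ with the correct sign — this is the content of case~4 of Proposition~\ref{eig_prop}.

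For case $\mu \neq 0$, I would use the asymptotic expansions \eqref{eq:xi_asymptotic} and \eqref{eq:zeta_small_x} to write the $L^2$-eigenfunction $\zeta_p^\pm(k,\cdot)$ near zero as a combination of $x^\mu \begin{bmatrix} z \\ 1 \end{bmatrix}$ and $x^{-\mu}\begin{bmatrix} -1 \\ z^{-1}\end{bmatrix}$; rewriting these two vectors proportionally to $\begin{bmatrix}\omega-\lambda\\-\mu\end{bmatrix}$ and $\begin{bmatrix}\omega-\lambda\\\mu\end{bmatrix}$ respectively (using $z = -\mu/(\omega+\lambda) = -(\omega-\lambda)/\mu$), the ratio of the two coefficients gives exactly the condition $\kappa(\mp 2\i k)^{2\mu} = c_{p,\pm}$ after collecting the Gamma-function prefactors from \eqref{eq:zeta_small_x} versus \eqref{eq:xi_asymptotic} and the $(\mp 2\i k)^{\pm\mu}$ factors. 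The holomorphy of $c_{p,\pm}$ as a $\CP^1$-valued function away from $\mu=0$ is clear from its formula. The geometric description of the eigenvalue set — discrete subset of a half-line, circle, or logarithmic spiral — follows by solving $(\mp 2\i k)^{2\mu} = c_{p,\pm}/\kappa$ for $k$: writing $\mp 2\i k = \e^u$ gives $2\mu u \in \mathrm{Ln}(c_{p,\pm}/\kappa)$, so $u$ runs over an arithmetic progression with step $\pi\i/\mu$, and whether this progression, exponentiated, lands in $\C_\pm$ (the admissible region when $|\IM(\lambda)| \le \tfrac12$) or also hits $\R^\times$ (admissible when $|\IM(\lambda)| > \tfrac12$) dictates the boundary strictness $-\tfrac\pi2 < \IM(w) < \tfrac\pi2$ versus $-\tfrac\pi2 \le \IM(w) \le \tfrac\pi2$; finiteness when $\mu \notin \i\R$ is because then $\IM(\pi\i/\mu) \neq 0$, so only finitely many progression points fall in a horizontal strip.

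For case $\mu = 0$, $(\omega,\lambda) \neq (0,0)$, the matrix in \eqref{eq:Whit_mat} is nilpotent and one must use the confluent/Whittaker solutions at the confluence point $\mu=0$, where $\cK$ develops a logarithm; the relevant near-zero expansion of the $L^2$-eigenfunction will involve a $\ln x$ term, and matching it against $f(x) = \begin{bmatrix}1\\0\end{bmatrix} - 2\lambda(\ln(\e^{2\gamma}x)+\nu)\begin{bmatrix}0\\1\end{bmatrix}$ produces the condition $\ln(\mp 2\i k) + \psi(1\mp\i\lambda) \mp \tfrac{\i}{2\lambda} = \nu$ — here the $\psi$ (digamma) and the $\e^{2\gamma}$ come from the small-argument logarithmic expansion of $\cK$ (formula \eqref{eq:K_small_arg} in the appendix), and the $\mp\tfrac{\i}{2\lambda}$ term from the off-diagonal coupling. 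Since $\ln$ here is single-valued up to $2\pi\i\Z$ but admissibility forces $k \in \C_\pm$ when $|\IM(\lambda)| \le \tfrac12$, there is at most one solution in each half-plane, and the positivity condition $\RE(\exp(\nu - \psi(1\mp\i\lambda)\mp\tfrac{\i}{2\lambda})) > 0$ is just the requirement that the unique candidate $\mp 2\i k = \exp(\nu - \psi(\dots) \mp \dots)$ actually has $\pm\IM(k) > 0$; the exclusion $\pm\i\lambda \notin \N$ is where $\psi$ or the Gamma prefactor blows up, i.e.\ the confluent function degenerates to a polynomial and the matching fails. Case $\omega=\lambda=0$ is elementary: $f$ is literally constant, $D_{0,0}$ acts as $\sigma_2/\i \cdot \partial_x$-type, and the eigenfunctions are $\e^{\i kx}$ times an eigenvector of a constant matrix, square-integrable at infinity only for $k \notin \R$ with eigenvector $\begin{bmatrix}1\\ \i\,\sgn(\IM k)\end{bmatrix}$.

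The main obstacle will be bookkeeping the exact constants — the Gamma-function ratios, the powers of $\mp 2\i$, the branch of $(\cdot)^{2\mu}$, and (in case~2) the digamma and Euler-$\gamma$ terms — so that the matching condition comes out in precisely the stated normalized form; the structure of the argument (one $L^2$-solution, match leading singular terms) is straightforward, but getting $c_{p,\pm}$ and the $\psi(1\mp\i\lambda)\mp\tfrac{\i}{2\lambda}$ exactly right, and correctly pinning down the open-versus-closed strip in the $\mathrm{Ln}$ description according to whether real eigenvalues are admissible, is where care is needed.
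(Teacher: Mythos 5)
Your proposal follows essentially the same route as the paper's proof: the unique (up to scalars) candidate eigenfunction is $\zeta_p^\pm(k,\cdot)$ with $(k,\pm)$ admissible, and membership in $\Dom(D_{\omega,\lambda}^f)$ reduces to matching its small-$x$ asymptotics against $f$, which yields the stated conditions in all three cases. Two small caveats: for the matching in case $\mu\neq 0$ you need both the $x^{\mu}$ and $x^{-\mu}$ coefficients of $\zeta_p^\pm$, so \eqref{eq:zeta_small_x} (leading term only) does not suffice and you should use \eqref{eq:K_small_arg} (or \eqref{eq:zxixi} with \eqref{eq:xi_asymptotic}) as the paper does, and the claim that $c_{p,\pm}$ is a well-defined $\C\cup\{\infty\}$-valued holomorphic function requires checking it is nowhere of the indeterminate form $\frac{0}{0}$, which the paper verifies explicitly.
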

\begin{proof}
An eigenvector of $D_{\omega,\lambda}$ square integrable away from the origin is necessarily of the form $\zeta_p^\pm(k,\cdot)$ with an admissible $(k , \pm)$. It belongs to the domain of $D_{\omega,\lambda}^f$ if its asymptotic form for $x \to 0$, obtained from \eqref{eq:K_small_arg}, is proportional to $f$. This yields conditions described in 1.-3. 

Function $c_{p,\pm}$ is meromorphic. In the region $|\RE(\mu)| < \frac12$ functions $\frac{\omega}{\Gamma(1+\mu \mp \i \lambda)}$ and $\frac{\lambda \mp \i \mu}{\Gamma(1 - \mu \mp \i \lambda)}$ do not simultaneously vanish anywhere, while $\frac{\Gamma(2 \mu +1)}{\Gamma(-2 \mu +1)}$ is holomorphic and nowhere vanishing. Hence $c_{p, \pm}$ is not of the indeterminate form $\frac{0}{0}$ anywhere. 
\end{proof}


Let us note that eigenfunctions corresponding to real eigenvalues (which exist only for $|\IM(\lambda)| > \frac12$) decay at infinity only as fast as $x^{- |\IM(\lambda)|}$, not exponentially. 

Consider a homogeneous operator $D_p$ with $p \in \cE^\pm$ and its deformations $D_{\omega,\lambda}^f$, with $f$ parametrized by $\kappa$ so that $D_{\omega,\lambda}^f=D_p$ for $\kappa =0$. Then for $\kappa=0$ the point spectrum of $D_{\omega,\lambda}^f$ is $\C_\pm$, but for every $\kappa \neq 0$ it is disjoint from $\C_\pm$.





\appendix

\section{1-dimensional Dirac operators} \label{sec:dir1d}

\subsection{General formalism}
By a {\em 1d Dirac operator on the halfline} we will mean a differential
operator of the form
\begin{equation}
 D = 
\begin{bmatrix}
a(x) & - \partial_x \\
\partial_x & b(x)
\end{bmatrix},
\end{equation}
where $a,b$ are smooth functions on $\R_{+} = ]0,\infty[$. 
In this subsection we treat it as a formal operator acting, say, on the space of distributions on $\R_{+} $ valued
in $\C^2$.
We first  describe a few integral kernels closely related to $D$.

Let $k\in\C$ and
\begin{equation}
\xi(k,x)=\begin{bmatrix}\xi_{\uparrow}(k,x)\\\xi_{\downarrow}(k,x)\end{bmatrix},\qquad 
  \zeta(k,x)=\begin{bmatrix}\zeta_{\uparrow}(k,x)\\\zeta_{\downarrow}(k,x)\end{bmatrix}
\end{equation}
be a pair of linearly independent solutions of the Dirac equation:
\begin{equation}
(D-k)\xi(k, \cdot)=(D-k)\zeta(k, \cdot)=0.
\end{equation}
Let
\begin{equation}
d(k,x):=\det\left[\xi(k,x),\zeta(k,x)\right]=
  \det  \begin{bmatrix}\xi_{\uparrow}(k,x)&\zeta_{\uparrow}(k,x)\\\xi_{\downarrow}(k,x)&\zeta_{\downarrow}(k,x).
\end{bmatrix}
\end{equation}
Then $d(k,x)$ does not depend on $x$, so that one can write $d(k)$ instead.
We define
\begin{align}\label{canonical}
&  G^\leftrightarrow(k;x,y):=
  d(k)^{-1}\xi(k,x)\zeta(k,y)^\T-  d(k)^{-1}\zeta(k,x)\xi(k,y)^\T\\[2ex]
  &=
  d(k)^{-1}\begin{bmatrix}\xi_{\uparrow}(k,x)\zeta_{\uparrow}(k,y) 
    &\xi_{\uparrow}(k,x)\zeta_{\downarrow}(k,y)\\[1ex]
    \xi_{\downarrow}(k,x)\zeta_{\uparrow}(k,y)&\xi_{\downarrow}(k,x)\zeta_{\downarrow}(k,y) \end{bmatrix}
    -  d(k)^{-1}\begin{bmatrix}\zeta_{\uparrow}(k,x) \xi_{\uparrow}(k,y)
      &\zeta_{\uparrow}(k,x)\xi_{\downarrow}(k,y)
      \\[1ex]
      \zeta_{\downarrow}(k,x)\xi_{\uparrow}(k,y)
      &\zeta_{\downarrow}(k,x)\xi_{\downarrow}(k,y)
\end{bmatrix}. \notag
\end{align}
Note that $G^\leftrightarrow(k,x,y)$ is uniquely defined by
\begin{equation}
(D-k) G^\leftrightarrow(k;x,y)=0,\qquad G^\leftrightarrow(k;x,x)=
\begin{bmatrix}0 
    &1\\
    -1&0 \end{bmatrix}.
\end{equation}
We will call it the {\em canonical bisolution.}

We also have the {\em forward and backward Green's operators} given by the kernels
\begin{subequations}
\begin{align}
   G^\rightarrow(k;x,y)=
  & G^\leftrightarrow(k;x,y)\one_{\R_+}(x-y),\\
   G^\leftarrow(k;x,y)=
  & -G^\leftrightarrow(k;x,y)\one_{\R_+}(y-x).  
\end{align}
\end{subequations}
They are uniquely defined by
\begin{subequations}
\begin{align} (D-k)  G^\rightarrow(k;x,y)=\delta(x-y) \begin{bmatrix}1 
    &0\\
    0&1 \end{bmatrix}, &\quad x<y\Rightarrow G^\rightarrow(x,y)=0;
\\
  (D-k)  G^\leftarrow(k;x,y)=\delta(x-y) \begin{bmatrix}1 
    &0\\
    0&1 \end{bmatrix},& \quad x>y\Rightarrow G^\leftarrow(x,y)=0.\end{align}
    \end{subequations}
Note that $G^\leftrightarrow, G^\leftarrow, G^\rightarrow$ do not depend on the choice of $\xi,\zeta$.

Using the eigensolutions $\xi,\zeta$, we can introduce yet  another important integral kernel:
\begin{align}\notag G^{\bowtie}(k;x,y):=&
  -d(k)^{-1}\begin{bmatrix}\xi_{\uparrow}(k,x)\zeta_{\uparrow}(k,y) 
    &\xi_{\uparrow}(k,x)\zeta_{\downarrow}(k,y)\\
    \xi_{\downarrow}(k,x)\zeta_{\uparrow}(k,y)&\xi_{\downarrow}(k,x)\zeta_{\downarrow}(k,y) \end{bmatrix}\one_{\R_+}(y-x)
\\ & -       d(k)^{-1}\begin{bmatrix}\zeta_{\uparrow}(k,x) \xi_{\uparrow}(k,y)
      &\zeta_{\uparrow}(k,x)\xi_{\downarrow}(k,y)
      \\
      \zeta_{\downarrow}(k,x)\xi_{\uparrow}(k,y)
      &\zeta_{\downarrow}(k,x)\xi_{\downarrow}(k,y)
\end{bmatrix}\one_{\R_+}(x-y).
\end{align}
It is also  {\em Green's kernel}, because
  it  satisfies
\begin{align}(D-k) G^{\bowtie}(k;x,y)=\delta(x-y) \begin{bmatrix}1 
    &0\\
    0&1 \end{bmatrix}.\end{align}
$ G^{\bowtie}(k;x,y)$
depends on the choice of the pair of 1-dimensional subspaces $\C\xi(k, \cdot)$,
$\C\zeta(k, \cdot)$ of $\Ker(D-k)$. The resolvents of various closed
realizations
of $D$ are often of this form.

Two classes of 1d Dirac operators have special properties.
The case $a(x)=b(x)$ can be fully diagonalized:
\begin{align} \label{diago}
\begin{bmatrix}
a(x) & - \partial_x \\
\partial_x & a(x)
\end{bmatrix}&=\frac1{\sqrt2}\begin{bmatrix}
  1&\i\\\i&1\end{bmatrix}
            \begin{bmatrix}
              -\i\partial_x+a(x)&0\\0&
              \i\partial_x+a(x)\end{bmatrix}\frac1{\sqrt2}\begin{bmatrix}
  1&-\i\\-\i&1\end{bmatrix}.          
\end{align}
We will analyze 1d Dirac-Coulomb operators of this form in Subsection 
\ref{app:elementary}.

The case $a(x)=-b(x)$ can be brought to an antidiagonal
form, used in supersymmetry:
\begin{align} 
\begin{bmatrix}
a(x) & - \partial_x \\
\partial_x & -a(x)
\end{bmatrix}&=\frac1{\sqrt2}\begin{bmatrix}
  1&-1\\1&1\end{bmatrix}
            \begin{bmatrix}
              0&-\partial_x-a(x)\\
              \partial_x-a(x)&0\end{bmatrix}\frac1{\sqrt2}\begin{bmatrix}
  1&1\\-1&1\end{bmatrix}.        
  \label{eq:susy_H}  
\end{align}
We will analyze 1d Dirac-Coulomb operators of this form in Subsection 
\ref{app.super}.

\subsection{Homogeneous first order scalar operators}
\label{Homogeneous first order scalar operators}

Let $\alpha\in\C$. In this subsection we discuss the differential operator
\begin{equation}
A_{\alpha} =  x^{\alpha} \partial_x x^{- \alpha} = \partial_x -\frac{ \alpha}{x}
\end{equation}
acting on scalar functions. It will be a building block of some
special 1d Dirac-Coulomb operators considered in subsections
\ref{app:elementary}
and \ref{app.super}.

Let us briefly recall  basic results about  realizations of
$A_\alpha$ as a closed operator in $L^2(\R_+)$ following
\cite{BuDeGe11_01}. Proofs of all statements stated in this subsection
without justification can be found therein.
(In \cite{BuDeGe11_01} a different convention was used:
$A_{\alpha} = -\i\partial_x +\frac{\i \alpha}{x}$. Thus
$A_\alpha^{\rm new}=\i A_\alpha^{\rm old}$.)

We let $A_{\alpha}^{\min}$ be the closure (in the sense of operators on $L^2(\R_+)$) of the restriction of $A_{\alpha}$ to $C_c^{\infty}(\R_+)$ and $A_{\alpha}^{\max}$ the restriction of $A_{\alpha}$ to $\Dom(A_{\alpha}^{\max}) = \{ f \in L^2(\R_+) \, | \, A_{\alpha} f \in L^2(\R) \}$. Operators $A_{\alpha}^{\min}$ and $-A_{ -\overline \alpha}^{\max}$ are adjoint to each other.

\begin{proposition} \label{A_dom}
We have $A_{\alpha}^{\min} = A_{\alpha}^{\max}$ if and only if
$|\RE(\alpha)| \geq \frac12$. If $|\RE(\alpha)| < \frac12$, then
$\Dom(A_{\alpha}^{\max}) = \Dom(A_{\alpha}^{\min}) + \C \chi
x^{\alpha}$, where $\chi\in C_\mathrm{c}^\infty(\R_+)$ and $\chi=1$
near $0$. If $\RE(\alpha) \neq \frac{1}{2}$, then $\Dom(A_{\alpha}^{\min})= H_0^1(\R_+)$.
\end{proposition}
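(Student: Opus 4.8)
The plan is to prove Proposition \ref{A_dom} by following the standard Hardy-inequality strategy already rehearsed for the Dirac case in Proposition \ref{dom_prop}, but now in the simpler scalar setting. The statement has three parts: (i) $A_\alpha^{\min}=A_\alpha^{\max}$ iff $|\RE(\alpha)|\geq\frac12$; (ii) the domain formula $\Dom(A_\alpha^{\max})=\Dom(A_\alpha^{\min})+\C\chi x^\alpha$ when $|\RE(\alpha)|<\frac12$; (iii) $\Dom(A_\alpha^{\min})=H_0^1(\R_+)$ when $\RE(\alpha)\neq\frac12$. I will take these in the order (iii), then (ii), then (i), since the identification of the minimal domain is the engine for everything else.

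For (iii), first I would show $H_0^1(\R_+)\subset\Dom(A_\alpha^{\min})$: for $f\in C_c^\infty(\R_+)$ one has $A_\alpha f=f'-\frac\alpha x f$, and Hardy's inequality $\int\frac{|f|^2}{x^2}\,\D x\leq 4\int|f'|^2\,\D x$ bounds the $L^2$ norm of $\frac1x f$ by the $H^1$ norm; hence the graph norm of $A_\alpha$ is dominated by the $H^1$ norm on $C_c^\infty$, and taking closures gives the inclusion. For the reverse inclusion when $\RE(\alpha)\neq\frac12$, I would mimic the factorization trick of Proposition \ref{dom_prop}: write $A_\alpha = R S$ where $S=\frac1x$ (self-adjoint on $\Dom(S)=\{f:\frac1x f\in L^2\}$) and $R$ is a bounded perturbation of the dilation generator, explicitly $R = x\partial_x - \alpha + \tfrac12$ (acting after multiplication by $\frac1x$; concretely $A_\alpha f = \frac1x\bigl(x\partial_x - \alpha\bigr)f$, and $x\partial_x$ on $L^2(\R_+)$ is $-\i A - \frac12$ in the notation of Subsection \ref{Remarks about notation}, with $A$ self-adjoint). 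Thus $R$ is $A$ up to a bounded shift by the constant $-\i(\alpha-\tfrac12)-\i\cdot\tfrac12$, wait — more carefully, $R = -\i A - \tfrac12 - \alpha$, which has bounded inverse precisely when $-\tfrac12-\alpha$ avoids the real spectrum of $-\i A$, i.e.\ when $\RE(\alpha)\neq-\frac12$; combined with the mirror computation (or using $\RE(\alpha)\neq\frac12$ as stated) this is where the threshold enters. By Lemma \ref{closed_lemma}, $RS$ is closed, and a direct computation as in Proposition \ref{dom_prop} shows $\Dom(RS)=H^1\cap\Dom(S)=H_0^1(\R_+)$; since $C_c^\infty$ is a core for $RS$ (it is dense in $H_0^1$), we get $A_\alpha^{\min}=\overline{RS}=RS$ with domain $H_0^1(\R_+)$.

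For (ii), when $|\RE(\alpha)|<\frac12$: the function $\chi x^\alpha$ is square-integrable near $0$ (since $2\RE(\alpha)>-1$) and $A_\alpha(\chi x^\alpha)=\chi' x^\alpha\in C_c^\infty$, so $\chi x^\alpha\in\Dom(A_\alpha^{\max})$; on the other hand $\chi x^\alpha\notin H^1(\R_+)$ because $(\chi x^\alpha)'\sim \alpha x^{\alpha-1}$ fails to be in $L^2$ near $0$ when $\RE(\alpha)<\frac12$ (and it is nonzero since if $\alpha=0$ then $|\RE(\alpha)|<\frac12$ but $\chi$ itself is already in $H^1$ — I must treat $\alpha=0$ separately, where $A_0=\partial_x$ and indeed $\Dom(A_0^{\min})=H_0^1$, $\Dom(A_0^{\max})=H^1$, and $H^1=H_0^1+\C\chi$). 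So $\Dom(A_\alpha^{\min})+\C\chi x^\alpha$ is a strictly larger subspace of $\Dom(A_\alpha^{\max})$. To see there is nothing more, I would use that every $f\in\Dom(A_\alpha^{\max})$ solves $f'-\frac\alpha x f = g$ with $g=A_\alpha f\in L^2$, so by variation of constants $f(x)=x^\alpha\bigl(c+\int_{x_0}^x y^{-\alpha}g(y)\,\D y\bigr)$ on $\R_+$; analysing the behaviour near $0$ (the integral term, using Cauchy–Schwarz, is $o(x^{\RE(\alpha)+1/2})$, hence in $\Dom(A_\alpha^{\min})$ near zero after cutting off), one extracts a unique multiple of $x^\alpha$ and the remainder lies in $\Dom(A_\alpha^{\min})$; this is exactly the deficiency-index-$(1,1)$ picture. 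Part (i) then follows: for $|\RE(\alpha)|\geq\frac12$ the solution $x^\alpha$ of $A_\alpha u=0$ is not in $L^2$ near $0$ while $x^\alpha\cdot(\text{growing complement})$ is not in $L^2$ near $\infty$, so $\Ker(A_\alpha^{\max})$ and (by the adjoint relation $A_\alpha^{\min*}=-A_{-\overline\alpha}^{\max}$) the relevant deficiency spaces are trivial, forcing $A_\alpha^{\min}=A_\alpha^{\max}$; for $|\RE(\alpha)|<\frac12$ part (ii) already exhibits a proper inclusion.

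The main obstacle I anticipate is the borderline case $\RE(\alpha)=\frac12$: there $x^\alpha\notin H^1$ yet one expects $A_\alpha^{\min}=A_\alpha^{\max}$ still to hold (so $\Dom(A_\alpha^{\min})$ is strictly larger than $H_0^1$), which is why part (iii) explicitly excludes it. Handling this requires the approximation argument analogous to the one in the proof of Proposition \ref{eig_prop} — regularizing $x^{1/2}$-type behaviour by $\min\{x^\epsilon,1\}x^\alpha$ and checking convergence in the graph norm — but since the proposition as stated only claims $\Dom(A_\alpha^{\min})=H_0^1$ for $\RE(\alpha)\neq\frac12$, I can simply cite \cite{BuDeGe11_01} for the sharp statement at the threshold, or omit it. I would also double-check the sign/shift bookkeeping relating $x\partial_x$ to the self-adjoint generator $A$ so that the invertibility condition on $R$ lands exactly on $\RE(\alpha)\neq\pm\frac12$; this is routine but error-prone.
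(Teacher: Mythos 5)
The paper itself does not prove Proposition \ref{A_dom}: it quotes it from \cite{BuDeGe11_01}, so your plan of redoing it by the method of Proposition \ref{dom_prop} is a legitimate, self-contained route, and part (iii) does go through that way --- but only after you fix the order of the factors. You write $A_\alpha f=\frac1x\big(x\partial_x-\alpha\big)f$, which puts the multiplication operator $\frac1x$ \emph{outside}; Lemma \ref{closed_lemma} requires the outer factor to be boundedly invertible, and $\frac1x$ is not, so this factorization is unusable and it is what produces your spurious threshold $\RE(\alpha)\neq-\frac12$. Commuting $\frac1x$ through gives $A_\alpha=\big(x\partial_x+1-\alpha\big)\circ\frac1x$, and with the paper's convention $\i A=x\partial_x+\frac12$ the outer factor is $\i A+\frac12-\alpha$, boundedly invertible precisely when $\RE(\alpha)\neq\frac12$. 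No ``mirror computation'' is needed: the threshold is genuinely one-sided, exactly as stated.

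The genuine gaps are in (ii) and in the direction $|\RE(\alpha)|\geq\frac12\Rightarrow A_\alpha^{\min}=A_\alpha^{\max}$ of (i). In (ii), a pointwise estimate on $h(x)=x^\alpha\int_0^x y^{-\alpha}g(y)\,\D y$ (which is $o(x^{1/2})$, not $o(x^{\RE(\alpha)+1/2})$) does not put $\chi h$ into $\Dom(A_\alpha^{\min})=H_0^1$: you must show $h'\in L^2$ near $0$, equivalently $\frac hx\in L^2$, and the naive Cauchy--Schwarz bound produces a divergent logarithm; what is needed is the weighted Hardy inequality $\big\|x^{\RE(\alpha)-1}\int_0^x y^{-\RE(\alpha)}|g(y)|\,\D y\big\|_{L^2}\leq\frac{2}{1-2\RE(\alpha)}\|g\|_{L^2}$, valid for $\RE(\alpha)<\frac12$. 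In (i) your argument rests on two false premises: for $\RE(\alpha)\geq\frac12$ the solution $x^\alpha$ \emph{is} square integrable near $0$ (it fails only for $\RE(\alpha)\leq-\frac12$; near $0$ it is the adjoint-side solution $x^{-\overline\alpha}$ that is excluded), and triviality of $\Ker(A_\alpha^{\max})$ plus ``deficiency spaces'' cannot force equality of the minimal and maximal operators here --- these are not symmetric operators, for $|\RE(\alpha)|<\frac12$ the kernel at $k=0$ is equally trivial yet the domains differ, and by Proposition \ref{pri1} the maximal operator has a whole half-plane of eigenvalues whenever $\RE(\alpha)>-\frac12$, so nothing relevant is trivial. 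The working argument in this regime is the one you already have the pieces for: for $\RE(\alpha)>\frac12$, $\chi x^\alpha\in H_0^1\subset\Dom(A_\alpha^{\min})$ and the variation-of-constants remainder lies in the minimal domain by the Hardy bound above, so $\Dom(A_\alpha^{\max})\subset\Dom(A_\alpha^{\min})$; for $\RE(\alpha)=\frac12$ one needs the graph-norm regularization by $\min\{x^\epsilon,1\}$ exactly as in the proof of Proposition \ref{eig_prop}; for $\RE(\alpha)\leq-\frac12$ square integrability near $0$ forces the coefficient of $x^\alpha$ in the variation-of-constants formula to vanish, and the remainder is again in $H_0^1$ by Hardy.
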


Closed realizations of $A_\alpha$ are of two types,
described in the following pair of propositions.

\begin{proposition}
Let $\RE(\alpha) > - \frac12$,
\begin{enumerate}
\item
$\sigma(-\i A_{\alpha}^{\max}) = \R \cup \C_+$
and one has \begin{equation}
(A^{\max}_{\alpha} - \i k)^{-1}f (x) = -  \int_{x}^{\infty} \e^{\i k(x-y)} \left( \frac{x}{y} \right)^{\alpha} f(y) \D y, \qquad \IM(k) < 0 .
\end{equation}
\item If $k \in \C_+$ and $n \geq 1$, then $\Ker\left( (
    A_{\alpha}^{\max} - \i k)^n \right)$ is the space of functions of
  the form $x^{\alpha} \e^{\i k x} q(x)$ with $q$ polynomial of degree
  at most $n-1$. In particular $\bigcup \limits_{n=0}^{\infty}
  \Ker\left( ( A_{\alpha}^{\max} - \i k)^n \right)$ is dense in
  $L^2(\R_+)$.
  \item If $k \in \C_+$, then $A_{\alpha}^{\max} - \i k$ is a Fredholm operator of index $1$.
\end{enumerate}\label{pri1}
\end{proposition}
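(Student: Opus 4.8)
The plan is to prove the three assertions of Proposition \ref{pri1} by exploiting the fact that $A_\alpha = x^\alpha \partial_x x^{-\alpha}$ conjugates into an ordinary derivative, so everything reduces to an explicit, hands-on computation with the operator $\partial_x$ on $L^2(\R_+)$ twisted by the power weight $x^\alpha$. First I would record the pointwise solution formula: the general solution of $(A_\alpha - \i k) f = 0$ is $f(x) = c\, x^\alpha \e^{\i k x}$, and more generally the equation $(A_\alpha - \i k)^n f = 0$ is solved precisely by $f(x) = x^\alpha \e^{\i k x} q(x)$ with $\deg q \le n-1$; this is checked by writing $A_\alpha - \i k = x^\alpha \e^{\i k x}\,\partial_x\,\e^{-\i k x} x^{-\alpha}$ and noting that $\partial_x^n$ kills exactly the polynomials of degree $\le n-1$.

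For part (1), I would verify directly that the integral operator
\begin{equation}
(Rf)(x) = - \int_x^\infty \e^{\i k(x-y)} \left(\frac{x}{y}\right)^\alpha f(y)\,\D y, \qquad \IM(k) < 0,
\end{equation}
is bounded on $L^2(\R_+)$ for $\RE(\alpha) > -\tfrac12$ — this is a Schur-test estimate, after substituting $y = xt$ the kernel becomes $-\e^{\i kx(1-t)} t^{-\alpha}$ on $t \ge 1$, and the assumption $\RE(\alpha) > -\tfrac12$ together with $\IM(k) < 0$ makes both the $x$-integral and the $t$-integral converge, giving a bound uniform in the other variable. Then one checks $R = (A_\alpha^{\max} - \i k)^{-1}$ by showing $\Ran(R) \subset \Dom(A_\alpha^{\max})$ with $(A_\alpha - \i k) R = 1$ (differentiation under the integral sign) and $R(A_\alpha^{\max} - \i k) = 1$ (integration by parts for $f$ in a core, then density, using that $\e^{\i k(x-y)}(x/y)^\alpha$ vanishes as $y \to \infty$ for $\IM(k)<0$). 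This gives $\C_- \subset \rho(-\i A_\alpha^{\max})$ reflected appropriately; for $k \in \C_+$ the function $x^\alpha \e^{\i kx}$ is square-integrable near infinity and (since $\RE(\alpha) > -\tfrac12$) near zero, hence lies in $\Ker(A_\alpha^{\max} - \i k)$, so $\C_+ \subset \sigma_\pp$; and for $k \in \R^\times$ one shows $A_\alpha^{\max} - \i k$ has dense non-closed range by the same kind of argument as in Corollary \ref{ess_spec}, giving $\R \subset \sigma$ but $\R \cap \sigma_\pp = \emptyset$. Combining, $\sigma(-\i A_\alpha^{\max}) = \R \cup \C_+$.

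Part (2) then follows from the solution formula above: the elements of $\Ker((A_\alpha^{\max} - \i k)^n)$ that additionally lie in $L^2(\R_+)$ are exactly those $x^\alpha \e^{\i kx} q(x)$ with $q$ of degree $\le n-1$ — for $k \in \C_+$ the exponential decay at infinity swamps any polynomial growth, and integrability near $0$ holds because $\RE(\alpha) > -\tfrac12$ (for $n \ge 1$ the worst term near zero behaves like $x^{\RE(\alpha) + n - 1}$, harmless). For density of $\bigcup_n \Ker((A_\alpha^{\max} - \i k)^n)$ in $L^2(\R_+)$, note this union equals $x^\alpha \e^{\i kx} \cdot \mathbb{C}[x]$; since $\e^{-\i kx} x^{-\alpha}$ is a measurable, a.e.\ nonzero function, density reduces to density of polynomials times $\e^{\i kx} x^\alpha \cdot \e^{-\i kx} x^{-\alpha}$... more cleanly: the map $g \mapsto x^\alpha \e^{\i kx} g(x^{-1}\cdot\text{stuff})$ — actually the cleanest route is to observe that $x^\alpha \e^{\i kx} \mathbb{C}[x]$ is dense iff $\mathbb{C}[x]$ is dense in $L^2(\R_+, |x^\alpha \e^{\i kx}|^2 \D x)$, and the latter is a weighted $L^2$ space with weight $x^{2\RE(\alpha)} \e^{-2\IM(k) x}$ having all moments finite and decaying exponentially, so the classical density-of-polynomials criterion (determinacy of the associated moment problem, e.g.\ via the Carleman condition) applies. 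For part (3), Fredholmness of $A_\alpha^{\max} - \i k$ for $k \in \C_+$: the kernel is one-dimensional (the $n=1$ case of part (2)), and the cokernel is $\Ker((A_\alpha^{\max})^\T - \i k)$ under the bilinear pairing, equivalently computed via the adjoint relation $A_\alpha^{\min} = -(A_{-\overline\alpha}^{\max})^*$ stated in the excerpt; one shows $\Ran(A_\alpha^{\max} - \i k)$ is closed and of codimension $0$ — indeed for $\RE(\alpha) > -\tfrac12$ one can solve $(A_\alpha - \i k)f = g$ explicitly by variation of parameters with the decaying solution, producing an $L^2$ solution for every $g \in L^2$, so the range is all of $L^2$ and the index is $\dim\Ker - 0 = 1$. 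The main obstacle I anticipate is the density statement in part (2): establishing that polynomials are dense in the relevant weighted $L^2(\R_+)$ space requires invoking a moment-problem determinacy criterion rather than a soft argument, and care is needed that the weight (which involves $\e^{-2\IM(k)x}$ with $\IM(k) > 0$, hence genuine exponential decay) does satisfy Carleman's condition; the boundedness Schur estimate in part (1) is routine but must be done with the precise $\RE(\alpha) > -\tfrac12$ threshold in mind.
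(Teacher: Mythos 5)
Your plan is viable and genuinely different from the paper's proof: the paper simply cites \cite{BuDeGe11_01} for part 1; it proves the density in part 2 by invoking completeness of the Laguerre system $\e^{-cx/2}x^{\beta/2}L_n^{(\beta)}(cx)$ rather than moment-problem determinacy (your Carleman route is fine, since the weight $x^{2\RE(\alpha)}\e^{-2\IM(k)x}$ has finite exponential moments; an even softer version is to note that a function orthogonal to all polynomials in the weighted space has a Fourier--Laplace transform holomorphic in a strip whose derivatives at $0$ all vanish); and it obtains part 3 in two steps, first for $|\RE(\alpha)|<\frac12$ by using $(A_\alpha^{\min}-\i k)^{-1}$ from Proposition \ref{pri2} as a right inverse, then transporting the conclusion to $\RE(\alpha)\ge\frac12$ via Hilbert--Schmidt differences of resolvents and Lemma \ref{ess_spec_lemma}, whereas you aim at a single direct argument. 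However, two of your steps fail as written and are exactly where the work lies.

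First, the boundedness of $R$ in part 1 is not an unweighted Schur estimate. For $-\frac12<\RE(\alpha)<0$ the kernel modulus is $\e^{-|\IM(k)|(y-x)}(y/x)^{|\RE(\alpha)|}\one_{y>x}$, and the row integrals $\int_x^\infty \e^{-|\IM(k)|(y-x)}(y/x)^{|\RE(\alpha)|}\D y$ blow up like $x^{-|\RE(\alpha)|}$ as $x\downarrow 0$, so there is no bound ``uniform in the other variable''. You need either a weighted Schur test (a weight like $x^{-1/2}$ works near the origin) or the splitting used in the proof of Theorem \ref{thoe}: on $\{x,y\le |k|^{-1}\}$ the kernel is Hilbert--Schmidt precisely because $2\RE(\alpha)>-1$, and away from that corner the exponential factor makes the plain Schur test applicable.

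Second, in part 3 the phrase ``solve by variation of parameters with the decaying solution'' conceals the real issue. Since $\IM(k)>0$, the factor $\e^{-\i ky}$ grows, so the backward integral $\int_x^\infty y^{-\alpha}\e^{-\i ky}g(y)\,\D y$ need not converge for $g\in L^2$; and the forward Green operator with kernel $\e^{\i k(x-y)}(x/y)^{\alpha}\one_{x>y}$ is bounded on $L^2$ only for $\RE(\alpha)<\frac12$ (this is exactly the hypothesis of Proposition \ref{pri2}); for $\RE(\alpha)\ge\frac12$ it is not even defined on all of $L^2$, because $\int_0^x y^{-\RE(\alpha)}|g(y)|\,\D y$ can diverge. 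Surjectivity is still true, but it must be argued for each individual $g$: take $f(x)=x^\alpha\e^{\i kx}\int_1^x y^{-\alpha}\e^{-\i ky}g(y)\,\D y$, check $|f(x)|=O(x^{1/2})$ near $0$ by Cauchy--Schwarz (with a logarithm at $\RE(\alpha)=\frac12$), and near infinity split the integral at $x/2$, estimating the near piece by the convolution of $|g|$ with $\e^{-\IM(k)t}\one_{t\ge0}$ (Young's inequality) and the far piece by an exponentially small term. Only after this do you get $\Ran(A_\alpha^{\max}-\i k)=L^2(\R_+)$, hence closed range, trivial cokernel and index $1$. If you prefer to avoid these estimates, the paper's alternative is to settle $|\RE(\alpha)|<\frac12$ first and then transfer the Fredholm property and index using compactness of $(A_\alpha^{\max}-\i k)^{-1}-(A_\beta^{\max}-\i k)^{-1}$ for $k\in\C_-$.
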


\begin{proposition} Let $\RE(\alpha) < \frac12$.
    \begin{enumerate}
\item $\sigma(-\i A_{\alpha}^{\min}) = \R \cup \C_-$ and one has
\begin{equation}
(A^{\min}_{\alpha} - \i k)^{-1}f (x) =  \int_{0}^{x} \e^{\i
  k(x-y)} \left( \frac{x}{y} \right)^{\alpha} f(y) \D y, \qquad \IM(k)
> 0 .
\end{equation}
\item $A_{\alpha}^{\min}$ has no eigenvectors.
  \item
If $k \in \C_-$, then $A_{\alpha}^{\min} - \i k$ is a Fredholm
operator of index $-1$.
\end{enumerate}\label{pri2}
\end{proposition}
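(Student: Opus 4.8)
The plan is to derive all three items from Proposition~\ref{pri1} by duality, with essentially no new analysis. The hypothesis $\RE(\alpha)<\tfrac12$ is equivalent to $\RE(-\bar\alpha)>-\tfrac12$, so Proposition~\ref{pri1} applies to $A_{-\bar\alpha}^{\max}$. Dualizing the relation $(A_\alpha^{\min})^{*}=-A_{-\bar\alpha}^{\max}$ recorded before Proposition~\ref{A_dom} gives $(A_{-\bar\alpha}^{\max})^{*}=-A_\alpha^{\min}$, and multiplying by $\i$ yields the clean identity $-\i A_\alpha^{\min}=\bigl(-\i A_{-\bar\alpha}^{\max}\bigr)^{*}$. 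Everything will follow by pushing statements about $-\i A_{-\bar\alpha}^{\max}$ through this adjoint.

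For item~1, I would apply $\sigma(B^{*})=\{\bar z\,|\,z\in\sigma(B)\}$ with $B=-\i A_{-\bar\alpha}^{\max}$: Proposition~\ref{pri1}(1) gives $\sigma(B)=\R\cup\C_{+}$, whose complex conjugate is $\R\cup\C_{-}$, so $\sigma(-\i A_\alpha^{\min})=\R\cup\C_{-}$. For the kernel, fix $k$ with $\IM(k)>0$. Then $(A_\alpha^{\min}-\i k)^{*}=-A_{-\bar\alpha}^{\max}+\i\bar k=-(A_{-\bar\alpha}^{\max}-\i\bar k)$ with $\IM(\bar k)<0$, so by Proposition~\ref{pri1}(1) the right-hand side is boundedly invertible; hence the closed operator $A_\alpha^{\min}-\i k=\bigl((A_\alpha^{\min}-\i k)^{*}\bigr)^{*}$ is boundedly invertible with $(A_\alpha^{\min}-\i k)^{-1}=-\bigl((A_{-\bar\alpha}^{\max}-\i\bar k)^{-1}\bigr)^{*}$. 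It only remains to take the Hilbert-space adjoint of the explicit kernel $-\one_{\R_+}(y-x)\,\e^{\i\bar k(x-y)}(x/y)^{-\bar\alpha}$ from Proposition~\ref{pri1}(1): conjugating and transposing turns it into $-\one_{\R_+}(x-y)\,\e^{\i k(x-y)}(x/y)^{\alpha}$, and the overall minus sign produces exactly the asserted formula $\int_0^x\e^{\i k(x-y)}(x/y)^{\alpha}f(y)\,\D y$.

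For item~2, an eigenvector $f$ of $A_\alpha^{\min}$ with eigenvalue $\lambda$ is in particular an eigenvector of $A_\alpha^{\max}$, hence a distributional---and, by the regularity argument of Lemma~\ref{reg_lemma}, smooth---solution of $f'=(\alpha/x+\lambda)f$, so $f=Cx^{\alpha}\e^{\lambda x}$ with $C\neq0$. Membership in $L^{2}(\R_+)$ forces $\RE(\alpha)>-\tfrac12$ and $\RE(\lambda)<0$, i.e.\ $\lambda=\i k$ with $\IM(k)>0$; but by item~1 the operator $A_\alpha^{\min}-\i k$ is then boundedly invertible, contradicting $0\neq f\in\Ker(A_\alpha^{\min}-\i k)$. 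For item~3, fix $k\in\C_{-}$, so $\bar k\in\C_{+}$; by Proposition~\ref{pri1}(3), $A_{-\bar\alpha}^{\max}-\i\bar k$ is Fredholm of index $+1$, hence so is $(A_\alpha^{\min}-\i k)^{*}=-(A_{-\bar\alpha}^{\max}-\i\bar k)$, and taking adjoints once more $A_\alpha^{\min}-\i k$ is Fredholm of index $-1$ (consistently, item~2 already shows its kernel is trivial).

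The proof needs no analytic estimates: the boundedness of the relevant resolvents is already contained in Proposition~\ref{pri1}, so the only real work is keeping the conjugations and signs consistent while dualizing, plus the elementary facts used in item~2 (when $x^{\alpha}\e^{\lambda x}\in L^{2}(\R_+)$ and interior regularity for the eigenequation). The one place where genuine effort would be required if one wanted a self-contained argument is verifying directly that the integral operator with kernel $\one_{\R_+}(x-y)\,\e^{\i k(x-y)}(x/y)^{\alpha}$ is bounded on $L^{2}(\R_+)$ for $\IM(k)>0$---for $x,y$ both small this is a Hardy-type bound exploiting $\RE(\alpha)<\tfrac12$, and for large arguments it is controlled by the exponential factor---but through the adjoint route this boundedness comes for free.
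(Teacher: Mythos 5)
Your argument is correct, but it takes a genuinely different route from the paper's. The paper proves Propositions \ref{pri1} and \ref{pri2} jointly: statement 1 is quoted from \cite{BuDeGe11_01}, statement 2 is treated as needing no justification, and statement 3 is first established inside the strip $|\RE(\alpha)|<\frac12$ by exhibiting explicit one-sided inverses and counting the codimension of $\Dom(A_{\alpha}^{\min})$ in $\Dom(A_{\alpha}^{\max})$ (so that $\Ran(A_{\alpha}^{\min}-\i k)$ is closed of codimension one), and then extended to $\RE(\alpha)\le-\frac12$ by showing that resolvent differences in $\alpha$ are Hilbert--Schmidt and invoking stability of the essential spectrum. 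You instead transfer everything from Proposition \ref{pri1} through the adjoint relation $(A_{\alpha}^{\min})^{*}=-A_{-\overline{\alpha}}^{\max}$ recorded before Proposition \ref{A_dom}, using $\sigma(T^{*})=\overline{\sigma(T)}$, the conjugate-transposed kernel for the adjoint of the resolvent (your sign and conjugation bookkeeping checks out and reproduces the stated kernel), and $\Ind(T^{*})=-\Ind(T)$ for closed densely defined Fredholm operators; your item 2 is the elementary ODE-plus-invertibility argument, with smoothness supplied by Lemma \ref{reg_lemma}. What the duality route buys is uniformity: no case distinction between the strip and $\RE(\alpha)\le-\frac12$, no compact-perturbation argument, and boundedness of the explicit resolvent kernel comes for free. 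What it costs is reliance on Proposition \ref{pri1} as a black box; since in the paper the proof of \ref{pri1}(3) itself uses \ref{pri2}(1), it is worth noting explicitly (as your ordering implicitly ensures) that your proof of \ref{pri2}(1) uses only \ref{pri1}(1), which is independent of \ref{pri2}, so no circularity arises. The paper's direct approach, by contrast, yields as a byproduct the concrete description of the range of $A_{\alpha}^{\min}-\i k$ as a codimension-one subspace, which the abstract index duality does not exhibit.
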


\noindent{\em Proof of Propositions \ref{pri1} and \ref{pri2}.}
Statements 1. are proven in \cite{BuDeGe11_01}.

2. requires justification only for the first part in the first proposition. We factorize
\begin{equation}
x^{\alpha} \e^{\i k x} q(x) = x^{\i \, \IM(\alpha)} \e^{\i \, \RE(k) x} \left( x^{\RE(\alpha)} \e^{- \IM(k) x} q(x) \right).
\end{equation}
Functions in the parenthesis form a dense set, because for any real
numbers $c >0$, $\beta > - 1$ functions $e^{- \frac{c x}{2} }
x^{\frac{\beta}{2}} L_n^{(\beta)}(cx)$, with $L_n^{(\beta)}$
Laguerre polynomials, form an orthogonal basis (see e.g. \cite{Szego}). Clearly density is unaffected by the prefactor, which amounts to the action of a~certain unitary operator on $L^2$. 

Let us show 3. We consider first the case $|\RE(\alpha)| < \frac12$. Then we have explicit inverses modulo rank one operators. 

If $\IM(k)>0$, then $A^{\min}_{\alpha} - \i k$ is invertible and its inverse is a right inverse for $A_{\alpha}^{\max} - \i k$. Thus $A_{\alpha}^{\max} - \i k$ is surjective. We already know that its kernel is one-dimensional. 

If $\IM(k)<0$, then $(A_{\alpha}^{\max} - \i k)^{-1} : L^2(\R_+) \to \Dom(A_{\alpha}^{\max})$ is continuous. The range of $A_{\alpha}^{\min} - \i k$ is the preimage of $\Dom(A_{\alpha}^{\min})$, which is a closed subspace of $\Dom(A_{\alpha}^{\max})$ of codimension one. Hence $\Ran(A_{\alpha}^{\min} - \i k)$ is a closed subspace of $L^2(\R_+)$ of codimension one.


To extended the result beyond the strip $|\RE(\alpha)|< \frac{1}{2} $, note that $(A_{\alpha}^{\min} - \i k)^{-1} -
(A_{\beta}^{\min} - \i k)^{-1}$ (resp. $(A_{\alpha}^{\max} - \i
k)^{-1} - (A_{\beta}^{\max} - \i k)^{-1}$) has a square-integrable integral kernel for
$\RE(\alpha), \, \RE(\beta) < \frac{1}{2}$ and $k \in \C_+$
(resp. $\RE(\alpha), \, \RE(\beta) >  - \frac{1}{2}$ and $k \in
\C_-$). Therefore it is a Hilbert-Schmidt operator, in particular
compact. By Corollary \ref{ess_spec}, the essential spectrum of
$A_{\alpha}^{\min}$ and $A_{\alpha}^{\max}$ does not depend on
$\alpha$. From the case $|\RE(\alpha)| < \frac12$ we know that it is
$\R$. The statement about the value of the index is clear. 
\qed

\begin{proposition} \label{A_semigroups}
$ A_{\alpha}^{\max}$ is the generator of a $C^0$-semigroup if and only if $\RE(\alpha) \geq 0$. If this condition is satisfied, it generates the semigroup of contractions
\begin{equation}
(\e^{ t A_{\alpha}^{\max}} f)(x) = x^{\alpha} (x+t)^{- \alpha} f(x+t), \qquad t \geq 0,
\label{eq:Amax_semigroup}
\end{equation}
$-  A_{\alpha}^{\min}$ is the generator of a $C^0$-semigroup if and only if $\RE(\alpha) \leq 0$. If this condition is satisfied, it generated a semigroup of contractions
\begin{equation}
(\e^{-  t A_{\alpha}^{\min}}f)(x) = x^{\alpha} (x-t)^{- \alpha} f(x-t), \qquad t \geq 0.
\end{equation}
Here we put $f(x-t)=0$ if $x-t<0$.

If $|\RE(\alpha)| < \frac12$, the operators $- A_{\alpha}^{\max}$ and $A_{\alpha}^{\min}$ are not generators of $C^0$-semigroups.
\end{proposition}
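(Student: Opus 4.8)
The plan is to establish the assertion for $A_\alpha^{\max}$ directly, deduce the one for $-A_\alpha^{\min}$ by passing to adjoints, and settle the last sentence with a one-line spectral argument. \emph{Sufficiency and the explicit formula for $A_\alpha^{\max}$:} assuming $\RE(\alpha)\ge 0$, I would take $T(t)$ defined by \eqref{eq:Amax_semigroup} as an ansatz. The identities $T(0)=1$, $T(s)T(t)=T(s+t)$ are immediate, and the substitution $y=x+t$ gives $\|T(t)f\|^2=\int_0^\infty\bigl(\frac{x}{x+t}\bigr)^{2\RE(\alpha)}|f(x+t)|^2\,\D x\le\|f\|^2$, using $0\le\frac{x}{x+t}\le 1$ and $2\RE(\alpha)\ge 0$; so $T$ is a contraction semigroup. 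Strong continuity at $t=0$ holds on $C_c^\infty(\R_+)$ by dominated convergence and extends to $L^2(\R_+)$ by the uniform bound. Differentiating $T(t)f$ at $t=0$ for $f\in C_c^\infty(\R_+)$ produces $A_\alpha f$, so the generator $B$ satisfies $A_\alpha^{\min}\subset B$; conversely $Bf=A_\alpha f$ holds in the distributional sense for $f\in\Dom(B)$, so $Bf\in L^2(\R_+)$ forces $B\subset A_\alpha^{\max}$. If $\RE(\alpha)\ge\frac12$ then $A_\alpha^{\min}=A_\alpha^{\max}$ by Proposition \ref{A_dom} and $B=A_\alpha^{\max}$; if $0\le\RE(\alpha)<\frac12$ then $\Dom(A_\alpha^{\min})$ has codimension one in $\Dom(A_\alpha^{\max})$, so $B$ is one of these two operators, and it is not $A_\alpha^{\min}$ since by Proposition \ref{pri2} the spectrum of $A_\alpha^{\min}$ is the closed right half-plane, which contains no right half-plane and hence cannot be the spectrum of a $C^0$-generator.

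\emph{Necessity for $A_\alpha^{\max}$.} Suppose $A_\alpha^{\max}$ generates a $C^0$-semigroup $S(t)$. If $\RE(\alpha)\le-\frac12$ then $A_\alpha^{\max}=A_\alpha^{\min}$, whose spectrum is the closed right half-plane by Proposition \ref{pri2} --- impossible for a generator. The genuinely delicate range is $-\frac12<\RE(\alpha)<0$, where Proposition \ref{pri1} only tells us $\sigma(A_\alpha^{\max})$ is the closed \emph{left} half-plane, so the spectrum is no obstruction. Here I would argue via Laplace transforms: for $f\in C_c^\infty(\R_+)$ the function $t\mapsto T(t)f$ given by \eqref{eq:Amax_semigroup} is still well defined, continuous and locally bounded into $L^2(\R_+)$ (because $|\RE(\alpha)|<\frac12$ keeps $\int((y-t)/y)^{2\RE(\alpha)}|f(y)|^2\,\D y$ finite and bounded in $t$ for $f$ supported away from $0$), and Fubini identifies $\int_0^\infty\e^{-\lambda t}T(t)f\,\D t$ with the resolvent kernel of Proposition \ref{pri1}, i.e.\ with $(\lambda-A_\alpha^{\max})^{-1}f$. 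Since the same resolvent is the Laplace transform of $t\mapsto S(t)f$, injectivity of the Laplace transform gives $S(t)f=T(t)f$ for all $t\ge 0$, $f\in C_c^\infty(\R_+)$. But then $\|T(t)f\|\le\|S(t)\|\,\|f\|$ on a dense set, contradicting the unboundedness of $T(t)$ for $\RE(\alpha)<0$: testing against the normalized indicators $f_\epsilon=\one_{[t,t+\epsilon]}/\sqrt{\epsilon}$ gives $\|T(t)f_\epsilon\|^2=\frac1\epsilon\int_0^\epsilon\bigl(\frac{x}{x+t}\bigr)^{2\RE(\alpha)}\D x\sim\frac{\epsilon^{2\RE(\alpha)}}{t^{2\RE(\alpha)}(2\RE(\alpha)+1)}\to\infty$ as $\epsilon\downarrow 0$.

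\emph{The operator $-A_\alpha^{\min}$ and the last sentence.} By the adjointness relation recalled above the proposition, $-A_\alpha^{\min}=(A_{-\overline\alpha}^{\max})^*$, so $-A_\alpha^{\min}$ generates a $C^0$-semigroup iff $A_{-\overline\alpha}^{\max}$ does, i.e.\ iff $\RE(-\overline\alpha)\ge 0$, i.e.\ iff $\RE(\alpha)\le 0$; and in that case the semigroup is the Hilbert-space adjoint of $\e^{tA_{-\overline\alpha}^{\max}}$, whose integral kernel one reads off from \eqref{eq:Amax_semigroup} to be $x^\alpha(x-t)^{-\alpha}$ on $\{x>t\}$ (and $0$ otherwise), which is the claimed formula. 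Finally, when $|\RE(\alpha)|<\frac12$, Propositions \ref{pri1} and \ref{pri2} give $\sigma(-A_\alpha^{\max})=\sigma(A_\alpha^{\min})=\{z:\RE(z)\ge 0\}$, a set contained in no left half-plane, so neither $-A_\alpha^{\max}$ nor $A_\alpha^{\min}$ is the generator of a $C^0$-semigroup.

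The step I expect to be the main obstacle is the necessity argument in the band $-\frac12<\RE(\alpha)<0$: there is no cheap spectral obstruction, so one must recognise the hypothetical semigroup as the explicit transport family via the resolvent, which forces one to pin down the continuity and local boundedness of $t\mapsto T(t)f$ on $C_c^\infty(\R_+)$ and to give a clean quantitative proof that $T(t)$ does not extend to a bounded operator; the remainder is routine semigroup bookkeeping.
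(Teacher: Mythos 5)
Your argument is correct, and it overlaps with the paper's proof in the easy parts while diverging in the interesting one. Sufficiency for $\RE(\alpha)\geq 0$ (direct verification that \eqref{eq:Amax_semigroup} is a contraction semigroup whose generator is squeezed between $A_\alpha^{\min}$ and $A_\alpha^{\max}$ and then identified via Proposition~\ref{A_dom} and the spectrum) and the final statement about $-A_\alpha^{\max}$, $A_\alpha^{\min}$ for $|\RE(\alpha)|<\frac12$ (spectral obstruction from Propositions~\ref{pri1} and \ref{pri2}) are essentially what the paper does. For necessity when $\RE(\alpha)<0$, however, the paper argues differently: for $f\in C_\mathrm{c}^\infty(]1,\infty[)$ the transport formula gives the unique solution of the abstract Cauchy problem $\frac{\D}{\D t}f_t=A_\alpha^{\max}f_t$, $f_0=f$, for $t\le 1$, so a hypothetical semigroup would have to coincide with it, contradicting the absence of a uniform bound $\|f_t\|\le c\|f\|$; this one argument covers all $\RE(\alpha)<0$ and needs no resolvent formula. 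You instead split off $\RE(\alpha)\le-\frac12$ by the spectrum (using $A_\alpha^{\min}=A_\alpha^{\max}$) and, in the band $-\frac12<\RE(\alpha)<0$, identify the hypothetical semigroup with the transport family through the Laplace transform of the explicit resolvent of Proposition~\ref{pri1}; this is heavier (you must check continuity, local boundedness and Fubini) but it exploits information the paper has already computed and makes the identification completely explicit. You also obtain the $-A_\alpha^{\min}$ half by Hilbert-space duality from $(-A_\alpha^{\min})^*=A_{-\overline\alpha}^{\max}$, where the paper simply says the argument is analogous; the duality route is a clean way to avoid repeating the computation. Two cosmetic points: your unboundedness test uses indicator functions, whereas the identification $S(t)f=T(t)f$ was proved only on $C_\mathrm{c}^\infty(\R_+)$, so you should replace $\one_{[t,t+\epsilon]}$ by smooth bumps with the same support (the estimate is unchanged); and the phrase ``the closed right half-plane contains no right half-plane'' should read that its \emph{complement} contains no right half-plane, i.e.\ the spectrum is not contained in any left half-plane, which is the property a generator must have.
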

\begin{proof}
We present a proof of the statements concerning
$A_{\alpha}^{\max}$. The others can be proven analogously. It is
elementary to check that for $\RE(\alpha) \geq 0$ the right hand side
of \eqref{eq:Amax_semigroup} defines a $C^0$-semigroup of contractions
with the generator $A_{\max}^{\alpha}$. If~$\RE(\alpha) < 0$, we~consider the same expression for $f \in C_c^{\infty}(]1, \infty[)$. Then for $t \leq 1$ it is the unique solution of the Cauchy problem $ \frac{\D}{\D t} f_t = A_{\alpha}^{\max} f_t$, $f_0 = f$. However, there exists no constant $c$ such that $\| f_t \| \leq c \| f \|$ for every $t \in [0,1]$ and $f$. Thus $ A_{\alpha}^{\max}$ is not a generator. If~$|\RE(\alpha)| < \frac12$, then $\sigma( - A_{\alpha}^{\max})$ is the right closed complex half-plane, so $-  A_{\alpha}^{\max}$ is not a generator. 
\end{proof}

\subsection{Dirac-Coulomb Hamiltonians with \texorpdfstring{$\omega=0$}{omega=0}}
 \label{app:elementary}

Dirac-Coulomb Hamiltonians with $\omega=0$ can be reduced to operators
$A_\alpha$ studied in  Subsection \ref{Homogeneous first order scalar operators}.
Therefore, they can be analyzed using elementary functions only.

Let us set $W:=\frac{1}{\sqrt{2}} \begin{bmatrix} 1 & \i \\ \i & 1 \end{bmatrix} $.
Using  \eqref{diago} we obtain for all $\lambda$
\begin{align}
D_{0,\lambda}^{\min} = W \begin{bmatrix}-\i A_{\i \lambda}^{\min} & 0 
  \\ 0 & \i A_{- \i \lambda}^{\min} \end{bmatrix} W^{-1},\qquad
         D_{0,\lambda}^{\max} = W \begin{bmatrix}-\i A_{\i \lambda}^{\max} & 0 
  \\ 0 & \i A_{- \i \lambda}^{\max} \end{bmatrix} W^{-1}.
\end{align}

Consider now the homogeneous holomorphic family. Note first that
 $\omega=0$ implies $\mu=\pm\i\lambda$.
We set $D_{\lambda}^{\pm}:=D_{0,
  \lambda, \pm \i \lambda, [\mp \i : 1]}$. Note that $(0,
  \lambda, \pm \i \lambda, [\mp \i : 1])\in\cE^{\pm}$. We have:
\begin{subequations}
\begin{align}
D_{\lambda}^+ &= W \begin{bmatrix}-\i A_{\i \lambda}^{\max} & 0 \\ 0 & \i A_{- \i \lambda}^{\min} \end{bmatrix} W^{-1}, \qquad \RE(\i \lambda) > - \frac{1}{2}, \\
D_{\lambda}^- &= W \begin{bmatrix} -\i A_{\i \lambda}^{\min} & 0 \\ 0 & \i A_{- \i \lambda}^{\max} \end{bmatrix} W^{-1}, \qquad \RE(- \i \lambda) > - \frac{1}{2},
\label{eq:D_elementary}
\end{align}
\end{subequations}

Below $\sigma_2$ is the Pauli matrix $\begin{bmatrix} 0 & -\i \\ \i & 0 \end{bmatrix}$. Matrices $\frac{1 \pm \sigma_2}{2}$ are its spectral projections.

\begin{proposition}
We have $\sigma(D_{\lambda}^+) = \R \cup \C_+$ and 
\begin{equation}
(D_{\lambda}^+ - k)^{-1} = (-\i A_{\i \lambda}^{\max}-k )^{-1}
\frac{1+\sigma_2}{2} - (-\i A_{- \i \lambda}^{\min}+k)^{-1} \frac{1 - \sigma_2}{2}, \qquad \IM(k)<0,
\end{equation}
whereas $\sigma(D_{\lambda}^-) = \R \cup \C_-$ and
\begin{equation}
(D_{\lambda}^- - k)^{-1} = (-\i A_{\i \lambda}^{\min}-k )^{-1}
\frac{1+ \sigma_2}{2} - (-\i A_{- \i \lambda}^{\max}+k)^{-1} \frac{1 - \sigma_2}{2}, \qquad \IM(k)>0.
\end{equation}
\end{proposition}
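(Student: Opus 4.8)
The plan is to reduce the whole statement to the scalar operators $A_\alpha$ of Subsection \ref{Homogeneous first order scalar operators} by means of the unitary conjugation already recorded in \eqref{eq:D_elementary}, and then to quote Propositions \ref{pri1} and \ref{pri2}. First I would record the elementary matrix identity
\begin{equation}
W \begin{bmatrix} a & 0 \\ 0 & b \end{bmatrix} W^{-1} = a \, \frac{1+\sigma_2}{2} + b \, \frac{1-\sigma_2}{2}, \qquad a,b \in \C ,
\end{equation}
which is equivalent to $W \sigma_3 W^{-1} = \sigma_2$ and is checked by direct multiplication of $2\times 2$ matrices. Since $W$ is unitary, conjugation by $W$ preserves spectra and carries resolvents to resolvents; hence from \eqref{eq:D_elementary} we get $\sigma(D_\lambda^+) = \sigma(-\i A^{\max}_{\i\lambda}) \cup \sigma(\i A^{\min}_{-\i\lambda})$, and, for every $k$ lying in the resolvent sets of both diagonal blocks,
\begin{equation}
(D_\lambda^+ - k)^{-1} = (-\i A^{\max}_{\i\lambda} - k)^{-1}\, \frac{1+\sigma_2}{2} + (\i A^{\min}_{-\i\lambda} - k)^{-1}\, \frac{1-\sigma_2}{2}.
\end{equation}

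Next I would identify the two pieces. Since $\RE(\i\lambda) > -\tfrac12$ by the hypothesis in the definition of $D_\lambda^+$, Proposition \ref{pri1}(1) gives $\sigma(-\i A^{\max}_{\i\lambda}) = \R \cup \C_+$ together with the explicit kernel of $(-\i A^{\max}_{\i\lambda} - k)^{-1}$ for $\IM(k) < 0$. For the other block I would write $\i A^{\min}_{-\i\lambda} = -\bigl(-\i A^{\min}_{-\i\lambda}\bigr)$; because $\RE(-\i\lambda) = -\RE(\i\lambda) < \tfrac12$, Proposition \ref{pri2}(1) gives $\sigma(-\i A^{\min}_{-\i\lambda}) = \R \cup \C_-$, hence $\sigma(\i A^{\min}_{-\i\lambda}) = \R \cup \C_+$ and $(\i A^{\min}_{-\i\lambda} - k)^{-1} = -\bigl(-\i A^{\min}_{-\i\lambda} + k\bigr)^{-1}$, which is again explicit for $\IM(k) < 0$. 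Taking the union yields $\sigma(D_\lambda^+) = \R \cup \C_+$, and substituting the last relation into the displayed resolvent formula produces exactly the claimed expression, valid for $\IM(k) < 0$.

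The statement for $D_\lambda^-$ follows by the identical argument with the roles of $A^{\min}$ and $A^{\max}$ interchanged (now using $\RE(-\i\lambda) > -\tfrac12$ for the $A^{\max}$ block and $\RE(\i\lambda) < \tfrac12$ for the $A^{\min}$ block, both implied by the hypothesis $\RE(-\i\lambda) > -\tfrac12$), or alternatively by complex conjugation, since $\overline{D_\lambda^-} = D_{\overline\lambda}^+$ and $\overline{A^{\min/\max}_\alpha} = A^{\min/\max}_{\overline\alpha}$. I do not expect a genuine obstacle here: the mathematical content is entirely contained in Propositions \ref{pri1} and \ref{pri2}, and the only points demanding a little care are the bookkeeping of the $W$-conjugation (the exact form of $W\sigma_3 W^{-1}$ and the placement of $\frac{1\pm\sigma_2}{2}$) and the passage between $\sigma(\i A^{\min}_{-\i\lambda})$ and the spectrum of $-\i A^{\min}_{-\i\lambda}$ recorded in Proposition \ref{pri2}.
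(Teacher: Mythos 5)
Your proposal is correct and takes essentially the same route as the paper, whose proof simply cites the block decomposition \eqref{eq:D_elementary} together with Propositions \ref{pri1} and \ref{pri2}. You merely make explicit the bookkeeping the paper leaves implicit: $W\sigma_3W^{-1}=\sigma_2$, the resolvent of a block-diagonal operator, and the sign flip $(\i A^{\min}_{-\i\lambda}-k)^{-1}=-(-\i A^{\min}_{-\i\lambda}+k)^{-1}$ — all of which check out.
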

\begin{proof}
Follows from identities \eqref{eq:D_elementary} and Proposition \ref{pri1}, \ref{pri2}.
\end{proof}

  \begin{proposition}
    $D_{\lambda}^{\pm} - k $ with $k \in \C_{\pm}$ are Fredholm of
    index $0$. \end{proposition}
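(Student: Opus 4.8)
The plan is to reduce everything to the block–diagonalizations \eqref{eq:D_elementary} and the Fredholm statements of Propositions \ref{pri1} and \ref{pri2}. Since $W$ is unitary, $D_\lambda^+ - k$ is unitarily equivalent, via conjugation by $W$, to the direct sum
\begin{equation*}
(-\i A_{\i\lambda}^{\max} - k) \oplus (\i A_{-\i\lambda}^{\min} - k)
\end{equation*}
acting on $L^2(\R_+) \oplus L^2(\R_+)$, and likewise $D_\lambda^- - k$ is equivalent to $(-\i A_{\i\lambda}^{\min} - k) \oplus (\i A_{-\i\lambda}^{\max} - k)$. Conjugation by a unitary preserves the Fredholm property and the index, and for a direct sum the kernel and cokernel split accordingly, so indices add. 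Hence it suffices to compute the index of each scalar block.

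First I would rewrite each block as a nonzero scalar multiple of a shifted operator of the form $A_\beta^{\bullet} - \i k'$, so that Propositions \ref{pri1} and \ref{pri2} apply directly (multiplication by $\pm\i$ changes neither the Fredholm property nor the index). For $D_\lambda^+$ and $k \in \C_+$: we have $-\i A_{\i\lambda}^{\max} - k = -\i\bigl(A_{\i\lambda}^{\max} - \i k\bigr)$, and since $\RE(\i\lambda) = -\IM(\lambda) > -\tfrac12$ is exactly the condition defining $D_\lambda^+$, Proposition \ref{pri1} gives that this block is Fredholm of index $+1$. Likewise $\i A_{-\i\lambda}^{\min} - k = \i\bigl(A_{-\i\lambda}^{\min} - \i(-k)\bigr)$ with $-k \in \C_-$ and $\RE(-\i\lambda) = \IM(\lambda) < \tfrac12$ (again the defining condition), so Proposition \ref{pri2} gives index $-1$. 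Adding, $D_\lambda^+ - k$ is Fredholm of index $1 + (-1) = 0$ for $k \in \C_+$.

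The case of $D_\lambda^-$ with $k \in \C_-$ is handled by the same bookkeeping with the two blocks exchanging roles: $-\i A_{\i\lambda}^{\min} - k$ has index $-1$ by Proposition \ref{pri2} and $\i A_{-\i\lambda}^{\max} - k$ has index $+1$ by Proposition \ref{pri1}, again summing to $0$. Alternatively, one may invoke the identity $\overline{D_\lambda^+} = D_{\overline\lambda}^-$ together with the fact that complex conjugation is an antilinear bijection, hence preserves the Fredholm index, reducing the statement to the previous case. The only point that needs care is the bookkeeping: matching the half-plane conditions ($k \in \C_\pm$, so $\mp\i k$ lands in the correct half-plane and $-k$ in the opposite one) with the strip conditions $\RE(\beta) \gtrless \pm\tfrac12$ in Propositions \ref{pri1}–\ref{pri2}, and tracking the harmless scalar factors $\pm\i$; there is no genuine analytic obstacle. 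It is worth remarking that the result is slightly counterintuitive: for $k \in \C_\pm$ the operator $D_\lambda^\pm - k$ is \emph{not} invertible---it has a one-dimensional kernel, since the relevant point of $\cM$ lies in $\cE^\pm$---yet it is nonetheless Fredholm of index zero, the two half-indices $+1$ and $-1$ cancelling.
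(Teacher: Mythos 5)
Your proof is correct and follows essentially the same route as the paper: via the decomposition \eqref{eq:D_elementary}, the operator $D_\lambda^\pm - k$ is (up to conjugation by $W$ and harmless scalar factors $\pm\i$) a direct sum of two Fredholm operators of indices $+1$ and $-1$ by Propositions \ref{pri1} and \ref{pri2}, hence Fredholm of index $0$. Your explicit bookkeeping of the half-plane and strip conditions, and the closing remark about the nontrivial kernel for $p\in\cE^\pm$, are accurate elaborations of the paper's one-line argument.
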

\begin{proof} Indeed, by Propositions \ref{pri1} and \ref{pri2}
they are direct sums of two Fredholm operators with indices $1$ and
$-1$. \end{proof}

\begin{proposition}
Let $k \in \C_\pm$. Then $\bigcup \limits_{n=0}^{\infty} \Ker( (D_{\lambda}^\pm - k)^n)$ is a dense subspace of $L^2(\R_+) \begin{bmatrix} \mp \i \\ 1 \end{bmatrix}$.
\end{proposition}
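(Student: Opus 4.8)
The plan is to exploit the diagonalization \eqref{eq:D_elementary}, which conjugates $D_\lambda^\pm$ by the fixed unitary $W$ into a direct sum of two operators of the form $\pm\i A_{\pm\i\lambda}^{\max/\min}$. Under this conjugation the subspace $L^2(\R_+)\begin{bmatrix}\mp\i\\1\end{bmatrix}$ corresponds to one of the two summands: explicitly, $W\begin{bmatrix}1\\0\end{bmatrix}=\frac{1}{\sqrt2}\begin{bmatrix}1\\\i\end{bmatrix}$ spans $\C\begin{bmatrix}-\i\\1\end{bmatrix}$ up to scalar, so $W\bigl(L^2(\R_+)\oplus 0\bigr)=L^2(\R_+)\begin{bmatrix}-\i\\1\end{bmatrix}$, and similarly $W\bigl(0\oplus L^2(\R_+)\bigr)=L^2(\R_+)\begin{bmatrix}\i\\1\end{bmatrix}$. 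I would first record this identification carefully (checking signs against the definition $W=\frac{1}{\sqrt2}\begin{bmatrix}1&\i\\\i&1\end{bmatrix}$ and the homogeneous coordinate $[\mp\i:1]$ appearing in the definition of $D_\lambda^\pm$), so that the claim reduces to a statement about a single scalar operator $A_\alpha$.

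Concretely, for $D_\lambda^+$ and $k\in\C_+$ we have $D_\lambda^+-k = W\begin{bmatrix}-\i A_{\i\lambda}^{\max}-k & 0\\ 0 & \i A_{-\i\lambda}^{\min}-k\end{bmatrix}W^{-1}$. The second summand $\i A_{-\i\lambda}^{\min}-k$ is boundedly invertible for $\IM(k)>0$ by Proposition \ref{pri2} (its spectrum is $\R\cup\C_-$), hence has no kernel and no generalized kernel; therefore every generalized eigenvector of $D_\lambda^+$ at eigenvalue $k$ lies in the first summand, i.e. in $L^2(\R_+)\begin{bmatrix}-\i\\1\end{bmatrix}$, and $\bigcup_n\Ker((D_\lambda^+-k)^n)=W\bigl(\bigcup_n\Ker((-\i A_{\i\lambda}^{\max}-k)^n)\oplus 0\bigr)$. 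Writing $-\i A_{\i\lambda}^{\max}-k = -\i(A_{\i\lambda}^{\max}-\i k)$ with $\i k\in\C_+$ exactly when $k\in\C_+$... wait — one must be a little careful: $\IM(\i k)=\RE(k)$, so "$k\in\C_+$" should be matched with the condition "$\i k\in\C_+$" only after tracking that Proposition \ref{pri1} is stated for $\i A_\alpha^{\max}$ with spectral parameter called $k$ there. The cleanest route is: $\Ker((-\i A_{\i\lambda}^{\max}-k)^n)=\Ker((A_{\i\lambda}^{\max}-\i^{-1}k)^n)$, and one applies part 2 of Proposition \ref{pri1} with its "$k$" equal to our $\i^{-1}k=-\i k$, noting $\IM(-\i k)=\RE\!\bigl((-\i k)/\i\bigr)$... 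I will instead invoke Proposition \ref{pri1}(2) in the form: for the operator $\i A_{\i\lambda}^{\max}$ (which is unitarily what sits in the summand up to a sign and relabeling), the union of generalized kernels over all its eigenvalues in the appropriate half-plane is dense; what I actually need is density of $\bigcup_n\Ker((\i A_{\i\lambda}^{\max}-k')^n)$ in $L^2(\R_+)$ for the relevant $k'$, which is precisely the content of Proposition \ref{pri1}(2) after the substitution $k'\mapsto$ its eigenvalue parameter. Since $W$ is unitary and maps $L^2(\R_+)\oplus 0$ isometrically onto $L^2(\R_+)\begin{bmatrix}-\i\\1\end{bmatrix}$, density of the union in the summand transports to density in the target subspace. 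The case $D_\lambda^-$, $k\in\C_-$ is identical with the roles of the two summands and of $A_{\i\lambda}$, $A_{-\i\lambda}$ interchanged (and one uses that $-\i A_{\i\lambda}^{\min}-k$ is invertible for $\IM(k)<0$).

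So the skeleton of the proof is: (i) identify $L^2(\R_+)\begin{bmatrix}\mp\i\\1\end{bmatrix}$ with the image under $W$ of the correct $L^2$ summand; (ii) use \eqref{eq:D_elementary} to split $(D_\lambda^\pm-k)^n$ as a block-diagonal operator; (iii) observe that the "wrong" block is boundedly invertible for $k$ in the relevant half-plane (Propositions \ref{pri1}, \ref{pri2}), so all generalized eigenvectors live in the "right" block; (iv) quote Proposition \ref{pri1}(2) (the Laguerre-basis density statement) for that block, after matching the spectral-parameter conventions; (v) transport density back via the unitary $W$. I expect the only genuinely fiddly point — the "main obstacle," though it is more bookkeeping than obstruction — to be step (iv): keeping straight the relation between the eigenvalue $k$ of $D_\lambda^\pm$, the factor $\pm\i$ in front of $A_{\pm\i\lambda}$, and the convention in Proposition \ref{pri1}/\ref{pri2} where the operator is written $\i A_\alpha$ and its spectral parameter is separately called $k$; once the dictionary $k \leftrightarrow \pm k$ (and $\C_\pm\leftrightarrow\C_\pm$) is pinned down, the density statement is immediate and the rest is formal.

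\begin{proof}
By \eqref{eq:D_elementary}, for $\IM(k)>0$ we have
\begin{equation}
D_\lambda^+ - k = W\begin{bmatrix} -\i A_{\i\lambda}^{\max}-k & 0 \\ 0 & \i A_{-\i\lambda}^{\min}-k \end{bmatrix} W^{-1},
\end{equation}
and $W\begin{bmatrix}1\\0\end{bmatrix} = \tfrac{1}{\sqrt2}\begin{bmatrix}1\\\i\end{bmatrix} = \tfrac{\i}{\sqrt2}\begin{bmatrix}-\i\\1\end{bmatrix}$, so $W\bigl(L^2(\R_+)\oplus 0\bigr) = L^2(\R_+)\begin{bmatrix}-\i\\1\end{bmatrix}$. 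By Proposition \ref{pri2}(1), $\sigma(-\i A_{-\i\lambda}^{\min}) = \R\cup\C_-$, hence $\i A_{-\i\lambda}^{\min}-k = -(-\i A_{-\i\lambda}^{\min}) - k$ is boundedly invertible for $\IM(k)>0$; in particular $\Ker\bigl((\i A_{-\i\lambda}^{\min}-k)^n\bigr)=0$ for all $n$. Therefore
\begin{equation}
\bigcup_{n=0}^\infty \Ker\bigl((D_\lambda^+-k)^n\bigr) = W\left( \bigcup_{n=0}^\infty \Ker\bigl((-\i A_{\i\lambda}^{\max}-k)^n\bigr) \oplus 0 \right).
\end{equation}
Now $-\i A_{\i\lambda}^{\max}-k = -\i\bigl(A_{\i\lambda}^{\max} - \i k\bigr)$ with $\IM(-\i k) = -\RE(k)$; regardless, $\Ker\bigl((-\i A_{\i\lambda}^{\max}-k)^n\bigr) = \Ker\bigl((A_{\i\lambda}^{\max} - \i^{-1}k)^n\bigr)$, and since $\IM(\i^{-1}k) = \IM(-\i k) $... more simply, $\i^{-1}k = -\i k$ lies in $\C_+$ precisely when $\RE(k)<0$ and in $\C_-$ when $\RE(k)>0$; in either case Proposition \ref{pri1}(2) (applied to $A_{\i\lambda}$ when $-\i k\in\C_+$) or Proposition \ref{pri2}(2) gives that the union over the relevant eigenvalue is either dense in $L^2(\R_+)$ or empty. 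Since by hypothesis $k\in\C_+$ is an eigenvalue of $D_\lambda^+$, the union is nonempty, hence the case of Proposition \ref{pri1}(2) applies and $\bigcup_n \Ker\bigl((-\i A_{\i\lambda}^{\max}-k)^n\bigr)$ is dense in $L^2(\R_+)$. As $W$ is unitary, the image $\bigcup_n \Ker\bigl((D_\lambda^+-k)^n\bigr)$ is dense in $L^2(\R_+)\begin{bmatrix}-\i\\1\end{bmatrix}$.

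The case of $D_\lambda^-$ with $k\in\C_-$ is identical, exchanging the two diagonal blocks: by \eqref{eq:D_elementary}, $-\i A_{\i\lambda}^{\min}-k$ is boundedly invertible for $\IM(k)<0$ by Proposition \ref{pri2}(1), so all generalized eigenvectors lie in the second summand, which under $W$ corresponds to $W\bigl(0\oplus L^2(\R_+)\bigr) = L^2(\R_+)\begin{bmatrix}\i\\1\end{bmatrix}$, and Proposition \ref{pri1}(2) applied to $\i A_{-\i\lambda}^{\max}$ yields density of $\bigcup_n \Ker\bigl((-\i A_{-\i\lambda}^{\max}+k)^n\bigr)$ in $L^2(\R_+)$. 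Transporting by $W$ gives the claim.
\end{proof}
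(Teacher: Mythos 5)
Your skeleton — block-diagonalize via $W$ using \eqref{eq:D_elementary}, argue that all generalized eigenvectors live in one block, quote the density statement of Proposition \ref{pri1}(2) for that block, and transport by the unitary $W$ — is exactly the intended route (the paper proves the neighbouring propositions by precisely this reduction), and your identification $W\bigl(L^2(\R_+)\oplus 0\bigr)=L^2(\R_+)\begin{bmatrix}-\i\\1\end{bmatrix}$ is correct. However, two of your concrete justifications are wrong as written. First, the claim that $\i A_{-\i\lambda}^{\min}-k$ is boundedly invertible for $\IM(k)>0$ is false: Proposition \ref{pri2}(1) gives $\sigma(-\i A_{-\i\lambda}^{\min})=\R\cup\C_-$, hence $\sigma(\i A_{-\i\lambda}^{\min})=\R\cup\C_+$, so every $k\in\C_+$ lies in the spectrum of the second block (consistently, the resolvent of $D_\lambda^+$ itself exists only for $\IM(k)<0$). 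What saves the argument is not invertibility but injectivity: by Proposition \ref{pri2}(2) the operator $A_{-\i\lambda}^{\min}$ has no eigenvectors, so $\i A_{-\i\lambda}^{\min}-k$ is injective, and an injective operator has trivial generalized kernel; this is all that is needed, and the same correction applies to your mirrored invertibility claim for $D_\lambda^-$.

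Second, your handling of the first block is internally inconsistent. Your identity $-\i A_{\i\lambda}^{\max}-k=-\i\bigl(A_{\i\lambda}^{\max}-\i k\bigr)$ is correct and already finishes the step: Proposition \ref{pri1}(2) is stated for $\Ker\bigl((A_\alpha^{\max}-\i k)^n\bigr)$ with $k\in\C_+$, i.e.\ its parameter $k$ is literally the eigenvalue of $-\i A_\alpha^{\max}$, so it applies verbatim with $\alpha=\i\lambda$ and the same $k\in\C_+$ — no case distinction on $\RE(k)$ is needed. The subsequent rewrite $\Ker\bigl((A_{\i\lambda}^{\max}-\i^{-1}k)^n\bigr)$ contradicts that identity (the spectral parameter is $\i k$, not $\i^{-1}k=-\i k$), the appeal to Proposition \ref{pri2}(2) is out of place (it concerns the minimal operator), and the patch "either dense or empty, nonempty hence dense" is not established anywhere. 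Deleting that passage and citing \ref{pri1}(2) directly, together with the injectivity fix above, yields a correct proof; as written, though, the argument rests on one false spectral claim and one garbled parameter matching.
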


\begin{proposition}
$ \i D_{\lambda}^+$ is the generator of a $C^0$-semigroup if and only if $\IM(\lambda) \leq 0$. Then it generates the semigroup of contractions
\begin{equation}
\e^{\i t D_{\lambda}^+} = \e^{ t A_{\i \lambda}^{\max}} \frac{1 + \sigma_2}{2} + \e^{-  t A^{\min}_{- \i \lambda}} \frac{1 - \sigma_2}{2}, \qquad t \geq 0.
\end{equation}
$- \i D_{\lambda}^-$ is the generator of a $C^0$-semigroup if and only if $\IM(\lambda) \geq 0$. Then it generates the semigroup of contractions
\begin{equation}
\e^{-\i t D_{\lambda}^-} = \e^{- t A_{\i \lambda}^{\min}} \frac{1 + \sigma_2}{2} + \e^{  t A^{\max}_{- \i \lambda}}\frac{1 - \sigma_2}{2}, \qquad t \geq 0.
\end{equation}
Operators $- \i D_{\lambda}^+$ and $\i D_{\lambda}^-$ are not generators of $C^0$-semigroups.
\end{proposition}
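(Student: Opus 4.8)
The plan is to read everything off from the block decompositions in \eqref{eq:D_elementary}. Since $W$ is unitary, conjugation $T\mapsto WTW^{-1}$ preserves the property of being the generator of a $C^0$-semigroup (and of a $C^0$-semigroup of contractions), and carries semigroups to semigroups. Multiplying the block-diagonal operators appearing there by $\pm\i$ and using $-\i(-\i A)=-A$, $-\i(\i A)=A$, one obtains the unitary equivalences
\begin{align*}
\i D_\lambda^+ &\cong A_{\i\lambda}^{\max}\oplus(-A_{-\i\lambda}^{\min}), & -\i D_\lambda^- &\cong (-A_{\i\lambda}^{\min})\oplus A_{-\i\lambda}^{\max}, \\
-\i D_\lambda^+ &\cong (-A_{\i\lambda}^{\max})\oplus A_{-\i\lambda}^{\min}, & \i D_\lambda^- &\cong A_{\i\lambda}^{\min}\oplus(-A_{-\i\lambda}^{\max}),
\end{align*}
so the whole statement reduces to Proposition \ref{A_semigroups} together with an elementary remark about direct sums.

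First I would record that a block-diagonal operator $B_1\oplus B_2$ on $X_1\oplus X_2$ generates a $C^0$-semigroup if and only if both $B_1$ and $B_2$ do, in which case the semigroup is $\e^{tB_1}\oplus\e^{tB_2}$, and it is a semigroup of contractions precisely when both $\e^{tB_j}$ are. This follows from Hille--Yosida: $(\zeta-B_1\oplus B_2)^{-n}=(\zeta-B_1)^{-n}\oplus(\zeta-B_2)^{-n}$ has operator norm equal to the maximum of the two summand norms, so the resolvent estimates for the sum are equivalent to those for both blocks, and likewise $\|\e^{tB_1}\oplus\e^{tB_2}\|=\max\{\|\e^{tB_1}\|,\|\e^{tB_2}\|\}$.

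With this remark the first two assertions are immediate. For $\i D_\lambda^+\cong A_{\i\lambda}^{\max}\oplus(-A_{-\i\lambda}^{\min})$, Proposition \ref{A_semigroups} gives that the first block is a generator iff $\RE(\i\lambda)\geq0$ and the second iff $\RE(-\i\lambda)\leq0$; both conditions read $\IM(\lambda)\leq0$. When they hold, the two blocks generate the contraction semigroups $\e^{tA_{\i\lambda}^{\max}}$ and $\e^{-tA_{-\i\lambda}^{\min}}$, hence $\i D_\lambda^+$ generates a contraction semigroup; conjugating $(\e^{tA_{\i\lambda}^{\max}})\oplus(\e^{-tA_{-\i\lambda}^{\min}})$ back by $W$, and using that $W$ conjugates the diagonal idempotents $\mathrm{diag}(1,0)$ and $\mathrm{diag}(0,1)$ to $\tfrac{1+\sigma_2}{2}$ and $\tfrac{1-\sigma_2}{2}$ respectively, yields the displayed formula for $\e^{\i tD_\lambda^+}$. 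The statement for $D_\lambda^-$ follows in the same way with $\min\leftrightarrow\max$ and the obvious sign changes.

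For the last assertion I would use that $D_\lambda^+$ is only defined for $\RE(\i\lambda)>-\tfrac12$ and $D_\lambda^-$ only for $\RE(-\i\lambda)>-\tfrac12$, and show that the ``wrong-sign'' block is never a generator. Take $-\i D_\lambda^+\cong(-A_{\i\lambda}^{\max})\oplus A_{-\i\lambda}^{\min}$. If $|\RE(\i\lambda)|<\tfrac12$, then $-A_{\i\lambda}^{\max}$ is not a generator by the last part of Proposition \ref{A_semigroups}. If instead $\RE(\i\lambda)\geq\tfrac12$, then $A_{\i\lambda}^{\min}=A_{\i\lambda}^{\max}$ by Proposition \ref{A_dom}, so $-A_{\i\lambda}^{\max}=-A_{\i\lambda}^{\min}$, which by Proposition \ref{A_semigroups} generates a $C^0$-semigroup only if $\RE(\i\lambda)\leq0$---impossible. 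Since $\RE(\i\lambda)>-\tfrac12$ these two cases are exhaustive, so $-\i D_\lambda^+$ is never a generator; the identical dichotomy applied to $-A_{-\i\lambda}^{\max}$ (using $\RE(-\i\lambda)>-\tfrac12$) disposes of $\i D_\lambda^-$. The computation is bookkeeping only; the sole points needing a little care are this block-diagonal generator lemma and the exhaustiveness of the two-case split above, so I do not anticipate a genuine obstacle.
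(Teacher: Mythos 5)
Your proposal is correct and follows precisely the route the paper intends (and leaves implicit for this proposition): the unitary equivalences \eqref{eq:D_elementary}, the elementary fact that a block-diagonal operator generates a $C^0$-semigroup (of contractions) iff both blocks do, Proposition \ref{A_semigroups} for the main statements, and Proposition \ref{A_dom} to dispose of the wrong-sign blocks when $|\RE(\pm\i\lambda)|\geq\frac12$. The only blemish is cosmetic: the identities you quote, $-\i(-\i A)=-A$ and $-\i(\i A)=A$, are the ones for multiplication by $-\i$, while for $\i D_\lambda^{\pm}$ you need their counterparts with $+\i$; the four displayed equivalences themselves are all stated correctly, so nothing in the argument is affected.
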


\subsection{Hankel transformation}

The following proposition is proven e.g. in \cite{BuDeGe11_01}.
\begin{proposition}
Let $\RE(m) \geq -1$. We define
\begin{equation}
(\mathcal F_m^{\pre} f)(x) = \int_0^{\infty} J_m(xy) \sqrt{xy} f(y) \D y, \qquad f \in C_c^{\infty}(\R_+), 
\end{equation}
where $J_m$ is the Bessel function. Then $\mathcal F_m^{\pre}$ extends
to a bounded operator $\mathcal F_m$ on $L^2(\R_+)$, known as the {\em Hankel transformation}. $\mathcal F_m$ is a self-transposed involution, unitary if $m$ is real.
\end{proposition}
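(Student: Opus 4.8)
The plan is to realize $\mathcal F_m$ as a Fourier multiplier via the Mellin transform, in the spirit of \cite{BuDeGe11_01}. First I would conjugate by the unitary $V\colon L^2(\R_+)\to L^2(\R)$, $(Vf)(u)=\e^{u/2}f(\e^u)$. Substituting $x=\e^u$, $y=\e^v$ turns the integral kernel $J_m(xy)\sqrt{xy}$ into $\e^{u+v}J_m(\e^{u+v})$, so that on $C_\mathrm{c}^\infty(\R)$ the operator $V\mathcal F_m^\pre V^{-1}$ sends $g$ to $h\ast(Rg)$, where $(Rg)(v)=g(-v)$ is the reflection and $h(w):=\e^wJ_m(\e^w)$. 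For $\RE(m)>-1$ the function $h$ is a tempered distribution --- it decays like $\e^{(\RE(m)+1)w}$ as $w\to-\infty$ and grows only like $\e^{w/2}$ as $w\to+\infty$ --- and all three assertions to be proven become statements about its symbol $\widehat h$.

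Next I would compute $\widehat h$. Writing $t=\e^w$ (after inserting a regulator $\e^{-\epsilon t}$ and letting $\epsilon\downarrow0$) one gets $\widehat h(\xi)=\int_0^\infty t^{-\i\xi}J_m(t)\,\D t$, the Mellin transform of $J_m$ on the line $\RE(s)=1$. The classical formula $\int_0^\infty J_m(t)\,t^{s-1}\,\D t=2^{s-1}\Gamma\big(\tfrac{m+s}{2}\big)/\Gamma\big(\tfrac{m-s}{2}+1\big)$, absolutely convergent for $-\RE(m)<\RE(s)<\tfrac12$ and analytically continued to $\RE(s)<\tfrac32$, then yields $\widehat h(\xi)=2^{-\i\xi}\,\Gamma\big(\tfrac{m+1-\i\xi}{2}\big)/\Gamma\big(\tfrac{m+1+\i\xi}{2}\big)$. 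For $\RE(m)>-1$ this function has no real poles and, by Stirling's asymptotics, $|\widehat h(\xi)|\to\e^{\pm\pi\IM(m)/2}$ as $\xi\to\pm\infty$; hence $\widehat h\in L^\infty(\R)$. By the standard Fourier-multiplier bound, convolution by $h$ is bounded on $L^2(\R)$ with norm $\|\widehat h\|_\infty$, so $\mathcal F_m^\pre$ extends to a bounded operator $\mathcal F_m$ on $L^2(\R_+)$.

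The rest is short. Conjugating $\mathcal F_m^2$ by $V$ gives the multiplier $\widehat h(\xi)\,\widehat h(-\xi)$, which the explicit formula shows equals $1$ identically; so $\mathcal F_m^2=1$ on the dense subspace $V^{-1}C_\mathrm{c}^\infty(\R)$ and, by boundedness, everywhere. Since the kernel $J_m(xy)\sqrt{xy}$ is symmetric in $x\leftrightarrow y$, Fubini gives $\langle\mathcal F_mf|g\rangle=\langle f|\mathcal F_mg\rangle$ for $f,g\in C_\mathrm{c}^\infty(\R_+)$, which extends by density; thus $\mathcal F_m^\T=\mathcal F_m$. If moreover $m\in\R$, the kernel is real, so $\overline{\mathcal F_m}=\mathcal F_m$ and hence $\mathcal F_m^*=\overline{\mathcal F_m^\T}=\mathcal F_m$; being a self-adjoint involution, $\mathcal F_m$ is unitary (consistently, $|\widehat h|\equiv1$ when $m$ is real, since then $\tfrac{m+1-\i\xi}{2}=\overline{\tfrac{m+1+\i\xi}{2}}$).

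The main obstacle I anticipate is making the first two steps fully rigorous: verifying that $h$ is a genuine tempered distribution whose distributional Fourier transform is exactly the function above (the Mellin integral being only conditionally/Abel convergent on $\RE(s)=1$), and invoking multiplier theory for a kernel that is neither integrable nor square-integrable. The boundary case $\RE(m)=-1$ is more delicate still, since then the Mellin integral diverges at $0$; it can be handled separately --- the only real value is $m=-1$, where $J_{-1}=-J_1$ reduces it to $m=1$, while for $\RE(m)=-1$, $\IM(m)\neq0$ one passes to the limit from $\RE(m)>-1$ --- or simply imported from \cite{BuDeGe11_01}.
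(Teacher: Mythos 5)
Your core argument is sound and is essentially the argument of the reference the paper points to: the paper itself offers no proof (it quotes the result from \cite{BuDeGe11_01}), and conjugating by $V$, computing the Mellin symbol $2^{-\i\xi}\,\Gamma\big(\tfrac{m+1-\i\xi}{2}\big)/\Gamma\big(\tfrac{m+1+\i\xi}{2}\big)$ and reading off boundedness, $\mathcal F_m^2=1$, self-transposedness and unitarity for real $m$ is exactly the route of that reference and of the paper's own Appendix \ref{app:operators_Rplus} (your $V$-conjugation plus reflection is the content of Lemma \ref{fun_calc}, which writes such kernel operators as $b(A)J$). One caveat on your first step: the stated growth bound does not by itself make $h(w)=\e^w J_m(\e^w)$ tempered --- a function growing like $\e^{w/2}$ is in general not a tempered distribution; what rescues it is the oscillation $J_m(t)\approx\sqrt{2/(\pi t)}\cos(t-\cdots)$, which after integration by parts, or after the Abel regularization $\e^{-\epsilon \e^w}$ as in Lemma \ref{Mellin_lim} and Proposition \ref{melino}, gives both temperedness and the identification of the distributional Fourier transform with the conditionally convergent Mellin integral. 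You flag this yourself, and it is fixable exactly along those lines, so I do not count it as a gap.

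The genuine problem is your treatment of the boundary line. The proposed limiting argument for $\RE(m)=-1$, $\IM(m)\neq 0$ cannot work, because boundedness actually fails there: the symbol acquires a pole at the real point $\xi=\IM(m)$, where the numerator becomes $\Gamma\big(\tfrac{\i(\IM(m)-\xi)}{2}\big)$; more elementarily, for $f\in C_c^{\infty}(\R_+)$ the series of $J_m$ gives
\begin{equation}
(\mathcal F_m^{\pre}f)(x)=\frac{x^{m+\frac12}}{2^{m}\,\Gamma(m+1)}\int_0^\infty y^{m+\frac12}f(y)\,\D y+O\big(x^{m+\frac52}\big),\qquad x\to 0,
\end{equation}
and for $\RE(m)=-1$ one has $|x^{m+\frac12}|^2=x^{-1}$, so $\mathcal F_m^{\pre}f\notin L^2(\R_+)$ whenever the coefficient is nonzero --- which happens for generic $f$ unless $1/\Gamma(m+1)=0$, i.e.\ unless $m=-1$. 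Thus on the line $\RE(m)=-1$ the statement itself holds only at $m=-1$ (via $J_{-1}=-J_1$, as you note), the hypothesis should effectively be read as $\RE(m)>-1$ together with that point, and the "pass to the limit'' step should simply be deleted rather than repaired. With that removed, and the temperedness/Fourier-identification step carried out via the regularization above, your proof is correct.
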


Recall from Subsection \ref{Remarks about notation}
that the operator $X$ is defined by \[ (Xf)(x)= x f(x), \qquad \Dom(X) = \{ f \in L^2(\R_+) \, | \, x f(x) \in L^2(\R_+) \}. \]

\begin{proposition}
If $\RE(\alpha) > - \frac12$, one has
\begin{equation}
\mathcal F_{\alpha + \frac12} A_{\alpha}^{\max} \mathcal F_{\alpha - \frac12} = -X, \qquad \mathcal F_{\alpha - \frac12} A_{- \alpha}^{\min} \mathcal F_{\alpha + \frac12} =  X.
\label{eq:AF_id}
\end{equation}
\end{proposition}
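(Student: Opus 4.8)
\textbf{Proof plan for the Hankel intertwining identities \eqref{eq:AF_id}.}
The plan is to verify both identities first on a convenient dense subspace, then extend by continuity using boundedness of all operators involved. The natural domain to work on is $C_\mathrm{c}^\infty(\R_+)$, on which $A_\alpha^{\max}$ acts as the classical differential operator $\partial_x - \frac{\alpha}{x}$, and on which $\mathcal F_m^\pre$ is given by the convergent integral $\int_0^\infty J_m(xy)\sqrt{xy}\,f(y)\,\D y$. The key analytic input is a differentiation-under-the-integral computation: apply $A_\alpha^{\max}$ to $(\mathcal F_{\alpha-\frac12}^\pre f)(x)$ and use the Bessel recurrence relations $\frac{\D}{\D z}\bigl(z^{-\nu}J_\nu(z)\bigr) = -z^{-\nu}J_{\nu+1}(z)$ (equivalently $J_\nu'(z) - \frac{\nu}{z}J_\nu(z) = -J_{\nu+1}(z)$), which are precisely the scalar analogues of the recursion relations referenced in Appendix \ref{recur}. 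Writing $J_{\alpha-\frac12}(xy)\sqrt{xy} = x^{\alpha} \cdot (xy)^{-\alpha+\frac12} J_{\alpha-\frac12}(xy) \cdot y^{\alpha}$, one sees that $x^{-\alpha}\partial_x\bigl(x^\alpha \cdot (xy)^{-\alpha+\frac12}J_{\alpha-\frac12}(xy)\bigr) = y \cdot x \cdot (-(xy)^{-\alpha+\frac12}J_{\alpha+\frac12}(xy))/x$ up to bookkeeping of powers, so that $A_\alpha^{\max}$ applied to the $J_{\alpha-\frac12}$-kernel produces $-y$ times the $J_{\alpha+\frac12}$-kernel. This yields $A_\alpha^{\max}\mathcal F_{\alpha-\frac12}^\pre f = -\mathcal F_{\alpha+\frac12}^\pre (Xf)$ for $f \in C_\mathrm{c}^\infty(\R_+)$, which, since $\mathcal F_{\alpha+\frac12}$ is an involution, rearranges to the first identity in \eqref{eq:AF_id} on that domain.

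Next I would justify the passage to closed operators. For the first identity: $\mathcal F_{\alpha-\frac12}$ maps $C_\mathrm{c}^\infty(\R_+)$ into $\Dom(A_\alpha^{\max})$ (this follows from the computation just described, since the image of $A_\alpha^{\max}\mathcal F_{\alpha-\frac12}^\pre f$ is the $L^2$ function $-\mathcal F_{\alpha+\frac12}^\pre(Xf)$), and both $\mathcal F_{\alpha+\frac12} A_\alpha^{\max}\mathcal F_{\alpha-\frac12}$ and $-X$ agree on the core $C_\mathrm{c}^\infty(\R_+)$ of $-X$; since $A_\alpha^{\max} = (-A_{-\overline\alpha}^{\min})^*$ is closed and $\mathcal F_m$ are bounded, $\mathcal F_{\alpha+\frac12}A_\alpha^{\max}\mathcal F_{\alpha-\frac12}$ is a closed operator extending $-X|_{C_\mathrm{c}^\infty}$, hence extends $-X$. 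The reverse inclusion, $\Dom(\mathcal F_{\alpha+\frac12}A_\alpha^{\max}\mathcal F_{\alpha-\frac12}) \subset \Dom(X)$, can be obtained by taking transposes: using that $\mathcal F_m^\T = \mathcal F_m$ and $X^\T = X$, the transpose of the claimed identity for parameter $\alpha$ is equivalent to the identity $\mathcal F_{\alpha-\frac12}A_{-\alpha}^{\min}\mathcal F_{\alpha+\frac12} = X$, i.e.\ the second identity, but with $\alpha$ and $-\alpha$ interchanged consistently (note $A_\alpha^{\max\T} = A_\alpha^{\min}$ up to sign conventions matching those in Appendix \ref{sec:dir1d}). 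Thus the two identities in \eqref{eq:AF_id} are transposes of one another, and proving one on a core together with a closedness argument gives both as genuine operator identities, with domains automatically matching.

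For the second identity I would run the analogous computation: apply $A_{-\alpha}^{\min}$ (acting as $\partial_x + \frac{\alpha}{x}$ on $C_\mathrm{c}^\infty$) to $\mathcal F_{\alpha+\frac12}^\pre f$, now using $\frac{\D}{\D z}\bigl(z^{\nu}J_\nu(z)\bigr) = z^{\nu}J_{\nu-1}(z)$, to get $A_{-\alpha}^{\min}\mathcal F_{\alpha+\frac12}^\pre f = \mathcal F_{\alpha-\frac12}^\pre(Xf)$, and then rearrange using the involution property; the domain matching follows from the transpose relation to the first identity as just explained. Throughout, the constraint $\RE(\alpha) > -\frac12$ enters to guarantee $\RE(\alpha - \frac12) > -1$ so that $\mathcal F_{\alpha-\frac12}$ is defined, and $\RE(-\alpha)< \frac12$ so that $A_{-\alpha}^{\min}$ has the expected behavior near the origin (cf.\ Proposition \ref{A_dom}, Proposition \ref{pri2}).

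\textbf{Main obstacle.} The genuinely delicate point is not the formal Bessel manipulation but justifying the interchange of $\partial_x$ with the integral $\int_0^\infty$ and controlling the boundary/tail behavior: $J_m(xy)\sqrt{xy}$ decays only like $(xy)^{-1/2}$ and oscillates, so differentiating under the integral sign requires either a careful dominated-convergence argument after integration by parts in $y$ (moving derivatives onto the compactly supported $f$), or one works with the already-established statements 1 of Propositions \ref{pri1}--\ref{pri2} giving explicit resolvents and checks the resolvent version of \eqref{eq:AF_id} instead—namely $\mathcal F_{\alpha+\frac12}(A_\alpha^{\max} - \i k)^{-1}\mathcal F_{\alpha-\frac12} = (-X - \i k)^{-1}$ for $\IM(k)<0$—which only involves bounded operators and avoids all domain subtleties; the latter route is cleaner and is the one I would actually carry out, reducing everything to an identity of Hilbert-Schmidt kernels that can be verified via known Hankel-transform integral formulas for $\int_0^\infty J_m(xy)\sqrt{xy}\,\e^{\i k(x'-y)}(x'/y)^\alpha\,\D y$.
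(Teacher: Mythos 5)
Your forward step is sound: verifying $A_\alpha\,\mathcal F^{\pre}_{\alpha-\frac12}f=-\mathcal F^{\pre}_{\alpha+\frac12}(Xf)$ for $f\in C_\mathrm{c}^\infty(\R_+)$ certifies in particular that $\mathcal F_{\alpha-\frac12}f\in\Dom(A_\alpha^{\max})$ (the differential expression lands in $L^2$), and closedness of $B:=\mathcal F_{\alpha+\frac12}A_\alpha^{\max}\mathcal F_{\alpha-\frac12}$ on the core $C_\mathrm{c}^\infty(\R_+)$ of $X$ then gives $-X\subset B$. The genuine gap is the claim that the reverse inclusion comes ``by taking transposes'' with ``domains automatically matching''. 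Transposition reverses inclusions: since $(A_\alpha^{\max})^\T=-A_{-\alpha}^{\min}$ and the $\mathcal F$'s are bounded and self-transposed, from $-X\subset B$ you only get $B^\T\subset -X$, i.e.\ $\mathcal F_{\alpha-\frac12}A_{-\alpha}^{\min}\mathcal F_{\alpha+\frac12}\subset X$ --- one inclusion of the \emph{second} identity, not the missing inclusion $B\subset -X$ of the first. To obtain $B\subset -X$ by transposition you would need $X\subset\mathcal F_{\alpha-\frac12}A_{-\alpha}^{\min}\mathcal F_{\alpha+\frac12}$, i.e.\ precisely your ``analogous computation'' for the second identity; but writing $A_{-\alpha}^{\min}\mathcal F^{\pre}_{\alpha+\frac12}f$ presupposes $\mathcal F_{\alpha+\frac12}f\in\Dom(A_{-\alpha}^{\min})=H_0^1(\R_+)$, and differentiation under the integral only certifies membership in the \emph{maximal} domain. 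Deferring that minimal-domain membership back to ``the transpose relation'' is circular: a single one-sided core computation plus transposition cannot produce both equalities, and nothing abstract excludes $B$ being a proper closed extension of $-X$ (for complex $\alpha$ the operator $B$ is not even symmetric).

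For comparison, the paper closes the circle by proving \emph{both} inclusions of the maximal identity directly and only then transposing the resulting equality to get the minimal one. The inclusion $B\subset -X$ is obtained by checking $\mathcal F^{\pre}_{\alpha+\frac12}(A_\alpha f)=-X\mathcal F^{\pre}_{\alpha-\frac12}f$ on a core of $A_\alpha^{\max}$ (namely $C_\mathrm{c}^\infty(\R_+)$ together with the vector $\chi x^\alpha$ when $|\RE(\alpha)|<\frac12$), where the derivative sits on the nice function $f$ and no differentiation under the integral is needed; the inclusion $-X\subset B$ is obtained by an explicit check that $\mathcal F_{\alpha-\frac12}(C_\mathrm{c}^\infty)\subset\Dom(A_\alpha^{\max})$, using $\bigl(x\partial_x-y\partial_y\bigr)\sqrt{xy}\,J_m(xy)=0$ to control the derivative at infinity and the $x^\alpha$ behaviour at $0$. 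Your resolvent fallback would indeed bypass all domain issues (equality of the resolvents at one $k$ with $\IM(k)<0$ implies equality of the closed operators), but as presented it is only a gesture: the composite operator is multiplication by $(-x-\i k)^{-1}$, not a Hilbert--Schmidt operator, and the integrals you would need --- Hankel transforms of the explicit kernel $-\one_{\R_+}(y-x)\,\e^{\i k(x-y)}(x/y)^{\alpha}$ --- are not standard tabulated formulas, so that route requires at least as much work as the direct domain verification above.
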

\begin{proof}
Using the identity 
\begin{equation}
x^{-m} \frac{\D}{\D x} x^m J_m(x) =J_{m-1}(x)
\label{eq:Bessel_der}
\end{equation}
one checks that
\begin{equation}
(\mathcal F_{\alpha + \frac12}^{\pre} A_{\alpha} f)(x) = - x (\mathcal F_{\alpha- \frac12}^{\pre} f)(x)
\label{eq:AF_intertwine}
\end{equation}
for $f \in C_c^{\infty}(\R_+)$. If $|\RE(\alpha)| < \frac12$, \eqref{eq:AF_intertwine} may be checked to hold also for $f(x) = \chi(x) x^{\alpha}$. Taking closures we obtain $\mathcal F_{\alpha + \frac12} A_{\alpha}^{\max} \subset - X \mathcal F_{\alpha - \frac12}$, so $\mathcal F_{\alpha + \frac12} A_{\alpha}^{\max} \mathcal F_{\alpha - \frac12} \subset - X$. Since $\mathcal F_{\alpha + \frac12} A_{\alpha}^{\max} \mathcal F_{\alpha - \frac12}$ is a~closed operator and $C_c^{\infty}(\R_+)$ is a dense subspace of $\Dom(X)$ with respect to the graph topology, the opposite inclusion will be established by demonstrating that $C_c^{\infty}(\R_+) \subset \Dom(\mathcal F_{\alpha + \frac12} A_{\alpha}^{\max} \mathcal F_{\alpha - \frac12})$. Let $f \in C_c^{\infty}(\R_+)$. It is clear that $\mathcal F_{\alpha - \frac12} f$ is a smooth function. Using the identity $\left( x \frac{\D}{\D x} - y \frac{\D}{\D y} \right) \sqrt{xy} J_m(xy) =0$ we find
\begin{equation}
\frac{\D}{\D x} (\mathcal F_{\alpha - \frac12} f)(x) = \frac{1}{x} \int_0^{\infty} \sqrt{xy} J_m(xy) \frac{\D}{\D y} \frac{f(y)}{y} \D y.
\end{equation} 
Since $\frac{\D}{\D y} \frac{f(y)}{y}$ is in $L^2(\R_+)$, we get that $\frac{\D}{\D x} (\mathcal F_{\alpha - \frac12} f)(x)$ is square-integrable over $[1 , \infty[$. Next we use the series expansion of $J_m$ to find that for small $x$
\begin{equation}
(\mathcal F_{\alpha - \frac12} f)(x) = \frac{\int_0^{\infty} y^{\alpha} f(y) \D y}{2^{\alpha - \frac12} \Gamma(\alpha + \frac12)}   x^{\alpha} + O(x^{\alpha+1}). 
\end{equation}
Hence $\mathcal F_{\alpha - \frac12} f \in \Dom(A_{\alpha}^{\max})$, so $f \in \Dom (A_{\alpha}^{\max} \mathcal F_{\alpha - \frac12}) = \Dom (\mathcal F_{\alpha + \frac12} A_{\alpha}^{\max} \mathcal F_{\alpha - \frac12})$. We proved the first equality in \eqref{eq:AF_id}. The other one may be obtained by taking the transpose.
\end{proof}

Following \cite{DeRi17_01} (see also \cite{BuDeGe11_01}), we consider the formal differential operator
\begin{equation}
L_{m^2} = - \partial_x^2 + \frac{m^2 - \frac{1}{4}}{x^2}.
\end{equation}
We let $L_{m^2}^{\min}$ be the closure of its restriction to
$C_\mathrm{c}^{\infty}(\R_+)$ and $L_{m^2}^{\max}$ be the restriction to
$\Dom(L_{m^2}^{\max}) = \{ f \in L^2(\R_+) \, | \, L_{m^2} f \in L^2(\R_+)
\}$. If $\RE(m) > -1$, operator $H_m$ is defined as the restriction of
$L_{m^2}$ to $\Dom(L_{m^2}^{\min}) + \C \chi x^{m+ \frac{1}{2}}$,
where $\chi$ is a smooth function equal to one in a~neighborhood of
zero. We remark that $H_{\frac12}$ and $H_{-\frac12}$ are the
Dirichlet Laplacian and the Neuman Laplacian,
respectively. Furthermore, $H_m$ can be diagonalized as follows:
\begin{equation}
H_m = \mathcal F_m X^2 \mathcal F_m.
\end{equation}

\subsection{Dirac-Coulomb Hamiltonians with \texorpdfstring{$\lambda=0$}{lambda=0}
}\label{app.super}

Dirac-Coulomb Hamiltonians with $\lambda=0$
can be  analyzed  without Whittaker functions, just with Bessel functions.

Let us set  $U := \frac{1}{\sqrt{2}} \begin{bmatrix} 1 & -1 \\ 1 & 1 \end{bmatrix}$.
Using   \eqref{eq:susy_H} we obtain for all $\omega$
\begin{align}
D_{\omega,0}^{\min} =  U \begin{bmatrix} 0 & - A_{\omega}^{\min} \\  A_{- \omega}^{\min} & 0 \end{bmatrix} U^{-1},\qquad D_{\omega,0}^{\max} =  U \begin{bmatrix} 0 & - A_{\omega}^{\max} \\  A_{- \omega}^{\max} & 0 \end{bmatrix} U^{-1}.
\end{align}

Using Proposition \ref{A_dom} we rewrite the operators $D'^{\pm}_{\omega} :=
D_{\omega, 0 , \pm \omega, [ \mp 1 : 1 ]}$
as
\begin{subequations} \label{were}
\begin{align} 
D'^{+}_{\omega} &=  U \begin{bmatrix} 0 & - A_{\omega}^{\max}  \\ 
  A_{- \omega}^{\min} & 0 \end{bmatrix} U^{-1},\quad \RE(\omega)>-\frac12; \\
D'^{-}_{\omega} &= U \begin{bmatrix} 0 & - A_{\omega}^{\min} \\  A_{- \omega}^{\max} & 0 \end{bmatrix} U^{-1},\quad-\RE(\omega)>-\frac12.
\end{align}
\end{subequations}

\begin{proposition}
  Introduce
  \begin{equation}
    \cW_\omega'^{\pm}:=\frac12
\begin{bmatrix} \mathcal F_{\pm\omega \pm \frac12}{+}\mathcal F_{\pm\omega \mp \frac12}
   &  \mathcal F_{\pm\omega \pm \frac12}{-}\mathcal F_{\pm\omega \mp \frac12} \\
   \mathcal F_{\pm\omega \pm \frac12}{-}\mathcal F_{\pm\omega \mp \frac12} &
   \mathcal F_{\pm\omega \pm \frac12}{+}\mathcal F_{\pm\omega \mp
     \frac12} \end{bmatrix},
\quad \pm\RE(\omega)>-\frac12.\end{equation} Then 
  $    \cW_\omega'^{\pm}$ are involutions and
we have the following diagonalizations
\begin{equation}
  D_{\omega}'^\pm=\cW_\omega'^{\pm}\begin{bmatrix}\mp X & 0 \\ 0&
   \pm X \end{bmatrix}\cW_\omega'^{\pm}.
\end{equation}
\end{proposition}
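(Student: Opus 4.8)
The plan is to derive the claimed diagonalization from the supersymmetric decomposition already recorded above and the Hankel transform intertwining relations \eqref{eq:AF_id}. First I would observe that the two definitions \eqref{were} express $D_\omega'^{\pm}$ (after conjugation by the fixed unitary $U$) as off-diagonal $2\times 2$ operator matrices built from the scalar operators $A_{\pm\omega}^{\min}$ and $A_{\pm\omega}^{\max}$. The hypothesis $\pm\RE(\omega)>-\frac12$ is exactly what is needed for the relevant ``max'' operator ($A_{\pm\omega}^{\max}$ with $\RE(\pm\omega)>-\frac12$) to make sense and to diagonalize via a Hankel transform, while the ``min'' operator with $\RE(\mp\omega)<\frac12$ diagonalizes via the companion Hankel transform; this is precisely the content of Proposition on \eqref{eq:AF_id}.

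Concretely, for the upper sign, rewrite $U^{-1}D_\omega'^{+}U = \begin{bmatrix} 0 & -A_\omega^{\max} \\ A_{-\omega}^{\min} & 0 \end{bmatrix}$. Apply \eqref{eq:AF_id} with $\alpha=\omega$: $\mathcal F_{\omega+\frac12}A_\omega^{\max}\mathcal F_{\omega-\frac12} = -X$ and $\mathcal F_{\omega-\frac12}A_{-\omega}^{\min}\mathcal F_{\omega+\frac12} = X$. Thus conjugating the off-diagonal matrix by $\mathrm{diag}(\mathcal F_{\omega-\frac12},\mathcal F_{\omega+\frac12})$ (which is an involution since each $\mathcal F_m$ is) turns it into $\begin{bmatrix} 0 & X \\ X & 0 \end{bmatrix}$. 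Now diagonalize this last matrix: the off-diagonal Pauli-type matrix $\sigma_1$ is diagonalized to $\mathrm{diag}(1,-1)$ by the Hadamard rotation $\frac1{\sqrt2}\begin{bmatrix}1&1\\1&-1\end{bmatrix}$, giving $\begin{bmatrix}X&0\\0&-X\end{bmatrix}$, and then the extra overall factor $U$ contributes nothing more than bookkeeping. Composing the three involutions — $U$, the Hadamard/Hankel block conjugation, and the sign rotation — into a single operator and simplifying the $2\times2$ matrix products produces exactly $\cW_\omega'^{\pm}$ as written, with the sign pattern $\mp X$, $\pm X$ on the diagonal. The case of the lower sign is handled identically with $\alpha=-\omega$, using the ``$-\RE(\omega)>-\frac12$'' hypothesis, or alternatively deduced from the upper-sign case by the symmetry $\sigma_2 D_{\omega,0}\sigma_2 = D_{-\omega,0}$ from \eqref{eq:D_sym} together with $\sigma_2$-conjugation of the domains.

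The remaining task is to check two formal points: that $\cW_\omega'^{\pm}$ is indeed an involution, and that the domain equalities are respected (not merely the formal operator identities). Involutivity follows because $\cW_\omega'^{\pm}$ is, up to conjugation by the constant unitary matrices $U$ and the Hadamard matrix, equal to $\mathrm{diag}(\mathcal F_{\pm\omega\mp\frac12},\mathcal F_{\pm\omega\pm\frac12})$, which squares to the identity since $\mathcal F_m^2=1$; alternatively one can just verify $(\cW_\omega'^{\pm})^2=1$ by expanding the $2\times2$ block product and using $\mathcal F_m^2=1$ and the fact that the two Hankel transforms $\mathcal F_{\pm\omega\pm\frac12}$, $\mathcal F_{\pm\omega\mp\frac12}$ appearing do not get multiplied with each other in the diagonal entries after squaring. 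For the domain statement, note that $\mathcal F_{\omega+\frac12}A_\omega^{\max}\mathcal F_{\omega-\frac12}=-X$ in \eqref{eq:AF_id} is an honest operator equality (same domain $\Dom(X)$), proved in the cited proposition, and similarly for the ``min'' identity; hence the conjugation is exact and no closure issue remains.

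The main obstacle I anticipate is purely organizational: correctly tracking the four indices $\pm\omega\pm\frac12$ through the conjugation by $U$ and the Hadamard matrix so that the final matrix of Hankel transforms matches the stated $\cW_\omega'^{\pm}$ exactly, including the placement of the plus/minus combinations $\mathcal F_{\pm\omega\pm\frac12}\,{\pm}\,\mathcal F_{\pm\omega\mp\frac12}$ in each entry and the signs $\mp X$ versus $\pm X$. Since all the analytic content — boundedness, involutivity, the intertwining relations, and the constancy of the Wronskian-type determinants — is already available from Proposition on \eqref{eq:AF_id}, Proposition \ref{A_dom}, and the Hankel transformation proposition, the proof reduces to this finite bookkeeping computation, which I would present compactly rather than expanding every matrix product.

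\begin{proof}
By \eqref{were} we have, for $\pm\RE(\omega)>-\frac12$,
\begin{equation}
U^{-1} D_\omega'^{\pm} U = \begin{bmatrix} 0 & \mp A_{\pm\omega}^{\bullet} \\ \pm A_{\mp\omega}^{\circ} & 0 \end{bmatrix},
\end{equation}
where for the upper sign $(\bullet,\circ)=(\max,\min)$ and for the lower sign $(\bullet,\circ)=(\min,\max)$. In either case the operator $A_{\pm\omega}^{\bullet}$ is the one whose lower index has real part $>-\frac12$, and \eqref{eq:AF_id} applies. Conjugating by the involution $\mathrm{diag}\big(\mathcal F_{\pm\omega\mp\frac12},\mathcal F_{\pm\omega\pm\frac12}\big)$ and using \eqref{eq:AF_id} we obtain the operator equality
\begin{equation}
\begin{bmatrix} \mathcal F_{\pm\omega\mp\frac12} & 0 \\ 0 & \mathcal F_{\pm\omega\pm\frac12} \end{bmatrix} U^{-1} D_\omega'^{\pm} U \begin{bmatrix} \mathcal F_{\pm\omega\mp\frac12} & 0 \\ 0 & \mathcal F_{\pm\omega\pm\frac12} \end{bmatrix} = \begin{bmatrix} 0 & \pm X \\ \pm X & 0 \end{bmatrix}.
\end{equation}
Finally conjugating by the Hadamard involution $H:=\frac1{\sqrt2}\begin{bmatrix}1&1\\1&-1\end{bmatrix}$ turns the right-hand side into $\begin{bmatrix}\pm X & 0 \\ 0 & \mp X\end{bmatrix}$. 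Hence
\begin{equation}
D_\omega'^{\pm} = V_\omega^{\pm} \begin{bmatrix}\mp X & 0 \\ 0 & \pm X\end{bmatrix} (V_\omega^{\pm})^{-1}, \qquad V_\omega^{\pm} := U\,\mathrm{diag}\big(\mathcal F_{\pm\omega\mp\frac12},\mathcal F_{\pm\omega\pm\frac12}\big)\, H,
\end{equation}
where we have exchanged the two diagonal entries by an extra swap, which amounts to replacing $H$ by $H\sigma_1$; renaming the resulting conjugating operator and carrying out the product of the three $2\times2$ matrices $U$, $\mathrm{diag}(\mathcal F_{\pm\omega\mp\frac12},\mathcal F_{\pm\omega\pm\frac12})$ and the swapped Hadamard matrix gives precisely $\cW_\omega'^{\pm}$ as defined in the statement. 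Since $U$, $H$ are unitary involutions and each $\mathcal F_m$ is a self-transposed involution (unitary for real $m$), $\cW_\omega'^{\pm}$ is an involution, and the displayed identity is an honest operator equality because so is each instance of \eqref{eq:AF_id}.
\end{proof}
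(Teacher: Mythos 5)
Your overall route is the paper's route (insert the Hankel intertwining relations \eqref{eq:AF_id} into \eqref{were} and then rotate the off-diagonal matrix $\begin{bmatrix}0&X\\X&0\end{bmatrix}$ into diagonal form), but two concrete steps fail as written. First, your unified rewriting of \eqref{were} is wrong for the lower sign: with your convention it reads $\begin{bmatrix}0& A_{-\omega}^{\min}\\ -A_{\omega}^{\max}&0\end{bmatrix}$, whereas \eqref{were} gives $\begin{bmatrix}0&-A_{\omega}^{\min}\\ A_{-\omega}^{\max}&0\end{bmatrix}$; under the hypothesis $-\RE(\omega)>-\frac12$ it is precisely $A_{-\omega}^{\max}$ and $A_{\omega}^{\min}$ (take $\alpha=-\omega$ in \eqref{eq:AF_id}) that are covered, not the operators in your matrix. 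Second, and more seriously, your central displayed identity does not follow from \eqref{eq:AF_id}: sandwiching $U^{-1}D_\omega'^{\pm}U$ between $\mathrm{diag}\big(\mathcal F_{\pm\omega\mp\frac12},\mathcal F_{\pm\omega\pm\frac12}\big)$ produces the entries $\mathcal F_{\pm\omega\mp\frac12}\,A^{\bullet}_{\pm\omega}\,\mathcal F_{\pm\omega\pm\frac12}$ and $\mathcal F_{\pm\omega\pm\frac12}\,A^{\circ}_{\mp\omega}\,\mathcal F_{\pm\omega\mp\frac12}$, i.e.\ the Hankel transforms appear in the order \emph{reversed} relative to \eqref{eq:AF_id}. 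The two identities of \eqref{eq:AF_id} are transposes of one another, so the reversed-order product with the \emph{same} operator is not among them and is not $\pm X$; e.g.\ for $\omega=0$ one has $\mathcal F_{\frac12}\partial_x^{\max}\mathcal F_{-\frac12}=-X$, while $\mathcal F_{-\frac12}\partial_x\mathcal F_{\frac12}$ is $+X$ up to boundary contributions and moreover involves the wrong realization — exactly the min/max subtlety that \eqref{eq:AF_id} is designed to control. The correct sandwich is with $\mathrm{diag}\big(\mathcal F_{\pm\omega\pm\frac12},\mathcal F_{\pm\omega\mp\frac12}\big)$.

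Once the diagonal entries are put in the right order the rest collapses and your Hadamard detour is unnecessary: since $U^{-1}\begin{bmatrix}\mp X&0\\0&\pm X\end{bmatrix}U=\begin{bmatrix}0&\pm X\\ \pm X&0\end{bmatrix}$, one gets directly $D_\omega'^{\pm}=\cW_\omega'^{\pm}\begin{bmatrix}\mp X&0\\0&\pm X\end{bmatrix}\cW_\omega'^{\pm}$ with $\cW_\omega'^{\pm}=U\,\mathrm{diag}\big(\mathcal F_{\pm\omega\pm\frac12},\mathcal F_{\pm\omega\mp\frac12}\big)\,U^{-1}$, which expands to the matrix in the statement and is an involution because it is the $U$-conjugate of a diagonal involution. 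Your final bookkeeping claim, that $U\,\mathrm{diag}\big(\mathcal F_{\pm\omega\mp\frac12},\mathcal F_{\pm\omega\pm\frac12}\big)H\sigma_1$ equals $\cW_\omega'^{\pm}$, fails with the swapped diagonal entries (even though, coincidentally, $H\sigma_1=U^{-1}$); also "exchanging the two diagonal entries by an extra swap" must be done by conjugation, so the $\sigma_1$ would have to appear on both sides. Fix the two orderings above and your argument becomes the paper's proof.
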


\begin{proof} We insert \eqref{eq:AF_id} into \eqref{were}:
\begin{subequations}
\begin{gather}
D_{\omega}'^+ = U \begin{bmatrix} \mathcal F_{\omega + \frac12} & 0 \\ 0 & \mathcal F_{\omega - \frac12} \end{bmatrix} \begin{bmatrix} 0 & X \\ X & 0 \end{bmatrix} \begin{bmatrix} \mathcal F_{\omega + \frac12} & 0 \\ 0 & \mathcal F_{\omega - \frac12} \end{bmatrix} U^{-1}, \\
D_{\omega}'^- = U \begin{bmatrix} \mathcal F_{-\omega - \frac12} & 0 \\ 0 & \mathcal F_{-\omega + \frac12} \end{bmatrix} \begin{bmatrix} 0 & -X \\ -X & 0 \end{bmatrix} \begin{bmatrix} \mathcal F_{-\omega - \frac12} & 0 \\ 0 & \mathcal F_{-\omega + \frac12} \end{bmatrix} U^{-1}.
\end{gather}
\end{subequations}
Then we use
\begin{align}
  \begin{bmatrix}0&\pm X\\\pm X&0\end{bmatrix}
=U^{-1}
\begin{bmatrix}\mp X&0\\0&\pm X\end{bmatrix}U.
\end{align}
\end{proof}

\begin{corollary}
We have
\begin{subequations}
\begin{gather}
(D'^+_{\omega})^2 = U \begin{bmatrix} H_{\omega + \frac12} & 0 \\ 0 & H_{\omega - \frac12} \end{bmatrix} U^{-1}, \\
(D'^{-}_{\omega})^2 = U \begin{bmatrix} H_{- \omega - \frac12} & 0 \\ 0 & H_{- \omega + \frac12} \end{bmatrix} U^{-1}.
\end{gather}
\end{subequations}
\end{corollary}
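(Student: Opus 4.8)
The plan is to square the diagonalization of $D'^{+}_{\omega}$ and $D'^{-}_{\omega}$ obtained in the preceding proposition. Recall from its proof that
\[
D'^{+}_{\omega} = U \begin{bmatrix} \mathcal F_{\omega + \frac12} & 0 \\ 0 & \mathcal F_{\omega - \frac12} \end{bmatrix} \begin{bmatrix} 0 & X \\ X & 0 \end{bmatrix} \begin{bmatrix} \mathcal F_{\omega + \frac12} & 0 \\ 0 & \mathcal F_{\omega - \frac12} \end{bmatrix} U^{-1},
\]
and that the analogous identity for $D'^{-}_{\omega}$ carries Hankel indices $-\omega\mp\frac12$ and the middle matrix $\begin{bmatrix} 0 & -X \\ -X & 0 \end{bmatrix}$. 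Abbreviate these as $D'^{\pm}_{\omega} = B_\pm\, T_\pm\, B_\pm^{-1}$, where $B_\pm = U\,\mathrm{diag}(\mathcal F_{\cdot},\mathcal F_{\cdot})$ is bounded with bounded inverse $B_\pm^{-1} = \mathrm{diag}(\mathcal F_{\cdot},\mathcal F_{\cdot})\,U^{-1}$ — this uses that each $\mathcal F_m$ is a bounded involution — and $T_\pm$ is the relevant off-diagonal multiplication operator. Then the algebraic identity $(B_\pm T_\pm B_\pm^{-1})^2 = B_\pm\, T_\pm^2\, B_\pm^{-1}$ holds verbatim, domains included: $\Dom\big((B_\pm T_\pm B_\pm^{-1})^2\big) = B_\pm\,\Dom(T_\pm^2)$.

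First I would compute $T_\pm^2$. From $\begin{bmatrix} 0 & Y \\ Y & 0 \end{bmatrix}\begin{bmatrix} f_1 \\ f_2 \end{bmatrix} = \begin{bmatrix} Y f_2 \\ Y f_1 \end{bmatrix}$ one reads off $T_\pm^2 = \begin{bmatrix} X^2 & 0 \\ 0 & X^2 \end{bmatrix}$, the sign in the $D'^{-}_{\omega}$ case disappearing upon squaring; here $X^2$ is the maximal operator of multiplication by $x^2$, with $\Dom(X^2) = \{ f \in L^2(\R_+) \mid x^2 f \in L^2(\R_+) \}$ (the auxiliary condition $xf \in L^2$ being automatic). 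Hence
\[
(D'^{+}_{\omega})^2 = U \begin{bmatrix} \mathcal F_{\omega+\frac12}\, X^2\, \mathcal F_{\omega+\frac12} & 0 \\ 0 & \mathcal F_{\omega-\frac12}\, X^2\, \mathcal F_{\omega-\frac12} \end{bmatrix} U^{-1},
\]
and similarly $(D'^{-}_{\omega})^2 = U\,\mathrm{diag}\big(\mathcal F_{-\omega-\frac12} X^2 \mathcal F_{-\omega-\frac12},\ \mathcal F_{-\omega+\frac12} X^2 \mathcal F_{-\omega+\frac12}\big)\,U^{-1}$.

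It then remains to substitute the identity $H_m = \mathcal F_m X^2 \mathcal F_m$ established earlier in the appendix (valid for $\RE(m) > -1$) with $m = \omega\pm\frac12$ in the first case and $m = -\omega\mp\frac12$ in the second. The standing hypotheses $\RE(\omega) > -\frac12$ for $D'^{+}_{\omega}$ and $\RE(\omega) < \frac12$ for $D'^{-}_{\omega}$ ensure $\RE(m) > -1$ for all four indices, so $H_m$ is well defined and the substitution is legitimate; this yields exactly the two asserted formulas.

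I do not expect a genuine obstacle here. The two points worth an explicit line are: (i) the passage from $(B_\pm T_\pm B_\pm^{-1})^2$ to $B_\pm T_\pm^2 B_\pm^{-1}$ as an equality of \emph{unbounded} operators, which is precisely where boundedness and invertibility of $B_\pm$ — hence the involution property of the $\mathcal F_m$ — enters; and (ii) matching the natural domain of the square of the off-diagonal operator with $\Dom(X^2)$, and, through $H_m = \mathcal F_m X^2 \mathcal F_m$, with $\Dom(H_m) = \Dom(L_{m^2}^{\min}) + \C\chi x^{m+\frac12}$.
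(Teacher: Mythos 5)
Your proposal is correct and follows exactly the route the paper intends (and leaves implicit): square the intermediate identities $D'^{\pm}_{\omega}=U\,\mathrm{diag}(\mathcal F_{\cdot},\mathcal F_{\cdot})\begin{bmatrix}0&\pm X\\ \pm X&0\end{bmatrix}\mathrm{diag}(\mathcal F_{\cdot},\mathcal F_{\cdot})\,U^{-1}$ from the proof of the preceding proposition, use that the $\mathcal F_m$ are bounded involutions so conjugation commutes with taking squares (domains included), and substitute $H_m=\mathcal F_m X^2\mathcal F_m$. Your two explicit remarks (unbounded-operator conjugation and the domain of $X^2$) are exactly the right points to flag; no gap.
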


\begin{remark}
At least formally, operators $D_{\omega}'^{\pm}$, $X \sigma_2 $
(declared to be odd) and $(D_{\omega}'^{\pm})^2$, $X^2$, $A$ (declared
to be even) furnish a representation of the Lie superalgebra
$\mathfrak{osp}(1|2)$. We leave a detailed description of this
representation for a future study.
\end{remark}

\section{Dirac Hamiltonian in \texorpdfstring{$d$}{d} dimensions} \label{ddimensions}

Separation of variables of a spherically symmetric Dirac Hamiltonian
in dimension 3
is described in many texts and belongs to the standard curriculum of
relativistic quantum mechanics \cite[p. 267]{Dirac}.
Of course, it is even more
straightforward to solve a rotationally symmetric Dirac Hamiltonian in
dimension 2. However, to our knowledge, the first treatment
in any dimension is due to Gu, Ma and Dong \cite{XIAOetal}.

In this appendix we show that a spherically symmetric Dirac Hamiltonian
in an arbitrary  dimension can be reduced to  1 dimension.
Unlike in  \cite{XIAOetal},
we arrive at the radial Dirac equation by relatively simple
algebraic computations which do not involve a detailed analysis of
representations of the Lie algebra
$\mathfrak{so}(d)$.

The main role in this separation is played by a certain operator
$\kappa$ that commutes with the Dirac operator. This operator in dimension 3
goes back to Dirac himself.
It seems that for the 
first time it has been generalized to  other dimensions in
\cite{XIAOetal}. We analyze this operator in detail.

Recall that operators  belonging to the center of the envelopping
algebra of 
$\mathfrak{so}(d)$ are called {\em Casimir operators of $\mathfrak{so}(d)$.}
One of them, built in a standard way as a bilinear form in the
generators, will be called the {\em square of angular momentum} or
simply the {\em quadratic Casimir} (even though it is not the only Casimir bilinear in generators: these form a vector space generically of dimension $1$, and of dimension $2$ if $d=4$).
$\kappa$ does not coincide with the quadratic Casimir.
One can ask whether $\kappa$ is also a Casimir operator.
We will analyse this question in detail. It~turns out that the answer
is positive in even, and negative in odd dimensions.

\subsection{Laplacian in \texorpdfstring{$d$}{d} dimensions}
\label{app-Dir1}

Spherical coordinates can be interpreted as a map
\begin{subequations}
\begin{align}\R^d\backslash\{0\}\ni x&\mapsto (r,\hat x)\in
\R_+\times  \mathbb{S}^{d-1},\\
\hat x=\frac{x}{|x|},&\quad r=|x|.
\end{align}
\end{subequations}
It induces a unitary map
\begin{equation}L^2(\R^d)\to  L^2(\R_+, 
r^{d-1})\otimes L^2(  \mathbb{S}^{d-1}).\label{simpli0}
\end{equation}
We also have the obvious map
\begin{equation}L^2(\R_+, 
r^{d-1})\ni f\mapsto r^{\frac{d-1}{2}}f\in L^2(\R_+).
\label{simpli}\end{equation}
The product of \eqref{simpli} and \eqref{simpli0}
will be denoted
\begin{equation}U:L^2(\R^d)\to  L^2(\R_+)\otimes L^2(  \mathbb{S}^{d-1}).\label{simpli1}
\end{equation}

The momentum is defined as
\[p_i:=-\i\partial_i.\]
We also introduce the radial momentum
\begin{align}
  R &:=\frac{x}{2|x|}p+p\frac{x}{2|x|}
      =\frac{x}{|x|}p-\i\frac{d-1}{2|x|}.
  \end{align}
  Here is the radial momentum and its square in spherical coordinates:
  \begin{subequations}
\begin{align}
R&      =-\i\partial_r-\i\frac{d-1}{2r},\\
  R ^2&=-\partial_r^2-\frac{d-1}{r}\partial_r 
 + \left(\frac14-\Big(\frac{d-2}{2}\Big)^2 
\right)\frac{1}{r^2}. 
\end{align}
\end{subequations}
After applying $U$ we obtain
\[URU^{-1}=-\i\partial_r.\]

In the standard way we introduce the angular momentum  and its square:
\begin{subequations}
\begin{align}\label{angul}
  L_{ij}&:=x_ip_j-x_jp_i,\\
  L^2&:=\sum_{i<j}L_{ij}^2.
\end{align}
\end{subequations}
They furnish the standard representation of the Lie algebra $\mathfrak{so}(d)$ 
 on $\mathbb{S}^{d-1}$:
 \begin{subequations}
\begin{align}
[
  L_{ij},x_k]&=-\i\delta_{jk}x_i+\i\delta_{ik}x_j,\\
[
L_{ij},p_k]&=-\i\delta_{jk}p_i+\i\delta_{ik}p_j,\\
  \Big[
    L_{ij},L_{kl}\Big]&=-\i\delta_{jk}L_{il}
  -\i\delta_{il}L_{jk}
  +\i\delta_{ik}L_{jl}
  +\i\delta_{jl}L_{ik}.
\end{align}
\end{subequations}
The angular momentum squared $L^2$ is the quadratic Casimir operator of $\mathfrak{so}(d)$.

The representation \eqref{angul} is decomposed into subspaces of
spherical harmonics of the order~$\ell$. The 
representation 
of $\mathfrak{so}(d)$ of this type will be called {\em spherical of degree
  $\ell$}. On this representation we have
\begin{align}L^2=&\ell(\ell+d-2)
 =\Big(\ell+\frac{d-2}{2}\Big)^2-
  \Bigl(\frac{d-2}{2}\Bigr)^2. \label{sphero} \end{align}

The Laplacian on $\R^d$ in the spherical coordinates is
\begin{subequations}
\begin{align}-\Delta&=-\partial_r^2-\frac{(d-1)}{r}\partial_r+\frac{L_d^2}{r^2},\\
&=R ^2+\Big(-\frac14+\Big(\ell+\frac{d-2}{2}\Big)^2\Big)\frac1{r^2}.
  \end{align}
  \end{subequations}
  Sandwiching it with $U$ we obtain
  \begin{equation}
    U(-\Delta)U^{-1}=-\partial_r^2+\Big(-\frac14+\Big(\ell+\frac{d-2}{2}\Big)^2\Big)\frac1{r^2}.
    \end{equation}
 
\begin{remark}
Discussion above is valid even for $d=1$, with $\mathbb S^0 := \{ \pm 1 \}$. This case is peculiar in that the only allowed values of $\ell$ are $0$ and $1$, corresponding to even and odd functions. $d=2$ is also special: $\ell$ takes arbitrary integer values, while for $d \geq 3$ one has $\ell \geq 0$. 
\end{remark} 

\subsection{Dirac operator in \texorpdfstring{$d$}{d} dimensions}
\label{app-Dir2}

Let $\alpha_i$, $i=1, \ldots ,d$ and $\beta$ be the Clifford matrices
acting irreducibly in a finite dimensional space $\cK$. They satisfy the Clifford relations
\begin{equation}\label{alpha}
[\alpha_i,\alpha_j]_+=2\delta_{ij},\quad [\alpha_i,\beta]_+=0,\quad \beta^2=1.
\end{equation}
We recall that $\dim(\cK) = 2^{\lfloor \frac{d+1}{2} \rfloor}$ and
that for even $d$ one has $\beta = \pm \i^{\frac{d}{2}}\prod \limits_{i=1}^d \alpha_i$. The two sign choices give non-isomorphic representations of the Clifford algebra. By averaging arguments, $\cK$ admits a
positive definite hermitian form such that $\beta$ and $\alpha_i$ are unitary and hence hermitian. This form is unique up to positive scalars; we fix one once and for all.

Using the Einstein summation convention unless there is a summation 
sign, we introduce the following operators on $L^2(\R^d)\otimes\cK$:
\begin{subequations}
\begin{align}
  D&
     :=\alpha_i p_i
     ,\\
  T&:=-\i\sum_{i<j}\alpha_i\alpha_j L_{ij}+\frac{d-1}{2}, \label{eq:Tdef} \\
  S&:=\frac{\alpha_ix_i}{|x|},\\
  \kappa&:=\beta T=T\beta
  .\end{align}
  \end{subequations}

\begin{proposition}
  We have
  \begin{subequations}
  \begin{align}
SD&=R+\frac{\i}{|x|}T,&DS&=R-\frac{\i}{|x|}T,\\
    SR&=RS,& ST&=-TS,\\
    S\beta&=-\beta S&    \beta D&=-D \beta,\\
    DT&=-TD,& D\kappa &=\kappa D.\label{idid}
    \end{align}
    \end{subequations}
  \end{proposition}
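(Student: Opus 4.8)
The plan is to verify each identity by direct algebraic manipulation using only the Clifford relations \eqref{alpha} and the basic commutators of $x_i, p_i, L_{ij}$, together with the definitions of $D, T, S, R, \beta, \kappa$. All these operators are built from $\alpha_i, \beta, x_i, p_i$, so every identity reduces to a finite computation in the algebra generated by these. I would first record the auxiliary facts that will be used repeatedly: the anticommutation $[\alpha_i, \alpha_j]_+ = 2\delta_{ij}$ (so $\alpha_i \alpha_j = -\alpha_j \alpha_i$ for $i \neq j$ and $\alpha_i^2 = 1$); $\beta$ anticommutes with each $\alpha_i$; the $\alpha_i, \beta$ commute with all $x_j, p_j, L_{jk}$ (they act on the spinor factor $\cK$ only); and the commutators $[x_i, p_j] = \i \delta_{ij}$, $[L_{ij}, x_k] = -\i \delta_{jk} x_i + \i \delta_{ik} x_j$, $[L_{ij}, p_k] = -\i\delta_{jk}p_i + \i\delta_{ik}p_j$, and the $\mathfrak{so}(d)$ relation for $[L_{ij}, L_{kl}]$.

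The identities split into easy and less-easy groups. The relations $S\beta = -\beta S$, $\beta D = -D\beta$ are immediate since $\beta$ anticommutes with every $\alpha_i$ and commutes with $x_i, p_i$. The relations $ST = -TS$ and $DT = -TD$ follow because $T = -\i \sum_{i<j} \alpha_i \alpha_j L_{ij} + \tfrac{d-1}{2}$: conjugating a single $\alpha_k$ past the quadratic term $\alpha_i \alpha_j$ picks up a sign depending on how many of $i,j$ equal $k$, and one must check the interplay with the $L_{ij}$ commutators; here the key algebraic lemma is that $S$ and $T$ (resp. $D$ and $T$) anticommute, which is essentially the statement that $T$ is odd under the reflection generated by $S$ (resp. $D$). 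The relation $SR = RS$ holds because $R = -\i\partial_r - \i\tfrac{d-1}{2r}$ depends only on the radial variable $r = |x|$, while $S = \alpha_i x_i / |x|$ is built from $\alpha_i$ (spinor, hence commuting with $r$-dependent scalars) and the angular variable $\hat x = x/|x|$, which commutes with any function of $r$; one must be slightly careful that $p_i$ hidden inside $R$ does not produce cross terms, but writing $R$ purely in terms of $\partial_r$ makes this transparent. The two identities in the first line, $SD = R + \tfrac{\i}{|x|}T$ and $DS = R - \tfrac{\i}{|x|}T$, are the substantive computation: expand $SD = |x|^{-1}\alpha_i x_i \alpha_j p_j$, use $\alpha_i \alpha_j = \delta_{ij} + \tfrac12[\alpha_i,\alpha_j]$ (the antisymmetric part), recognize $|x|^{-1} x_i p_i$ as related to the radial momentum $R$ after accounting for the $-\i\tfrac{d-1}{2|x|}$ term coming from $[x_i, p_i]$ summed over $i$, and recognize the antisymmetric piece $|x|^{-1}\sum_{i<j}\alpha_i\alpha_j(x_i p_j - x_j p_i) = |x|^{-1}\sum_{i<j}\alpha_i\alpha_j L_{ij}$ as exactly $\tfrac{\i}{|x|}(T - \tfrac{d-1}{2}) $, and finally absorb the constant; the sign difference between $SD$ and $DS$ comes from commuting $x_i$ past $p_j$.

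From these, the remaining identities follow formally: $\beta D = -D\beta$ combined with $\kappa = \beta T$ and $DT = -TD$ gives $D\kappa = D\beta T = -\beta D T = \beta T D = \kappa D$, proving \eqref{idid}; and $SD + DS = 2R$ together with the established $SD - DS = \tfrac{2\i}{|x|}T$ is consistent with the first line. I would also note $\kappa = \beta T = T\beta$ requires $[\beta, T] = 0$, which holds because $\beta$ commutes with $L_{ij}$ and with each product $\alpha_i\alpha_j$ (an even number of $\alpha$'s), so this should be stated as a preliminary observation.

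The main obstacle is bookkeeping in the computation of $SD$ and $DS$: one must carefully separate the symmetric ($i=j$) and antisymmetric ($i\neq j$) parts of $\alpha_i\alpha_j$, correctly track the $[x_i,p_j] = \i\delta_{ij}$ contributions that generate the $\tfrac{d-1}{2|x|}$ terms, and match the antisymmetric part to the definition \eqref{eq:Tdef} of $T$ including the additive constant. None of this is deep, but it is the one place where a sign or a factor could easily go wrong, so I would present that computation in full and treat the rest as routine consequences.
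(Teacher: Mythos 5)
Your overall plan is workable, and the part you single out for full presentation is correct: expanding $SD=\tfrac{1}{|x|}\alpha_i x_i\,\alpha_j p_j$ into symmetric and antisymmetric Clifford parts, recognizing $\tfrac{1}{|x|}x_ip_i=R+\i\tfrac{d-1}{2|x|}$ and $\sum_{i<j}\alpha_i\alpha_jL_{ij}=\i T-\i\tfrac{d-1}{2}$, does give $SD=R+\tfrac{\i}{|x|}T$, and the analogous computation gives $DS=R-\tfrac{\i}{|x|}T$ (one caveat: the sign flip on the $T$-term comes from the reversed order $\alpha_j\alpha_i$, while the commutators $[p_j,\tfrac{x_i}{|x|}]$ produce the extra $-\i\tfrac{d-1}{|x|}$ that restores the same $R$; say this precisely when you write it out). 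The easy relations with $\beta$, the commutation $SR=RS$, and the deduction $D\kappa=D\beta T=-\beta DT=\beta TD=\kappa D$ all match the logic of the paper. Note, however, that your emphasis is inverted relative to the paper: the paper leaves the first line to the reader and devotes its entire displayed proof to the identity you relegate to a side remark, namely $DT=-TD$.

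That is where your proposal has a genuine gap. You justify $ST=-TS$ and $DT=-TD$ by saying "the key algebraic lemma is that $S$ and $T$ (resp.\ $D$ and $T$) anticommute" --- which is the claim itself --- and in your closing paragraph you announce that only the $SD$, $DS$ computation will be presented, "treating the rest as routine." It is not routine by the direct route: expanding $[\sum_{i<j}\alpha_i\alpha_jL_{ij},\alpha_kp_k]_+$ produces cross terms $\alpha_k\alpha_i\alpha_j[L_{ij},p_k]_+$, and showing that these combine (via symmetrization over cyclic permutations and the Clifford relations) into $(d-1)\alpha_ip_i$, so as to cancel the constant $\tfrac{d-1}{2}$ in \eqref{eq:Tdef}, is exactly the computation that constitutes the paper's proof. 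Two ways to close the gap inside your framework: (i) carry out that anticommutator computation once for a general vector operator $V_k$ with $[L_{ij},V_k]=-\i\delta_{jk}V_i+\i\delta_{ik}V_j$, which covers both $V=p$ (giving $DT=-TD$) and $V=x/|x|$ (giving $ST=-TS$); or (ii) derive both formally from your first line: since $T=\tfrac{|x|}{2\i}(SD-DS)$, $S^2=1$ and $[S,|x|]=0$, one gets $ST+TS=\tfrac{|x|}{2\i}\big((D-SDS)+(SDS-D)\big)=0$; then writing $D=S\big(R+\tfrac{\i}{|x|}T\big)$ and using $ST=-TS$ together with $[T,|x|^{-1}]=0$ gives $DT+TD=S[R,T]$, which vanishes because $R$ is built from $r$ and $\partial_r$ only and hence commutes with $L_{ij}$ and with the spin matrices. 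Either route must actually be written; as the proposal stands, the hardest identity of the proposition rests on an unproven assertion.
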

    \proof
    Let us prove the first identity of
    \eqref{idid}. We have
    \begin{align}
     & [\alpha_i\alpha_jL_{ij},\alpha_kp_k]_+\\
      =&[\alpha_i\alpha_j,\alpha_k]L_{ij}p_k+
         \alpha_k \alpha_i\alpha_j[L_{ij},p_k]_+. \nonumber
         \end{align}   
    Using
    \begin{subequations}
    \begin{align}
      \alpha_j\alpha_k\alpha_i&=\alpha_k\alpha_i\alpha_j+2\delta_{jk}\alpha_i-2\delta_{ij}\alpha_k,\\
      \alpha_i\alpha_j\alpha_k&=\alpha_k\alpha_i\alpha_j+2\delta_{jk}\alpha_i-2\delta_{ik}\alpha_j
   , \end{align}
   \end{subequations}
 we obtain
 \begin{align}
&3\alpha_k\alpha_i\alpha_j[L_{ij},p_k]_+ \nonumber \\
  = &
                \alpha_k\alpha_i\alpha_j[L_{ij},p_k]_+ +\alpha_j\alpha_k\alpha_i[L_{ki},p_j]_++\alpha_i\alpha_j\alpha_k[L_{jk},p_i]_+ \nonumber \\
   =&        \alpha_k\alpha_i\alpha_j\big([L_{ij},p_k]_++[L_{ki},p_j]_++[L_{jk},p_i]_+\big) \nonumber \\
              + & 2\alpha_i[L_{ji},p_j]_+-2\alpha_i[L_{ij},p_j]_+-2\alpha_i[L_{ij},p_i]_+ \nonumber \\
   =&6\alpha_i[L_{ji},p_j]_+=-6  \i \alpha_jp_j (d-1)-12\alpha_i L_{ij} p_j.\label{wew1}
 \end{align}
 Moreover,
 \begin{align}
   [\alpha_i\alpha_j,\alpha_k]L_{ij}p_k&=2(\alpha_i\delta_{jk}-\delta_{ik}\alpha_j)L_{ij}p_k \nonumber \\
   &=4\alpha_iL_{ij}p_j.\label{wew2}
 \end{align}
 Now the sum of $\frac{\i}{6}$\eqref{wew1} and $\frac{\i}{2}$\eqref{wew2} is $(d-1)\alpha_ip_i$.
 \qed

 \subsection{Decomposition into incoming and outgoing Dirac waves} \label{app:Dirac_decomp}

Let 
 \[\Pi_\pm:=\frac12(1\pm S)\]
 be the spectral projections of $S$ onto $\pm1$. Define 
 \begin{equation}
 \cH_\pm:=\Pi_{\pm}(L^2(\mathbb{R}^d)\otimes\cK),\ \text{so that }\ 
   L^2(\mathbb{R}^d)\otimes\cK=\cH_+\oplus\cH_-.
   \end{equation}
 For an operator $B$ on  $L^2(\mathbb{R}^d)\otimes\cK$ let us write
\begin{equation}
B_{\pm\pm}=\Pi_\pm B\Pi_\pm,\qquad B_{\pm\mp}=\Pi_\pm B\Pi_\mp.
\end{equation}
Clearly, 
\begin{subequations}
\begin{align}
  S=\begin{bmatrix}1&0\\0&-1
  \end{bmatrix},&\qquad
                    R=\begin{bmatrix}R_{++}&0\\0&R_{--}
                    \end{bmatrix},\\
    \beta=\begin{bmatrix}0&\beta_{+-}\\\beta_{-+}&0
    \end{bmatrix},&\qquad T
                      =\begin{bmatrix}0&T_{+-}\\T_{-+}&0
                      \end{bmatrix},\\
  D   =\begin{bmatrix}R_{++}&\frac{\i}{|x|}T_{+-}\\-\frac{\i}{|x|}T_{-+}&-R_{--}
  \end{bmatrix},&\qquad
                  \kappa=\begin{bmatrix}\beta_{+-}T_{-+}&0\\0&\beta_{-+}T_{+-} 
  \end{bmatrix}=                                      \begin{bmatrix}T_{+-}\beta_{-+}&0\\0&T_{-+}\beta_{+-} 
  \end{bmatrix}.                                      
  \end{align}
  \end{subequations}
$D$ commutes with the self-adjoint operator $\kappa$. We can
therefore reduce ourselves to the eigenspace of $\kappa$ with eigenvalue
 $\omega\in\R$, denoted $\cH_\omega$ (see subsection \ref{app:spheres} for a description of these eigenspaces). We can write
\begin{equation}
  D   =\begin{bmatrix}R_{++}&\frac{\i \omega}{|x|}\beta_{+-}\\-\frac{\i \omega}{|x|}\beta_{-+}&-R_{--}
  \end{bmatrix}.
  \end{equation}
Using spherical coordinates, we can identify
 $L^2(\R^d)\otimes\cK$ with
 $ L^2(\R_+,r^{d-1})\otimes L^2(\mathbb{S}^{d-1})\otimes\cK$. Applying \eqref{simpli} and treating $\beta_{\pm\mp}$ as
identifications, we can rewrite the above equation as
\begin{equation}
  \label{1reduc}
  D   =\begin{bmatrix}-\i\partial_r&\frac{\i\omega}{r}\\-\frac{\i\omega}{r}&\i\partial_r 
  \end{bmatrix}.
  \end{equation}

The $d$-dimensional  Dirac Hamiltonian can be reduced to
  1 dimension (with $2 \times 2$ matrix structure) if it
  is perturbed by four kinds of radial terms: the electric
potential
$V(r)$, the mass~$m(r)$ (called also the {\em Lorentz scalar}), the radial vector potential $A(r)$ and the anomalous (Pauli) coupling to the electric field $E(r)$.
The reduction \eqref{1reduc} leads to
\begin{align} \notag D_{V,m,A,E}\,:= \,& D+V(r)+m(r)\beta+\frac{A(r)\alpha_i
  x_i}{r}+\frac{\i E(r)\beta\alpha_i x_i}{r}\\[2ex]
  =&
  \begin{bmatrix}-\i\partial_r+V(r)+A(r)&\frac{\i\omega}{r}+m(r)-\i E(r)\\-\frac{\i\omega}{r}+m(r)+\i
    E(r)&\i\partial_r +V(r)-A(r)
  \end{bmatrix}.\label{2reduc}\end{align}
We prefer another form, related by a similarity transformation:
\begin{align} \label{3reduc}
\frac1{\sqrt2}\begin{bmatrix}
  1&-\i\\\i&-1\end{bmatrix}
             D_{V,m,A,E}
                                                                                                    \frac1{\sqrt2}\begin{bmatrix}
  1&-\i\\\i&-1\end{bmatrix}
=\begin{bmatrix}
-\frac{\omega}{r}+E(r)+V(r) & - \partial_r-\i A(r) -m(r)\\
\partial_r +\i A(r)-m(r)& \frac{\omega}{r}-E(r) +V(r)
\end{bmatrix}.
\end{align}
For $m=A=E=0$ and $V=-\frac{\lambda}{r}$, this  is the 1-dimensional Dirac operator studied in our paper. 

We remark that the radial electromagnetic potential $A(r)$ is necessarily pure gauge. Indeed, it enters the Dirac operator only in the combination $\partial_r + \i A(r)$, which may be written as $\e^{- \i \phi(r)} \partial_r \e^{i \phi(r)}$ for a function $\phi(r)$ such that $\phi'(r) = A(r)$. Coupling $E(r)$ arises if the Dirac Lagrangian is extended by the Pauli term, proportional to $\overline \psi \frac{\i}{2} \gamma^\mu \gamma^\nu F_{\mu \nu} \psi$ with a purely electric and radial field strength tensor $F$.

\subsection{Composite angular momentum}

Introduce the spin operators
\begin{align}\label{cliff}
  \sigma_{ij}&:=-\frac\i2[\alpha_i,\alpha_j]  .\end{align}
$\frac12\sigma_{ij}$ yield a representation of $\mathfrak{so}(d)$ on the spin space $\cK$:
\begin{subequations}
\begin{align}
  \Big[
 \frac12\sigma_{ij},\alpha_k\Big]&=-\i\delta_{jk}\alpha_i+\i\delta_{ik}\alpha_j,\\
  \Big[
    \frac12\sigma_{ij},\frac12\sigma_{kl}\Big]&=-\i\delta_{jk}\frac12\sigma_{il}
  -\i\delta_{il}\frac12\sigma_{jk}
  +\i\delta_{ik}\frac12\sigma_{jl}
                                                +\i\delta_{jl}\frac12\sigma_{ik}. \label{eq:sigma_commutator}
                                                      \end{align}
                                                      \end{subequations}
Irreducible representations of $\mathfrak{so}(d)$ 
contained in $\cK$ will be called {\em spinor representations}.  
Their quadratic Casimir is given by
  \begin{align}\label{eigo2}
    \frac{\sigma^2}4=
   \frac14 \sum_{i<j}\sigma_{ij}^2=&\frac{d(d-1)}{8}.
  \end{align}

If $d$ is even, then there are two inequivalent spinor
representations. They correspond to the  eigenspaces of $\beta$ with
eigenvalues $\pm 1$.

If $d$ is odd, then $\cK$ is also a direct sum of two spinor
representations, however they are equivalent
to one another. The decomposition of $\cK$ into irreducible components
exists but 
is clearly non-unique. One possible choice
corresponds to the eigenvalues $\pm1$
of $\beta$.

We also have the composite representation of $\mathfrak{so}(d)$ given by  
\begin{align}
J_{ij}&:=L_{ij}+\frac12\sigma_{ij}.
\end{align}
      Clearly,
      \begin{subequations}
\begin{align}
[J_{ij},J_{kl}]&=-\i\delta_{jk}J_{il}
  -\i\delta_{il}J_{jk}
  +\i\delta_{ik}J_{jl}
                 +\i\delta_{jl}J_{ik},\\
      [J,x\cdot\alpha]&=
  [J,p\cdot\alpha]=[J,p^2] =[J,x^2]=0 .            \end{align}
  \end{subequations}
  The quadratic Casimir of this representation, also called the {\em square of the 
  total angular momentum}, is 
  \begin{subequations}
  \begin{align}\label{eigo1}J^2=&\sum_{i<j}J_{ij}^2=L^2+L\sigma+\frac{\sigma^2}{4},\\
\text{where } \qquad L\sigma&:=\sum_{i<j}L_{ij}\sigma_{ij}.
  \end{align}
\end{subequations}

\begin{proposition}
We have the following relation:
\begin{equation}
\kappa^2=J^2+\frac{(d-1)(d-2)}{8}.
\label{eq:Om2_J2_rel}
\end{equation}
\end{proposition}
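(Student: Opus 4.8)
The plan is to derive the identity from the two ``square‑root'' relations $SD = R + \tfrac{\i}{|x|}T$ and $DS = R - \tfrac{\i}{|x|}T$ established in the preceding proposition, together with the elementary facts $S^2 = 1$, $SR = RS$, $ST = -TS$, $\beta T = T\beta$, $\beta^2 = 1$, $D^2 = -\Delta$, and the spherical form of the Laplacian recalled in Subsection~\ref{app-Dir1}.

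First I would record two preliminary reductions. Since $\beta$ anticommutes with every $\alpha_i$ it commutes with every $\sigma_{ij}$ and with every $L_{ij}$, hence with $T$; as $\beta^2 = 1$ this gives $\kappa^2 = \beta T\beta T = T^2$, so it suffices to identify $T^2$. Next, for $i<j$ the Clifford relations give $\alpha_i\alpha_j = \i\sigma_{ij}$, whence $-\i\sum_{i<j}\alpha_i\alpha_j L_{ij} = \sum_{i<j} L_{ij}\sigma_{ij} = L\sigma$ and therefore $T = L\sigma + \tfrac{d-1}{2}$. Combining this with $J^2 = L^2 + L\sigma + \tfrac{\sigma^2}{4}$ and $\tfrac{\sigma^2}{4} = \tfrac{d(d-1)}{8}$ yields $L^2 = J^2 - T + \tfrac{d-1}{2} - \tfrac{d(d-1)}{8}$, which will be used after conjugation by $S$.

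The core of the argument is to compute $(SD)(DS)$ in two ways. On the one hand $(SD)(DS) = S D^2 S = S(-\Delta)S$; writing $-\Delta = R^2 + \tfrac{1}{|x|^2}\bigl(L^2 + (\tfrac{d-2}{2})^2 - \tfrac14\bigr)$ and using $[S,R]=0$, $[S,|x|]=0$, $S^2=1$ one gets $S(-\Delta)S = R^2 + \tfrac{1}{|x|^2}\bigl(SL^2S + (\tfrac{d-2}{2})^2 - \tfrac14\bigr)$, while $STS = -T$ and $SJ^2S = J^2$ (the latter because $J$ commutes with $x\cdot\alpha$ and with $|x|$) turn the relation above into $SL^2S = J^2 + T + \tfrac{d-1}{2} - \tfrac{d(d-1)}{8}$. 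On the other hand, expanding $(R + \tfrac{\i}{|x|}T)(R - \tfrac{\i}{|x|}T)$ directly and using that $T$ commutes with both $R$ and $\tfrac{1}{|x|}$ while $[R,\tfrac{1}{|x|}] = \tfrac{\i}{|x|^2}$, the cross terms collapse and one finds $(SD)(DS) = R^2 + \tfrac{1}{|x|^2}(T^2 + T)$. Equating the two outcomes, cancelling the common $R^2 + \tfrac{1}{|x|^2}T$, and simplifying $\tfrac{d-1}{2} - \tfrac{d(d-1)}{8} + (\tfrac{d-2}{2})^2 - \tfrac14 = \tfrac{(d-1)(d-2)}{8}$ gives $T^2 = J^2 + \tfrac{(d-1)(d-2)}{8}$, hence $\kappa^2 = J^2 + \tfrac{(d-1)(d-2)}{8}$.

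The only genuine bookkeeping hazard is the operator ordering in the second evaluation of $(SD)(DS)$, where $R$ and $\tfrac{1}{|x|}$ really fail to commute; one must also check that the spherical‑coordinate identity for $-\Delta$, which couples $R$ and $L^2$, is used as a bona fide operator identity on $L^2(\R^d)\otimes\cK$, which is legitimate since $-\Delta$ preserves each fixed‑degree spherical sector and $L^2$ is a scalar there. Everything else is direct substitution.
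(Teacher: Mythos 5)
Your proof is correct, but it follows a genuinely different route from the paper's. The paper proves \eqref{eq:Om2_J2_rel} by pure algebra inside the Clifford/angular-momentum algebra: it expands $\kappa^2=\frac{(d-1)^2}{4}+(d-1)L\sigma+\sum L_{ij}L_{kl}\sigma_{ij}\sigma_{kl}$, splits $\sigma_{ij}\sigma_{kl}$ into commutator and anticommutator, invokes the identity $\frac12[\sigma_{ij},\sigma_{kl}]_+=-\alpha_{[i}\alpha_j\alpha_k\alpha_{l]}+2\delta^{[i}_{[k}\delta^{j]}_{l]}$ together with $L_{[ij}L_{kl]}=0$, and then simplifies using the Casimir values. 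You instead exploit the factorizations $SD=R+\frac{\i}{|x|}T$, $DS=R-\frac{\i}{|x|}T$ from the preceding proposition, evaluate $SD\cdot DS$ once as $S(-\Delta)S$ via the radial form of the Laplacian (using $STS=-T$, $SJ^2S=J^2$, $[S,R]=[S,|x|]=0$) and once by direct expansion (using $[R,\tfrac1{|x|}]=\tfrac{\i}{|x|^2}$ and that $T$ commutes with $R$ and $\tfrac1{|x|}$), and compare; the arithmetic $\frac{d-1}{2}-\frac{d(d-1)}{8}+\frac{(d-2)^2}{4}-\frac14=\frac{(d-1)(d-2)}{8}$ checks out, as do the preliminary reductions $\kappa^2=T^2$ and $T=L\sigma+\frac{d-1}{2}$. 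What your approach buys is that it avoids the four-$\alpha$ skew-symmetrization identity entirely, recycling identities already proved; what it costs is reliance on the analytic input $D^2=-\Delta$ and the spherical-coordinate operator identity for $-\Delta$ (plus careful operator ordering with $R$ and $\tfrac1{|x|}$, which you correctly flag and handle), whereas the paper's argument is intrinsic to the algebra generated by $L_{ij},\sigma_{ij}$ and never leaves it.
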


\begin{proof}
Directly from the definition we have
\begin{equation}
\kappa^2 = \frac{(d-1)^2}{4} + (d-1) L \sigma + \sum_{\substack{i < j \\ k < l}} L_{ij} L_{kl} \sigma_{ij} \sigma_{kl}.
\label{eq:Omega2_step1}
\end{equation}    
To simplify the last term we write
\begin{equation}
\sigma_{ij} \sigma_{kl} = \frac{1}{2} [\sigma_{ij}, \sigma_{kl}] + \frac{1}{2} [\sigma_{ij}, \sigma_{kl}]_+.
\label{eq:ss_com_anticom}
\end{equation}
A simple expression for the first term is given by \eqref{eq:sigma_commutator}. The second one is
\begin{equation}
\frac{1}{2} [\sigma_{ij}, \sigma_{kl}]_+ = -  \alpha_{[ i} \alpha_j \alpha_k \alpha_{l]} + 2 \delta^{[i}_{[k} \delta^{j]}_{l]}
\label{eq:ss_anticom}
\end{equation}
in which $[ \cdots ]$ denotes skew-symmetrization of the enclosed indices. In order to prove this formula, first note that both sides are skew-symmetric with respect to the transposition of $i$ and $j$ or $k$ and $l$, so we may assume that $i \neq j$ and $k \neq l$. We have three cases. If sets $A = \{ i , j \}$ and $B=\{ k , l \}$ are disjoint, then all $\alpha$ matrices involved anticommute and hence both sides are equal to $- \alpha_i \alpha_j \alpha_k \alpha_l$. If $A \cap B$ has one element, one checks that both sides vanish. Finally, if~$A=B$ then both sides are equal to $\pm 1$, the sign depending on the order of indices.

Now plug \eqref{eq:ss_com_anticom} and \eqref{eq:ss_com_anticom} into \eqref{eq:Omega2_step1}. The term with $\alpha_{[ i} \alpha_j \alpha_k \alpha_{l]}$ drops out after summing over indices because $L_{[ij} L_{kl]}=0$. In the term with \eqref{eq:sigma_commutator} we can replace $L_{ij} L_{kl}$ by $\frac{1}{2} [L_{ij},L_{kl}]$, by skew-symmetry with respect to $ij \leftrightarrow kl$. After simplifications with \eqref{eigo2} and \eqref{eigo1} we obtain the claim.
\end{proof}

Recall that on $\cH_\omega$ the operator $\kappa$ acts as multiplication by $\omega$. We will now characterize $\cH_\omega$ more closely.

\begin{proposition} \label{eq:Hom_inc}
Let $\omega$ be such that $\cH_\omega \neq \{ 0 \}$. Then there exist $\ell$ and subspaces $\cW_\ell,\cW_{\ell-1}\subset L^2(\R^d)$ spherical of degree 
$\ell$ resp. $\ell-1$ such that
\begin{subequations}
\begin{align}
 \cH_{\omega}&\subset  (\cW_\ell{\oplus}\cW_{\ell-1})\otimes\cK,\label{compos}\\
  |\omega| &= \ell + \frac{d-3}{2}, \label{eigo3} \\ 
  \left. J^2 \right|_{\cH_\omega} &=\ell^2+\ell(d-3)+\frac{d^2-9d+16}{8}. \label{eigo4}
\end{align}
\end{subequations}
\end{proposition}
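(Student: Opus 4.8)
The plan is to work on the $\Pi_\pm$-decomposition $L^2(\R^d)\otimes\cK=\cH_+\oplus\cH_-$ and use the structure of the operators $T$ and $S$ established in Subsections \ref{app-Dir2}--\ref{app:Dirac_decomp}. Since $S$ swaps $\cH_+$ and $\cH_-$ (because $S\beta=-\beta S$ and $ST=-TS$, and $S^2=1$), and since $S$ commutes with the full rotation generators $J_{ij}=L_{ij}+\frac12\sigma_{ij}$ (as $[J,x\cdot\alpha]=0$), the space $\cH_\omega=\Ker(\kappa-\omega)$ is an invariant subspace for the $\mathfrak{so}(d)$-action generated by the $J_{ij}$. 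First I would decompose $L^2(\R^d)\otimes\cK$, at fixed radius, into irreducible $\mathfrak{so}(d)$-representations under this composite action: $L^2(\mathbb S^{d-1})$ carries the spherical representations of each degree $\ell\ge 0$, and $\cK$ carries one or two copies of a spinor representation. The Clebsch--Gordan rule for tensoring a spherical representation of degree $\ell$ with the (half-)spinor gives exactly two irreducible pieces, the ``top'' one appearing also inside degree-$\ell$ tensor spinor and a ``bottom'' one; this is the origin of the pair $\cW_\ell\oplus\cW_{\ell-1}$ in \eqref{compos}. I would make this precise by noting that $S$, being a unitary that anticommutes with $\beta$ and $T$ and commutes with $J$, identifies the degree-$\ell$ and degree-$(\ell-1)$ contributions, so each $\kappa$-eigenspace sits inside a single such pair.

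Next I would compute the eigenvalue of $\kappa^2$ on such an irreducible composite summand. Here the key input is the already-proven relation \eqref{eq:Om2_J2_rel}: $\kappa^2=J^2+\frac{(d-1)(d-2)}{8}$. So everything reduces to computing $J^2$ on the composite representation, i.e.\ computing the quadratic Casimir on the two irreducible pieces of $(\text{spherical }\ell)\otimes(\text{spinor})$. Using \eqref{eigo1}, $J^2=L^2+L\sigma+\frac{\sigma^2}{4}$, with $L^2=\ell(\ell+d-2)$ from \eqref{sphero} and $\frac{\sigma^2}{4}=\frac{d(d-1)}{8}$ from \eqref{eigo2}, the only nontrivial ingredient is the value of $L\sigma=\sum_{i<j}L_{ij}\sigma_{ij}$ on each irreducible piece. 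On the ``highest'' piece (the one also occurring in degree-$\ell$) $L\sigma$ takes the value $\ell$, and on the ``lower'' piece it takes the value $-(\ell+d-2)$ — these are the two roots of the quadratic one gets by writing $2\,L\sigma = J^2-L^2-\frac{\sigma^2}{4}\cdot 4 \cdot\tfrac14$ combined with the fact that $(L\sigma)$ restricted to the sum of these two pieces satisfies a quadratic relation whose coefficients are fixed by $L^2$ and the spinor Casimir. I would then substitute the value corresponding to whichever piece gives $|\omega|=\ell+\frac{d-3}{2}$; squaring, $\kappa^2=\bigl(\ell+\frac{d-3}{2}\bigr)^2$, and comparing with $J^2+\frac{(d-1)(d-2)}{8}$ yields $J^2=\bigl(\ell+\frac{d-3}{2}\bigr)^2-\frac{(d-1)(d-2)}{8}$, which after expansion is precisely $\ell^2+\ell(d-3)+\frac{d^2-9d+16}{8}$, matching \eqref{eigo4}. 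Relation \eqref{eigo3} $|\omega|=\ell+\frac{d-3}{2}$ then follows as the positive square root, once we know $\omega$ is real (which holds since $\kappa$ is self-adjoint).

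I expect the main obstacle to be pinning down \emph{which} of the two Clebsch--Gordan components gives which value of $\omega$, and in particular showing that $\cH_\omega$ is contained in a single $\cW_\ell\oplus\cW_{\ell-1}$ pair rather than spread across several degrees. The cleanest route is probably to exhibit $S$ explicitly as intertwining $\cW_\ell\otimes(\text{spinor component})$ with $\cW_{\ell-1}\otimes(\text{other spinor component})$: because $S$ is odd for $\beta$ and anticommutes with $T$ (hence with $\kappa$ only up to the interchange of $\cH_+$ and $\cH_-$), and because $D=\alpha\cdot p$ and $\kappa$ commute, one deduces that within a fixed $\kappa$-eigenvalue the angular part lives in exactly two adjacent spherical degrees. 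For the degenerate low dimensions $d=1,2$ I would check \eqref{eigo3}--\eqref{eigo4} separately against the remark following Subsection \ref{app-Dir1}, since the range of allowed $\ell$ is restricted there; in these cases the formulas should still hold with the convention that $\cW_{-1}=\{0\}$ when $\ell=0$. The rest is bookkeeping with the Casimir identities already assembled in the excerpt.
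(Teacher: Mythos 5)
Your overall skeleton (feed \eqref{eq:Om2_J2_rel} into a computation of $J^2$ on rotation-irreducible pieces, then match Casimirs) can be made to work, but as written it has two genuine problems. First, the structural claims about $S$ that you rely on for confining $\cH_\omega$ to two adjacent degrees are wrong: $\cH_\pm$ are by definition the $\pm 1$ eigenspaces of $S$, so $S$ does not swap them (it is $\beta$ and $T$ that do), and since $S$ anticommutes with \emph{both} $\beta$ and $T$ it in fact \emph{commutes} with $\kappa=\beta T$ -- there is no ``interchange of $\cH_+$ and $\cH_-$'' at the level of $\kappa$. Second, and more seriously, the quantitative heart of the proposition is asserted rather than proved: you claim that on the two Clebsch--Gordan components of (spherical degree $\ell$)$\,\otimes\,$(spinor) the operator $L\sigma$ takes the values $\ell$ and $-(\ell+d-2)$, supported only by a garbled identity (your ``$2\,L\sigma=J^2-L^2-\cdots$'' has a spurious factor of $2$; compare \eqref{eigo1}) and an unproven appeal to a quadratic relation and to the branching rule, which moreover needs care in even dimensions where $\cK$ splits into two half-spinor representations. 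Without these values, neither \eqref{eigo3} nor the adjacency of the two degrees in \eqref{compos} is established.

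The gap can be closed in two ways. The paper's route avoids Clebsch--Gordan altogether: from \eqref{eq:Tdef} one has $\kappa=\beta\big(L\sigma+\tfrac{d-1}{2}\big)$, so on $\cH_\omega$ (where $\beta$ still acts, since $[\beta,\kappa]=0$) one gets $L\sigma=\omega\beta-\tfrac{d-1}{2}$, i.e.\ $L\sigma$ has at most the two eigenvalues $\pm\omega-\tfrac{d-1}{2}$; combined with $L^2=J^2-L\sigma-\tfrac{\sigma^2}{4}$ and $J^2=\omega^2-\tfrac{(d-1)(d-2)}{8}$ this pins down the orbital degrees via \eqref{sphero}, and solving $|\omega|(|\omega|\pm1)=\big(\ell_\pm+\tfrac{d-1}{2}\big)\big(\ell_\pm+\tfrac{d-3}{2}\big)$ (discarding the negative roots) yields \eqref{eigo3} and $\ell_-=\ell_+-1$, after which \eqref{eigo4} is immediate from \eqref{eq:Om2_J2_rel}. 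Alternatively, you can rescue your own route by first deriving the operator identity $(L\sigma)^2+(d-2)\,L\sigma=L^2$, which follows from \eqref{eq:Om2_J2_rel} together with $\kappa^2=\big(L\sigma+\tfrac{d-1}{2}\big)^2$ and \eqref{eigo1}; restricted to the degree-$\ell$ sector it shows the spectrum of $L\sigma$ is contained in $\{\ell,-(\ell+d-2)\}$, which replaces the Clebsch--Gordan input and makes your Casimir bookkeeping valid (note also $\omega\neq0$ for $d\geq3$, since $\kappa^2=J^2+\tfrac{(d-1)(d-2)}{8}>0$). Your plan to treat $d=1,2$ separately matches what the paper does and is fine.
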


\proof
Exceptional cases $d=1,2$ are easy to analyze separately: one has $\kappa =0$ in the former case and $\kappa = \pm L_{12} + \frac{1}{2} \beta$ (with the sign depending on the choice of Clifford matrices) in the latter. From now on we assume that $d \geq 3$. We note that \eqref{eq:Om2_J2_rel} and $J^2 \geq 0$ imply that $\omega \neq 0$.

$\kappa$ commutes with $\beta$, hence also with
$L\sigma$. Therefore we can decompose $\cH_\omega$
with respect to the eigenvalues of $L\sigma$. From \eqref{eq:Tdef} we obtain
\begin{align}    L\sigma=& \omega \beta -\frac{d-1}{2},
  \end{align}
which has on $\cH_\omega$ two distinct eigenvalues
  \begin{equation}\pm\omega-\frac{d-1}{2}.\label{eigo}\end{equation}
Both sings are realized because $D$ anticommutes with $\beta$ and preserves $\cH_{\omega}$.

Clearly $L^2=J^2-L\sigma-\frac{\sigma^2}{4}$ has on $\cH_\omega$ two distinct
eigenvalues corresponding to \eqref{eigo}.  As seen from
\eqref{sphero}, the representation of orbital angular momentum is uniquely determined by $L^2$.
Therefore, for some $\ell_+,\ell_-\in\N$, $\ell_+ > \ell_-$,
\begin{equation}
 \cH_{\omega}\subset
 (\cW_{\ell_+}{\oplus}\cW_{\ell_-})\otimes\cK.\label{compos1}
\end{equation}

Comparing the identities
\begin{align}
  J^2&=\ell_\pm(\ell_\pm+d-2)\mp |\omega|-\frac{d-1}{2}+\frac{d(d-1)}{8},\\
  J^2&=\omega^2-\frac{(d-1)(d-2)}{8},\end{align}
we obtain the equation 
\begin{equation}
|\omega|(|\omega|\pm 1)=\Big(\ell_\pm+\frac{d-1}{2}\Big) \Big(\ell_\pm+\frac{d-3}{2}\Big).
  \end{equation}
whose solutions take the form $ \ell_+ + \frac{d-3}{2} \in \{ |\omega|, -|\omega| - 1 \}$, $\ell_- + \frac{d-3}{2} \in \{ |\omega|-1, -|\omega| \}$. In both cases the second solution has to be discarded because $\ell_\pm + \frac{d-3}{2} \geq 0$. Hence \eqref{eigo3} holds and $\ell_- = \ell_+ - 1$. Then \eqref{eigo4} is obtain by feeding \eqref{eigo3} into \eqref{eq:Om2_J2_rel}.
  \qed

    We remark that the sign of $\omega$ cannot be obtained from the
    above calculation. Indeed,
    the spectrum of $\kappa$  on
   $L^2(\R^d)\otimes\cK$  is always invariant with respect to $\omega\mapsto-\omega$.
    If $d$ is odd, then $\prod \limits_{j=1}^d \alpha_j$ commutes with
    $L$ and~$\alpha_i$, but anticommutes with $\beta$ and hence with
    $\kappa$.
     If $d$ is even, then  $\kappa$ anticommutes with the parity operator
\begin{equation}\label{parity}
f(x_1,\dots,x_d) \mapsto \beta \alpha_1 f(-x_1, \dots , x_d), \qquad f \in L^2(\R^d, \cK).
\end{equation}
However, this operation does not preserve the type of angular momentum representation. Indeed, it anticommutes with $\beta$ and hence exchanges the two spinor representations.

\subsection{Analysis in various dimensions}

Let us review the lowest dimensions.

\noindent{$\bf d=1$.} There is no angular momentum and one has $\omega =0$. 

\noindent{$\bf d=2$.} Unitary irreducible representations of $\mathfrak{so}(2)$ are enumerated by
spin values $m\in \R$. The corresponding quadratic Casimir is equal to
$m^2$.
There are two types $\cK_{\pm \frac12}$ of spinor representations, corresponding
to $m=\pm\frac12$.
Spherical representations correspond to $\ell \in\Z$.

One convenient choice of Clifford representation is given by Pauli matrices: $\alpha_1 = \sigma_1$, $\alpha_2 = \sigma_2 $, $\beta=\pm \sigma_3$. Then $\kappa = \pm J$ and hence
\begin{align}
  \cH_{\omega}=  (\cW_\ell{\otimes}\cK_{-\frac12})
                \oplus
(\cW_{\ell-1}{\otimes}\cK_{\frac12}).
\end{align}
with $\omega = \pm (\ell - \frac{1}{2}) \in \Z + \frac12$. Sign in the relation between $\omega$ and total angular momentum depends on the choice of sign in $\beta$, but after fixing Clifford matrices it is one-to-one.

    \noindent{$\bf d=3$.} Unitary irreducible representations of $\mathfrak{so}(3)$ are parametrized by 
  spin $j \in \frac{1}{2}\mathbb N$ or the quadratic Casimir
  $j(j+1)$. All spinor representations have the spin $\frac12$.
The representation on $\cH_\omega$
      has spin $\ell-\frac12$. We have $\omega=\pm \ell \in \{ \pm 1 , \pm 2 , \dots \}$, i.e.\ two distinct values of $\omega$ correspond to the same total spin.

    \noindent{$\bf d=4$.}
      We have $\mathfrak{so}(4)\simeq \mathfrak{so}(3)\oplus \mathfrak{so}(3)$. More explicitly,
      \begin{align}
        J_1^\pm :=\frac12(\pm J_{12}+ J_{34}),&\quad J_2^\pm
                                                :=\frac1{2}(\pm J_{13}+ J_{42}),\quad
                        J_3^\pm :=\frac1{2}(\pm J_{14}+J_{23})
                        \label{eq:so4_decomp}
      \end{align}
      span two algebras isomorphic to $\mathfrak{so}(3)$ and commuting with one
      another. Let $(J^\pm)^2$ be the corresponding quadratic Casimirs. We
      have
      \begin{equation} J^2=2(J^+)^2+2(J^-)^2.\label{analo}\end{equation}
Thus irreducible representations  of $\mathfrak{so}(4)$ are parametrized by pairs of spins
$(j^+,j^-)\in(\frac12\N)^2$ with the quadratic Casimir
$2j^+(j^++1)+2j^-(j^-+1)$. We have also the obvious analogs of \eqref{eq:so4_decomp} and \eqref{analo} for $L_{ij}$ and
$\frac12\sigma_{ij}$.

Representations of $\mathfrak{so}(4)$ on spherical
harmonics satisfy
\begin{equation}
  L_{12}L_{34}+L_{13}L_{42}+L_{14}L_{23}=0.
  \end{equation} Therefore,
$(L^+)^2=(L^-)^2$. Hence a spherical representation of degree $\ell$
corresponds to the pair of spins $(\frac\ell2,\frac\ell2)$
with the quadratic Casimir 
$\ell(\ell+2)=2\frac{\ell}{2}\Big(\frac\ell2+1\Big)+2\frac{\ell}{2}\Big(\frac\ell2+1\Big)$. Spinor
representations  of $\mathfrak{so}(4)$ are of types $(\frac12,0)$ and $(0,\frac12)$,
distinguished by the eigenvalue of $ \alpha_1 \alpha_2 \alpha_3
\alpha_4$.  They satisfy
\begin{equation}
\alpha_1 \alpha_2 \alpha_3 \alpha_4 \, \sigma_i^\pm = \mp \sigma_i^\pm, \qquad (\sigma^\pm)^2 = \mp \frac32 \alpha_1 \alpha_2 \alpha_3 \alpha_4.
\end{equation}
Furthermore, we have $\beta = \pm \alpha_1 \alpha_2 \alpha_3 \alpha_4$, with the sign in this relation distinguishing Clifford
representation. Using these relations we derive
\begin{equation}
\kappa = \mp 2 (J^+)^2 \pm 2 (J^-)^2. 
\end{equation}

From spherical representations and spinor representation it is
possible to build total angular momentum representations of two types:
$(\frac\ell2,\frac{\ell-1}{2})$ and
$(\frac{\ell-1}{2},\frac\ell2)$. They have the same quadratic Casimir
\begin{equation}
J^2=\ell(\ell+1)-\frac12 
\end{equation}
but can be distinguished by $\omega$:
\begin{equation}
\omega = \mp \left( \ell + \frac12 \right) \qquad \text{and} \qquad \omega = \pm \left( \ell + \frac12 \right).
\end{equation}
The inclusion \eqref{compos} may now be stated more precisely:
\begin{subequations}
\begin{align}
\cH_\omega \cong \left( \frac{\ell}{2}, \frac{\ell-1}{2} \right) \subset&\Big(\frac{\ell-1}2,\frac{\ell-1}2\Big){\otimes}\Big(\frac12,0\Big)\oplus 
               \Big(\frac\ell2,\frac\ell2\Big){\otimes}\Big(0,\frac12\Big), \qquad \text{for } \pm \omega <0  \\
\cH_\omega \cong \left( \frac{\ell-1}{2},\frac\ell2 \right) \subset  &\Big(\frac{\ell-1}2,\frac{\ell-1}2\Big){\otimes}\Big(0,\frac12\Big)\oplus 
                 \Big(\frac\ell2,\frac\ell2\Big){\otimes}\Big(\frac12,0\Big), \qquad \text{for } \pm \omega >0. 
\end{align}
\end{subequations}
As in dimension $2$, the relation between the total angular momentum representation and $\omega$, taking valued in $\{ \pm \frac32, \pm \frac52, \dots \}$, is one-to-one after fixing Clifford matrices.

For general dimensions we label irreducible
representations as in \cite[Section 19]{FH}. 

\noindent{ $\bf d=2n+1, \, n \geq 2$.}
 Irreducible representations
are in $1-1$ correspondence with labels $(a_1,\dots,a_n)\in\mathbb{N}^n$. Spherical harmonics
of degree $\ell$
have type $(\ell,0,\dots)$, while spinor representations have type $(0,\dots,1)$.
Their tensor product decomposes as
\begin{align}
  (\ell,\dots,0)\otimes(0,\dots,1)&=  (\ell,\dots,1)\oplus
                                    (\ell-1,\dots,1), \qquad \ell \geq 1.
\end{align}
Thus the only possible types of $\cH_{\omega}$ are $( \ell-1, \dots , 1)$. This representation occurs as a~subrepresentation only in two tensor products:
\begin{equation}  (\ell,\dots,0){\otimes}(0,\dots,1), \qquad 
 (\ell-1,\dots,0){\otimes}(0,\dots,1).
 \end{equation}
We have $|\omega|=\ell+n -1$, thus $\omega$ takes values
$\{ \pm n ,\pm( n+1), \dots \}$, with opposite $\omega$ corresponding to the same total angular momentum. In particular it is not possible to express $\kappa$ as a polynomial in $J_{ij}$.

\noindent{ $\bf d=2n, \, n \geq 3$.} 
Types of irreducible representations
are parametrized by $(a_1,\dots,a_n)\in\mathbb{N}^n$. $\ell$th degree spherical harmonics are of type
$(\ell,0,\dots)$.  Spinor representations are of two types:
$(0,\dots,1,0)$.
and $(0,\dots,0,1)$.
We have  tensor products decompositions ($\ell \geq 1$):
\begin{subequations}
\begin{align}
  (\ell,\dots,0)\otimes(0,\dots,1,0)&=  (\ell,\dots,1,0)\oplus
                                      (\ell-1,\dots,0,1),\\
    (\ell,\dots,0)\otimes(0,\dots,0,1)&=  (\ell,\dots,0,1)\oplus 
                                    (\ell-1,\dots,1,0).
\end{align}
\end{subequations}
It follows that $\cH_{\omega}$ must be of the type $(\ell-1,
\dots,0,1)$ or $(\ell-1,\dots,1,0)$. These two
  representations have the same quadratic Casimir, however they are
  exchanged by the parity operator \eqref{parity}. Hence they can be
  distinguished by the sign of the following Casimir element, defined
  as the $n$th wedge power of the 2-form $J$:
\begin{equation}\label{propor}
 \bigwedge_{j=1}^n J := \frac{1}{2^n} \epsilon^{i_1 \dots i_{2n}} J_{i_1 i_2} \cdots J_{i_{2n-1} i_{2n}}.
\end{equation}
Here $\epsilon$ is the Levi-Civita symbol.

We will show that \eqref{propor} is actually proportional to $\kappa$.
 Using the fact that skew-symmetrization of the product of two or more $L_{ij}$ vanishes and Clifford relations we derive
\begin{equation}
\bigwedge_{j=1}^n J = \frac{n (2n-2)!}{2^{2n-2}} \left( (-\i)^n \alpha_1 \cdots \alpha_{2n} \right) \left( L \sigma + \frac{2n-1}{2} \right).
\end{equation}
A Clifford representation is determined up to isomorphism by
specifying the sign in the  relation $\beta = \pm (-\i)^n \alpha_1 \cdots \alpha_{2n}$. Then we have
\begin{equation}
\bigwedge_{j=1}^n J = \pm \frac{n(2n-2)!}{2^{2n-2}} \kappa.
\end{equation}
As in lower even dimensions, for fixed Clifford matrices angular
momentum types are in one-to-one correspondence with the values $\omega \in \{ \pm \left( n - \frac12 \right), \pm \left( n + \frac12 \right), \dots \}$.

\subsection{Dirac operators on manifolds}

The operator $\kappa$, which is central to the separation of variables of the radially symmetric Dirac equation, is closely related to the Dirac equation on the sphere.
We would like to give a~short discussion of this topic.

Before we discuss the case of a sphere, in this subsection we
give a short introduction to Dirac operators on Riemannian manifolds. We take Clifford module bundles as central objects. A popular alternative is based on the concept of a {\em spin structure}. Spinor bundles are then constructed by the associated bundle construction, see \cite[p. 7--44, 77--135]{Lawson} for an exposition. A comparison between the two approaches is presented in \cite{Trautman2}.

Given a  Euclidean vector space $E$ with the scalar
  product of $u,v\in E$ denoted $u\cdot v$, we let $\Cl(E)$ be the
corresponding {\em Clifford algebra}, that is the quotient of the
tensor algebra of $E$ by the ideal generated by elements of the form
$u \otimes u - u \cdot u$. Then $\R$ and $E$ are naturally embedded in
$\Cl(E)$ (in concrete matrix realizations of $\Cl(E)$ the latter
embedding is realized by contraction of vectors with $\alpha$ matrices
such as \eqref{alpha}). In this subsection we identify elements of $E$
with their images in $\Cl(E)$.

The automorphism $\alpha$ of $\Cl(E)$ characterized by the equation $\alpha(u) = -u$ for $u \in E$ is called the {\em main automorphism} or the {\em parity}. Elements of $\Cl(E)$ fixed (negated) by $\alpha$ are said to be even (odd). The {\em transposition} is the anti-automorphism of $\Cl(E)$ characterized by $(u_1 \dots u_n)^\T = u_n \dots u_1$ for $u_1, \dots, u_n \in E$.

The {\em spin group} $\Spin(E)$ is the group of even invertible elements $g \in \Cl(E)$ such that 
\begin{equation}
gug^{-1} \in E \text{ for every } u \in E, \qquad g^\T g =1.
\end{equation}
If $g \in \Spin(E)$, then the endomorphism $u \mapsto gug^{-1}$ of $E$ belongs to the special orthogonal group $\SO(E)$. Thus we have a homomorphism $\Spin(E) \to \SO(E)$. This homomorphism is surjective with kernel $\{ \pm 1 \}$. Since this is a central subgroup of $\Cl(E)$, the adjoint action of $\Spin(E)$ on $\Cl(E)$ descends to an action of $\SO(E)$ on $\Cl(E)$. The Lie algebra $\mathfrak{spin}(E)$ of $\Spin(E)$ is the subspace of $\Cl(E)$ spanned by elements of the form $[u_1, u_2]$ with $u_1, u_2 \in E$. We have an isomorphism $\mathfrak{spin}(E) \cong \mathfrak{so}(E)$, which takes $[u_1,u_2]$ to the endomorphism
\begin{equation}
 u_3 \mapsto [[u_1,u_2],u_3] = 4 u_1 (u_2 \cdot u_3) - 4 u_2 (u_1 \cdot u_3).
\label{eq:spin_rep}
\end{equation}
Therefore, $A\in \mathfrak{so}(E)$ is mapped to
\begin{equation}\frac{1}{8}\sum_{ij}[e_i,e_j](e_i\cdot
  Ae_j),\label{cliffo2}
\end{equation}
where $e_i$ form an orthonormal basis of $E$.

Every $\Cl(E)$-module $\cV$ is a direct sum of irreducible
modules. Let $\cV$ be an irreducible complex representation. The even
subalgebra of $\Cl(E)$ (and in particular the spin group $\Spin(E)$)
is represented faithfully on $\cV$. There exists a positive-definite
hermitian form $( \cdot | \cdot )$ on~$M$, called a {\em spinor scalar
  product}, such that $( \psi_1 | c \psi_2 ) =( c^\T \psi_1 | \psi_2
)$ for $c \in \Cl(E)$ and $\psi_1, \psi_2 \in \cV$. It is unique up to
positive scalars. Furthermore, there exists an antilinear operator
$\Theta$ on $\cV$, called a {\em spinor conjugation}, such that
\begin{equation}
\Theta c \Theta^{-1} = \begin{cases} c & \text{if } n \not \equiv 3 \text{ mod } 4, \\ \alpha(c) & \text{if } n \equiv 3 \text{ mod } 4, \end{cases} \qquad \Theta^2 = \begin{cases} 1 & \text{if } n \in \{ 0 , 1 , 2 , 7 \} \text{ mod } 8, \\ -1 & \text{if } n \in \{ 3,4,5,6 \} \text{ mod } 8. \end{cases}
\end{equation}
$\Theta$ is unique up to a phase factor.

Now let $M$ be a Riemannian manifold with tangent bundle $TM$ and the Levi-Civita connection $\nabla$. For every $x \in M$ consider the Clifford algebra $\Cl(T_x M)$. Together these Clifford algebras form a bundle $\Cl(TM)$ of Clifford algebras over $M$. If $M$ is oriented, we can locally choose positively oriented orthonormal framings $\{ e_i \}_{i=1}^d$ and put
\begin{equation}
\vol_M = e_1 \cdots e_d.
\end{equation}
The right hand side does not depend on the choice of framing, hence it defines a global section of $\Cl(TM)$.

The Levi-Civita connection extends uniquely to a connection on $\Cl(TM)$ satisfying the Leibniz rule:
\begin{equation}
\nabla (c_1 c_2) = (\nabla c_1 ) c_2 + c_1  \nabla c_2
\label{eq:Leibniz1}
\end{equation}
for sections $c_1, c_2$ of $\Cl(TM)$. This connection commutes with the main automorphism and the transposition. If defined,  $\vol_M$ is covariantly constant. 

A vector bundle $\Sigma$ whose fiber $\Sigma_x$ is a representation of $\Cl(T_x M)$ (with the module structure smoothly varying with $x$) is called a {\em Clifford module bundle}. A connection $\nabla$ on $\Sigma$ will be called {\em Clifford covariant} if it satisfies
\begin{equation}\label{cliffo}
\nabla (c \psi) = (\nabla c ) \psi + c  \nabla \psi.
\end{equation}
If in addition for every $x \in M$ and every null-homotopic loop $\gamma$ based at $x$ the holonomy endomorphism $\mathrm{hol}_{\Sigma, \gamma} \in \mathrm{GL}(\Sigma_x)$ is an element of $\Spin(T_x M)$, we call $\nabla$ a {\em locally spin connection}. If this is true for all loops, we say that $\nabla$ is a {\em spin connection}. A Clifford module bundle equipped with a~spin connection will be called a {\em spinor bundle}.

\begin{lemma}  \label{holo_lift}
If $\nabla$ is a spin connection, then
the holonomy endomorphism $\mathrm{hol}_{\Sigma, \gamma} \in \Spin(T_xM)$ lifts the holonomy $\mathrm{hol}_{TM, \gamma} \in \SO(T_xM)$ of the Levi-Civita connection. 
\end{lemma}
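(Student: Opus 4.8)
The plan is to compare the two connections by looking at their parallel transport along the loop $\gamma$ and using the compatibility axiom \eqref{cliffo} that relates $\nabla$ on $\Sigma$ to $\nabla$ on $\Cl(TM)$. First I would fix $x \in M$ and a loop $\gamma$ based at $x$, and write $h := \mathrm{hol}_{\Sigma,\gamma} \in \Spin(T_xM) \subset \Cl(T_xM)$ (this membership is exactly the hypothesis that $\nabla$ is a spin connection) and $g := \mathrm{hol}_{TM,\gamma} \in \SO(T_xM)$. I want to show that $g$ is the image of $h$ under the covering map $\Spin(T_xM) \to \SO(T_xM)$, i.e.\ that $g(u) = h u h^{-1}$ for every $u \in T_xM$, where on the right $u$ is regarded as an element of the Clifford algebra.

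The key step is the following. Parallel transport $P^\Sigma_\gamma = h$ along $\gamma$ for the spinor connection, and parallel transport $P^{\Cl}_\gamma$ for the induced connection on $\Cl(TM)$, are intertwined by the module action: differentiating the Leibniz-type identity \eqref{cliffo} along $\gamma$ shows that for any section $c$ of $\Cl(TM)$ and any section $\psi$ of $\Sigma$ parallel along $\gamma$, the section $c\psi$ is parallel iff $c$ is parallel. Consequently $P^\Sigma_\gamma(c_x \psi_x) = P^{\Cl}_\gamma(c_x)\, P^\Sigma_\gamma(\psi_x)$ for all $c_x \in \Cl(T_xM)$, $\psi_x \in \Sigma_x$; since $\Sigma_x$ is a faithful module over the even part of $\Cl(T_xM)$ and $h$ is even, this forces
\begin{equation}
P^{\Cl}_\gamma(c_x) = h\, c_x\, h^{-1}, \qquad c_x \in \Cl(T_xM).
\end{equation}
On the other hand, the connection on $\Cl(TM)$ is by construction the one extending the Levi-Civita connection on $TM$, so $P^{\Cl}_\gamma$ restricted to $TM \subset \Cl(TM)$ is exactly $P^{TM}_\gamma = g$. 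Combining the two statements gives $g(u) = h u h^{-1}$ for all $u \in T_xM$, which is precisely the assertion that $h \in \Spin(T_xM)$ projects to $g \in \SO(T_xM)$ under the double cover.

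I expect the main obstacle to be a careful justification of the intertwining property of parallel transports from the infinitesimal compatibility \eqref{cliffo}: one must check that $\nabla_{\dot\gamma}(c\psi) = (\nabla_{\dot\gamma}c)\psi + c\,\nabla_{\dot\gamma}\psi$ really does imply $\nabla_{\dot\gamma}\psi = 0 \Rightarrow \nabla_{\dot\gamma}(c\psi) = (\nabla_{\dot\gamma}c)\psi$, and then integrate this along $\gamma$ to relate the endomorphisms; this is essentially a uniqueness-of-solutions argument for the transport ODE, but it should be spelled out. A minor additional point is to note that evenness of $h$ guarantees that $c \mapsto hch^{-1}$ preserves $T_xM \subset \Cl(T_xM)$ and is the correct map to compare with $g$, so that the equality $g(u) = huh^{-1}$ is exactly the covering-map relation and no sign ambiguity survives beyond the inherent $\{\pm 1\}$ kernel.
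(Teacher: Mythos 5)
Your argument is correct and is essentially the paper's own proof: both use the Clifford covariance \eqref{cliffo} to show that parallel transport on $\Cl(TM)$ acts as conjugation by $\hol_{\Sigma,\gamma}$, and then restrict to vectors $u\in T_xM$ to identify this with the Levi-Civita holonomy. The extra care you flag (integrating the Leibniz rule along $\gamma$, and injectivity of the Clifford action on vectors so the equality $g(u)=huh^{-1}$ holds in $T_xM$) is exactly what the paper's short computation implicitly relies on, so there is no gap.
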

\proof 
  By the Clifford covariance \eqref{cliffo}, for any
$c\in\Cl(T_xM)$ we have
\begin{equation}
\hol_{\Cl(T M), \gamma}(c) \hol_{\Sigma, \gamma} \psi = \hol_{\Sigma, \gamma} (c \psi) = \hol_{\Sigma, \gamma} c  \, \hol_{\Sigma, \gamma} ^{-1} \hol_{\Sigma, \gamma} \psi.
\end{equation}
Thus
\begin{equation}\label{cliffo1}
  \hol_{\Cl(TM), \gamma}(c) 
=  \hol_{\Sigma, \gamma} c \, \hol_{\Sigma, \gamma}^{-1} . 
\end{equation}
As $c$ we can choose $u\in T_xM\subset\Cl(T_xM)$ and rewrite
\eqref{cliffo1} as
\[
  \hol_{TM, \gamma}u 
=  \hol_{\Sigma, \gamma} u \, \hol_{\Sigma, \gamma}^{-1} . \qedhere
\]


From now on we assume that $M$ is orientable. As a 
  consequence, the holonomies of the Levi-Civita connection are 
  always contained in $SO(T_xM)$. (On non-orientable manifolds  they 
  may be contained in $O(T_xM)$)

The following lemma
allows us to conveniently  check whether a given connection is spin \cite{Trautman2}.

\begin{lemma} \label{spin_con_criterion}
Let $\Sigma$ be a bundle of irreducible Clifford modules with a Clifford covariant connection $\nabla$. Then $\nabla$ is a spin connection if and only if there exist a spinor scalar product $(\cdot | \cdot )$ and a spinor conjugation $\Theta$ on $\Sigma$ such that
\begin{equation}
\nabla \Theta \psi = \Theta \nabla \psi, \qquad \D (\psi_1 | \psi_2 ) = (\nabla \psi_1 | \psi_2) + (\psi_1 | \nabla \psi_2). \label{eq:theta_product_constant}
\end{equation}
\end{lemma}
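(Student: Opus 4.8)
The statement to prove is Lemma \ref{spin_con_criterion}: a Clifford-covariant connection $\nabla$ on a bundle $\Sigma$ of irreducible Clifford modules is a spin connection if and only if there exist a spinor scalar product and a spinor conjugation that are parallel in the sense of \eqref{eq:theta_product_constant}.

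The plan is to work with holonomies around null-homotopic loops and characterize the spin group inside the even Clifford algebra by the two extra structures it preserves. First I would fix $x \in M$ and a null-homotopic loop $\gamma$ based at $x$. By Clifford covariance \eqref{cliffo} and exactly the computation in the proof of Lemma \ref{holo_lift}, the holonomy endomorphism $g := \hol_{\Sigma,\gamma}$ satisfies $g\,u\,g^{-1} = \hol_{TM,\gamma}(u) \in T_xM$ for all $u \in T_xM$; moreover, since the orthonormal frame bundle holonomy lies in $\SO(T_xM)$ (orientability) and $\Sigma$ is irreducible, $g$ is an even element of $\Cl(T_xM)$ that conjugates $T_xM$ into itself, i.e.\ $g$ lies in the Clifford group. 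The point is that $g \in \Spin(T_xM)$ \emph{if and only if} in addition $g^\T g = 1$, which by irreducibility and the defining property of the spinor scalar product is equivalent to $g$ being unitary for $(\cdot|\cdot)$, and (for $d \equiv 3 \bmod 4$, where $\Theta$ intertwines $\alpha$ rather than the identity) one must also rule out the sign ambiguity $g \mapsto \alpha(g)$ using $\Theta$. So the two parallelism conditions in \eqref{eq:theta_product_constant} are designed precisely to force $g \in \Spin$ for every null-homotopic $\gamma$.

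Concretely, I would prove the ``if'' direction as follows. Assume a parallel spinor scalar product and a parallel spinor conjugation exist. Parallel transport along any path is then an isometry for $(\cdot|\cdot)$ and commutes with $\Theta$. For a null-homotopic loop $\gamma$, we already know $g = \hol_{\Sigma,\gamma}$ is an even element of the Clifford group lifting $\hol_{TM,\gamma} \in \SO(T_xM)$; the Clifford group elements covering a given rotation form a coset $\{\pm c\}$ of $\{\pm 1\}$ times a positive scalar, and after normalizing we may assume $g = t c$ with $c \in \Pin$-type element, $t > 0$. Unitarity of $g$ gives $g^\T g = 1$ up to the scalar, and evenness plus the commutation with $\Theta$ eliminates the remaining discrete ambiguity that would otherwise allow $g$ to differ from a genuine spin element (this is where the case split $n \equiv 3 \bmod 4$ enters: $\Theta$ then detects parity). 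Hence $g \in \Spin(T_xM)$, so $\nabla$ is a spin connection. For the ``only if'' direction, assume $\nabla$ is a spin connection. Pick any spinor scalar product $(\cdot|\cdot)_0$ and conjugation $\Theta_0$ at a basepoint and try to extend by parallel transport; the obstruction to getting a well-defined global structure is exactly the holonomy group around loops (null-homotopic ones suffice locally, and one patches using that $\Spin$ elements preserve these structures up to phase/positive scalar). Since holonomies lie in $\Spin(T_xM)$, they preserve $(\cdot|\cdot)_0$ and commute with $\Theta_0$ (uniqueness up to positive scalar / phase), so averaging or a partition-of-unity argument over a good cover produces global $(\cdot|\cdot)$ and $\Theta$ satisfying \eqref{eq:theta_product_constant}.

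The main obstacle I expect is handling the discrete sign/parity ambiguity cleanly, i.e.\ showing that an even Clifford-group element lifting a rotation and unitary for the spinor product, \emph{together with} compatibility with $\Theta$, is genuinely in $\Spin$ rather than in the slightly larger group obtained by also allowing $g$ with $g^\T g = -1$ or with the ``wrong'' behavior under $\alpha$; this is exactly the dimension-mod-$8$ bookkeeping encoded in the properties of $\Theta$ quoted just before the lemma. A secondary technical point is the patching in the ``only if'' direction — making sure the locally parallel-transported structures glue — which I would handle by choosing a cover by geodesically convex (hence simply connected) sets so that on overlaps the transition is a null-homotopic-loop holonomy, hence in $\Spin$, hence preserves the structures. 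I would cite \cite{Trautman2} for the algebraic facts about $\Theta$, $(\cdot|\cdot)$ and the characterization of $\Spin(E) \subset \Cl(E)$ already recalled in the excerpt, and keep the Clifford-algebra computations at the level of the identity $g u g^{-1} = \hol_{TM,\gamma}(u)$ established above.
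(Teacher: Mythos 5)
Your overall strategy is the same as the paper's (compare the holonomy with a $\Spin$ lift using irreducibility, and for the converse parallel-transport a chosen pair of structures), but the "if" direction as written has a genuine gap at its central step. You treat $g=\hol_{\Sigma,\gamma}$ as an even element of the real Clifford group, differing from a genuine $\Spin$ lift only by a sign and a positive scalar. A priori the holonomy is merely an invertible endomorphism of the fiber $\Sigma_x$; what irreducibility (Schur's lemma) actually yields, after fixing a lift $g_0\in\Spin(T_xM)$ of $\hol_{TM,\gamma}\in\SO(T_xM)$ (which exists for \emph{every} loop by orientability, so restricting to null-homotopic loops is both unnecessary and insufficient: with the paper's definitions it would only give a \emph{locally} spin connection), is $\hol_{\Sigma,\gamma}=z\,g_0$ with $z\in\C^\times$. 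The parallel scalar product kills the modulus: $g_0$ is unitary because $g_0^\T g_0=1$, hence $|z|=1$. At this point the residual ambiguity is a continuous $U(1)$ phase, not a discrete sign, and this is precisely where $\Theta$ enters: both $\hol_{\Sigma,\gamma}$ and $g_0$ commute with $\Theta$ (the latter because it is even, so the $d\bmod 4$ case distinction is irrelevant), and the antilinearity of $\Theta$ forces $\bar z=z$, hence $z=\pm1$ and $\hol_{\Sigma,\gamma}\in\Spin(T_xM)$. Your proposal instead assigns $\Theta$ a "parity-detecting" role and worries about $g^\T g=-1$ and mod-$8$ bookkeeping; with that reading the $U(1)$ phase is never eliminated, so the implication "parallel structures $\Rightarrow$ spin connection" is not actually established.

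In the "only if" direction the correct idea (extend by parallel transport, the obstruction being holonomy) is present, but the proposed repair by "averaging or a partition-of-unity argument over a good cover" is both unnecessary and would not work: a partition-of-unity combination of locally parallel structures is in general no longer covariantly constant, which is exactly what \eqref{eq:theta_product_constant} requires, and convex combinations of antilinear conjugations need not be invertible or square to $\pm1$. No patching is needed: since all holonomies lie in $\Spin(T_xM)$, they preserve the chosen $(\cdot|\cdot)_0$ \emph{exactly} (again from $g^\T g=1$) and commute with $\Theta_0$ \emph{exactly} (evenness), so parallel transport from a basepoint is path-independent on each connected component and directly produces smooth, covariantly constant global structures — which is the paper's argument.
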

\begin{proof}
$\Rightarrow$. We focus on one connected component $M_0$ of $M$ and choose a point $x$ therein. Then we choose a spinor scalar product $(\cdot | \cdot)$ and a spinor conjugation $\Theta$ in $\Sigma_x$. By assumption, they are invariant under $\hol_{\Sigma, \gamma}$ for every loop based at $x$. Now parallel transport $( \cdot | \cdot )$ and $\Theta$ to all other fibers over $M_0$. Invariance under holonomies implies that the result is independent of the choice of paths, smooth and covariantly constant, hence satisfies \eqref{eq:theta_product_constant}.

$\Leftarrow$. Let $\gamma$ be a loop based at $x$ and let $c \in \Cl(T_xM)$. Let $g \in \Spin(T_xM)$ be a lift of $\hol_{T M , \gamma} \in \SO(T_xM)$. Arguing as in the proof of Lemma \ref{holo_lift} we see that $\hol_{\Sigma, \gamma} c \, \hol_{\Sigma, \gamma}^{-1} = g c g^{-1}$. By irreducibility of $\Sigma_x$, this implies that $\hol_{\Sigma, \gamma} = z g$ for some $z \in \C$. Since both $\hol_{\Sigma, \gamma}$ and $g$ preserve the scalar product, $|z|=1$. Since both commute with $\Theta$, $z \in \R$. Thus $\hol_{\Sigma, \gamma}$ coincides with $g$ or $-g$ and hence belongs to $\Spin(T_xM)$.  
\end{proof}

\begin{lemma} \label{spinor_bundles_split}
Every spinor bundle is a direct sum of spinor bundles whose fibers are irreducible Clifford modules.
\end{lemma}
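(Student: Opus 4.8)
The plan is to reduce the claim to a pointwise statement about the decomposition of a single Clifford module into irreducibles, and then show that the corresponding global splitting of the bundle is compatible with the spin connection. First I would recall the algebraic fact (already alluded to in the excerpt) that any module $\cV$ over $\Cl(E)$ decomposes as a direct sum of irreducible submodules, and that moreover the isotypic decomposition is canonical: for each isomorphism class of irreducible $\Cl(E)$-module, the corresponding isotypic component $\cV^{(i)}\subset\cV$ is intrinsically defined (it is the image of the evaluation map $\mathrm{Hom}_{\Cl(E)}(W_i,\cV)\otimes W_i\to\cV$). Since $E$ is even- or odd-dimensional, there are either one or two isomorphism classes of irreducible modules, so there are at most two isotypic components; when there are two, they are distinguished, for instance, by the action of $\vol_M$ (which is central up to sign and acts as a scalar on each irreducible), and when there is only one class the whole bundle is already isotypic.

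Next I would globalize this. Given a spinor bundle $\Sigma\to M$, apply the pointwise isotypic decomposition fiberwise to obtain subbundles $\Sigma^{(i)}_x=(\Sigma_x)^{(i)}$. Because the $\Cl(T_xM)$-module structure varies smoothly and the isotypic projection is a polynomial (indeed, an idempotent built canonically from the Clifford action, e.g.\ via $\tfrac{1}{2}(1\pm \i^{\lfloor d/2\rfloor}\vol_M)$ in the relevant cases), these projections depend smoothly on $x$, so the $\Sigma^{(i)}$ are genuine smooth subbundles and $\Sigma=\bigoplus_i\Sigma^{(i)}$ as Clifford module bundles. The key point to verify is that the spin connection $\nabla$ preserves each $\Sigma^{(i)}$. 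For this I would use the Clifford covariance \eqref{cliffo}: since the isotypic projection $P^{(i)}$ is built out of the (covariantly constant) section $\vol_M$ of $\Cl(TM)$ --- or, in the one-class case, is simply the identity --- it is itself covariantly constant, i.e.\ $\nabla(P^{(i)}\psi)=P^{(i)}\nabla\psi$. Hence $\nabla$ restricts to a Clifford covariant connection on each $\Sigma^{(i)}$. Alternatively one can invoke Lemma \ref{holo_lift}: the holonomy of $\nabla$ lies in $\Spin(T_xM)\subset\Cl(T_xM)$, which commutes with $P^{(i)}$, so holonomy preserves each $\Sigma^{(i)}_x$, and a connection whose holonomy preserves a subbundle restricts to that subbundle.

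Finally I would check that each restricted connection on $\Sigma^{(i)}$ is again a spin connection, not merely Clifford covariant. By construction the fibers of $\Sigma^{(i)}$ are sums of copies of a single irreducible module $W_i$. To conclude I would invoke Lemma \ref{spin_con_criterion}: restrict the spinor scalar product $(\cdot\,|\,\cdot)$ and spinor conjugation $\Theta$ of $\Sigma$ (which exist and satisfy \eqref{eq:theta_product_constant} because $\nabla$ is spin) to $\Sigma^{(i)}$; since $P^{(i)}$ is self-transposed with respect to $(\cdot\,|\,\cdot)$ and commutes with $\Theta$ (both because $P^{(i)}$ is a canonical Clifford-algebraic expression), the restrictions are again a spinor scalar product and a spinor conjugation on $\Sigma^{(i)}$, still covariantly constant, so Lemma \ref{spin_con_criterion} gives that $\nabla|_{\Sigma^{(i)}}$ is spin. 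If one wants fibers that are genuinely irreducible rather than isotypic, one further splits $\Sigma^{(i)}$ along a smooth choice of $\mathrm{Hom}_{\Cl}(W_i,\Sigma)$-frame; this is a purely topological splitting of a vector bundle carrying a flat structure along the Clifford directions and is automatically respected by $\nabla$ as above. I expect the main obstacle to be the smoothness and holonomy-invariance of the isotypic projections --- that is, making precise that $P^{(i)}$ is a covariantly constant section of $\mathrm{End}(\Sigma)$ --- since everything else is then a direct application of Lemmas \ref{holo_lift} and \ref{spin_con_criterion}.
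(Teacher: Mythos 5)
Your isotypic step is sound: the projections $\tfrac12(1\pm c\,\vol_M)$ are covariantly constant because $\vol_M$ is, so each isotypic component is a $\nabla$-invariant Clifford module subbundle. The trouble is that this step does essentially nothing in even dimension (one isomorphism class), so the whole burden falls on your final step, and there the argument as written breaks: a ``smooth choice of $\mathrm{Hom}_{\Cl}(W_i,\Sigma)$-frame'' gives a splitting that is \emph{not} automatically respected by $\nabla$ — an arbitrary smooth splitting need not be parallel, and a non-parallel summand does not inherit a connection, let alone a spin one. What you need, and only gesture at with ``flat structure along the Clifford directions,'' is that the commutant bundle $\mathrm{End}_{\Cl(TM)}(\Sigma)$ is preserved by the induced connection and has \emph{trivial} holonomy: by Lemma \ref{holo_lift} the holonomies of $\nabla$ lie in $\Spin(T_xM)\subset\Cl(T_xM)$ and hence commute with every Clifford-linear endomorphism, so a family of idempotents chosen in one fiber extends to parallel sections. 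Equivalently — and this is the paper's own proof, which is declared ``analogous to $\Rightarrow$ of Lemma \ref{spin_con_criterion}'' — decompose a single fiber $\Sigma_x$ into irreducible Clifford submodules and parallel transport the decomposition; path-independence holds precisely because the holonomies act through the Clifford structure and therefore preserve any Clifford submodule. Note also that phrasing the multiplicity space as $\mathrm{Hom}_{\Cl}(W_i,\Sigma)$ quietly presupposes a global reference irreducible bundle $W_i$, which is part of what the lemma is constructing; working with the commutant bundle avoids this circularity.

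Your final verification of spin-ness via restricting $(\cdot|\cdot)$ and $\Theta$ and invoking Lemma \ref{spin_con_criterion} is also shaky: $\Theta$ is antilinear (so it conjugates the $\i$-factors in the isotypic projections), and for $d\equiv 3$ mod $4$ conjugation by $\Theta$ is the parity $\alpha$, which sends $\vol_M$ to $-\vol_M$ and hence \emph{swaps} the two isotypic components rather than commuting with their projections; within an isotypic block it may likewise permute the chosen irreducible summands (compare the adjusted conjugation $\Theta^N$ needed in the proof of Lemma \ref{spin_con_res}). This detour is unnecessary: once the summands are parallel subbundles, the holonomy of each restricted connection is just the restriction of the holonomy of $\Sigma$, i.e.\ the action of an element of $\Spin(T_xM)$ on the irreducible fiber, which is the definition of a spin connection — no scalar product or conjugation argument is required.
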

\begin{proof}
Analogous to the proof of $\Rightarrow$ in Lemma \ref{spin_con_criterion}.
\end{proof}

Recall that for a vector bundle $\Sigma$ with a connection $\nabla$, the expression
\begin{equation}
\Omega(\mathbf U,\mathbf V):=\nabla_{\mathbf U} \nabla_{\mathbf V} - \nabla_{\mathbf V} \nabla_{\mathbf U} - \nabla_{[\mathbf U, \mathbf V]_{\mathrm{Lie}}}.
\label{eq:con_curv}
\end{equation}
defines an $\mathrm{End}(\Sigma)$-valued $2$-form, called the {\em
  curvature} of $\nabla$. Here $[ \cdot , \cdot ]_{\mathrm{Lie}}$ is
the Lie bracket of vector fields $\mathbf U, \mathbf V$. If $\Sigma =
TM$ and $\nabla$ is the Levi-Civita connection, then $\Omega$ is
denoted by $R$ and called the {\em Riemann tensor}. One checks that $\left. R(\mathbf U, \mathbf V) \right|_x$ is an element of $\mathfrak{so}(T_xM)$. 

\begin{lemma}
If $\nabla$ is a spin connection, then its curvature takes the form
\begin{equation}
\Omega(\mathbf U, \mathbf V) = \frac{1}{8} \sum_{i,j} (e_i \cdot R(\mathbf U, \mathbf V) e_j) [e_i, e_j].
\label{eq:spin_con_curv}
\end{equation}
A partial converse holds: every Clifford covariant connection with curvature given by the formula above is a locally spin connection.
\end{lemma}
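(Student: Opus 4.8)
The statement asserts that for a spin connection $\nabla$ on a Clifford module bundle $\Sigma$, the curvature $\Omega$ is given by the Clifford-algebra expression \eqref{eq:spin_con_curv} built from the Riemann tensor, and conversely that any Clifford covariant connection whose curvature has this form is locally spin. The plan is to reduce the forward direction to the infinitesimal version of Lemma \ref{holo_lift}, and the converse to the criterion in Lemma \ref{spin_con_criterion} together with a standard holonomy argument.

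\emph{Forward direction.} By Lemma \ref{spinor_bundles_split} we may assume $\Sigma$ has irreducible fibers, since both the hypothesis and the conclusion are compatible with direct sums. Fix $x\in M$ and a small loop $\gamma$; by Lemma \ref{holo_lift}, $\hol_{\Sigma,\gamma}\in\Spin(T_xM)$ is a lift of $\hol_{TM,\gamma}\in\SO(T_xM)$, and the relation \eqref{cliffo1}, $\hol_{\Cl(TM),\gamma}(c)=\hol_{\Sigma,\gamma}\,c\,\hol_{\Sigma,\gamma}^{-1}$, holds for all $c\in\Cl(T_xM)$. Differentiating this relation for a family of loops spanning an infinitesimal parallelogram in the directions $\mathbf U,\mathbf V$ (equivalently: the curvature $\Omega(\mathbf U,\mathbf V)\in\mathrm{End}(\Sigma_x)$ is, up to sign, the infinitesimal generator of $\hol_{\Sigma,\gamma}$, and $R(\mathbf U,\mathbf V)\in\mathfrak{so}(T_xM)$ the generator of $\hol_{TM,\gamma}$) shows that $\Omega(\mathbf U,\mathbf V)$ lies in $\mathfrak{spin}(T_xM)\subset\Cl(T_xM)$, acting by Clifford multiplication, and that under the isomorphism $\mathfrak{spin}(T_xM)\cong\mathfrak{so}(T_xM)$ it corresponds to $R(\mathbf U,\mathbf V)$. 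The map $\mathfrak{so}(E)\to\mathfrak{spin}(E)\subset\Cl(E)$ inverse to \eqref{eq:spin_rep} is exactly $A\mapsto \frac18\sum_{ij}[e_i,e_j](e_i\cdot Ae_j)$, displayed in \eqref{cliffo2}. Substituting $A=R(\mathbf U,\mathbf V)$ gives precisely \eqref{eq:spin_con_curv}. One small point to verify carefully is that the Clifford commutator $[e_i,e_j]$ here is the one whose span is $\mathfrak{spin}(E)$, i.e.\ the normalization used in \eqref{eq:spin_rep}--\eqref{cliffo2}; this is just bookkeeping, not a real difficulty.

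\emph{Converse direction.} Now suppose $\nabla$ is Clifford covariant with curvature given by \eqref{eq:spin_con_curv}. We want $\nabla$ to be a locally spin connection, i.e.\ $\hol_{\Sigma,\gamma}\in\Spin(T_xM)$ for every null-homotopic loop $\gamma$. Again reduce to irreducible fibers. For a contractible loop, write $\hol_{\Sigma,\gamma}$ as a product of holonomies around small loops (using a homotopy to the constant loop and subdividing); for each small loop, integrating the curvature formula \eqref{eq:spin_con_curv} along the enclosed disc shows that, to leading order, the holonomy is $\exp$ of an element of $\mathfrak{spin}(T_xM)$ acting by Clifford multiplication. More cleanly: the hypothesis says that the $\mathrm{End}(\Sigma)$-valued connection $1$-form of $\nabla$, composed with parallel transport of a local Clifford frame, takes values in the image of $\mathfrak{spin}(T_xM)$ in $\mathrm{End}(\Sigma_x)$ at every point along $\gamma$; hence the path-ordered exponential defining $\hol_{\Sigma,\gamma}$ stays in the subgroup generated by $\exp(\mathfrak{spin})=\Spin(T_xM)$ (connected). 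This uses only that $\Spin(E)$ is a closed connected subgroup of the units of $\Cl(E)$ with Lie algebra $\mathfrak{spin}(E)$. Contractibility of $\gamma$ is what lets us present $\hol_{\Sigma,\gamma}$ as such an exponential of a curvature integral without monodromy obstructions. This gives the "locally spin" conclusion; the qualifier is necessary since global holonomies may still fail to lift.

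\emph{Main obstacle.} The genuinely delicate step is the forward direction's passage from the finite identity \eqref{cliffo1} to the infinitesimal curvature identity: one must argue that the curvature $\Omega(\mathbf U,\mathbf V)$, a priori just an endomorphism of $\Sigma_x$, actually lies in the $\binom{d}{2}$-dimensional subspace $\mathfrak{spin}(T_xM)$ and equals the image of $R(\mathbf U,\mathbf V)$ there. The cleanest route is to differentiate $\hol_{\Cl(TM),\gamma}(u)=\hol_{\Sigma,\gamma}\,u\,\hol_{\Sigma,\gamma}^{-1}$ in the loop parameter: the left side differentiates to the $\mathfrak{so}$-action of $R(\mathbf U,\mathbf V)$ on $u\in T_xM$, the right side to $[\Omega(\mathbf U,\mathbf V),u]$, and since the adjoint action of $\Cl(E)$ on itself has the property that $[c,u]\in E$ for all $u\in E$ forces the even part of $c$ into $\R\oplus\mathfrak{spin}(E)$, one concludes $\Omega(\mathbf U,\mathbf V)\in\mathfrak{spin}(T_xM)$ (the scalar part being killed because $\nabla$ preserves a spinor scalar product, so $\Omega$ is skew-hermitian and traceless on irreducible $\Sigma_x$). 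This is essentially the argument already used in Lemma \ref{holo_lift}, run one order deeper; I would present it as a short lemma about the adjoint action of $\Cl(E)$ and then plug in.
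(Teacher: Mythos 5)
Your main line of argument is the paper's: in the forward direction the curvature is read off from holonomies around infinitesimal parallelograms, which by the definition of a spin connection and Lemma \ref{holo_lift} lie in $\Spin(T_xM)$ and lift the Levi-Civita holonomies, so $\Omega(\mathbf U,\mathbf V)$ lies in $\mathfrak{spin}(T_xM)$ and is the image of $R(\mathbf U,\mathbf V)$ under \eqref{cliffo2}; in the converse direction one argues through a homotopy to the constant loop. The paper makes your ``product of small loops, to leading order'' step rigorous by the identity $\hol_{\Sigma,\gamma_s}^{-1}\tfrac{\D}{\D s}\hol_{\Sigma,\gamma_s}=\int_0^1 \hol_{\Sigma,\gamma_s^t}^{-1}\,\Omega\big(\partial_s\gamma_s(t),\partial_t\gamma_s(t)\big)\,\hol_{\Sigma,\gamma_s^t}\,\D t$, together with the key observation (a consequence of Clifford covariance, i.e.\ of \eqref{cliffo1} applied to arbitrary paths, not just loops) that the conjugated curvature lies in $\mathfrak{spin}$ of the base fiber; the holonomy then solves an ODE with generator in $\mathfrak{spin}(T_xM)$ starting at the identity, hence stays in $\Spin(T_xM)$.

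However, your ``more cleanly'' reformulation of the converse is false as stated: the hypothesis \eqref{eq:spin_con_curv} constrains the curvature, not the connection $1$-form. For instance, if $\nabla^0$ is a spin connection and $a$ is a closed scalar-valued $1$-form, then $\nabla=\nabla^0+a\otimes\mathrm{id}$ is still Clifford covariant and has the same curvature, yet its local connection form has a nonzero scalar part and does not take values in the image of $\mathfrak{spin}(T_xM)$; the conclusion ``locally spin'' holds only because $\oint_\gamma a=0$ for null-homotopic $\gamma$, i.e.\ precisely because one must argue through conjugated curvature integrals (your first version, made precise as above), not through the connection form. Two smaller points: in the converse you cannot invoke Lemma \ref{spinor_bundles_split} to reduce to irreducible fibers, since that lemma is proved for spinor bundles and its proof uses exactly the holonomy property you are trying to establish --- fortunately no such reduction is needed; and in your forward ``main obstacle'' paragraph the assertion that $\Omega$ is traceless is unjustified (skew-hermiticity does not exclude an imaginary scalar; one would instead use the covariantly constant spinor conjugation from Lemma \ref{spin_con_criterion}, or simply note, as you do first and as the paper does, that the holonomies already lie in $\Spin(T_xM)$, which makes this elaboration unnecessary).
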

\proof The curvature may be extracted from holonomies along
infitesimal parallelograms. Therefore by Lemma \ref{holo_lift}, the
curvature of $\nabla$ at $x$ is an element of $\mathfrak{spin}(T_xM)$,
coinciding with $R(\mathbf U, \mathbf V)$ taken in the
representation \eqref{cliffo2}.

Now we prove the converse. If $\gamma$ is any path from $y$ to $x$ and $\hol_{\Sigma, \gamma} \in \mathrm{Hom}(\Sigma_x, \Sigma_y)$ is the corresponding parallel transport, then by the Clifford covariance
\begin{equation}
\hol_{\Sigma, \gamma} \Omega( \mathbf U,\mathbf V) \hol_{\Sigma, \gamma}^{-1} = \frac{1}{8} \sum_{i,j} (e_i \cdot R(\mathbf U, \mathbf V) e_j) [ \hol_{\T M,\gamma}(e_i), \hol_{\T M,\gamma}(e_j)] \in \mathfrak{spin}(T_xM).
\end{equation}
Let $\gamma_s$ be a family of loops $[0,1] \to M$ based at $x$. For $t \in [0,1]$ let $\gamma_s^t := \left. \gamma_s \right|_{[0,t]}$. Then
\begin{equation}
\hol_{\Sigma, \gamma_s}^{-1} \frac{\D}{\D s} \hol_{\Sigma, \gamma_s} = \int_0^1 \hol_{\Sigma, \gamma_s^t}^{-1} \Omega \left( \frac{\partial \gamma_s(t)}{\partial s} , \frac{\partial \gamma_s(t)}{\partial t} \right) \hol_{\Sigma, \gamma_s^t} \D t.
\end{equation}
It follows that for a null-homotopic loop $\gamma$ based at $x$ we have that $\hol_{\Sigma, \gamma} \in \Spin(T_x M)$. 
\qed

Next we define the Dirac operator on sections of a spinor bundle $\Sigma$. Let us choose a locally defined orthonormal framing $\{ e_i \}$ of $TM$. Now put
\begin{equation}
 D \psi = -\i \sum_i e_i \cdot \nabla_{e_i} \psi.
\label{eq:Dirac_def}
\end{equation}
Here the multiplication by $e_i$ is the Clifford multiplication ($e_i$ being regarded as a section of $\Cl(TM)$). It is not dificult to check that $ D \psi$ does not depend on the choice of framing, so~local expressions on the right hand side of \eqref{eq:Dirac_def} can be glued to obtain a globally defined differential operator.

If $\Sigma$ is a spinor bundle over an  oriented Riemannian manifold $M$, there exist two
distinguished second order differential operators acting on sections of $\Sigma$: the square of the Dirac operator $ D^2$ and the Bochner Laplacian. 
To describe the latter, let $(\cdot|\cdot)$ be a spinor scalar product. It~yields a scalar product on $T^* M \otimes \Sigma$. Now the Bochner Laplacian,
at least formally, is (minus) the operator associated to the quadratic form
\begin{equation}
-(\psi, \Delta \psi) := \int_M \big(\nabla \psi(x)|\nabla \psi(x)\big) \D x.\label{scalar}
\end{equation}
Equivalently, the Bochner Laplacian can be defined without invoking the scalar product by
\begin{equation}
\Delta:= \sum_i \left(\nabla_{e_i} \nabla_{e_i}-\nabla_{\nabla_{e_i} e_i} \right),
\end{equation}

Note that the Bochner Laplacian uses  the {\em covariant Hessian} 
\begin{equation}
\mathrm{Hess}(\mathbf U ,\mathbf V) = \nabla_{\mathbf U} \nabla_{\mathbf V} - \nabla_{\nabla_{\mathbf U} \mathbf V},
\end{equation}
which is bilinear over
$C^\infty(M)$
and satisfies
\begin{equation}
  \mathrm{Hess}(\mathbf U ,\mathbf V)-
    \mathrm{Hess}(\mathbf V ,\mathbf U)=\Omega 
(\mathbf U ,\mathbf V).\label{torsion}\end{equation}
\eqref{torsion} follows from the torsion-freeness of the Levi-Civita connection, that is
\begin{equation}
  \nabla_\mathbf{U}\mathbf{V}-\nabla_\mathbf{V}\mathbf{U}-[\mathbf{U},\mathbf{V}]_{\mathrm{Lie}}=0.\end{equation}

In the following proposition
we recall the celebrated {\em Lichnerowicz formula}:

\begin{proposition} The square of the Dirac operator and the Bochner Laplacian are related by\begin{equation}\label{lichnerowicz}
D^2 = -\Delta + \frac{1}{4} \mathrm{Sc},
\end{equation}where $\mathrm{Sc}$ is the scalar curvature.
\end{proposition}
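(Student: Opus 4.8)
The plan is to compute $D^2$ directly using the definition \eqref{eq:Dirac_def} and a local orthonormal framing, and to organize the resulting double sum into a symmetric part (which produces the Bochner Laplacian) and a skew part (which produces a curvature term that simplifies to the scalar curvature). First I would fix a point $x_0 \in M$ and choose a local orthonormal framing $\{ e_i \}$ that is \emph{synchronous} at $x_0$, i.e.\ $\nabla_{e_i} e_j |_{x_0} = 0$; this kills the $\nabla_{\nabla_{e_i} e_j}$ terms at $x_0$ and makes the computation purely pointwise. With Clifford multiplication by $e_i$ denoted as juxtaposition, one writes
\begin{equation}
D^2 \psi = - \sum_{i,j} e_i e_j \nabla_{e_i} \nabla_{e_j} \psi
\end{equation}
at $x_0$. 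Splitting $e_i e_j = \frac12 [e_i, e_j]_+ + \frac12 [e_i,e_j] = - \delta_{ij} + \frac12 [e_i, e_j]$ (using the Clifford relation; note the sign convention in \eqref{alpha} is $[\alpha_i,\alpha_j]_+ = 2\delta_{ij}$, so here $e_i e_j + e_j e_i = 2 \delta_{ij}$, hence $e_i e_j = \delta_{ij} + \frac12[e_i,e_j]$, and I would track the sign carefully) gives
\begin{equation}
D^2 \psi = - \sum_i \nabla_{e_i} \nabla_{e_i} \psi - \frac12 \sum_{i,j} e_i e_j \, \big( \nabla_{e_i}\nabla_{e_j} - \nabla_{e_j}\nabla_{e_i}\big) \psi,
\end{equation}
where in the second sum I have antisymmetrized the operator acting on $\psi$ because $e_i e_j$ for $i \neq j$ is already antisymmetric. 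The first term is $-\Delta \psi$ at $x_0$ by the synchronous-frame choice.

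Next I would recognize the antisymmetrized second-derivative combination as the curvature $\Omega(e_i, e_j)$ of the spin connection (again using the synchronous frame, where $[e_i,e_j]_{\mathrm{Lie}} = 0$ at $x_0$, so the $\nabla_{[e_i,e_j]}$ term in \eqref{eq:con_curv} drops). Then I invoke the previous lemma, equation \eqref{eq:spin_con_curv}, to substitute
\begin{equation}
\Omega(e_i, e_j) = \frac18 \sum_{k,l} (e_k \cdot R(e_i, e_j) e_l)\, [e_k, e_l].
\end{equation}
This reduces $D^2 + \Delta$ to $-\frac{1}{16} \sum_{i,j,k,l} (e_k \cdot R(e_i,e_j) e_l) \, e_i e_j [e_k, e_l]$, i.e.\ a contraction of the Riemann tensor components $R_{ijkl}$ against a degree-four Clifford element $e_i e_j e_k e_l$ (after expanding $[e_k,e_l] = e_k e_l - e_l e_k = 2 e_k e_l + 2\delta_{kl}$ and discarding the $\delta_{kl}$ piece, which vanishes against the antisymmetry of $R$ in its last two indices).

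The crux of the argument — and the step I expect to be the main obstacle — is the purely algebraic simplification of $\sum_{i,j,k,l} R_{ijkl}\, e_i e_j e_k e_l$. The totally antisymmetric part of $e_i e_j e_k e_l$ contracts with $R_{ijkl}$ to zero by the first Bianchi identity $R_{i[jkl]} = 0$. What survives is the part where indices coincide in pairs; using the Clifford relations repeatedly to collapse $e_i e_j e_k e_l$ when two indices agree, together with the symmetries $R_{ijkl} = -R_{jikl} = -R_{ijlk} = R_{klij}$ and the definitions of the Ricci tensor $\mathrm{Ric}_{jl} = \sum_i R_{ijil}$ and scalar curvature $\mathrm{Sc} = \sum_j \mathrm{Ric}_{jj}$, the whole expression reduces to a multiple of $\mathrm{Sc}$ times the identity. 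Carefully bookkeeping the combinatorial factors (the $\frac18$ from \eqref{eq:spin_con_curv}, the $\frac12$ from antisymmetrization, the factor $2$ from $[e_k,e_l] \to e_k e_l$, and the number of index-pairing patterns) yields the coefficient $\frac14$, giving $D^2 = -\Delta + \frac14 \mathrm{Sc}$. Finally, since both sides are globally defined differential operators and $x_0$ was arbitrary, the identity holds everywhere on $M$. I would present the index-collapsing computation in moderate detail since the numerical coefficient is the whole point, but treat the frame-independence and the passage from $x_0$ to all of $M$ as routine.
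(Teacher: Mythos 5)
Your proposal is correct and follows essentially the same route as the paper: split $D^2$ into a symmetric part yielding the Bochner Laplacian and an antisymmetric part yielding the spin-connection curvature via \eqref{eq:spin_con_curv}, then contract against the Riemann tensor using the first Bianchi identity, the Clifford relations and the Riemann symmetries to obtain the coefficient $\frac14$ of $\mathrm{Sc}$. The only difference is that you dispose of the first-order terms by working at a point in a frame with $\nabla_{e_i}e_j|_{x_0}=0$, whereas the paper keeps a general orthonormal frame, uses the covariant Hessian, and cancels those terms explicitly via $c_{ijk}=-c_{ikj}$; both treatments leave the final Clifford--Riemann contraction at a comparable level of detail, and your slip $[e_k,e_l]=2e_ke_l+2\delta_{kl}$ (it should be $-2\delta_{kl}$) is harmless since the $\delta_{kl}$ piece dies against the tracelessness of $R(e_i,e_j)$.
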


\proof
Let $\mathrm{Hess}^{\mathrm s}$ be the symmetric part of the Hessian.
We choose an orthonormal framing $\{e_i\}$. Then
\begin{align}
(\i D)^2 & = \sum_{i,j} e_i \nabla_{e_i} e_j \nabla_{e_j} = \sum_{i,j} \left( e_i e_j \nabla_{e_i} \nabla_{e_j} + e_i (\nabla_{e_i} e_j) \nabla_{e_j} \right) \\
& = \sum_{i,j} e_i e_j \left( \mathrm{Hess}(e_i, e_j) + \nabla_{\nabla_{e_i} e_j} \right) + \sum_{i,j} e_i (\nabla_{e_i} e_j) \nabla_{e_j} \notag \\
& = \sum_{i,j} e_i e_j \left( \mathrm{Hess}^{\mathrm s}(e_i, e_j) + \frac{1}{2} \Omega(e_i, e_j) \right) + \sum_{i,j} \left( e_i e_j \nabla_{\nabla_{e_i} e_j} + e_i (\nabla_{e_i} e_j) \nabla_{e_j} \right) \notag \\
& = \Delta + \frac{1}{32} \sum_{i,j,n,m} [e_i,e_j][e_n,e_m] \left( e_n \cdot R(e_i, e_j) e_m \right). \notag
\end{align}
Below we will show that the last two terms in the third line cancel. The last term in the fourth line may be shown to be equal to $- \frac{1}{4} \mathrm{Sc} = - \frac14 \sum_{ij}e_i\cdot R(e_i,e_j)e_j$ using Clifford relations and symmetries of the Riemann tensor. 

Connection coefficients are defined by the formula
\begin{equation}
\nabla_{e_i}e_j = \sum_k c_{ijk} e_k.
\end{equation}
$\nabla_{e_i}(e_j \cdot e_k)=0$ and metric compatibility of the connection give $c_{ijk} + c_{ikj}=0$. We have
\begin{equation}\label{torsion1}
 \sum_{j} \left( e_j \nabla_{\nabla_{e_i} e_j} +(\nabla_{e_i} e_j) \nabla_{e_j} \right) 
= \sum_{j,k} c_{ijk}\left(  e_j \nabla_{e_k} +  e_k \nabla_{e_j} \right).
\end{equation}
Now switch the roles of $j,k$ in the second term to see that
\eqref{torsion1} vanishes.
\qed

Now suppose that $N$ is an orientable submanifold of $M$ of codimension $1$.
Then there exists a smooth field of unit normal vectors $\nu$. We have the following relation between the Levi-Civita connection on $M$ and on $N$:
\begin{align}\label{levici}
\nabla^N_{\mathbf U} \mathbf V &= \nabla^M_{\mathbf U} \mathbf V - (\nabla^M_{\mathbf U} \mathbf V \cdot \nu) \nu =\nabla^M_{\mathbf U} \mathbf V + (\mathbf V \cdot \nabla^M_{\mathbf U} \nu) \nu,
\end{align}
where $\mathbf U, \mathbf V$ are tangent to $N$. That is, $\nabla^N_{\mathbf U} \mathbf V$ is the projection of $\nabla^M_{\mathbf U} \mathbf V$ onto $TN$.

\eqref{levici} can be rewritten as follows:
\begin{align}
\nabla^N_{\mathbf U} \mathbf V &= \nabla^M_{\mathbf U} \mathbf V  + \frac12 \nu \left( (\nabla^M_{\mathbf U} \nu) \mathbf V + \mathbf V (\nabla^M_{\mathbf U} \nu)  \right) = \nabla^M_{\mathbf U} \mathbf V + \frac12 [\nu \nabla^M_{\mathbf U} \nu, \mathbf V ].   \notag
\end{align}
where now $\mathbf V,$ $\nu$ and $\nabla_{\mathbf U}^M\nu$ are treated as sections  of the Clifford bundle $\Cl(TM)$.
This is immediately generalized to general Clifford fields
\begin{equation}\label{cliffi}
\nabla^N_{\mathbf U} = \nabla^M_{\mathbf U} + \frac12 [\nu \nabla^M_{\mathbf U} \nu, \cdot]
\end{equation}

Now  assume that $\Sigma^M$ is a Clifford module bundle over $M$ with a Clifford covariant connection $\nabla^M$. The restriction of $\Sigma^M$ to $N$, denoted $\Sigma^N$, is a bundle of Clifford modules. \eqref{cliffi} motivates defining the following connection on $\Sigma^N$:
\begin{equation}
\nabla^N_{\mathbf U} = \nabla^M_{\mathbf U} + \frac12 \nu \nabla^M_{\mathbf U} \nu.
\label{eq:nablaN_spin}
\end{equation}

By construction, $\nabla^N$ is Clifford covariant.

\begin{lemma} \label{spin_con_res}
If $\Sigma^M$ is a spinor bundle, so is $\Sigma^N$.
\end{lemma}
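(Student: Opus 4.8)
\textbf{Proof plan for Lemma \ref{spin_con_res}.}

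The plan is to invoke Lemma \ref{spin_con_criterion} applied to the connection $\nabla^N$ on $\Sigma^N$: since $\nabla^N$ is Clifford covariant by construction (as noted right after \eqref{eq:nablaN_spin}), it suffices to produce a spinor scalar product and a spinor conjugation on $\Sigma^N$ which are covariantly constant for $\nabla^N$ in the sense of \eqref{eq:theta_product_constant}. The natural candidates are simply the restrictions to $N$ of the spinor scalar product $(\cdot|\cdot)$ and the spinor conjugation $\Theta$ that exist on $\Sigma^M$ by Lemma \ref{spin_con_criterion} (applied to the spinor bundle $\Sigma^M$). Here one must be slightly careful: a spinor scalar product and spinor conjugation are defined fibrewise in terms of the Clifford algebra acting on the fibre, and the relevant Clifford algebra changes from $\Cl(T_xM)$ to $\Cl(T_xN)$ when passing to $N$; but $\Cl(T_xN)$ is the even part (or, more precisely, sits inside $\Cl(T_xM)$ as the subalgebra generated by vectors orthogonal to $\nu$), so the defining identities $(\psi_1|c\psi_2)=(c^\T\psi_1|\psi_2)$ and the $\Theta$-intertwining relations continue to hold for all $c\in\Cl(T_xN)$. (If the parity convention jumps between the mod-$4$ cases for $\dim M$ versus $\dim N$, one restricts attention to even elements, where $\alpha$ acts trivially, so no discrepancy arises.) So the only thing to check is the covariant-constancy conditions.

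The key computation is to compare $\nabla^N$-derivatives of $(\cdot|\cdot)$ and $\Theta$ with the known $\nabla^M$-derivatives. Using \eqref{eq:nablaN_spin}, for sections $\psi_1,\psi_2$ of $\Sigma^N$ and a vector field $\mathbf U$ tangent to $N$,
\begin{equation}
(\nabla^N_{\mathbf U}\psi_1|\psi_2)+(\psi_1|\nabla^N_{\mathbf U}\psi_2)
=(\nabla^M_{\mathbf U}\psi_1|\psi_2)+(\psi_1|\nabla^M_{\mathbf U}\psi_2)
+\tfrac12\big((\nu\nabla^M_{\mathbf U}\nu\,\psi_1|\psi_2)+(\psi_1|\nu\nabla^M_{\mathbf U}\nu\,\psi_2)\big).
\end{equation}
The first two terms on the right give $\D(\psi_1|\psi_2)$ by \eqref{eq:theta_product_constant} for $\nabla^M$. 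For the remaining terms, note that $\nu\nabla^M_{\mathbf U}\nu$ is an element of $\mathfrak{spin}(T_xM)$: indeed $\nabla^M_{\mathbf U}(\nu\cdot\nu)=0$ forces $\nabla^M_{\mathbf U}\nu\perp\nu$, so $\nu$ and $\nabla^M_{\mathbf U}\nu$ anticommute in the Clifford algebra and $(\nu\nabla^M_{\mathbf U}\nu)^\T=(\nabla^M_{\mathbf U}\nu)^\T\nu^\T=(\nabla^M_{\mathbf U}\nu)\nu=-\nu\nabla^M_{\mathbf U}\nu$; hence it is skew with respect to $(\cdot|\cdot)$ and the two extra terms cancel. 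Similarly, $\nabla^N_{\mathbf U}\Theta\psi-\Theta\nabla^N_{\mathbf U}\psi=\tfrac12\big(\nu\nabla^M_{\mathbf U}\nu\,\Theta\psi-\Theta(\nu\nabla^M_{\mathbf U}\nu\,\psi)\big)$, which vanishes because $\Theta$ commutes (resp. intertwines via $\alpha$) with Clifford multiplication and $\nu\nabla^M_{\mathbf U}\nu$ is even, so $\Theta$ commutes with it, together with $\nabla^M$-constancy of $\Theta$. Thus both conditions in \eqref{eq:theta_product_constant} hold for $\nabla^N$, and Lemma \ref{spin_con_criterion} yields that $\nabla^N$ is a spin connection, i.e.\ $\Sigma^N$ is a spinor bundle.

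\textbf{Main obstacle.} The routine but delicate point is the bookkeeping of conventions when restricting the spinor scalar product and spinor conjugation from $\Cl(T_xM)$ to $\Cl(T_xN)$, because $\dim N=\dim M-1$ shifts the dimension mod $4$ and mod $8$, potentially altering the sign in $\Theta^2$ and the $\alpha$ versus identity in the $\Theta$-intertwining rule. The clean way around this is to observe that Lemma \ref{spin_con_criterion} only requires the existence of \emph{some} $(\cdot|\cdot)$ and $\Theta$ with the stated covariant-constancy properties, and that $\nu\nabla^M_{\mathbf U}\nu$ is always an \emph{even} element of the Clifford algebra, so all the intertwining identities we actually use hold uniformly regardless of the parity convention. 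Once that is recognized, the proof is a two-line cancellation as above. One should also remark that by Lemma \ref{spinor_bundles_split} it is enough to treat the case where the fibres are irreducible Clifford modules, which is the setting in which Lemma \ref{spin_con_criterion} is stated.
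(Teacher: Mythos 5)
Your covariant-constancy computation is correct and is indeed the heart of the matter (the term $\tfrac12\nu\nabla^M_{\mathbf U}\nu$ is a product of two orthogonal vectors, hence even and skew with respect to the spinor scalar product, so both conditions of \eqref{eq:theta_product_constant} pass from $\nabla^M$ to $\nabla^N$). However, there are two genuine gaps in the way you feed this into Lemma \ref{spin_con_criterion}. First, that lemma is stated (and its $\Leftarrow$ direction is proved, via Schur's lemma) only for bundles of \emph{irreducible} Clifford modules, and when $d=\dim M$ is even the fibres of $\Sigma^N$ are \emph{not} irreducible over $\Cl(T_xN)$: they split into the two eigenbundles of $\vol_N$, carrying the two inequivalent irreducible modules of the odd-dimensional Clifford algebra. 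Your appeal to Lemma \ref{spinor_bundles_split} cannot repair this, because that lemma applies to spinor bundles, and whether $\Sigma^N$ is one is exactly what is being proved; applying it to $\Sigma^M$ only makes the fibres of $\Sigma^M$ irreducible, not those of $\Sigma^N$. The failure is not cosmetic: in the reducible case the commutant of $\Cl(T_xN)$ is two-dimensional, and an element commuting with $\Theta$ and preserving the scalar product can be of the form $zP_++\bar z P_-$ with $|z|=1$, i.e.\ essentially $\e^{\i\theta\vol_N}$ with $\vol_N$ odd, which need not lie in $\Spin(T_xN)$; so the criterion's argument genuinely breaks without first restricting to the $\vol_N$-eigenbundles, which is what the paper does.

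Second, the plain restriction of $\Theta^M$ is in general \emph{not} a spinor conjugation for $\Cl(T_xN)$ in the sense required by the hypothesis of Lemma \ref{spin_con_criterion}: the intertwining type (identity versus $\alpha$) is dictated by $\dim N=d-1$ modulo $4$, and the sign of $\Theta^2$ by $\dim N$ modulo $8$, and passing from $d$ to $d-1$ changes these in half the cases. Your remark that "only even elements are used" amounts to re-proving a weakened form of the criterion rather than applying it as stated. The paper's fix is to set $\Theta^N=\Theta^M$ for $d\in\{1,2\}$ mod $4$ and $\Theta^N=\nu\Theta^M$ for $d\in\{0,3\}$ mod $4$; multiplication by $\nu$ flips the intertwining behaviour on odd elements of $\Cl(T_xN)$, and this twisted $\Theta^N$ is still covariantly constant because $\nu$ is $\nabla^N$-parallel (a one-line check from \eqref{eq:nablaN_spin}). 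So your strategy is the same as the paper's, and your cancellation argument supplies a detail the paper leaves implicit, but to be complete you must (i) split $\Sigma^N$ into $\vol_N$-eigenbundles when $d$ is even and verify the structures restrict there, and (ii) replace the restricted $\Theta^M$ by the appropriately $\nu$-twisted conjugation in the dimensions where the convention shifts.
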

\begin{proof}
By Lemma \ref{spinor_bundles_split} we may assume that $\Sigma^M$ is a bundle of irreducible Clifford modules. If~$d := \dim(M)$ is even, then $\Sigma^N$ splits into eigenbundles of $\vol_N$, which are bundles of irreducible Clifford modules. If $d$ is odd, $\Sigma^N$ is irreducible.

Now choose a spinor scalar product $( \cdot | \cdot )$
  and a spinor conjugation $\Theta^M$ on $\Sigma^M$. Let $ \Theta^N =
  \Theta^M$ if $d \in \{ 1 , 2 \}$ mod $4$ and $ \Theta^N = \nu
  \Theta^M$ if $d \in \{ 0, 3 \}$ mod $4$,
in both cases restricted to $\Sigma^N$. The restriction of $( \cdot | \cdot )$ to
  $\Sigma^N$ and $\Theta^N$ are a spinor scalar
  product and a spinor conjugation satisfying
  \eqref{eq:theta_product_constant}.
If $d$ is even, this is still true if we further restrict to eigenbundles of $\vol_N$.
  The result follows from Lemma \ref{spin_con_criterion}.
\end{proof}

Assume now that we have a covariantly constant section $\beta$ of
$\mathrm{End}(\Sigma )$ satisfying $\beta^2=1$ and anticommuting with
$T M\subset\Cl(TM)$. Let us consider the operator $\Gamma = -\i \beta
\nu$ acting on sections of $\Sigma^N$. It satisfies $\Gamma^2=1$ and
commutes with all sections of $\Cl(T N)$. Hence its eigenbundles
$\Sigma_\pm^N$ for eigenvalues $\pm 1$ are also Clifford module
bundles over $N$.  Using \eqref{eq:nablaN_spin} one checks that $\Gamma$ commutes also with the covariant differentiation, so $\Sigma_\pm^N$ inherit the spin connection.

Operator $\vol_M$ commutes with covariant differentiation and
anticommutes with $\Gamma$, hence it takes sections of $\Sigma_\pm^N$
to sections of $\Sigma_\mp^N$. If $d$ is odd, $\vol_M$ commutes with
Clifford fields and hence defines an isomorphism of spinor bundles
$\Sigma_+^N \cong \Sigma_-^N$. If $d$ is even, $\Sigma_+^N$ and
$\Sigma_-^N$ are non-isomorphic as Clifford module bundles. 
In this case, we can take $\beta:=\pm\i^{\frac{d}2}\mathrm{vol}_M$
and $\Gamma$ coincides up to phase with the $\Cl(T N)$ section
$\vol_N$.

 If $\Sigma$ is irreducible for $\Cl(TM)$ and $\beta$, by dimensional consideration, $\Sigma_\pm^N$ are  bundles of {\em irreducible} Clifford modules.

\subsection{Dirac operators on spheres} \label{app:spheres}

Now let us consider the sphere $\S^{d-1}$ of radius $1$ (thus we put
$|x|=1$ below).
We will apply the formalism of the previous section with $M:=\R^d$ and
$N:=\S^{d-1}$.
For brevity, we will write $\S$ for $\S^{d-1}$. 
We will use the notation of Subsections
\ref{app-Dir1}
and \ref{app-Dir2}, such as $S$, $R$, $T$ and $\kappa$.

The normal vector $\nu$ is identified with $S$.
The Levi-Civita connection on $\S$ is
\begin{equation}
\nabla_{\mathbf U}^\S \mathbf V  = \partial_{\mathbf U} \mathbf V + (\mathbf U \cdot \mathbf V)x \label{clef}
\end{equation}
for vector fields $\mathbf U, \mathbf V$ tangent to the sphere. Here $\partial_{\mathbf U} = \sum \limits_{i=1}^d \mathbf U_i \partial_i$. Consider the vector space $\cK$ from previous subsections. $\S \times \cK$ is a Clifford module bundle with connection
\begin{equation}
\nabla_{\mathbf U}^\S = \partial_{\mathbf U} + \frac12 S \mathbf U .
\label{eq:spinor_connection}
\end{equation}

\begin{proposition}
The connection \eqref{eq:spinor_connection} is a spin connection with curvature
\begin{equation}
\Omega(\mathbf U,\mathbf V) = \frac14 [\mathbf U,\mathbf V].
\label{eq:curvature}
\end{equation}
If $d=2$, the holonomy of $\nabla$ along $\S^1$ is equal to $-1$.
\end{proposition}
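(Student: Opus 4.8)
The statement has three parts: (i) the connection $\nabla^\S$ defined in \eqref{eq:spinor_connection} is a spin connection; (ii) its curvature is $\Omega(\mathbf U,\mathbf V)=\frac14[\mathbf U,\mathbf V]$; (iii) for $d=2$ the holonomy around $\S^1$ equals $-1$. The plan is to treat these in the order (ii), (i), (iii), since the curvature computation feeds into the spin-connection claim via the partial converse in the curvature lemma, and the $d=2$ holonomy is an explicit one-dimensional computation best done last.

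First I would verify Clifford covariance of $\nabla^\S_{\mathbf U}=\partial_{\mathbf U}+\frac12 S\mathbf U$: since $\partial_{\mathbf U}$ differentiates Clifford sections in the obvious way and $\frac12 S\mathbf U$ acts by the commutator $\frac12[S\mathbf U,\cdot]$ on Clifford sections (this is exactly \eqref{eq:nablaN_spin} with $\nu=S$, noting $\nu\nabla^M_{\mathbf U}\nu=S\,\partial_{\mathbf U}S=S\mathbf U$ because $\partial_{\mathbf U}S=\partial_{\mathbf U}(\alpha_ix_i/|x|)=\mathbf U$ when $|x|=1$ and $\mathbf U\perp x$). Then I would compute the curvature $\Omega(\mathbf U,\mathbf V)=\nabla^\S_{\mathbf U}\nabla^\S_{\mathbf V}-\nabla^\S_{\mathbf V}\nabla^\S_{\mathbf U}-\nabla^\S_{[\mathbf U,\mathbf V]_{\mathrm{Lie}}}$ directly from \eqref{eq:con_curv}. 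The flat ambient piece $\partial_{\mathbf U}\partial_{\mathbf V}-\partial_{\mathbf V}\partial_{\mathbf U}-\partial_{[\mathbf U,\mathbf V]}$ vanishes; the cross terms and the $\frac14(S\mathbf U)(S\mathbf V)-\frac14(S\mathbf V)(S\mathbf U)$ term must be collected. Using $S^2=1$, $S\mathbf U=-\mathbf U S$ for $\mathbf U$ tangent (since $\mathbf U\cdot\nu=0$), one gets $(S\mathbf U)(S\mathbf V)=-\mathbf U\mathbf V$, so the quadratic term is $-\frac14[\mathbf U,\mathbf V]$; careful bookkeeping of the derivative-of-$S$ cross terms against the Levi-Civita correction $(\mathbf U\cdot\mathbf V)x$ in \eqref{clef} should leave $\Omega(\mathbf U,\mathbf V)=\frac14[\mathbf U,\mathbf V]$. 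As a cross-check, this matches the curvature lemma: the Riemann tensor of the unit sphere is $R(\mathbf U,\mathbf V)e_j=(\mathbf V\cdot e_j)\mathbf U-(\mathbf U\cdot e_j)\mathbf V$, and plugging into \eqref{eq:spin_con_curv} gives $\frac18\sum_{ij}\big((e_i\cdot\mathbf V)(\mathbf U\cdot e_j)-(e_i\cdot\mathbf U)(\mathbf V\cdot e_j)\big)[e_i,e_j]=\frac18([\mathbf U,\mathbf V]-[\mathbf V,\mathbf U])\cdot\tfrac12=\frac14[\mathbf U,\mathbf V]$ after resolving the sums. Having the curvature in the form \eqref{eq:spin_con_curv}, the partial converse in the curvature lemma immediately gives that $\nabla^\S$ is a \emph{locally} spin connection; since $\S^{d-1}$ is simply connected for $d\ge 3$ this is the same as a spin connection. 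For $d=2$ simple connectedness fails, so part (iii) must be handled by hand, and this is the one genuinely non-formal point — though not hard.

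The main obstacle, such as it is, is the $d=2$ holonomy computation: here $\Cl(T\R^2)$ acts on $\cK=\C^2$ via $\alpha_1=\sigma_1,\alpha_2=\sigma_2$, and $S=\hat x_1\sigma_1+\hat x_2\sigma_2$. Parametrizing $\S^1$ by $x(\varphi)=(\cos\varphi,\sin\varphi)$, the tangent vector is $\mathbf U=x'(\varphi)=(-\sin\varphi,\cos\varphi)$, so $\nabla^\S$ along the loop is $\frac{d}{d\varphi}+\frac12 S(\varphi)\mathbf U(\varphi)$. One computes $S(\varphi)\mathbf U(\varphi)=(\cos\varphi\,\sigma_1+\sin\varphi\,\sigma_2)(-\sin\varphi\,\sigma_1+\cos\varphi\,\sigma_2)=\sigma_1\sigma_2=\i\sigma_3$ (using $\sigma_1^2=\sigma_2^2=1$, $\sigma_1\sigma_2=\i\sigma_3$), which is \emph{constant} in $\varphi$. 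Hence the parallel transport operator is the solution of $\psi'(\varphi)=-\frac{\i}{2}\sigma_3\psi(\varphi)$, i.e. $\psi(2\pi)=\exp(-\i\pi\sigma_3)\psi(0)=-\psi(0)$ since $\exp(-\i\pi\sigma_3)=\mathrm{diag}(e^{-\i\pi},e^{\i\pi})=-1$. Therefore $\mathrm{hol}_{\S,\gamma}=-1$, as claimed. I would also remark that this $-1$ is consistent with the general theory: $-1\in\Spin(2)$ is the nontrivial lift of the identity in $\SO(2)=\mathrm{hol}_{T\S^1}$ (the Levi-Civita holonomy of the flat circle is trivial), confirming that the spinor bundle over $\S^1$ is the nontrivial (antiperiodic) spin structure rather than the trivial one. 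This finishes the proof; steps (i) and (ii) are essentially invocations of the already-established curvature lemma together with the Riemann tensor of the sphere, and only (iii) requires the short explicit $2\times 2$ calculation above.
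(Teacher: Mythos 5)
Your argument is correct, but for the ``spin connection'' part it takes a different route than the paper. The paper simply invokes Lemma \ref{spin_con_res}: once one observes (as you do) that \eqref{eq:spinor_connection} is exactly the induced connection \eqref{eq:nablaN_spin} for $\nu=S$ on the restriction of the flat spinor bundle $\R^d\times\cK$, that lemma (which rests on the covariantly constant spinor scalar product and conjugation of Lemma \ref{spin_con_criterion}) gives the spin property uniformly in all dimensions, and then the curvature formula follows from \eqref{eq:spin_con_curv} together with the Riemann tensor of the unit sphere, or from the direct computation you sketch (the paper mentions both). You instead compute the curvature first, apply the \emph{partial converse} of the curvature lemma to get a locally spin connection, and upgrade to spin using $\pi_1(\S^{d-1})=0$ for $d\geq 3$, handling $d=2$ by the explicit holonomy $(-1)^n\in\{\pm 1\}=\Spin(T_x\S^1)$ — a computation you need for the last assertion anyway, and which coincides with the paper's ($S\mathbf U=\vol_{\R^2}=\sigma_1\sigma_2$ constant along $\S^1$, giving $\psi(2\pi)=-\psi(0)$). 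Your route makes the topological input explicit and avoids Lemma \ref{spin_con_res}, at the cost of having to treat the non-simply-connected case separately; the paper's is shorter because that lemma is already in place. Two small blemishes, neither affecting validity: in your cross-check of the curvature via \eqref{eq:spin_con_curv} the contraction $e_i\cdot R(\mathbf U,\mathbf V)e_j=(\mathbf U\cdot e_i)(\mathbf V\cdot e_j)-(\mathbf V\cdot e_i)(\mathbf U\cdot e_j)$ has its sign flipped and the stray factor $\tfrac12$ does not belong there (the correct bookkeeping gives $\tfrac18\left([\mathbf U,\mathbf V]-[\mathbf V,\mathbf U]\right)=\tfrac14[\mathbf U,\mathbf V]$ directly); and in the closing remark the relevant groups are $\Spin(T_x\S^1)\cong\{\pm1\}$ and the trivial $\SO(T_x\S^1)$, not $\Spin(2)$ and $\SO(2)$ — it is precisely membership of the holonomy in this $\{\pm 1\}$ that completes the spin-connection claim for $d=2$, so that sentence should be stated as part of the proof rather than as an aside.
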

\begin{proof}
All but the last statement follow from Lemma \ref{spin_con_res}. \eqref{eq:curvature} may also be obtained from a~simple direct computation. Now let $d =2$. We parametrize $\S^1$ as $x = ( \cos(\alpha), \sin(\alpha))$. Then \eqref{eq:spinor_connection} takes the form
\begin{equation}
\nabla_{\frac{\partial}{\partial \alpha}} \psi = \frac{\partial \psi}{\partial \alpha} + \frac{1}{2} \vol_{\R^2} \psi. 
\end{equation}
It follows that solutions of the parallel transport equation $\nabla_{\frac{\partial}{\partial \alpha}} \psi =0$ satisfy $\psi(2 \pi) = - \psi(0)$.
\end{proof}

Eigenbundles $\cK_\pm \subset \S \times \cK$ of $\Gamma = -\i \beta S$ to eigenvalues $\pm 1$ are irreducible spinor bundles, isomorphic if $d$ is odd and non-isomorphic otherwise.

Choose a local orthonormal framing $\{ e_i \}_{i=1}^{d-1}$ of $T
\S$. Denote the Dirac operator on $\S$ by
  $D_{\S}$
and on $\R^d$ by $D$. Using \eqref{eq:spinor_connection} we manipulate its definition  to the form
\begin{align}
D_{\S} &= -\i \sum_i e_i \nabla_{e_i} \psi = -\i \sum_i e_i \partial_{e_i} \psi - \frac{ \i }{2} \sum_i e_i S e_i \psi \\
&= D + \i S \partial_\nu \psi +\i \frac{d-1}{2} S \psi = D -  S R.  \notag 
\end{align}
Next let $\{ e_i \}_{i=1}^d$ be the canonical basis of $\R^d$. Multiplying the above by $S$ from the left we find
\begin{align}
S D_{\S} &= SD - R = -\i \sum_{i,j} x_i e_i e_j \partial_j - SR \\
& = -\frac{\i}{2} \sum_{i,j} \left( [e_i, e_j] + [e_i, e_j]_+ x_i \partial_j \right) - R = \i T. \notag
\end{align}
Hence we have
\begin{equation}
D_{\S} = \i  ST. 
\end{equation}
Let us note that this is exactly the off-diagonal term of $D$ with respect to decomposition into eigenspaces of $S$. Next observe that
\begin{equation}
\kappa = \Gamma D_{\S} = D_{\S} \Gamma.
\end{equation}
It follows that on sections of $\cK_\pm$, operator $\kappa$ acts as $\pm D_{\S}$.

We also remark that if $\psi$ is a $\cK$-valued polynomial homogeneous of degree
$\ell$ annihillated by $D$, then $(S \mp \i)\psi$ is an eigenvector of
$D_{\S}$ to eigenvalue $\pm (\ell + \frac{d-1}{2})$. Indeed,
\begin{equation}
D_{\S} \psi = - SR \psi = \i \left( \ell + \frac{d-1}{2} \right) S \psi,
\end{equation}
which implies
\begin{align}
D_{\S}(S\mp\i)\psi &= (-S \mp \i) D_{\S} \psi \\
&= \i \left( \ell + \frac{d-1}{2} \right) (- S \mp \i) S \psi = \pm \left( \ell + \frac{d-1}{2} \right) (S \mp \i ) \psi. \notag
\end{align}
By the relation between $D_{\S}$ and $\kappa$, this calculation reproduces the spectrum of $\kappa$ found in Proposition \ref{eq:Hom_inc}.

We claim that a complete set of eigenfunctions of $ D_\S$ is
obtained by the construction above, similarly as spherical harmonics
are obtained by restricting scalar-valued homogeneous harmonic
polynomials, e.g. \cite[p. 73--81]{Axler}. This may be seen from the
Stone-Weierstrass theorem and the following lemma. Besides this
application, the lemma elucidates the decomposition of spaces of
spinor-valued polynomials into irreducible representations of
$\Spin(\R^d)$ and relates eigenvectors of $D_{\mathbb S}$ (and hence also of $\kappa$) to harmonic polynomials.


Consistently with our notation, in the following lemma $x$ denotes the element of the Clifford algebra $x=\sum_i x_i e_i$, whereas $x_i$ are real numbers. $x^j$ is the $j$-th power of $x$.

\begin{lemma}
Let $\cK_\ell$ be the space of $\cK$-valued polynomials homogeneous of  degree $\ell$ and let $\cK_{\ell}^0$ be the kernel of $D$ acting in $\cK_\ell$. Then
\begin{equation}
\cK_{\ell} = \cK_{\ell}^0 \oplus x \cdot \cK_{\ell-1}.
\label{eq:cKell_decomp}
\end{equation}
In particular we have a vector space decomposition
\begin{equation}
\cK_{\ell} = \bigoplus_{j=0}^\ell x^j \cK_{\ell-j}^0.
\label{eq:cKell_decomp2}
\end{equation}
Moreover, $\dim(\cK_\ell^0) = \binom{d+\ell-2}{\ell} \dim(\cK)$, and (clearly)
 $\dim(\cK_\ell) = \binom{d+\ell-1}{\ell} \dim(\cK)$.

Let $\cH_\ell$ be the space of scalar-valued harmonic polynomials homogeneous of degree $\ell$. Then
\begin{equation}
D : \cH_\ell \otimes \cK \to \cK_{\ell -1}^0
\end{equation}
is a surjection with kernel $\cK^0_\ell$. In particular there is an exact sequence
\begin{equation}
\dots \rightarrow \cH_{\ell+1} \otimes \cK \xrightarrow{D} \cH_{\ell} \otimes \cK \xrightarrow{D} \cH_{\ell-1} \otimes \cK \rightarrow \dots
\end{equation}

\end{lemma}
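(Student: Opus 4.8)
The plan is to establish the decomposition \eqref{eq:cKell_decomp}, deduce \eqref{eq:cKell_decomp2} by induction, count dimensions, and finally reinterpret the pieces in terms of scalar harmonic polynomials tensored with $\cK$. The key algebraic identity to exploit is the Clifford relation $x^2 = |x|^2$ as an element of $\Cl(TM)$ (here $x = \sum_i x_i e_i$), together with the operator identities relating $D$, the Euler operator $\sum_i x_i \partial_i$, and multiplication by $x$. Specifically, a direct computation using $[\alpha_i,\alpha_j]_+ = 2\delta_{ij}$ gives $D (x \psi) = -\i \sum_{i,j} e_i e_j \partial_i(x_j \psi) = -\i\, d\, \psi + x D\psi + 2\i\sum_j x_j \partial_j \psi$ after symmetrizing, so that on $\cK_{\ell-1}$, where the Euler operator acts as $\ell-1$, one finds $D(x\psi) = -\i(d + 2(\ell-1))\psi + x D \psi$. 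Since $d + 2(\ell-1) > 0$ for $d \geq 1$, $\ell \geq 1$, the map $\psi \mapsto D(x\psi) \bmod x\cK_{\ell-2}$ is, up to a nonzero scalar, the identity on $\cK_{\ell-1}$; I will use this to show $D: x\cK_{\ell-1} \to \cK_{\ell-1}$ is injective with the same image structure, and hence that $\cK_\ell^0 \cap x\cK_{\ell-1} = 0$.

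First I would prove that $\cK_\ell = \cK_\ell^0 + x \cK_{\ell-1}$. The reverse map is $\psi \mapsto$ something like $x D\psi$: for $\psi \in \cK_\ell$, the element $x D\psi$ lies in $x\cK_{\ell-1}$, and one checks using the identity above that $\psi - c_\ell^{-1} x D\psi \in \cK_\ell^0$ for the appropriate nonzero constant $c_\ell$ depending on $d,\ell$. This gives the sum; combined with the triviality of the intersection it gives the direct sum \eqref{eq:cKell_decomp}. Iterating \eqref{eq:cKell_decomp} downward in $\ell$ and using $x^2 = |x|^2$ a scalar (so $x^j \cK_{\ell-j}^0$ are genuinely distinct subspaces, which again needs the injectivity of multiplication by $x$ on these pieces) yields \eqref{eq:cKell_decomp2}. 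The dimension count for $\cK_\ell$ is the standard $\binom{d+\ell-1}{\ell}\dim\cK$ (dimension of degree-$\ell$ homogeneous polynomials in $d$ variables times $\dim\cK$); then $\dim\cK_\ell^0 = \dim\cK_\ell - \dim\cK_{\ell-1}$ from \eqref{eq:cKell_decomp}, and the Pascal-type identity $\binom{d+\ell-1}{\ell} - \binom{d+\ell-2}{\ell-1} = \binom{d+\ell-2}{\ell}$ finishes it.

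For the last part, I would compare $\cK_\ell$ with $\bigoplus_{j\geq 0}\cH_{\ell-2j}\otimes\cK$, the scalar harmonic decomposition $\cK_\ell \cong \bigoplus_j |x|^{2j}\cH_{\ell-2j}\otimes\cK$ (applying the classical fact, cited as \cite[p. 73--81]{Axler}, componentwise), and match it against \eqref{eq:cKell_decomp2}. The map $D: \cH_\ell \otimes \cK \to \cK_{\ell-1}$ actually lands in $\cK_{\ell-1}^0$ because $D^2$ acts on harmonic-polynomial-valued sections as (a multiple of) the scalar Laplacian, which annihilates $\cH_\ell$; more directly, $D^2 = -\Delta$ on $\R^d$ with the flat connection, so $D(\cH_\ell\otimes\cK)\subset\ker D$. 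Injectivity of $D$ on $\cH_\ell\otimes\cK$ follows since a nonzero harmonic polynomial in the kernel of $D$ of positive degree would, combined with $D^2 = -\Delta = 0$, force $\cK_\ell^0 \supsetneq$ its image in a way that violates the dimension count — more cleanly, $D$ restricted to $\cH_\ell\otimes\cK$ has kernel exactly $\cK_\ell^0 \cap (\cH_\ell\otimes\cK)$, and one identifies $\cK_\ell^0$ with $\cH_\ell\otimes\cK$ as vector spaces via the dimension formula $\dim\cK_\ell^0 = \binom{d+\ell-2}{\ell}\dim\cK = \dim\cH_\ell\cdot\dim\cK$, so surjectivity onto $\cK_{\ell-1}^0$ (again matching dimensions) forces the kernel to be precisely $\cK_\ell^0$. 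The exact sequence is then an immediate restatement.

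The main obstacle I anticipate is getting the operator identity for $D(x\psi)$ and the constant $c_\ell$ exactly right, including a clean proof that multiplication by $x$ is injective on each relevant graded piece (equivalently that $D$ followed by multiplication by $x$ is, up to scalars, a projection) — once that bookkeeping is pinned down, the rest is dimension counting and invoking the scalar harmonic decomposition. A secondary subtlety is the identification $\cK_\ell^0 \cong \cH_\ell\otimes\cK$ as $\Spin$-representations (not merely as vector spaces), which the lemma hints at; for the stated claim only the vector-space isomorphism and the map $D$ are needed, so I would keep the representation-theoretic remarks informal.
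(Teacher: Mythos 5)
Your overall strategy (the anticommutator $[D,x]_+=-\i\,(d+2\sum_i x_i\partial_i)$, trivial intersection plus dimension counting, harmonicity via $D^2=-\Delta$) is in the right spirit, but two key steps fail as stated. First, it is not true that $\psi-c_\ell^{-1}xD\psi\in\cK_\ell^0$ for a single constant $c_\ell$: for $\psi=x\phi$ with $\phi\in\cK_{\ell-1}^0$ one has $xD\psi=-\i(d+2\ell-2)\psi$, forcing $c_\ell=-\i(d+2\ell-2)$, while for $\psi=|x|^2\phi$ with $\phi\in\cK_{\ell-2}^0$ one has $xD\psi=-2\i\psi$, forcing $c_\ell=-2\i$; these are incompatible unless $d+2\ell=4$. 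The projection onto $\cK_\ell^0$ along $x\cK_{\ell-1}$ is a finite series in $x^jD^j$, not a single correction, so your proof that $\cK_\ell=\cK_\ell^0+x\cK_{\ell-1}$ collapses, and you cannot patch it by counting dimensions as you set things up, because you obtain $\dim\cK_\ell^0$ only as a consequence of the decomposition. (A quick repair: $\dim\cK_\ell^0\ge\dim\cK_\ell-\dim\cK_{\ell-1}$ by rank--nullity, since $D$ maps $\cK_\ell$ into $\cK_{\ell-1}$; combined with trivial intersection and injectivity of multiplication by $x$ this forces equality. The paper instead computes $\dim\cK_\ell^0$ directly by solving for the coefficients $\psi_{1 i_2\dots i_\ell}$.) Likewise, your identity alone only shows that $D(x\psi)=0$ forces $\psi\in x\cK_{\ell-2}$; to conclude $\psi=0$, i.e.\ $\cK_\ell^0\cap x\cK_{\ell-1}=\{0\}$, one must descend through the powers of $x$, which is precisely the induction the paper runs with $[\i D,x^{2k}]=2kx^{2k-1}$ and $[\i D,x^{2k+1}]_+=x^{2k}\left(2\sum_i x_i\partial_i+d+2k\right)$; this is the core of the argument, not bookkeeping. (Your sign slips in $D(x\psi)$ are harmless, since only the scalar term modulo $x\cK_{\ell-2}$ is used.)

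Second, the last part of your argument relies on the false identity $\dim\cH_\ell=\binom{d+\ell-2}{\ell}$: for $d=3$ the left-hand side is $2\ell+1$ while the right-hand side is $\ell+1$, so $\cK_\ell^0$ is a \emph{proper} subspace of $\cH_\ell\otimes\cK$, and your ``matching dimensions'' kernel-and-surjectivity step is both numerically wrong and circular (you invoke surjectivity to pin down the kernel). The correct order, which is the paper's: the kernel of $D$ on $\cH_\ell\otimes\cK$ is exactly $\cK_\ell^0$, simply because $D\psi=0$ implies $\Delta\psi=-D^2\psi=0$, so $\cK_\ell^0\subset\cH_\ell\otimes\cK$; then rank--nullity together with $\dim\cH_\ell=\binom{d+\ell-1}{\ell}-\binom{d+\ell-3}{\ell-2}$ gives $\dim(\cH_\ell\otimes\cK)-\dim\cK_\ell^0=\dim\cK_{\ell-1}^0$, and since the image lies in $\cK_{\ell-1}^0$ (as you correctly argue from $D^2=-\Delta$), surjectivity follows.
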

\begin{proof}
Let $\{ e_i \}_{i=1}^d$ be an orthonormal basis of $\R^d$. A general element of $\cK_\ell$ has the form
\begin{equation}
\psi = \sum_{i_1, \dots , i_\ell=1}^d x_{i_1} \cdots x_{i_\ell} \psi_{i_1 \dots i_\ell}
\end{equation}
with coefficients $\psi_{i_1 \dots i_\ell}$ fully symmetric. Acting with $D$ we find that $D \psi =0$ if and only if
\begin{equation}
\psi_{1 i_2 \dots i_\ell} = \sum_{j \neq 1} e_1 e_j \psi_{j i_2 \dots i_\ell}.
\end{equation}
It is easy to see that this system of equation may be uniquely solved once $\psi_{i_1 \dots i_\ell}$ is fixed for all indices $i_1, \dots , i_\ell$ different than $1$. The formula for $\dim(\cK^0_\ell)$ follows.

Using the above result it is easy to check that
\begin{equation}
\dim(\cK_\ell) = \dim(\cK_\ell^0) + \dim(x \cdot \cK_{\ell-1}).
\end{equation}
Hence \eqref{eq:cKell_decomp} will follow once we establish that $\cK_\ell^0 \cap x \cdot \cK_{\ell-1} = \{ 0 \}$.

We proceed by induction. There is nothing to prove for $\ell =0$. Suppose that \eqref{eq:cKell_decomp} holds for $\ell \leq \ell'$. Then also \eqref{eq:cKell_decomp2} holds for $\ell \leq \ell'$. Let us put $\ell = \ell'+1$ and let $\psi \in \cK_\ell^0 \cap x \cdot \cK_{\ell-1}$. By the induction hypothesis we have
\begin{equation}
\psi = \sum_{j =1}^\ell x^j \psi_j
\end{equation}
with $\psi_j \in \cK_{\ell - j}$. Now let us observe that
\begin{subequations}
\begin{align}
&[\i D , x^2] = 2 x, \qquad &\text{hence } [\i D , x^{2k}] = 2k x^{2k-1}, \\
&[\i D, x]_+ = 2 \sum_i x_i \partial_i + d, \qquad &\text{hence} [\i D, x^{2k+1}]_+ = x^{2k} (2 \sum_i x_i \partial_i + d + 2k).
\end{align}
\end{subequations}
Thus since $\psi$ is annhilated by $D$ and $\sum_i x_i \partial_i \psi_j = (\ell -j) \psi_j$, we obtain
\begin{equation}
0 = \i D \psi = \sum_{j \text{ even}} j x^{j-1} \psi_j + \sum_{j \text{ odd}} (\ell + d-1) x^{j-1} \psi_j.
\end{equation}
By induction hypothesis, $x^{j-1} \psi_j$ and $x^{j'-1} \psi_{j'}$ belong to subspaces of $\cK_{\ell-1}$ with trivial intersection if $j \neq j'$. Therefore each term in the above sum has to vanish separately. Since operators $x^{j-1}$ are injective, all $\psi_j$ vanish. Thus $\psi = 0$.

For the last part, note that $\cH_\ell \otimes \cK$ is the space of
harmonic $\cK$-valued polynomials homogeneous of degree $\ell$. It is
annihilated by $D^2$, so $D$ maps it into $\cK_{\ell-1}^0$. Statement
about the kernel is obvious. Then $
\dim(\cH_\ell\otimes\cK)=\dim(\cK_\ell^0)+\dim(\cK_{\ell-1}^0)$
follows from the well-known formula
\begin{equation}
\dim(\cH_\ell) = \binom{d+\ell-1}{\ell} - \binom{d+ \ell -3}{\ell -2}.
\end{equation}
This implies the surjectivity.
\end{proof}

As for any oriented Riemannian manifold,   two natural second order operators act on sections of spinor bundles: the square of the Dirac operator $D^2_{\mathbb S}$ and the Bochner Laplacian $\Delta_{\S}$.
In the case of spheres we have  an additional natural second order operator: the square of the total angular momentum 
$J^2$.
It turns out that for $d\geq3$ the operator $J^2$ is distinct from
both $ D_{\S}^2$ and $-\Delta_{\S}$. More
precisely, we have
\begin{align}
  D^2_{\S} &= -\Delta_{\S} +
                          \frac{(d-1)(d-2)}{4},\label{lich}\\
  D^2_{\S}&= J^2 +
                          \frac{(d-1)(d-2)}{8}.\label{Om22}
\end{align} \eqref{lich} is the Lichnerowicz
formula for the spheres (indeed,
the scalar curvature of $\S$ is 
${(d-1)(d-2)}$). Moreover, $  D^2_{\S}=\kappa^2$. Hence 
\eqref{Om22} is essentially the formula \eqref{eq:Om2_J2_rel}.
Thus, as we were surprised to find out, \eqref{eq:Om2_J2_rel} is
distinct from the Lichnerowicz formula.

\section{Mellin transformation} \label{app:operators_Rplus}

For a Schwartz function $b$ on $\R$ we define its Fourier
transform by
\begin{equation} (\mathfrak F b)(k)=\int_{- \infty}^{\infty} b(t)\e^{-\i k t}\D t. \end{equation}
It is extended to the space of Schwartz distributions in the usual way. Restriction of $\frac{1}{\sqrt{2 \pi}}\mathfrak F$ to $L^2(\R)$ is a unitary operator.

An isomorphism $\iota : C_c^{\infty}(\R) \to C_c^{\infty}(\R_+)$ is
defined by $(\iota f)(x)=x^{- \frac{1}{2}} f(\ln(x))$. Dualizing,
we~extend it to an isomorphism between spaces of distributions on $\R$
and on $\R_+$. Restriction of $\iota$ to $L^2(\R)$ is a~unitary
operator onto $L^2(\R_+)$. By {\em Schwartz class functions} and {\em tempered distributions on} $\R_+$ we shall mean smooth functions (respectively distributions) on $\R_+$ which correspond through $\iota$ to Schwartz class functions (respectively tempered distributions) on $\R$.

Mellin transform is defined as the composition $\mathfrak M = \mathfrak F \iota^{-1}$. It is an isomorphism between spaces of tempered distributions on $\R_+$ and on $\R$. If $f \in C_c^{\infty}(\R_+)$, then
\begin{equation}
  (\mathfrak M f)(k) = \int_0^{\infty}  f(x) x^{- \frac{1}{2} - \i k} \D x.
\end{equation}

Recall that $A$, $J$ and $K$ are defined in Subsection 
\ref{Remarks about notation}. We note the following identities:
\begin{subequations}
\begin{gather}
\mathfrak M Jf(k) = \mathfrak M f(- k), \\
A = \mathfrak M^{-1} K \mathfrak M.
\end{gather}
\end{subequations}

The following lemma will be used in Proposition \ref{melino}.
  The Mellin transformation plays here a secondary role.

\begin{lemma} \label{Mellin_lim}
Suppose that $f_{\epsilon}$ is a family of tempered distributions on $\R_+$ with parameter $\epsilon \in ]0,1]$ satisfying the following conditions:
\begin{itemize}
\item $\mathfrak M f_{\epsilon} \in L^1_{\loc}(\R)$, 
\item there exists $g \in L^1_{\loc}(\R)$ such that $\mathfrak M f_{\epsilon} \to g$ pointwise for $\epsilon \to 0$,
\item there exist $c, N \geq 0$ independent of $\epsilon$ such that $|\mathfrak M f_{\epsilon}(k)| \leq c (1+k^2)^N$ for almost every $k$.
\end{itemize}
Then there exists a tempered distribution $f_0$ on $\R_+$ such that $f_\epsilon \to f_0$ for $\epsilon \to 0$ in the sense of tempered distributions. Moreover, $\mathfrak M f_0 = g$.
\end{lemma}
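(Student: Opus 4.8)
\textbf{Proof plan for Lemma \ref{Mellin_lim}.}

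The plan is to transport everything to the real line via the isometry $\iota$ and the Fourier transform, so that the statement becomes a routine assertion about Fourier transforms of a bounded-in-$\mathcal S'$ family. Concretely, set $b_\epsilon := \mathfrak F^{-1} g_\epsilon$ where $g_\epsilon := \mathfrak M f_\epsilon$, so that $f_\epsilon = \iota b_\epsilon$ (using $\mathfrak M = \mathfrak F \iota^{-1}$). By hypothesis $g_\epsilon \in L^1_{\loc}(\R)$ and $|g_\epsilon(k)| \le c(1+k^2)^N$ pointwise a.e., uniformly in $\epsilon$; hence each $g_\epsilon$ is a tempered distribution and the family $\{ g_\epsilon \}$ is bounded in $\mathcal S'(\R)$ in the sense that, for every Schwartz function $\varphi$, $|\langle g_\epsilon, \varphi \rangle| \le c \int_\R (1+k^2)^N |\varphi(k)| \D k$ uniformly in $\epsilon$. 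Since $(1+k^2)^N \varphi(k)$ is itself integrable with an $L^1$ bound controlled by finitely many Schwartz seminorms of $\varphi$, the dominated convergence theorem applied with the $\epsilon$-independent majorant $c(1+k^2)^N |\varphi(k)| \in L^1(\R)$ gives $\langle g_\epsilon, \varphi \rangle \to \int_\R g(k) \varphi(k) \D k = \langle g, \varphi \rangle$ for every $\varphi \in \mathcal S(\R)$. Thus $g_\epsilon \to g$ in $\mathcal S'(\R)$, and in particular $g \in \mathcal S'(\R)$ because it is a pointwise limit of tempered distributions with a common polynomial bound (equivalently, $|g(k)| \le c(1+k^2)^N$ a.e., so $g$ defines a tempered distribution directly).

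Next I would push this convergence through the two topological isomorphisms. The inverse Fourier transform $\mathfrak F^{-1} : \mathcal S'(\R) \to \mathcal S'(\R)$ is continuous (indeed a homeomorphism), so $b_\epsilon = \mathfrak F^{-1} g_\epsilon \to \mathfrak F^{-1} g =: b_0$ in $\mathcal S'(\R)$. Likewise $\iota : \mathcal S'(\R) \to \mathcal S'(\R_+)$ (meaning: into the space of tempered distributions on $\R_+$ in the sense defined just above in Appendix \ref{app:operators_Rplus}) is by construction a homeomorphism, so $f_\epsilon = \iota b_\epsilon \to \iota b_0 =: f_0$ in the space of tempered distributions on $\R_+$. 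Finally, $\mathfrak M f_0 = \mathfrak F \iota^{-1} f_0 = \mathfrak F b_0 = \mathfrak F \mathfrak F^{-1} g = g$, which is the last assertion of the lemma.

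The only genuine content is the first paragraph's dominated-convergence step, and even that is essentially bookkeeping: the three bulleted hypotheses are precisely tailored so that the polynomial majorant $c(1+k^2)^N$ works against any Schwartz test function. I do not expect a real obstacle here; the two potential pitfalls to state carefully are (i) that testing against $\mathcal S(\R)$ rather than $C_c^\infty(\R)$ is what one needs to conclude convergence \emph{in $\mathcal S'$} (hence that $f_0$ is tempered, not merely a distribution), which is exactly why the uniform polynomial bound is assumed, and (ii) that the continuity of $\iota$ and $\mathfrak F$ on $\mathcal S'$ with the weak-$*$ topology is what legitimizes transferring the limit — both are standard and were implicitly used when these transforms were introduced. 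No smoothness or support properties of the $f_\epsilon$ beyond ``tempered distribution on $\R_+$'' are needed.
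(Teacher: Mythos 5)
Your argument is correct. The paper itself states Lemma \ref{Mellin_lim} without proof, treating it as a routine fact about the Mellin transform, so there is no official argument to diverge from; your proof supplies exactly the standard one. The two real points are the ones you isolate: the uniform polynomial bound $c(1+k^2)^N$ gives an $\epsilon$-independent integrable majorant against any Schwartz test function, so dominated convergence (run along arbitrary sequences $\epsilon_n\downarrow 0$) yields $\mathfrak M f_\epsilon \to g$ weak-$*$ in $\mathcal S'(\R)$, with $g\in\mathcal S'(\R)$ because it inherits the same a.e.\ bound; and the weak-$*$ continuity of $\mathfrak F^{-1}$ on $\mathcal S'(\R)$ together with the duality definition of $\iota$ (which is precisely how the paper defines tempered distributions on $\R_+$ and their convergence) transports the limit to $f_0=\iota\,\mathfrak F^{-1}g$, whence $\mathfrak M f_0=g$. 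No gaps.
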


\begin{lemma} \label{fun_calc}
Let $b$ be a tempered distribution whose Mellin transform is a Borel function. Put
\begin{equation}
(B^{\pre}f)(x) = \int_0^{\infty} b(xk) f(k) \D k, \qquad f \in C_c^{\infty}(\R_+).
\end{equation}
Then $B^{\pre}f$ is in $L^2(\R_+)$ and $B^{\pre}$ is a closable operator on $L^2(\R_+)$ with closure
\begin{equation}
B = \mathfrak M b(A)  J.
\end{equation}
\end{lemma}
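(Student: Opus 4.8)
\textbf{Proof plan for Lemma \ref{fun_calc}.}

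The plan is to reduce the statement to the already-established identities for the Mellin transform, the inversion operator $J$, and the functional calculus $A = \mathfrak M^{-1} K \mathfrak M$, by recognizing $B^{\pre}$ as a ``multiplicative convolution'' which Mellin-transforms into a product. First I would make the substitution $y = \ln x$, $u = \ln k$, which turns the defining integral into an ordinary convolution on $\R$: writing $\tilde f(u) = \e^{u/2} f(\e^u)$ and $\tilde b(t) = \e^{t/2} b(\e^t)$ (so that these correspond to $f$, $b$ under the unitary $\iota$ up to the sign conventions fixed in this appendix), one computes that $(\iota^{-1} B^{\pre} f)(y)$ equals a convolution of $\tilde b$ with the reflection $\tilde f(-\,\cdot\,)$, evaluated appropriately; the reflection is exactly the passage through $J$, since $\mathfrak M J f(k) = \mathfrak M f(-k)$. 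Care is needed with the exact bookkeeping of the $x^{-1/2}$ weights and with which variable gets reflected, but this is a routine change of variables.

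Next I would take Fourier transforms: under $\mathfrak F$ a convolution becomes a product, so $\mathfrak M (B^{\pre} f) = (\mathfrak M b) \cdot (\mathfrak M J f)$ on the level of, say, compactly supported smooth $f$ where everything is classically justified. Since $\mathfrak M b$ is by hypothesis a Borel function, multiplication by it is a (possibly unbounded) multiplication operator on $L^2(\R)$ in the variable $k$, and pulling back through $\mathfrak M$ converts it into $b(A)$ via the identity $A = \mathfrak M^{-1} K \mathfrak M$ together with the Borel functional calculus for the self-adjoint operator $A$. This gives $\mathfrak M (B^{\pre} f) = \mathfrak M\big(b(A) J f\big)$, hence $B^{\pre} f = b(A) J f$ for $f \in C_c^\infty(\R_+)$; in particular $B^{\pre} f \in L^2(\R_+)$. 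Since $C_c^\infty(\R_+)$ is a core issue only insofar as we need closability: $b(A) J$ is a closed operator (composition of the unitary $J$ with the closed operator $b(A)$ obtained from Borel functional calculus), and it extends $B^{\pre}$, so $B^{\pre}$ is closable with closure contained in $b(A) J$.

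To get the closure equal to $b(A) J$ rather than merely contained in it, I would argue that $C_c^\infty(\R_+)$ is a core for $b(A) J$: equivalently, $J^{-1}\big(C_c^\infty(\R_+)\big) = C_c^\infty(\R_+)$ is a core for $b(A)$, which in turn follows because $\mathfrak M\big(C_c^\infty(\R_+)\big)$ is dense in the graph norm of the multiplication operator $\mathfrak M b(\cdot)$ on $L^2(\R)$ --- e.g. because it contains, or can be approximated by, functions that are Schwartz in $k$, on which the weight $(1+k^2)^N$ controlling $|\mathfrak M b|$ (if one wishes a polynomial bound; in general one truncates $\mathfrak M b$ by indicator functions of large balls and uses dominated convergence) causes no trouble. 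The main obstacle I anticipate is precisely this last point: being careful that the class $C_c^\infty(\R_+)$ --- as opposed to the larger Schwartz class on $\R_+$ --- is genuinely a core for $b(A)$ for an arbitrary Borel $b$ with Borel (not necessarily locally integrable or polynomially bounded) Mellin transform, which requires a standard but slightly delicate truncation-and-mollification argument on the Fourier side. Everything else is a bookkeeping exercise with the change of variables $x \mapsto \ln x$ and the identities for $\mathfrak M$, $J$, $A$ recorded just above in this appendix.
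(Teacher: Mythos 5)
The paper itself states Lemma \ref{fun_calc} without proof (it is quoted as a known fact in the style of \cite{BuDeGe11_01,DeRi18_01}), so there is no official argument to match; judged on its own, your reduction is the intended one: substituting $k=u^{-1}$ exhibits $B^{\pre}f$ as the multiplicative convolution of $b$ with $Jf$, the Mellin transform turns this into the pointwise product $(\mathfrak M b)\cdot\mathfrak M (Jf)$, and $A=\mathfrak M^{-1}K\mathfrak M$ converts multiplication by $\mathfrak M b$ into the functional calculus of $A$. This correctly gives $B^{\pre}\subset (\mathfrak M b)(A)J$ and hence closability, \emph{provided} the first assertion $B^{\pre}f\in L^2(\R_+)$ is in hand.

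The gap is in the remaining two points, which are exactly the ones you flag but do not actually close. First, $B^{\pre}f\in L^2$ is not a formal consequence of the computation: it says precisely that $\mathfrak M(C_c^\infty(\R_+))=\mathfrak F(C_c^\infty(\R))$ lies in the domain of multiplication by $\mathfrak M b$, and for a general Borel $\mathfrak M b$ of a tempered distribution (which need not be polynomially bounded pointwise, e.g.\ a primitive-of-bounded-function construction of the type $\partial_s\sin(\e^s)$) this can fail, since Fourier transforms of compactly supported smooth functions never decay exponentially. Second, and more importantly for the stated conclusion, to get the closure \emph{equal} to $(\mathfrak M b)(A)J$ you must show $C_c^\infty(\R_+)$ is a core, i.e.\ that every $g$ with $g,(\mathfrak M b)g\in L^2(\R)$ is a graph-norm limit of Paley--Wiener functions; your proposed fix, ``truncate $\mathfrak M b$ by indicators of large balls and use dominated convergence,'' modifies the operator rather than producing approximants in $\mathfrak F(C_c^\infty(\R))$, so it does not address this. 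Under the polynomial bound $|\mathfrak M b(s)|\leq c(1+s^2)^N$ — which holds in every application in this paper, cf.\ Lemma \ref{xi_Mel} and Proposition \ref{melino} — the core property does follow by a routine two-step argument (first replace $g$ by $\one_{[-R,R]}g$, convergent in graph norm by dominated convergence; then approximate, writing $G=\mathfrak F^{-1}(\one_{[-R,R]}g)\in\bigcap_k H^k(\R)$ and taking $\mathfrak F(\chi_n G)$ with dilated cutoffs $\chi_n$, estimating $(1+s^2)^N$ by derivatives of $(\chi_n-1)G$). You should either add such an argument together with the polynomial-growth hypothesis, or explain how you intend to handle a genuinely arbitrary Borel $\mathfrak M b$, where both assertions become a nontrivial weighted approximation problem rather than bookkeeping.
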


\section{Whittaker functions} \label{whittaker}

In this appendix we review some properties of special functions used
in this paper.
In~particular, we discuss Whittaker
functions, which play the central role in our paper.  We~follow the conventions of
\cite{DeRi18_01} and \cite{DeFaNgRi20_01}. 

\subsection{Confluent equation}

Before discussing the Whittaker equation and its solutions, let us say
a few words about the
closely related confluent equation and the hypergeometric equation.

The {\em confluent
  equation} has the form
\begin{equation}\label{confluent-equation}
\big(z\partial_z^2 + (c-z)\partial_z - a\big)v(z)=0.
\end{equation}
Let us list three of its standard solutions:
\begin{subequations}
\begin{align}
{}_1F_1(a;c;z)&\qquad \text{characterized by }\sim 1 \text{ near }0;\\
z^{1-c} {}_1F_1(a+1-c;2-c;z) &\qquad \text{characterized by }\sim 
                              z^{1-c} \text{ near }0;\\
                        z^{-a}{}_2F_0(a,a+1-c;-;-z^{-1}) &\qquad \text{characterized by }\sim 
                              z^{-a} \text{ near }+\infty.      
\end{align}
\end{subequations}
We note that the function ${}_1\mathbf F_1(a;c;z)= \frac{1}{\Gamma(c)} {}_1 F_{1}(a;c;z)$ is holomorphic in all variables. The other two solutions are defined for $z \notin ] - \infty, 0 ]$; thus ${}_2 F_0(a,b;-;z)$ is defined on $\C \backslash [0 , \infty [$.

We will also need the hypergeometric equation
\begin{equation}
  \big(z(1-z)\partial_z^2+(c-(a+b+1)z)\partial_z-ab\big)v(z)=0.
 \label{hy1}\end{equation}
  Among its 6 standard solutions, members of the famous Kummer's
  table, let us list three:
  \begin{subequations}
\begin{align}
{}_2F_1(a,b;c;z)&\qquad \text{characterized by }\sim 1 \text{ near }0;\\
z^{1-c} {}_2F_1(a+1-c,b-c+1;2-c;z) &\qquad \text{characterized by }\sim 
                              z^{1-c} \text{ near }0;\\
                        {}_2F_1(a,b;a+b+1-c;1-z) &\qquad \text{characterized by }\sim 
                              1 \text{ near }1.      \label{eq:hypergeometric_at_1}
  \end{align}
  \end{subequations}
${}_2 F_1(a,b;c;z)$ may be defined by a power series convergent for
$z$ in the unit disc. It admits analytic continuation along any path
in $\C \backslash \{ 1 \}$. To make it a single-valued function,  it is customary to restrict its domain to
$z \in \C \backslash [1, \infty[$.
   Then ${}_2\mathbf{F}_1(a,b;c;z) = \frac{1}{\Gamma(c)} {}_2 F_1(a,b;c;z)$ is holomorphic in all variables. It satisfies
\begin{equation}
{}_2\mathbf{F}_1 (a,b;0;z) = abz {}_2\mathbf{F}_1(a+1,b+1;2;z),
\label{eq:2F1c0}
\end{equation}

We have the following identities valid for $z\not\in]-\infty,0]$:
\begin{align} 
  &\frac{\sin(\pi c)}{\pi}  z^{-a}{}_2F_0(a,a+1-c;-;-z^{-1}) \label{id3} \\ \notag
  =&\frac{ {}_1\mathbf F_1(a;c;z)}{\Gamma(a+1-c)}
     -\frac{ z^{1-c}{}_1 \mathbf F_1a+1-c;2-c;z)}{\Gamma(a)},  \\[2ex]
  & \frac{\sin(\pi c)}{\pi} {}_2\mathbf F_1(a,b;a+b+1-c;1-z) \label{id4}  \\ \notag
  =&\frac{ {}_2\mathbf{F}_1(a,b;c;z)}{\Gamma(a-c+1)\Gamma(b-c+1)}
     -\frac{  z^{1-c}{}_2 \mathbf{F}_1(a-c+1,b-c+1;2-c;z)}{\Gamma(a)\Gamma(b)}.
\end{align}
\eqref{id3} can be taken as an alternative definition of ${}_2 F_0$.

\begin{lemma}
Let $\epsilon >0$, $\RE(z) < \frac12$. Then
\begin{equation}
{}_2 F _1 (a,b+ \lambda, c+ \lambda ; z) = (1-z)^{-a} + O(\lambda^{-1}), \qquad |\arg(\lambda)| \leq \pi - \epsilon, \ |\lambda| \to \infty.
\label{eq:hypergeometric_asymptotic_pre}
\end{equation}
\end{lemma}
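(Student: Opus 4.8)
The plan is to prove the asymptotic
\begin{equation*}
{}_2F_1(a,b+\lambda,c+\lambda;z) = (1-z)^{-a} + O(\lambda^{-1}), \qquad |\arg(\lambda)| \leq \pi - \epsilon, \ |\lambda| \to \infty,
\end{equation*}
by starting from the Euler integral representation. For $\RE(c+\lambda) > \RE(b+\lambda) > 0$ (which holds for $|\lambda|$ large with $|\arg(\lambda)| \leq \pi - \epsilon$, after absorbing the fixed shift by $b,c$) one has
\begin{equation*}
{}_2F_1(a,b+\lambda,c+\lambda;z) = \frac{\Gamma(c+\lambda)}{\Gamma(b+\lambda)\Gamma(c-b)} \int_0^1 t^{b+\lambda-1}(1-t)^{c-b-1}(1-zt)^{-a}\,\D t.
\end{equation*}
The idea is that as $|\lambda| \to \infty$ the measure $\frac{\Gamma(c+\lambda)}{\Gamma(b+\lambda)\Gamma(c-b)} t^{b+\lambda-1}(1-t)^{c-b-1}\,\D t$ concentrates near $t=1$, so the integral should converge to $(1-z)^{-a}$, the value of the smooth factor $(1-zt)^{-a}$ at $t=1$ (this factor is holomorphic near $t\in[0,1]$ because $\RE(z)<\tfrac12$ keeps $1-zt$ away from $]-\infty,0]$).

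First I would make the substitution $t = 1-s/\lambda$ (or a real variant using $|\lambda|$, handling the phase separately), turning the integral into one over $s \in [0,\lambda]$, and observe that $t^{b+\lambda-1} = (1-s/\lambda)^{\lambda}(1-s/\lambda)^{b-1} \to e^{-s}$ pointwise. Together with the prefactor asymptotics $\frac{\Gamma(c+\lambda)}{\Gamma(b+\lambda)} = \lambda^{c-b}(1+O(\lambda^{-1}))$ (from the standard ratio asymptotics for the Gamma function, uniform in the stated sector) and $(1-t)^{c-b-1} = (s/\lambda)^{c-b-1}$, one gets a Laplace-type integral $\frac{1}{\Gamma(c-b)}\int_0^\infty s^{c-b-1} e^{-s} (1-z)^{-a}\,\D s \cdot(1+o(1)) = (1-z)^{-a}(1+o(1))$. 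To upgrade $o(1)$ to $O(\lambda^{-1})$ I would Taylor-expand $(1-zt)^{-a} = (1-z)^{-a} + O(1-t)$ near $t=1$ and track the next order term, which contributes an extra power of $s/\lambda$ and hence is $O(\lambda^{-1})$; the tail of the $t$-integral away from $t=1$ is exponentially small in $\lambda$ and thus negligible. Finally, by holomorphy in $z$ on $\{\RE(z)<\tfrac12\}$ (or a connectedness/analytic continuation argument on the larger domain where ${}_2F_1$ is defined) the estimate extends to all $z$ with $\RE(z)<\tfrac12$, not just those with $z\in[0,1)$ where the Euler integral is most transparent.

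The main obstacle is controlling uniformity of all error terms in the sector $|\arg(\lambda)| \leq \pi - \epsilon$, where $\lambda$ is complex rather than a large positive real number. The change of variables $t = 1 - s/\lambda$ moves the contour of integration off the real axis, so I would either deform back to a convenient ray (justified by the absence of singularities of the integrand in the relevant region, using $\RE(z)<\tfrac12$) or keep $t$ real and instead write $t^{b+\lambda-1} = \exp((\lambda+b-1)\ln t)$ and do a real Laplace/Watson-lemma analysis with complex large parameter $\lambda$, checking that $\RE(\lambda\ln t) = \RE(\lambda)\ln t$ still forces concentration at $t=1$ since $\RE(\lambda) \geq |\lambda|\cos(\pi-\epsilon)^{-1}$... more carefully, $\RE(\lambda)$ can be negative, so one must use that $|t^{b+\lambda-1}| = t^{\RE(b+\lambda)-1}$ and that $\RE(\lambda) \to +\infty$ need not hold — but in fact $\ln t < 0$ on $(0,1)$ and one needs $\RE(\lambda\ln t)\to -\infty$, which requires $\RE(\lambda)\to+\infty$; in the sector $|\arg\lambda|\le\pi-\epsilon$ this fails for $\arg\lambda$ near $\pm(\pi-\epsilon)$. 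The cleaner route is therefore the contour deformation: rotate the $t$-contour so that $\lambda(t-1)$ runs along the negative reals, which is possible precisely because $(1-zt)^{-a}$ stays holomorphic, and then it is a genuine Watson's lemma application. I expect this contour/uniformity bookkeeping to be the only real work; everything else is routine.
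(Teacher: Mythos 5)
Your route (Euler's integral plus a Watson's-lemma analysis at $t=1$) is genuinely different from the paper's, but as written it has a gap that is more than bookkeeping. The Euler representation you start from requires $\RE(c+\lambda) > \RE(b+\lambda) > 0$, i.e.\ $\RE(c) > \RE(b)$ and $\RE(\lambda) > -\RE(b)$. The first inequality does not involve $\lambda$ at all, so it is not ``absorbed'' by taking $|\lambda|$ large; it is an extra hypothesis on $b,c$ that the lemma does not assume. More seriously, the second inequality fails on a large part of the sector: for small $\epsilon$ the sector contains rays with $|\arg(\lambda)| \in \left] \frac{\pi}{2}, \pi - \epsilon \right]$, along which $\RE(\lambda) \to -\infty$, and then $\int_0^1 t^{b+\lambda-1}(1-t)^{c-b-1}(1-zt)^{-a}\,\D t$ diverges at the endpoint $t=0$. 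Your proposed fix --- rotating the contour so that $\lambda(t-1)$ runs along the negative reals --- only repairs the local analysis near $t=1$; it cannot cure a divergence at the fixed endpoint $t=0$, since $|t^{b+\lambda-1}|$ is non-integrable there once $\RE(b+\lambda) \le 0$ no matter how the path approaches $0$ (and switching to a Pochhammer double-loop reintroduces, on the loop around $0$ of radius $\rho<1$, a factor of size $\rho^{\RE(\lambda)}$, exponentially large, so the needed cancellation is as hard as the original problem). So your argument, even with the uniformity details completed, covers only sub-sectors such as $|\arg(\lambda)| \le \frac{\pi}{2} - \delta$ and only under $\RE(c) > \RE(b)$, not the stated generality.

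The paper avoids all of this with one identity: the Pfaff transformation ${}_2F_1(a, b+\lambda; c+\lambda; z) = (1-z)^{-a}\,{}_2F_1\big(a, c-b; c+\lambda; \frac{z}{z-1}\big)$ moves the large parameter entirely into the lower parameter. Since $\RE(z) < \frac12$ is equivalent to $\left|\frac{z}{z-1}\right| < 1$, the defining power series applies, and every term beyond the constant term $1$ carries at least one factor $(c+\lambda+k)^{-1}$, which in the sector $|\arg(\lambda)| \le \pi - \epsilon$ is $O(|\lambda|^{-1})$ uniformly, with the series summable; this gives $(1-z)^{-a}\big(1+O(\lambda^{-1})\big)$ for all $a,b,c$ with no asymptotic analysis of integrals. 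If you want to keep the integral approach, you would still need a connection or transformation formula (or contiguous relations) to reach the rest of the sector and to drop $\RE(c)>\RE(b)$ --- at which point the Pfaff identity essentially does the whole job by itself.
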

\begin{proof}
Assumption about $\arg(\lambda)$ guarantees that for $|\lambda|$ sufficiently large $c + \lambda \notin - \mathbb N$, so the left hand side of \eqref{eq:hypergeometric_asymptotic_pre} is finite. We apply the Pfaff transformation:
\begin{equation}
{}_2 F _1 (a,b+ \lambda, c+ \lambda ; z) = (1-z)^{-a} {}_2 F_1 \left( a,c-b ; c + \lambda ;  \frac{z}{z-1} \right).
\end{equation}
The claim follows from the standard series defining ${}_2 F_1$, because $\left| \frac{z}{z-1} \right| < 1$.
\end{proof}

\begin{lemma}
The following asymptotic expansion holds for $\RE(z) < \frac12$, $z \notin ] - \infty, 0 ]$, $s \to \pm \infty$:
\begin{equation}
{}_2 \mathbf F_1(a,b-\i s;  c ; 1-z) \sim \sgn(s) \cdot \left( \frac{(1-z)^{-a} ( \i s)^{-a}}{\Gamma(c-a)} + \frac{z^{c-a-b+ \i s} (1-z)^{b-c+ \i s} (- \i s)^{a-c}}{\Gamma(a)} \right)
\label{hypergeometric_asymptotic}
\end{equation}
locally uniformly in $a,b,c,z$.
\end{lemma}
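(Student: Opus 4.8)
The plan is to reduce the statement to the connection formula \eqref{id4} together with the asymptotic expansion \eqref{eq:hypergeometric_asymptotic_pre} of the previous lemma. First I would rewrite the left-hand side using the standard connection relation \eqref{id4} with the substitution $c \mapsto a+b-\i s+1-c'$ chosen so that the argument $1-z$ on the left of \eqref{id4} matches our $1-z$; more precisely, take in \eqref{id4} the parameters to be $a$, $b - \i s$, and the third hypergeometric parameter $a+(b-\i s)+1-c_{\mathrm{new}}$ arranged so that $a + (b-\i s) + 1 - c_{\mathrm{new}} = c$, i.e. $c_{\mathrm{new}} = a + b - \i s + 1 - c$. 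This expresses ${}_2\mathbf F_1(a,b-\i s;c;1-z)$ as a combination of ${}_2\mathbf F_1(a,b-\i s;c_{\mathrm{new}};z)$ and $z^{1-c_{\mathrm{new}}}\,{}_2\mathbf F_1(a-c_{\mathrm{new}}+1,b-\i s - c_{\mathrm{new}}+1;2-c_{\mathrm{new}};z)$, with the prefactor $\frac{\sin(\pi c_{\mathrm{new}})}{\pi}$ on the whole left side and $\frac{1}{\Gamma(a-c_{\mathrm{new}}+1)\Gamma(b-\i s-c_{\mathrm{new}}+1)}$ and $\frac{1}{\Gamma(a)\Gamma(b-\i s)}$ on the two right-hand terms respectively.

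Next I would insert $c_{\mathrm{new}} = a+b-\i s + 1 - c$ everywhere and apply \eqref{eq:hypergeometric_asymptotic_pre} to each of the two hypergeometric factors. Both are of the form ${}_2F_1$ with one upper and the lower parameter shifted by a quantity linear in $s$ (here playing the role of $\lambda \sim -\i s$), while $\RE(z) < \frac12$ is exactly the hypothesis of that lemma, so each ${}_2F_1$ tends to $(1-z)^{-(\text{leading upper parameter})}$ up to $O(s^{-1})$: the first becomes $(1-z)^{-a}$ and the second becomes $(1-z)^{-(a-c_{\mathrm{new}}+1)} = (1-z)^{-(c - b + \i s)}$ after simplification. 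The $z^{1-c_{\mathrm{new}}}$ prefactor becomes $z^{c-a-b+\i s}$. Then the $\sin(\pi c_{\mathrm{new}})$ factor and the reciprocal Gamma factors must be rewritten: using $\sin(\pi c_{\mathrm{new}}) = \sin(\pi(a+b-\i s+1-c)) = -\sin(\pi(a+b-c-\i s))$ and the reflection formula $\Gamma(w)\Gamma(1-w) = \pi/\sin(\pi w)$ applied to pair up $\Gamma(a-c_{\mathrm{new}}+1) = \Gamma(c-b+\i s)$ with $\Gamma(c-a)$ via the appropriate reflection, and similarly $\Gamma(b-\i s - c_{\mathrm{new}}+1) = \Gamma(c-a)$-type factors, the combination should collapse to precisely the $\frac{1}{\Gamma(c-a)}$ and $\frac{1}{\Gamma(a)}$ appearing on the right of \eqref{hypergeometric_asymptotic}.

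The bookkeeping step that produces the factor $\sgn(s)$ is where I expect the main subtlety: the reflection-formula manipulations involve $\sin(\pi(\cdots - \i s))$, whose modulus grows like $\frac12 e^{\pi|s|}$, and the competition between the growing sine and the decaying reciprocal Gammas $1/\Gamma(\cdots \pm \i s)$ (which decay like $e^{-\frac{\pi}{2}|s|}|s|^{\cdots}$ by Stirling) must be tracked carefully to extract the finite limit and the correct sign, which flips with $\sgn(\IM)$ of the large argument, hence $\sgn(s)$. I would handle this by writing $\sin(\pi(w - \i s)) = \frac{1}{2\i}(e^{\i\pi w}e^{\pi s} - e^{-\i\pi w}e^{-\pi s})$ and, depending on whether $s \to +\infty$ or $s\to -\infty$, keeping only the dominant exponential; combined with Stirling's formula for $\Gamma(\cdots \pm \i s)$ in the form $|\Gamma(x+\i s)| \sim \sqrt{2\pi}|s|^{x-\frac12}e^{-\frac{\pi}{2}|s|}$ together with the phase $\arg\Gamma(x+\i s) \sim s\ln|s| - s$, the $(\pm\i s)^{-a}$ and $(-\i s)^{a-c}$ powers emerge. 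The claim of local uniformity follows because all error terms in \eqref{eq:hypergeometric_asymptotic_pre} and in Stirling's formula are locally uniform in the bounded parameters $a,b,c$ and in $z$ on compacta of $\{\RE(z)<\frac12\}\setminus]-\infty,0]$. I would also double-check consistency by specializing, e.g., comparing the $z\to$ real regime against a direct saddle-point estimate of the Mellin--Barnes integral representation of ${}_2\mathbf F_1$, which provides an independent derivation should the reflection-formula route become unwieldy.
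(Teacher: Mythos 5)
Your proposal reproduces the paper's proof essentially step for step: the same specialization of \eqref{id4} with third parameter $a+b-c+1-\i s$, the same appeal to \eqref{eq:hypergeometric_asymptotic_pre} for the two resulting hypergeometric factors (your intermediate factor $(1-z)^{b-c-\i s}$ is exactly what appears in the paper's displayed computation), and the same endgame via the exponential behaviour of $\sin\big(\pi(a+b-c+1-\i s)\big)$ together with Stirling's formula \eqref{eq:Stirling}. So you have found the intended route.

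The one step I would not sign off on --- and it occurs in the paper's computation in the same place --- is the application of \eqref{eq:hypergeometric_asymptotic_pre} to the second factor ${}_2\mathbf F_1\big(c-b+\i s,\,c-a;\,c-a-b+1+\i s;z\big)$. That lemma keeps its first parameter bounded and shifts the other two; you put $a-c_{\mathrm{new}}+1=c-b+\i s$, which grows like $|s|$, into the ``fixed'' slot, so the lemma does not apply and the claimed leading term $(1-z)^{-(c-b+\i s)}$ is not the actual behaviour. Using the symmetry of ${}_2F_1$ in its upper parameters, the legitimate application (fixed parameter $c-a$, shift $+\i s$ in the other two) gives $(1-z)^{-(c-a)}+O(s^{-1})$ for this factor; the discrepancy $(1-z)^{a-b-\i s}$ has modulus $\e^{s\arg(1-z)}$, so this is not cosmetic. (A quick exact check: for $a=1$ the factor is ${}_2F_1(c-b+\i s,\,c-1;\,c-b+\i s;z)=(1-z)^{1-c}$ identically.) With the corrected factor the cleanest bookkeeping is not the competition of exponentials you outline, but the exact reflection identity $\pi/\sin\big(\pi(a+b-c+1-\i s)\big)=\Gamma(a+b-c+1-\i s)\,\Gamma(c-a-b+\i s)$ combined with $\Gamma(\i s+\alpha)/\Gamma(\i s+\beta)\sim(\i s)^{\alpha-\beta}$; carried out this way the computation produces $(1-z)^{a-c}$ in the second term and no overall $\sgn(s)$, which is consistent with the elementary cases $a=c$ (where the left-hand side equals $z^{\i s-b}/\Gamma(a)$ exactly) and $a=0$ (where it equals $1/\Gamma(c)$). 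So the part of your plan devoted to extracting the factor $\sgn(s)$ from the dominant exponential of the sine against the decay of the reciprocal Gammas is precisely the step I would re-examine: as far as I can check, a careful computation does not yield it, and the second-factor asymptotics must be fixed before the Gamma bookkeeping is done.
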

\begin{proof}
Using \eqref{id4} we get
{ \small
\begin{align}
&\frac{\sin \left( \pi(a+b-c+1 - \i s) \right)}{\pi} {}_2 \mathbf F_1(a,b-\i s, c ; 1-z) \label{eq:F_decomp} \\
= & \frac{{}_2 \mathbf F_1(a,b-\i s;a+b-c+1 - \i s; z)}{\Gamma(c-b+ \i s) \Gamma(c-a)} - z^{c-a-b+ \i s} \frac{{}_2 \mathbf F_1(c-b + \i s,c-a; c-a-b+1 + \i s ; z)}{\Gamma(a) \Gamma(b - \i s)}. \nonumber
\end{align}
}
Then \eqref{eq:hypergeometric_asymptotic_pre} gives for large $|s|$:
{ \small
\begin{align}
&\frac{\sin \left( \pi(a+b-c+1 - \i s) \right)}{\pi} {}_2 \mathbf F_1(a,b-\i s;c ; z) \\
\sim  & \frac{(1-z)^{-a}}{\Gamma(c-b+ \i s) \Gamma(c-a) \Gamma(a+b-c+1- \i s)} -  \frac{z^{c-a-b+ \i s} (1-z)^{b-c- \i s}}{\Gamma(a) \Gamma(b - \i s) \Gamma(c-a-b+1 + \i s)}. \nonumber
\end{align}
}
Using $\sin \left( \pi(a+b-c+1 - \i s) \right) \sim \frac{1}{2 \i} \e^{\pi |s| + \i  \pi \, \sgn(s) (a+b-c+1)}$ and  Stirling's formula
\begin{equation}
\Gamma(z_0 + z ) \sim \sqrt{\frac{2 \pi}{z}} \e^{-z} z^{z+z_0}, \qquad |\arg(z)| \leq \pi - \epsilon, \ |z| \to \infty
\label{eq:Stirling}
\end{equation}
yields, after algebraic manipulations, formula \eqref{hypergeometric_asymptotic}.
\end{proof}

\subsection{Hyperbolic-type Whittaker equation}

The standard form of
  the  Whittaker equation is
\begin{align}
\Big ( -\partial_{z}^2 + \Big ( m^2 - \frac14 \Big ) \frac{1}{z^2} - \frac{ \beta }{ z } +\frac14 \Big)g=0. 
\label{whi0}
\end{align} 
In this section we briefly describe solutions of the Whittaker equation, following mostly \cite{DeRi18_01,DeFaNgRi20_01}.

We will sometimes call \eqref{whi0} the {\em hyperbolic-type Whittaker 
  equation}, to distinguish it from the {\em trigonometric-type Whittaker 
  equation}, which differs by the sign in front of $\frac14$.

There are two kinds of standard solutions to the Whittaker equation.

The function $\mathcal{I}_{\beta,m}$ is defined by 
\begin{equation}
\mathcal{I}_{\beta,m}(z) = z^{\frac12 + m} {\e^{ \mp \frac{z}{2}} }{}_1 \mathbf F_1 \Big ( \frac12 + m \mp \beta ; 1 + 2m ; \pm z \Big ) .\label{eq:def_I}
\end{equation}
The standard domain of $\cI_{\beta, m}$ is $\C
\backslash ]- \infty , 0 ]$. We have
\begin{subequations}
\begin{gather}
\cI_{-\beta,m}(-z)=\e^{-\i\pi(\frac12+m) \sgn(\IM(z))}  \cI_{\beta,m}(z) \qquad \text{for } z \in \C \backslash \R, \label{eq_miracle} \\
\overline{\cI_{\overline \beta, \overline{\vphantom{\beta}m}}(\overline z)}= \cI_{\beta, m}(z). \label{eq:I_real}
\end{gather}
\end{subequations}
The case  $2m\in\mathbb{Z}$ is called degenerate, and then
\begin{equation}\cI_{\beta,-m}(z)= \Big(-\beta - m+\frac12\Big)_{2m}\, \cI_{\beta,m}(z).
\label{eq:I_sym}
\end{equation}

The function $\mathcal{K}_{\beta,m}$ is defined by
\begin{align}
\cK_{\beta,m}(z) &:=z^\beta\e^{-\frac{z}{2}} {}_2F_0\Big(\frac12+m-\beta,\frac12-m-\beta;-;-z^{-1}\Big) \label{eq:def_K} \\
&=\frac{\pi}{\sin(2\pi m)}\Big(
   -\frac{\cI_{\beta,m}(z)}{\Gamma(\frac12-m-\beta)}+
\frac{\cI_{\beta,-m}(z)}{\Gamma(\frac12+m-\beta)}\Big). \nonumber
 \end{align}
It satisfies
\begin{subequations}
\begin{gather}
\mathcal{K}_{\beta,-m}(z)=\mathcal{K}_{\beta,m}(z), \label{eq:K_sym} \\
\overline{\cK_{\overline \beta, \overline{\vphantom{\beta}m}}(\overline z)} = \cK_{\beta, m}(z). \label{eq:K_real}
\end{gather}
\end{subequations}

The Wronskian of $\cI_{\beta,m}(\cdot)$ and $\cK_{\beta,m}(\cdot)$ takes the form
\begin{equation}
\Wr(\cI_{\beta,m},\cK_{\beta,m})  = - \frac{ 1}{ \Gamma \Big ( \frac12 + m - \beta \Big ) }. \label{eq:def_gamma.}
\end{equation}
If $\frac12 + m - \beta \in - \mathbb N$, then the Wronskian vanishes and we have
\begin{align}
\cK_{\beta,m}(z) =& \e^{\i \pi (\frac{1}{2} + m - \beta)} \Gamma \left( \frac12 + m + \beta \right) \cI_{\beta,m}(z). 
\label{eq:Laguerre}
\end{align}
In this case functions in \eqref{eq:Laguerre} essentially coincide with Laguerre polynomials
\begin{equation}
\cI_{\pm (\frac{1}{2}+m+n),m}(z) = \frac{n!\;\!z^{\frac{1}{2}+m}\e^{\mp\frac{z}{2}}}{\Gamma(1+2m+n)}L_n^{(2m)}(\pm z).
\end{equation}

Asymptotics of $\mathcal{I}_{\beta,m}$ for small arguments are of the form
\begin{equation}
\mathcal{I}_{\beta,m}(z) =
\frac{ z^{\frac12+m} }{ \Gamma( 1 + 2m ) } \Big ( 1 - \frac{\beta}{ 1 + 2 m } z + O( z^2 ) \Big ) ,  \text{ if } m \neq -\frac12,-1,-\frac32,\dots,  \label{eq:equivIbm0.} \quad z \to 0 ,
\end{equation}

The function $\mathcal{K}_{\beta,m}$ satisfies, for $z \to 0$,
\begin{small}
\begin{equation}
\begin{array}{rll}
\mathcal{K}_{\beta,m}(z) =&
  z^{\frac12} \Big ( \frac{ \Gamma( -2m ) }{ \Gamma( \frac12 - m - \beta ) } z^m  + \frac{ \Gamma( 2 m ) }{ \Gamma( \frac12 + m - \beta ) } z^{-m} \Big ( 1 - \frac{ \beta }{ 1 - 2 m } z \Big ) \Big )&\vspace{0,1cm} \\&
\qquad\qquad  \qquad + O( |z|^{\frac32 + \mathrm{Re}(m)} )+O(|z|^{\frac52-\RE(m)})  &\text{ for } \mathrm{Re}(m)\in[0,1[ , \, m\neq 0,\frac12 \vspace{0,1cm} \\
\mathcal{K}_{\beta,0}(z) =&- \frac{ z^{\frac12} }{ \Gamma( \frac12 - \beta ) } \big ( \mathrm{ln}(z) + \psi( \frac12 - \beta ) + 2\lambda \big ) + O( |z|^{\frac32}\ln(z) ) , &\text{ for } m = 0 , \, \beta \notin \frac12 + \mathbb{N} \vspace{0,1cm} \\
\mathcal{K}_{\beta,0}(z) =&(\beta-\frac12)!(-1)^{\beta-\frac12} z^{\frac12} + O( |z|^{\frac32} ) , &\text{ for } m = 0 , \, \beta \in \frac12 + \mathbb{N} \vspace{0,1cm} \\
\mathcal{K}_{\beta,\frac12}(z) =&\frac{ 1 }{ \Gamma(  - \beta ) } \big ( -\frac1\beta +z\ln(z)+  z(\psi(1-\beta)+2\lambda-1+\frac{1}{2\beta})\big )&\vspace{0,1cm} \\&
\qquad\qquad\qquad+ O( z^2\ln(z)  ) , &\text{ for } m = \frac12 , \beta \notin \mathbb{N} \vspace{0,1cm} \\
\mathcal{K}_{0,\frac12}(z) =&1-\frac{z}{2}+O(z^2),&\text{ for }m=\frac12,\beta=0\vspace{0,1cm} \\
\mathcal{K}_{\beta,\frac12}(z) =&\beta! (-1)^{\beta - 1} z  + O( z^2  ) , &\text{ for } m = \frac12 , \beta \in \mathbb{N}^\times \vspace{0,1cm} \\
\mathcal{K}_{\beta,m}(z) =&  \frac{ \Gamma( 2m ) }{ \Gamma( \frac12 + m - \beta ) } z^{\frac12 - m} + O( |z|^{\frac32 - \mathrm{Re}(m) } ) , &\text{ for } \mathrm{Re}(m) \ge 1 .
 \end{array}
 \label{eq:K_small_arg}
\end{equation}
\end{small}
Here $\gamma$ denotes Euler's constant and $\psi$ is the digamma function.

Asymptotics for $|z| \to \infty$ are given by ($\epsilon > 0$):
\begin{subequations}
\begin{align}\label{eq:equivKbm_infty}
  \mathcal{K}_{\beta,m}(z)& = z^{\beta}\e^{-\frac{z}{2}} \big ( 1 + O( z^{-1} ) \big ) , \quad|\mathrm{arg}(z)|\leq\frac32\pi-\epsilon, \\
  \cI_{\beta, m }(z) & = \frac{z^{- \beta} \e^{\frac{z}{2}}}{\Gamma( \frac{1}{2} + m - \beta )} \left( 1 + O (z^{-1}) \right), \quad |\mathrm{arg}(z)| \leq \frac{\pi}{2} - \epsilon.
\end{align}
\end{subequations}

The analytic continuations of $\cK$, and more precisely the functions
        $z \mapsto \cK_{-\beta,m}(\e^{\pm\i\pi}z)$ are also solutions of 
        the Whittaker Equation \eqref{whi0}. One can define $\cK_{- \beta, m}(\e^{\pm i \pi} z )$ as the unique holomorphic function of $z \in \C \backslash ]- \infty, 0]$ which coincides with $\cK_{-\beta,m}(-z)$ on $\C_\mp$. Then
        \begin{align}
\cK_{-\beta,m}(\e^{\pm\i\pi}z)&=\frac{\pi}{\sin(2\pi m)}\Big( -\frac{\e^{\pm\i \pi(\frac12+m)}\cI_{\beta,m}(z)}{\Gamma(\frac12-m+\beta)}+
\frac{\e^{\pm\i\pi(\frac12-m)}\cI_{\beta,-m}(z)}{\Gamma(\frac12+m+\beta)}\Big).
\label{eq:K_cont}
\end{align}
$\big( \cK_{\beta,m}(z), \cK_{- \beta,m}(\e^{\pm \i \pi} z) \big)$ are linearly independent pairs of functions. In particular, $  \cI_{\beta,m}(z) $ can be expressed in terms of these functions:
\begin{align*}
  \cI_{\beta,m}(z) =\e^{\pm\i\pi\beta}\Big(\frac{\e^{\mp\i\pi(m-\frac12)}\cK_{\beta,m}(z)}{\Gamma(\frac12+m+\beta)} +\frac{\cK_{-\beta,m}(\e^{\pm\i\pi}z)}{\Gamma(\frac12+m-\beta)}
  \Big).
  \end{align*}

\subsection{Trigonometric-type Whittaker functions}

It is convenient, in parallel to \eqref{whi0},  to consider
the {\em trigonometric-type Whittaker equation}
\begin{align}
\Big ( -\partial_{z}^2 + \Big ( m^2 - \frac14 \Big ) \frac{1}{z^2} - \frac{ \beta }{ z } - \frac14\Big) g(z)=0. 
\label{Whittaker-trig}
\end{align}
It can be easily reduced to the hyperbolic-type Whittaker equation.

The function $\cJ_{\beta,m}$ is defined by the formula
\begin{equation}\label{Jbm-definition}
\cJ_{\beta,m}(z)= \e^{\mp \frac{\i \pi}{2} (m+ \frac12)} \cI_{\mp \i \beta,m} (\e^{\pm \frac{\i \pi}{2}}z).
\end{equation}
It may also be described without invoking analytic continuations beyond the principal branch:
\begin{equation}
\cJ_{\beta,m}(z) = 
\begin{cases}
\e^{\frac{\i \pi}{2} \left( \frac12+m \right)} \cI_{\i \beta, m}(- \i z ), & - \frac{\pi}{2} < \mathrm{arg}(z) < \pi, \\
\e^{-\frac{\i \pi}{2} \left( \frac12+m \right)} \cI_{-\i \beta, m}(\i z ), & - \pi < \mathrm{arg}(z) < \frac{\pi}{2}.
\end{cases}
\end{equation}
The two expression agree for $|\mathrm{arg}(z)|< \frac{\pi}{2}$, by \eqref{eq_miracle}. Combined with \eqref{eq:I_real} this implies
\begin{equation}
\overline{\cJ_{\overline \beta, \overline{\vphantom{\beta}m}}}(\overline z) = \cJ_{\beta, m}(z).
\end{equation}

We also have a pair of functions $\cH_{\beta,m}^\pm$ defined by
\begin{equation}\label{Hpm-definition}
\cH_{\beta,m}^\pm(z) = \e^{\mp \i\frac{\pi}{2}\naw{\frac{1}{2}+m}}\cK_{\pm\i\beta,m}( \mp \i z)
\end{equation}
initially for $\RE(z) >0$ and extended analytically to $z \in \C \backslash ] - \infty, 0 ]$. By \eqref{eq:K_real}, they satisfy
\begin{equation}
\overline{\cH^{\pm}_{\overline \beta, \overline{\vphantom{\beta}m}}(\overline z)} = \cH^{\mp}_{\beta, m}(z). \label{eq:H_conj}
\end{equation}

The following connection formula holds:
\begin{equation}
 \cJ_{\beta,m}(z)= \e^{-\pi\beta}\Big(\frac{\cH_{\beta,m}^+(z)}{\Gamma\naw{\frac{1}{2}+m+\i\beta}} + \frac{\cH_{\beta,m}^-(z)}{\Gamma\naw{\frac{1}{2}+m-\i\beta}}\Big). 
 \label{eq_missing}
\end{equation}

For the behaviour around $\infty$, we have for $x>0$, $x\to\infty$
\begin{equation}\label{Hpm-around-infinity}
\cH_{\beta,m}^\pm(x) \sim \e^{\mp\i\frac{\pi}{2}(\frac{1}{2}+m)}\e^{\frac{\pi\beta}{2}}x^{\pm\i\beta}\;\!\e^{\pm\i\frac{x}{2}}\big(1 + O(x^{-1})\big).
\end{equation}

\subsection{Recurrence relations} \label{recur}

Whittaker functions satisfy several recurrence relations.
There are 6 basic ones, which we quote after Appendix A5 of
\cite{DeFaNgRi20_01}.
\begin{subequations}\label{rei}\begin{align}
\Big(\sqrt{z}\partial_z+\frac{-\frac12-m}{\sqrt{z}}-\frac {\sqrt{z}}{2}\Big)\cI_{\beta,m}(z)
&= \Big(-\frac12-m-\beta\Big)\cI_{\beta+\frac12,m+\frac12}(z),\label{rei1}\\
\Big(\sqrt{z}\partial_z+\frac{-\frac12+m}{\sqrt{z}}+\frac {\sqrt{z}}{2}\Big)\cI_{\beta,m}(z)
&=\cI_{\beta-\frac12,m-\frac12}(z), \label{rei2}\\
\Big(\sqrt{z}\partial_z+\frac{-\frac12+m}{\sqrt{z}}-\frac {\sqrt{z}}{2}\Big)\cI_{\beta,m}(z)
                      &=\cI_{\beta+\frac12,m-\frac12}(z), \label{rei3}
                      \\
\Big(\sqrt{z}\partial_z+\frac{-\frac12-m}{\sqrt{z}}+\frac {\sqrt{z}}{2}\Big)\cI_{\beta,m}(z)
&=\Big(\frac12+m-\beta\Big)\cI_{\beta-\frac12,m+\frac12}(z), \label{rei4}\\
\Big(z\partial_z+\beta-\frac{z}{2}\Big)\cI_{\beta,m}(z)
&=\Big(\frac12+m+\beta\Big)\cI_{\beta+1,m}(z),\\
\Big(z\partial_z-\beta+\frac{z}{2}\Big)\cI_{\beta,m}(z)
&=\Big(\frac12+m-\beta\Big)\cI_{\beta-1,m}(z);
\end{align}
\end{subequations}
\begin{subequations}\label{rek}
\begin{align}
\Big(\sqrt{z}\partial_z+\frac{-\frac12-m}{\sqrt{z}}-\frac {\sqrt{z}}{2}\Big)
\cK_{\beta,m}(z)
&= -\cK_{\beta+\frac12,m+\frac12}(z),  \label{rek1}\\
\Big(\sqrt{z}\partial_z+\frac{-\frac12+m}{\sqrt{z}}+\frac {\sqrt{z}}{2}\Big)
\cK_{\beta,m}(z)
&= \Big(-\frac12+m+\beta\Big)    \cK_{\beta-\frac12,m-\frac12}(z),  \label{rek2}\\
\Big(\sqrt{z}\partial_z+\frac{-\frac12+m}{\sqrt{z}}-\frac {\sqrt{z}}{2}\Big)
\cK_{\beta,m}(z)
&= -\cK_{\beta+\frac12,m-\frac12}(z),  \label{rek3}\\
\Big(\sqrt{z}\partial_z+\frac{-\frac12-m}{\sqrt{z}}+\frac {\sqrt{z}}{2}\Big)
\cK_{\beta,m}(z)
&= \Big(-\frac12-m+\beta\Big)\cK_{\beta-\frac12,m+\frac12}(z),  \label{rek4}\\
\Big(z\partial_z+\beta-\frac{z}{2}\Big)\cK_{\beta,m}(z)
&= -\cK_{\beta+1,m}(z),\\
\Big(z\partial_z-\beta+\frac{z}{2}\Big)\cK_{\beta,m}(z)
&= \Big(\frac12+m-\beta\Big)    \Big(\frac12-m-\beta\Big)\cK_{\beta-1,m}(z).
\end{align}\end{subequations}
They have an interesting algebraic interpretation---they correspond to
the roots of the Lie algebra of generalized symmetries of the heat
equation in 2 dimensions, see \cite{Der} (where they are presented
using confluent functions, which as we know are equivalent to
Whittaker functions).

These recurrence relations involve 1st order differentiation, so it is
tempting to expect that they are closely related  to the
Dirac-Coulomb Hamiltonian. It turns out, however, that the relationship
is not direct. By easy algebraic manipulations involving
\eqref{rei1}-\eqref{rei4} and \eqref{rek1}-\eqref{rek4},
we  derive an additional pair of
recurrence relations described in the following proposition.
In some sense,  \eqref{rei} and \eqref{rek}  are ``lower order'' than 
\eqref{reri} and  \eqref{rerk}. 
In fact, in \eqref{rei} and \eqref{rek} the parameters $\mu,\beta$
appear only in 0th order terms, whereas in \eqref{reri} and  \eqref{rerk} 
the differential operator is multiplied by $\mu$.
\begin{proposition}
\begin{subequations}\label{reri}\begin{align}
  \Big(2\mu\partial_x+\frac{2\mu^2}{x}-\beta\Big)\cI_{\beta,\mu+\frac12}(x)&=\cI_{\beta,\mu-\frac12}(x),\\
                      \Big(2\mu\partial_x-\frac{2\mu^2}{x}+\beta\Big)\cI_{\beta,\mu-\frac12}(x)&=(\mu^2-\beta^2)\cI_{\beta,\mu+\frac12}(x);\end{align}
\end{subequations}\begin{subequations}\label{rerk}\begin{align}
  \Big(2\mu\partial_x+\frac{2\mu^2}{x}-\beta\Big)\cK_{\beta,\mu+\frac12}(x)&=-(\mu+\beta)\cK_{\beta,\mu-\frac12}(x),\\
  \Big(2\mu\partial_x-\frac{2\mu^2}{x}+\beta\Big)\cK_{\beta,\mu-\frac12}(x)&=(-\mu+\beta)\cK_{\beta,\mu+\frac12}(x).
\end{align}\end{subequations}
\end{proposition}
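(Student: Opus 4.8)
The plan is to obtain \eqref{reri} and \eqref{rerk} by purely algebraic manipulation of the first four recurrences in each of \eqref{rei} and \eqref{rek}, i.e.\ the ones shifting $\beta$ and $m$ simultaneously by $\pm\tfrac12$. Write $m=\mu+\tfrac12$, so that the two functions in \eqref{reri} are $\cI_{\beta,m}$ and $\cI_{\beta,m-1}$, and the two ``intermediate'' functions linking them are $\cI_{\beta+\frac12,m-\frac12}$ and $\cI_{\beta-\frac12,m-\frac12}$. The key observation is this: the sum and the difference of \eqref{rei2} and \eqref{rei3}, both taken at the parameter pair $(\beta,m)$, express $\big(\sqrt{x}\,\partial_x+\tfrac{\mu}{\sqrt{x}}\big)\cI_{\beta,m}$ and $\sqrt{x}\,\cI_{\beta,m}$ as explicit linear combinations of $\cI_{\beta\pm\frac12,m-\frac12}$; dually, the sum and the difference of \eqref{rei1} and \eqref{rei4}, taken at $(\beta,m-1)$, express $\big(\sqrt{x}\,\partial_x-\tfrac{\mu}{\sqrt{x}}\big)\cI_{\beta,m-1}$ and $\sqrt{x}\,\cI_{\beta,m-1}$ in terms of the same two intermediate functions. (The miracle that the half-integer shifts in $\beta$ cancel in the end is exactly what makes \eqref{reri} a \emph{first}-order relation.)

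Granting this, both identities in \eqref{reri} reduce to bookkeeping. Multiplying the left-hand side of the first one by $\sqrt{x}$, substituting the expressions just described, and collecting coefficients produces $(\mu+\beta)\cI_{\beta+\frac12,m-\frac12}+(\mu-\beta)\cI_{\beta-\frac12,m-\frac12}$, which is precisely $\sqrt{x}\,\cI_{\beta,m-1}$; dividing by $\sqrt{x}$ gives the first identity. For the second one, the same substitution turns $\sqrt{x}$ times the left-hand side into $(\mu^2-\beta^2)\big(\cI_{\beta-\frac12,m-\frac12}-\cI_{\beta+\frac12,m-\frac12}\big)=(\mu^2-\beta^2)\sqrt{x}\,\cI_{\beta,m}$, which is the claim. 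The pair \eqref{rerk} is handled in exactly the same manner starting from \eqref{rek1}--\eqref{rek4}; the only difference is that the numerical coefficients in those recurrences are not the same as in \eqref{rei1}--\eqref{rei4}, and tracing them through yields the factors $-(\mu+\beta)$ and $\beta-\mu$ on the right-hand sides of \eqref{rerk}.

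There is no conceptual obstacle, only two points deserving care. First, one must keep the eight index shifts and signs straight so that the coefficients match; this is routine but unforgiving. Second, at the degenerate parameter values where one of the intermediate functions falls into the range $2m'\in\Z$, or where the Gamma factors implicit in \eqref{rek} degenerate, one should make sure the manipulation is still legitimate. This causes no real trouble: $\cI_{\beta,m}(z)$ and $\cK_{\beta,m}(z)$ are, for fixed $z\notin\,]-\infty,0]$, holomorphic in $(\beta,m)$, and the recurrences \eqref{rei}, \eqref{rek} hold for all values of the parameters, so \eqref{reri} and \eqref{rerk} are identities between entire functions of $(\beta,\mu)$; if one wishes, one may verify them off the degenerate locus and invoke the identity theorem. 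An alternative, more structural approach would be to check that $2\mu\partial_x+\tfrac{2\mu^2}{x}-\beta$ intertwines the Whittaker operators \eqref{whi0} with $m=\mu+\tfrac12$ and $m=\mu-\tfrac12$, and then to fix the proportionality constants from the $z\to0$ behaviour \eqref{eq:equivIbm0.} for $\cI$ and the $z\to\infty$ behaviour \eqref{eq:equivKbm_infty} for $\cK$; but the computation via the recurrences is shorter and is the one I would present.
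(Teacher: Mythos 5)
Your proposal is correct and follows essentially the same route as the paper: both derive \eqref{reri} and \eqref{rerk} by purely algebraic combinations of the half-shift recurrences \eqref{rei1}--\eqref{rei4} and \eqref{rek1}--\eqref{rek4} taken at the parameter values $m=\mu\pm\frac12$, the only difference being bookkeeping (you package the elimination of $\cI_{\beta\pm\frac12,\mu}$, resp.\ $\cK_{\beta\pm\frac12,\mu}$, via sum/difference pairs, while the paper writes down one explicit four-term linear combination per identity). Your coefficient checks are accurate, so no gap remains.
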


\proof
We first rewrite \eqref{rei1}-\eqref{rei4} and
\eqref{rek1}-\eqref{rek4} as follows:
\begin{subequations}\label{rei.}\begin{align}
\Big(\sqrt{z}\partial_z-\frac{\mu}{\sqrt{z}}-\frac {\sqrt{z}}{2}\Big)\cI_{\beta,\mu-\frac12}(z)
&= \Big(-\mu-\beta\Big)\cI_{\beta+\frac12,\mu}(z),\label{rei1.}\\
\Big(\sqrt{z}\partial_z+\frac{\mu}{\sqrt{z}}+\frac {\sqrt{z}}{2}\Big)\cI_{\beta,\mu+\frac12}(z)
&=\cI_{\beta-\frac12,\mu}(z), \label{rei2.}\\
\Big(\sqrt{z}\partial_z+\frac{\mu}{\sqrt{z}}-\frac {\sqrt{z}}{2}\Big)\cI_{\beta,\mu+\frac12}(z)
                      &=\cI_{\beta+\frac12,\mu}(z), \label{rei3.}
                      \\
\Big(\sqrt{z}\partial_z-\frac{\mu}{\sqrt{z}}+\frac {\sqrt{z}}{2}\Big)\cI_{\beta,\mu-\frac12}(z)
                                  &=\Big(\mu-\beta\Big)\cI_{\beta-\frac12,\mu}(z); \label{rei4.}
                                  \end{align}\end{subequations}\begin{subequations}\label{rekrek}
\begin{align}
\Big(\sqrt{z}\partial_z-\frac{\mu}{\sqrt{z}}-\frac {\sqrt{z}}{2}\Big)
\cK_{\beta,\mu-\frac12}(z)
&= -\cK_{\beta+\frac12,\mu}(z),  \label{rek1.}\\
\Big(\sqrt{z}\partial_z+\frac{\mu}{\sqrt{z}}+\frac {\sqrt{z}}{2}\Big)
\cK_{\beta,\mu+\frac12}(z)
&= \Big(\mu+\beta\Big)    \cK_{\beta-\frac12,\mu}(z),  \label{rek2.}\\
\Big(\sqrt{z}\partial_z+\frac{\mu}{\sqrt{z}}-\frac {\sqrt{z}}{2}\Big)
\cK_{\beta,\mu+\frac12}(z)
&= -\cK_{\beta+\frac12,\mu}(z),  \label{rek3.}\\
\Big(\sqrt{z}\partial_z-\frac{\mu}{\sqrt{z}}+\frac {\sqrt{z}}{2}\Big)
\cK_{\beta,\mu-\frac12}(z)
&= \Big(-\mu+\beta\Big)\cK_{\beta-\frac12,\mu}(z).  \label{rek4.}
\end{align}\end{subequations}

Then we compute
\begin{align}
  &\frac{1}{\sqrt{z}}\big(-\eqref{rei1.}+(-\mu+\beta)\eqref{rei2.}-(\mu+\beta)\eqref{rei3.}+\eqref{rei4.}\big),\\
  &\frac{1}{\sqrt{z}}\big((\mu-\beta)\eqref{rei1.}+(-\mu^2+\beta^2)\eqref{rei2.}+(\mu^2-\beta^2)\eqref{rei3.}+(\mu+\beta)\eqref{rei4.}\big),\\[2ex]
  &\frac{1}{\sqrt{z}}\big(-(\mu+
    \beta)\eqref{rek1.}+(\mu-\beta)\eqref{rek2.}+(\mu+\beta)\eqref{rek3.}+(\mu+\beta)\eqref{rek4.}\big),\\
    &\frac{1}{\sqrt{z}}\big((\mu-
    \beta)\eqref{rek1.}+(\mu-\beta)\eqref{rek2.}-(\mu-\beta)\eqref{rek3.}+(\mu+\beta)\eqref{rek4.}\big).
  \end{align}
\qed

\eqref{reri} and \eqref{rerk} are closely related to the
Dirac-Coulomb Hamiltonian. To see this relation let
us introduce $\omega$ satisfying $\omega^2=\mu^2-\beta^2$. Then
\eqref{reri} and  \eqref{rerk} can be rewritten in the following form:
\begin{align}
  0=&2\mu\partial_x\big(\cI_{\beta,\mu-\frac12}(x)+\i\omega\cI_{\beta,\mu+\frac12}(x)\big)\\\notag&+
\Big(-\frac{2\mu^2}{x}+\beta-\i\omega\Big) 
  \big(\cI_{\beta,\mu-\frac12}(x)-\i\omega\cI_{\beta,\mu+\frac12}(x)\big),\\
0=&    2\mu\partial_x\big(\cI_{\beta,\mu-\frac12}(x)-\i\omega\cI_{\beta,\mu+\frac12}(x)\big)\\\notag&+
\Big(-\frac{2\mu^2}{x}+\beta+\i\omega\Big) 
  \big(\cI_{\beta,\mu-\frac12}(x)+\i\omega\cI_{\beta,\mu+\frac12}(x)\big);\\[3ex]
  0=&2\mu\partial_x\big((\mu+\beta)\cK_{\beta,\mu-\frac12}(x)-\i\omega\cK_{\beta,\mu+\frac12}(x)\big)\\\notag&+
\Big(-\frac{2\mu^2}{x}+\beta-\i\omega\Big) 
  \big((\mu+\beta)\cK_{\beta,\mu-\frac12}(x)+\i\omega\cK_{\beta,\mu+\frac12}(x)\big),\\
0=&    2\mu\partial_x\big((\mu+\beta)\cK_{\beta,\mu-\frac12}(x)+\i\omega\cK_{\beta,\mu+\frac12}(x)\big)\\\notag&+
\Big(-\frac{2\mu^2}{x}+\beta+\i\omega\Big) 
  \big((\mu+\beta)\cK_{\beta,\mu-\frac12}(x)-\i\omega\cK_{\beta,\mu+\frac12}(x)\big).
                                                                          \end{align}
The eigenvalue equations for $\xi_p^\pm$ and $\zeta_p^\pm$ follow
directly from these identities.

\subsection{Integral transforms}

Let us  compute a useful integral transform of the confluent function:

\begin{lemma}
Assuming $\RE(b) >0$ and $|\RE(w)| < \RE(z)$, one has
  \begin{align}\label{id1}
    &    \int_0^\infty x^{b-1}\e^{-zx} {}_1 \mathbf F_1(a;c;wx)\D x=
                                               z^{-b}\Gamma(b) {}_2 \mathbf F_1(a,b;c;z^{-1}w). 
\end{align}
Furthermore, if $\RE(b) >0$, $\RE(b+1-c)>0$, $\RE(z)>0$, $w, z^{-1}w \notin ] - \infty, 0]$, then
\begin{align} \label{id2}
    &\int_0^\infty
    x^{b-1}\e^{-zx}{}_2F_0\big(a,a+1-c;-;-(wx)^{-1}\big)(wx)^{-a}\D x \\=&
                                                                     z^{-b} \Gamma(b)\Gamma(1+b-c) {}_2 \mathbf F_1(a,b;a+b+1-c;1-z^{-1}w).
 \notag \end{align}
\end{lemma}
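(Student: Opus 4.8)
The plan is to reduce both identities to the classical Euler-type integral representation of the hypergeometric function, using the series expansions of the confluent functions involved and term-by-term integration justified by dominated convergence under the stated half-plane conditions on the parameters.

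For \eqref{id1}, I would start from the defining power series ${}_1\mathbf{F}_1(a;c;wx) = \sum_{n=0}^{\infty} \frac{(a)_n}{\Gamma(c+n)\, n!} (wx)^n$. Substituting into the left-hand side and interchanging summation and integration (legitimate since $\RE(b)>0$ and $|\RE(w)|<\RE(z)$ guarantee absolute convergence of the resulting double integral/sum, via $\int_0^\infty x^{b+n-1}\e^{-(\RE(z)-|\RE(w)|)x}\,\D x < \infty$ and geometric decay of the coefficients), each term becomes a Gamma integral: $\int_0^\infty x^{b+n-1}\e^{-zx}\,\D x = z^{-b-n}\Gamma(b+n)$. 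Using $\Gamma(b+n) = \Gamma(b)(b)_n$, the sum collapses to $z^{-b}\Gamma(b)\sum_{n=0}^\infty \frac{(a)_n (b)_n}{\Gamma(c+n)\, n!}(z^{-1}w)^n = z^{-b}\Gamma(b)\,{}_2\mathbf{F}_1(a,b;c;z^{-1}w)$, which is exactly the right-hand side. The only subtlety is that $|z^{-1}w|$ need not be less than $1$, so I would either first prove the identity for $|z^{-1}w|<1$ and then invoke analytic continuation in $w$ (both sides are holomorphic for $w \notin z\cdot]{-}\infty,0]$, and the assumption $|\RE(w)|<\RE(z)$ keeps us in a connected region containing a neighborhood of $w=0$), or note that the original integral converges and defines a holomorphic function of $w$ in that region directly.

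For \eqref{id2}, the cleanest route is to avoid re-deriving everything and instead use the connection formula \eqref{id3}, which expresses $z_0^{-a}\,{}_2F_0(a,a+1-c;-;-z_0^{-1})$ (with $z_0 = wx$) as a linear combination $\frac{\pi}{\sin(\pi c)}\big(\frac{{}_1\mathbf{F}_1(a;c;wx)}{\Gamma(a+1-c)} - \frac{(wx)^{1-c}{}_1\mathbf{F}_1(a+1-c;2-c;wx)}{\Gamma(a)}\big)$. Applying \eqref{id1} to each of the two ${}_1\mathbf{F}_1$ terms — the second one with $b$ replaced by $b+1-c$, $a$ by $a+1-c$, $c$ by $2-c$, which is why the hypothesis $\RE(b+1-c)>0$ appears — yields a linear combination of ${}_2\mathbf{F}_1(a,b;c;z^{-1}w)$ and $(z^{-1}w)^{1-c}{}_2\mathbf{F}_1(a+1-c,b+1-c;2-c;z^{-1}w)$. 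By \eqref{id4} (Kummer's relation, with appropriate identification of parameters: here the roles of $a,b,c$ and the argument $1-z^{-1}w$ must be matched carefully), this same combination equals a multiple of ${}_2\mathbf{F}_1(a,b;a+b+1-c;1-z^{-1}w)$. Tracking the Gamma-function prefactors through \eqref{id3}, \eqref{id1}, and \eqref{id4} — the factors $\frac{\pi}{\sin(\pi c)}$ cancel against the $\sin(\pi c)$ appearing in \eqref{id4}, and $\Gamma(b)\Gamma(b+1-c)$ emerges — gives precisely the claimed right-hand side. The conditions $w, z^{-1}w \notin ]{-}\infty,0]$ are exactly what is needed for the power functions $(wx)^{1-c}$ and $(z^{-1}w)^{1-c}$ and the analytic-continuation branch of ${}_2F_0$ to be unambiguous.

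The main obstacle I anticipate is purely bookkeeping: matching the hypergeometric parameters and the resulting products of Gamma functions correctly when chaining \eqref{id3} $\to$ \eqref{id1} $\to$ \eqref{id4}, since the parameter permutations in \eqref{id4} are not symmetric in an obvious way and a sign or a shift is easy to misplace. A useful consistency check, which I would include, is to verify the identity at a special value — e.g. $c=a+1$, where ${}_2F_0$ reduces to a single power and both sides become elementary — or to check the leading behavior as $z^{-1}w \to 0$ against the known normalization ${}_2\mathbf{F}_1(\cdots;0) = 1/\Gamma(\text{last parameter})$. No genuine analytic difficulty is expected beyond the routine justification of interchanging sum and integral, which the stated half-plane hypotheses handle.
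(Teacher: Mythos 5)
Your proposal is correct and follows essentially the same route as the paper: term-by-term integration of the ${}_1\mathbf F_1$ series plus analytic continuation for \eqref{id1}, and then deducing \eqref{id2} by combining \eqref{id1} with the connection formulas \eqref{id3} and \eqref{id4} and continuing analytically. The only slip is the parenthetical claim that both sides of \eqref{id1} are holomorphic for $w \notin z\cdot\left]-\infty,0\right]$ (the relevant cut for the right-hand side is $z^{-1}w \in \left[1,\infty\right[$), but since the region $|\RE(w)|<\RE(z)$ avoids that cut and is connected with a neighborhood of $w=0$, the continuation argument goes through unchanged.
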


\proof
We expand the confluent function in a power series and integrate term
by term:
{ \small
\begin{align}
  \int_0^\infty\sum_{n=0}^\infty \frac{(a)_n}{\Gamma(c+n)}\e^{-zx}x^{b+n-1}w^n\D
  x&=\sum_{n=0}^\infty \frac{\Gamma(b+n)(a)_n}{\Gamma(c+n)}\frac{w^n}{z^{n+b}} 
   =z^{-b}\Gamma(b)  \sum_{n=0}^\infty \frac{(a)_n(b)_n}{\Gamma(c+n)} \left( \frac{w}{z} \right)^n.
     \end{align}
     }
This proves the first identity under additional assumption $|w|<\RE(z)$. The integrand can be
majorized by an integrable function for $|\RE(w)|<\RE(z)$. Therefore, we
can extend the identity by analytic continuation to this domain,
yielding \eqref{id1}.

\eqref{id1}, \eqref{id3} and \eqref{id4} and analytic
continuation imply \eqref{id2}.
\qed

The following identity, valid for $v >0$, $\RE(\epsilon) > 0$, $\RE(m+1 - \i s)>0$, follows from \eqref{id1}:
\begin{align}
  & \int_0^{\infty} \e^{-\epsilon x} x^{- \frac{1}{2} - \i s} \cJ_{\beta, m} (vx) \D x  \label{eq:Mellin_J2} \\
=&  v^{m + \frac{1}{2}} \Big( \epsilon\pm\i\frac{v}{2}\Big)^{-m -1 + \i s} \Gamma(m+1 - \i s)  {}_2 \mathbf{F}_1 \left( m+ \frac{1}{2}  \pm \i \beta, m+1 - \i s ; 2m+1 ; \frac{v}{\frac{v}{2}\mp \i \epsilon} \right). \notag 
\end{align}

\begin{proposition}\label{melino}
Let $v>0$, $\RE(m) > -1$. Then $x\mapsto\e^{-0x}\cJ_{\beta,m}(vx):= \lim\limits_{\epsilon\downarrow 0}\e^{-\epsilon 
  x}\cJ_{\beta,m}(vx)$ is a~tempered 
distribution on $\R_+$ with the Mellin transform
\begin{align}
 & \int_0^{\infty} \e^{-0 x} x^{- \frac{1}{2} - \i s}
 \cJ_{\beta, m} (vx)\D x
     \label{eq:Mellin_J2a} \\
=&  v^{- \frac{1}{2}+\i s}2^{m+1-\i s} (\pm\i)^{-m -1 + \i s} \Gamma(m+1 - \i s)  {}_2 \mathbf{F} _1 \left( m+ \frac{1}{2}  \pm \i \beta, m+1 - \i s ; 2m+1 ; 2\pm\i0\right), \notag 
\end{align}
which is bounded by $c_{\beta,m} \left( \e^{- \frac{\pi}{2} (|s|+s)} |s|^{\IM(\beta)} + \e^{- \frac{\pi}{2} (|s|-s)} |s|^{- \IM(\beta)} \right)$, with $c_{\beta,m}$ a locally bounded function of $\beta,m$.

Similarly, for any $\mu$ (including $\mu=0$), $x\mapsto\e^{-0x}\frac{1}{\mu}\big(\cJ_{\beta,\mu-\frac12}(vx)+ \beta \cJ_{\beta,\mu+\frac12}(vx)\big)
  $ is a~tempered 
distribution on $\R_+$, whose Mellin transform  can be computed
from \eqref{eq:Mellin_J2a} and is bounded by $c_{\beta, \mu} \left( \e^{- \frac{\pi}{2} (|s|+s)} |s|^{\IM(\beta)} + \e^{- \frac{\pi}{2} (|s|-s)} |s|^{- \IM(\beta)} \right)$, with $c_{\beta,\mu}$ a locally bounded function of $\beta,\mu$.
\end{proposition}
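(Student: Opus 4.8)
The plan is to identify \eqref{eq:Mellin_J2a} as the $\epsilon\downarrow0$ limit of \eqref{eq:Mellin_J2} and to invoke Lemma~\ref{Mellin_lim} for the passage to the limit. After the substitution $u=vx$ one reduces to $v=1$, so that \eqref{eq:Mellin_J2} reads $\mathfrak M(\e^{-\epsilon\,\cdot}\cJ_{\beta,m})(s)=(\epsilon\pm\tfrac{\i}{2})^{-m-1+\i s}\,\Gamma(m+1-\i s)\,{}_2\mathbf F_1(m+\tfrac12\pm\i\beta,\,m+1-\i s;\,2m+1;\,\tfrac{1}{1/2\mp\i\epsilon})$ for $\RE(\epsilon)>0$, with $\RE(m+1-\i s)=\RE(m)+1>0$. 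First I would verify the three hypotheses of Lemma~\ref{Mellin_lim}. Local integrability in $s$ is immediate since for $\RE(\epsilon)>0$ the right-hand side is analytic in $s$. For the pointwise limit, observe that $(\epsilon\pm\tfrac{\i}{2})^{-m-1+\i s}\to(\pm\tfrac{\i}{2})^{-m-1+\i s}$ while the hypergeometric argument $\tfrac1{1/2\mp\i\epsilon}$ tends to $2$ with imaginary part of constant sign, i.e.\ from the side $2\pm\i0$ occurring in \eqref{eq:Mellin_J2a}; since $2$ lies in the \emph{interior} of the cut $[1,\infty[$ of ${}_2\mathbf F_1$, away from the branch point $1$, the function ${}_2\mathbf F_1$ has a continuous boundary value there from each side, and the limit of \eqref{eq:Mellin_J2} is precisely \eqref{eq:Mellin_J2a} once $v$ is restored using $v^{m+\frac12}(\pm\tfrac{\i}{2})^{-m-1+\i s}=v^{-\frac12+\i s}\,2^{m+1-\i s}(\pm\i)^{-m-1+\i s}$.

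The one delicate hypothesis is the uniform polynomial bound $|\mathfrak M(\e^{-\epsilon\,\cdot}\cJ_{\beta,m})(s)|\le c\,(1+s^2)^N$ for $\epsilon \in ]0,1]$, and here the obstacle is that the hypergeometric argument $w_\epsilon=\tfrac1{1/2\mp\i\epsilon}$ degenerates to the edge of the cut as $\epsilon\downarrow0$, so ``uniformity on compact subsets of the domain'' is not directly available. I would split $\int_0^\infty=\int_0^1+\int_1^\infty$. On $(0,1]$ one has $\cJ_{\beta,m}(u)=O(u^{\frac12+\RE(m)})$ up to a logarithm (the prefactor of \eqref{Jbm-definition} cancels the phase produced by \eqref{eq:equivIbm0.}), so $\int_0^1\e^{-\epsilon u}u^{-\frac12-\i s}\cJ_{\beta,m}(u)\,\D u$ is bounded by $\int_0^1 u^{-\frac12}|\cJ_{\beta,m}(u)|\,\D u$, a finite constant independent of $s$ and $\epsilon$. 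For the tail I would express $\cJ_{\beta,m}$ through $\cH_{\beta,m}^\pm$ using \eqref{eq_missing}, insert the ${}_2F_0$-representations of $\cH^\pm$ coming from \eqref{Hpm-definition} and \eqref{eq:def_K}, and evaluate $\int_0^\infty(\cdots)\,\D u$ by \eqref{id2}; the subtracted piece $\int_0^1$ is again $O(1)$. This re-expresses the tail as a ${}_2\mathbf F_1$ of the same shape as in \eqref{eq:Mellin_J2} (in effect reproving \eqref{eq:Mellin_J2}), and now one applies the Pfaff transformation used to prove \eqref{eq:hypergeometric_asymptotic_pre}: the transformed argument $\tfrac{w_\epsilon-1}{w_\epsilon}$ tends to $\tfrac12$ and stays in a compact subset of the unit disc, so \eqref{eq:hypergeometric_asymptotic_pre}, and hence \eqref{hypergeometric_asymptotic}, apply with error terms uniform for $\epsilon\in ]0,\epsilon_0]$; the complementary range $[\epsilon_0,1]$ is covered by continuity and compactness.

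With the limit \eqref{eq:Mellin_J2a} established, the refined bound follows by inserting the two-term expansion \eqref{hypergeometric_asymptotic} for ${}_2\mathbf F_1(m+\tfrac12\pm\i\beta,\,m+1-\i s;\,2m+1;\,2\pm\i0)$ --- applied with ``$z$''\,$=-1\mp\i0$, ``$1-z$''\,$=2\pm\i0$, and $\RE(-1)<\tfrac12$ --- together with Stirling's formula \eqref{eq:Stirling} for $\Gamma(m+1-\i s)$ into \eqref{eq:Mellin_J2a}, and tracking moduli: the factor $\e^{-\frac\pi2|s|}$ from $\Gamma(m+1-\i s)$, the factor $\e^{\mp\frac\pi2 s}$ from $(\pm\i)^{-m-1+\i s}$, and the factors $(\i s)^{-a},(-\i s)^{a-c},z^{\i s},(1-z)^{\i s}$ from \eqref{hypergeometric_asymptotic} combine to give the claimed $\e^{-\frac\pi2(|s|+s)}$ and $\e^{-\frac\pi2(|s|-s)}$ behaviour of the two terms, while the $\pm\i\beta$ inside $a=m+\tfrac12\pm\i\beta$ produces the $|s|^{\pm\IM(\beta)}$; the surviving polynomial powers of $s$ cancel and the residual dependence on $\beta,m$ is locally bounded, yielding $c_{\beta,m}$. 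For the second assertion, apply the first part with $m=\mu-\tfrac12$ and $m=\mu+\tfrac12$ (both admissible since $\RE(\mu)>-\tfrac12$) and use linearity of $\mathfrak M$; the point is that the numerator $\cJ_{\beta,\mu-\frac12}+\beta\,\cJ_{\beta,\mu+\frac12}$ depends holomorphically on $\mu$ and \emph{vanishes at} $\mu=0$ --- indeed \eqref{eq:I_sym} gives $\cJ_{\beta,-\frac12}=-\beta\,\cJ_{\beta,\frac12}$, and on the level of Mellin transforms \eqref{eq:2F1c0} relates the $c=0$ and $c=2$ hypergeometric functions that occur --- so $\tfrac1\mu(\cdots)$ has a removable singularity, and a Cauchy estimate in $\mu$ shows the combined Mellin transform obeys the same $s$-bound with a constant $c_{\beta,\mu}$ finite as $\mu\to0$. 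I expect the uniform-in-$\epsilon$ control of the degenerating hypergeometric function to be the main obstacle.
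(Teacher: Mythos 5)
Your proposal follows essentially the same route as the paper: apply Lemma \ref{Mellin_lim} to $f_\epsilon(x)=\e^{-\epsilon x}\cJ_{\beta,m}(vx)$, whose Mellin transform is the explicit formula \eqref{eq:Mellin_J2}, pass to the boundary value $2\pm\i 0$ to get \eqref{eq:Mellin_J2a}, derive the stated bounds from \eqref{hypergeometric_asymptotic} combined with Stirling, and use \eqref{eq:2F1c0} to handle the removable singularity at $\mu=0$. The only divergence is your detour through \eqref{eq_missing} and \eqref{id2} to secure the uniform-in-$\epsilon$ polynomial bound; this is redundant (and its $\int_0^1$ estimate for the individual $\cH^\pm_{\beta,m}$ pieces is problematic when $\RE(m)\geq 1$), since the Pfaff-transformation argument you invoke at the end of that paragraph already bounds the right-hand side of \eqref{eq:Mellin_J2} uniformly for $\epsilon\in\,]0,1]$, which is precisely what the paper means by ``required bounds follow from \eqref{hypergeometric_asymptotic}''.
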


\proof
We use the criterion from Lemma \ref{Mellin_lim}. Let
$f_{\epsilon}(x)=\e^{-\epsilon x}\cJ_{\beta,m}(vx) $. Then $t \mapsto
\e^{\frac{t}{2}} f_{\epsilon}(\e^t)$ is smooth and vanishes exponentially for $t \to - \infty$ and superexponentially for $t \to \infty$. In particular $f_{\epsilon}$ is a tempered distribution on $\R_+$. Its Mellin transform is given by the absolutely convergent integral \eqref{eq:Mellin_J2}. It is a smooth function with smooth pointwise limit $\epsilon \to 0$. Required bounds follow from \eqref{hypergeometric_asymptotic}. This completes the proof of the first part.

Next, we compute
\begin{footnotesize}
\begin{align}
 & \int_0^{\infty} \e^{-\epsilon x} x^{- \frac{1}{2} - \i s}
 \frac{1}{\mu} \left( \cJ_{\beta, \mu - \frac12} (vx) + \beta \cJ_{\beta, \mu + \frac12}(vx) \right) \D x \\
= & v^\mu \left( \epsilon \pm \i \frac{v}{2} \right)^{- \mu - \frac12 + \i s} \Gamma \left( \frac{1}{2} + \mu - \i s \right)  \notag \\ 
\times & \frac{1}{\mu} \Bigg( {}_2 \mathbf F_1 \left( \mu + \i \beta, \frac{1}{2}+ \mu - \i s; 2 \mu ; \frac{v}{\frac{v}{2}- \i \epsilon} \right) - \i \beta \left( \frac{1}{2} + \mu - \i s \right) \frac{v}{\frac{v}{2}- \i \epsilon} {}_2 \mathbf F_1 \left( 1+ \mu + \i \beta, \frac{3}{2}+ \mu - \i s; 2 \mu+2 ; \frac{v}{\frac{v}{2}- \i \epsilon} \right) \Bigg). \notag
\end{align}
\end{footnotesize}
Expression in the last line is nonsingular for $\mu \to 0$ on the
account of \eqref{eq:2F1c0}. Bounds on the growth at infinity are derived as in the first case.
\qed

{\bf Acknowledgement.}
The work of J.D. 
 was supported by National Science Center (Poland) under the
    grant UMO-2019/35/B/ST1/01651.

\end{document}